\documentclass[a4paper,UKenglish,cleveref, autoref, thm-restate]{lipics-v2021}

\usepackage{amsmath}
\usepackage{mathpartir}
\usepackage{amsfonts}
\usepackage{amsthm}
\usepackage[dvipsnames]{xcolor}
\usepackage{tikz}
\usepackage{tikz-cd}
\usepackage[capitalise]{cleveref}
\usepackage{cancel}
\usetikzlibrary{automata,
arrows,
positioning,
shapes.geometric,
shapes,
fit,
calc,
positioning,
commutative-diagrams,
backgrounds}
\usepackage{mathpartir}
\usepackage{relsize}
\usepackage{float}
\usepackage[noend]{algorithm2e}
\usepackage{epsdice}
\usepackage{scalerel}
\usepackage{mathrsfs}
\newcommand{\tikzxmark}{%
\tikz[scale=0.17] {
    \draw[line width=0.7,line cap=round] (0,0) to [bend left=6] (1,1);
    \draw[line width=0.7,line cap=round] (0.2,0.95) to [bend right=3] (0.8,0.05);
}}
\DeclareMathOperator*{\bbigplus}{\scalerel*{+}{\bigoplus}}
\newcommand{\V}{\mathsf{Out}}
\newcommand{\T}{\mathsf{T}}
\newcommand{\Act}{\mathsf{Act}}
\newcommand{\At}{\mathsf{At}}
\newcommand{\D}{\mathcal{D}_\omega}
\newcommand{\G}{\mathcal{G}}
\newcommand{\Exp}{\mathtt{Exp}}
\newcommand{\Pexp}{\mathtt{PExp}}
\newcommand{\Bexp}{\mathtt{BExp}}
\newcommand{\accept}{\checkmark}
\newcommand{\reject}{\tikzxmark}
\newcommand{\expr}[1]{ \mathsf{#1}}
\newcommand{\action}[1]{\mathsf{#1}}
\newcommand{\test}[1]{{\color{MidnightBlue} \mathsf{#1}}}
\newcommand{\prob}[1]{{{#1}}}
\newcommand{\seq}{\mathbin{;}}
\newcommand{\zero}{\mathsf{0}}
\newcommand{\one}{\mathsf{1}}
\newcommand{\acro}[1]{\(\mathsf{#1}\)}
\newcommand{\trmt}[1]{\mathsf{E}\left(#1\right)}
\newcommand{\var}[1]{ {{#1} }}
\newcommand{\beq}{\equiv_{\mathsf{BA}}}
\newcommand{\bleq}{\leq_{\mathsf{BA}}}
\newcommand{\eqc}[2]{[#1]_{#2}}
\newcommand{\Set}{\mathtt{Set}}
\newcommand{\B}{\mathcal{B}}
\newcommand{\coalg}[1]{\mathsf{Coalg}_{#1}}
\newcommand{\supp}{\mathtt{supp}}
\newcommand{\cogen}[2]{\langle#1\rangle_{#2}}
\newcommand{\2}{\mathsf{2}}
\newcommand{\Id}{\mathsf{Id}}
\newcommand{\beh}{!}
\newcommand{\semseq}{\lhd}
\newcommand{\gfp}{\mathsf{gfp}}
\newcommand{\id}{\mathtt{id}}
\newcommand{\bisim}{\sim}
\newcommand{\bN}{\mathbb{N}}
\newcommand{\bR}{\mathbb{R}}
\newcommand{\eR}{\bR^{+}_{\infty}}
\newcommand{\bigsum}[1]{\bbigplus\limits_{#1}}

\newcommand{\bigplus}[1]{\bigoplus_{#1}}
\newcommand{\ex}{\mathtt{exp}}
\newcommand{\sys}{\mathtt{sys}}
\newcommand{\uaequiv}{\mathrel{\dot\equiv}}
\newcommand{\boplus}{\; \mathlarger{{\oplus}} \;}
\newcommand{\bsum}{\; \mathlarger{{+}} \;}
\newcommand{\customlabel}[2]{%
   \protected@write \@auxout {}{\string \newlabel {#1}{{#2}{\thepage}{#2}{#1}{}} }% % chktex 1
   \hypertarget{#1}{#2}%
}

\usepackage{comment}

\newcommand{\mset}[1]{\ensuremath{\{\!|#1|\!\}}}

\SetKw{kwTrue}{true}
\SetKw{kwReturn}{return}
\SetKw{kwSkip}{skip}
\SetKwProg{Def}{def}{:}{}
%This is a template for producing LIPIcs articles.
%See lipics-v2021-authors-guidelines.pdf for further information.
%for A4 paper format use option "a4paper", for US-letter use option "letterpaper"
%for british hyphenation rules use option "UKenglish", for american hyphenation rules use option "USenglish"
%for section-numbered lemmas etc., use "numberwithinsect"
%for enabling cleveref support, use "cleveref"
%for enabling autoref support, use "autoref"
%for anonymousing the authors (e.g. for double-blind review), add "anonymous"
%for enabling thm-restate support, use "thm-restate"
%for enabling a two-column layout for the author/affilation part (only applicable for > 6 authors), use "authorcolumns"
%for producing a PDF according the PDF/A standard, add "pdfa"

\pdfoutput=1 %uncomment to ensure pdflatex processing (mandatatory e.g. to submit to arXiv)
\hideLIPIcs  %uncomment to remove references to LIPIcs series (logo, DOI, ...), e.g. when preparing a pre-final version to be uploaded to arXiv or another public repository

%\graphicspath{{./graphics/}}%helpful if your graphic files are in another directory
\bibliographystyle{plainurl}% the mandatory bibstyle
\title{Probabilistic Guarded KAT Modulo Bisimilarity: Completeness and Complexity} %TODO Please add

%\titlerunning{Dummy short title} %TODO optional, please use if title is longer than one line

% \author{Anonymous Authors}{Anonymous Institutions}{}{}{}
\author{Wojciech Różowski}{Department of Computer Science, University College London, United Kingdom \and \url{https://wkrozowski.github.io} }{w.rozowski@cs.ucl.ac.uk}{https://orcid.org/0000-0002-8241-7277}{}
%TODO mandatory, please use full name; only 1 author per \author macro; first two parameters are mandatory, other parameters can be empty. Please provide at least the name of the affiliation and the country. The full address is optional. Use additional curly braces to indicate the correct name splitting when the last name consists of multiple name parts.
\author{Tobias Kapp{\'e}}{Open Universiteit, Heerlen, The Netherlands \and ILLC, University of Amsterdam, The Netherlands \and \url{https://tobias.kap.pe}}{tobias.kappe@ou.nl}{http://orcid.org/0000-0002-6068-880X}{}

\author{Dexter Kozen}{Department of Computer Science, Cornell University, Ithaca, NY, USA \and \url{https://www.cs.cornell.edu/~kozen/}}{kozen@cs.cornell.edu}{https://orcid.org/0000-0002-8007-4725}{}

\author{Todd Schmid}{Department of Computer Science, University College London, United Kingdom \and \url{https://toddtoddtodd.net} }{todd.schmid.19@ucl.ac.uk}{https://orcid.org/0000-0002-9838-2363}{}

\author{Alexandra Silva}{Department of Computer Science, Cornell University, Ithaca, NY, USA \and \url{https://alexandrasilva.org}}{alexandra.silva@cornell.edu}{https://orcid.org/0000-0001-5014-9784}{}

\authorrunning{W. Różowski, T. Kapp{\'e}, D. Kozen, T. Schmid, A. Silva} %TODO mandatory. First: Use abbreviated first/middle names. Second (only in severe cases): Use first author plus 'et al.'

\Copyright{Wojciech Różowski, Tobias Kapp{\'e}, Dexter Kozen, Todd Schmid, Alexandra Silva} %TODO mandatory, please use full first names. LIPIcs license is "CC-BY";  http://creativecommons.org/licenses/by/3.0/

\ccsdesc[500]{Theory of computation~Program reasoning}

\keywords{Kleene Algebra with Tests, program equivalence, completeness, coalgebra} %TODO mandatory; please add comma-separated list of keywords

\category{Track B:\@ Automata, Logic, Semantics, and Theory of Programming} %optional, e.g. invited paper

% \relatedversion{}
% \relatedversiondetails[cite=preprint]{Full Version}{https://arxiv.org/abs/1234.5678}
%optional, e.g. full version hosted on arXiv, HAL, or other respository/website
% \relatedversiondetails[linktext={opt. text shown instead of the URL}, cite=DBLP:books/mk/GrayR93]{Classification (e.g. Full Version, Extended Version, Previous Version}{URL to related version} %linktext and cite are optional

%\supplement{}%optional, e.g. related research data, source code, ... hosted on a repository like zenodo, figshare, GitHub, ...
%\supplementdetails[linktext={opt. text shown instead of the URL}, cite=DBLP:books/mk/GrayR93, subcategory={Description, Subcategory}, swhid={Software Heritage Identifier}]{General Classification (e.g. Software, Dataset, Model, ...)}{URL to related version} %linktext, cite, and subcategory are optional

\funding{This work was partially supported by ERC grant Autoprobe (no. 101002697; Różowski, Schmid and Silva), the EU's Horizon 2020 research and innovation program under Marie Skłodowska-Curie grant VERLAN (no. 101027412; Kapp\'e), and NSF grant CCF-2008083 (Kozen).}%optional, to capture a funding statement, which applies to all authors. Please enter author specific funding statements as fifth argument of the \author macro.

\acknowledgements{}%optional

\nolinenumbers% %uncomment to disable line numbering

%Editor-only macros:: begin (do not touch as author)%%%%%%%%%%%%%%%%%%%%%%%%%%%%%%%%%%
\EventEditors{Kousha Etessami, Uriel Feige, and Gabriele Puppis}
\EventNoEds{3}
\EventLongTitle{50th International Colloquium on Automata, Languages, and Programming (ICALP 2023)}
\EventShortTitle{ICALP 2023}
\EventAcronym{ICALP}
\EventYear{2023}
\EventDate{July 10--14, 2023}
\EventLocation{Paderborn, Germany}
\EventLogo{}
\SeriesVolume{261}
\ArticleNo{113}
%%%%%%%%%%%%%%%%%%%%%%%%%%%%%%%%%%%%%%%%%%%%%%%%%%%%%%

\begin{document}

\maketitle

%TODO mandatory: add short abstract of the document
\begin{abstract}
% Guarded Kleene Algebra with Tests (\acro{GKAT}) is a fragment of Kleene Algebra with Tests (\acro{KAT}) allowing efficient reasoning about uninterpreted imperative programs.
We introduce Probabilistic Guarded Kleene Algebra with Tests (\acro{ProbGKAT}), an extension of \acro{GKAT} that allows reasoning about uninterpreted imperative programs with probabilistic branching.
We give its operational semantics in terms of special class of probabilistic automata.
We give a sound and complete Salomaa-style axiomatisation of bisimilarity of \acro{ProbGKAT} expressions.
Finally, we show that bisimilarity of \acro{ProbGKAT} expressions can be decided in \(O(n^3 \log n)\) time via a generic partition refinement algorithm.
% We provide a Salomaa-style axiomatisation of bisimulation equivalence and give a completeness result relying on rich coalgebraic and metric structures associated with the operational model.
\end{abstract}
%!TEX root=icalp.tex

\section{Introduction}
Randomisation is an important feature in the design of efficient algorithms, cryptographic protocols, and stochastic simulation~\cite{Barthe:2020:Foundations}.
For a simple example of randomisation, imagine simulating a three-sided die~\cite{threesideddie}.
There are at least two ways to do this:
\begin{itemize}%[topsep=1pt]
    \item
    A reference implementation could use a fair coin and a biased coin with probability $\frac{1}{3}$ of landing on \texttt{heads}:
    Toss the biased coin first.
    If it lands on \texttt{heads}, return $\epsdice{1}$, and otherwise toss the fair coin and return $\epsdice{2}$ if it lands on \texttt{heads} or $\epsdice{3}$ otherwise.
    \item
    Another way to do this is with two consecutive tosses of a fair coin: if the outcome is \texttt{heads-heads}, then return $\epsdice{1}$; if it is \texttt{heads-tails}, return $\epsdice{2}$; if it is \texttt{tails-heads}, return $\epsdice{3}$; and if it is \texttt{tails-tails}, repeat the process~\cite{Knuth:1976:Complexity}.
\end{itemize}
These programs can be written using a function $\texttt{flip}(p)$ that returns $\texttt{true}$ (\texttt{heads}) with probability $\prob{p}$, and $\texttt{false}$ (\texttt{tails}) with probability $\prob{1-p}$, see \cref{fig:3die}.
If we can prove that those programs are equivalent, then we can be certain they implement the same distribution.

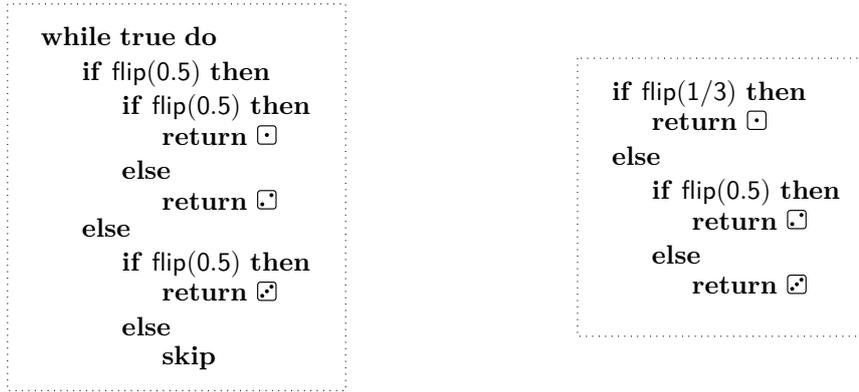
\begin{figure}[t]
    \begin{center}
        \begin{tikzpicture}
            \node(p1){
                \begin{minipage}{60mm}
                    \begin{algorithm}[H]
                        \SetAlgoNoLine
                        \While{\kwTrue}{
                            \eIf{$\mathsf{flip(0.5)}$}
                            {
                                \eIf{$\mathsf{flip(0.5)}$}
                                {\kwReturn{$\epsdice{1}$}}
                                {\kwReturn{$\epsdice{2}$}}
                            }
                            {
                                \eIf{$\mathsf{flip(0.5)}$}
                                {\kwReturn{$\epsdice{3}$}}
                                {\kwSkip}
                            }
                        }
                    \end{algorithm}
                \end{minipage}
            };
            \node[right=7cm of p1.center,anchor=center] (p2) {
                \begin{minipage}{50mm}
                    \begin{algorithm}[H]
                        \SetAlgoNoLine
                        \eIf{$\mathsf{flip(1/3)}$}
                        {\kwReturn{$\epsdice{1}$}}
                        {\eIf{$\mathsf{flip(0.5)}$}
                        {\kwReturn{$\epsdice{2}$}}
                        {\kwReturn{$\epsdice{3}$}}
                        }
                    \end{algorithm}
                    \vspace{0.8em}
                \end{minipage}
            };
            \draw[dotted] ($(p1.north west) + (0.2, 0)$) rectangle ($(p1.south east) + (-1.6, -0.05)$);
            \draw[dotted] ($(p2.north west) + (0.2, 0)$) rectangle ($(p2.south east) + (-1.2, -0.05)$);
        \end{tikzpicture}  
    \end{center}

    \caption{On the left, the Knuth-Yao process to simulate a three-sided die with two throws of a fair coin. On the right, direct implementation of the three-sided die (note that the probability of the first else-branch is $\frac{2}{3}$, hence both $\epsdice{2}$ and $\epsdice{2}$ are returned with probability $\frac{1}{3}$).}%
    \label{fig:3die}
\end{figure}
In this paper, we introduce Probabilistic \acro{GKAT} (\acro{ProbGKAT}), a language based on Guarded Kleene Algebra with Tests (\acro{GKAT})~\cite{Kozen:2008:Bohm,Smolka:2020:Guarded,Schmid:2021:Guarded} augmented with extra  constructs for reasoning about such randomised programs.
The laws of \acro{GKAT} allow reasoning about the equivalence of uninterpreted programs with deterministic control flow in the form of Boolean branching (\texttt{if-then-else}) and looping (\texttt{while-do}) constructs.
\acro{GKAT} comes equipped with an automata-theoretic operational semantics, a \emph{nearly linear} decision procedure, and complete axiomatic systems for reasoning about trace equivalence~\cite{Smolka:2020:Guarded} and bisimilarity~\cite{Schmid:2021:Guarded} of expressions, both inspired by Salomaa's axiomatisation of Kleene Algebra~\cite{Salomaa:1966:Two}.

\acro{ProbGKAT} extends \acro{GKAT} with three new syntactic constructs:
(1)~a probabilistic choice operator, representing branching based on a (possibly biased) coin flip;
(2)~a probabilistic loop operator, representing a generalised Bernoulli process; and
(3)~return values, which allow a limited form of non-local control flow akin to \texttt{return} statements in imperative programming.

The main focus of this paper is the problem of axiomatising bisimilarity of \acro{ProbGKAT} expressions.
We build on an inference system for reasoning about bisimilarity of \acro{GKAT} expressions~\cite{Schmid:2021:Guarded}, which includes a generalisation of Salomaa's axiomatisation of the Kleene star~\cite{Salomaa:1966:Two} called the Uniqueness of Solutions axiom (\acro{UA}), also known in the process algebra community as Recursive Specification Principle (\acro{RSP})~\cite{Bergstra:1985:Verification}.
In the presence of both Boolean guarded and probabilistic branching, axiomatisation becomes significantly more involved.
Besides adding intuitive rules governing the behaviour of probabilistic choice and loops, we add axioms capturing the interaction of both kinds of branching when combined with looping constructs.
Moreover, in the case of \acro{ProbGKAT}, showing the soundness of \acro{UA} becomes highly nontrivial.
We do so by exploiting the topological structure of the operational model, namely the \emph{behavioural pseudometric} associated with bisimilarity.
Despite the jump in difficulty, our completeness proof follows a similar strategy as the one for \acro{GKAT} modulo bisimilarity~\cite{Schmid:2021:Guarded}.

Our main contributions are as follows.
\begin{itemize}%[topsep=1pt,itemsep=0pt]
    \item We provide an operational semantics of \acro{ProbGKAT} programs, which relies on a type of automata that have both Boolean guarded and probabilistic transitions (\cref{sec:opsem}).
    \item We concretely characterise bisimulations for our automata using an adaption of the flow network characterisation of bisimilarity of Markov chains~\cite{Jones:1990:Probabilistic,Bartels:2003:Hierarchy} (\cref{sec:bisim}).
    \item We give a sound and complete Salomaa-style axiomatisation of bisimulation equivalence of \acro{ProbGKAT} expressions (\cref{sec: axiomatisation,sec: completeness}).
    \item We show that, when the number of tests is fixed, bisimilarity of \acro{ProbGKAT} expressions can be efficiently decided in \(\mathcal{O}(n^3 \log n)\) time, where \(n\) is the total size of programs under comparison, by using \emph{coalgebraic partition refinement}~\cite{Deifel:2019:Generic, Wissmann:2020:Efficient} (\cref{sec:decidability}).
\end{itemize}
In \cref{sec: syntax} we define the syntax of \acro{ProbGKAT}.
We survey related work in \cref{sec:related}; conclusions and further work appear in \cref{sec:future}.
Proofs appear in the appendix.

\section{Syntax}\label{sec: syntax}

\acro{ProbGKAT} has a two-sorted syntax consisting of a set of \emph{expressions} \(\Exp\) that contains a set \(\Bexp\) of \emph{Boolean assertions} or \emph{tests}.
For a fixed finite set \(\T\) of \emph{primitive tests}, the syntax for tests is denoted \(\Bexp\) and generated by the grammar
% The syntax of \acro{ProbGKAT} is two-sorted and is an inductively defined set \(\Exp\) parametrised by finite and disjoint sets of \emph{output variables} (\(\V\)) and \emph{primitive actions} (\(\Act\)). The first sort describes Boolean guards. Given a finite set \(\T\) of \emph{primitive tests} describing program assertions we use \(\Bexp\) to denote the set of \emph{Boolean expressions} (also referred to as \emph{tests}), which are inductively defined by the following grammar:
\[
    \test{b}, \test{c} \in \Bexp ::= \test \zero \mid \test \one \mid t \in \T \mid \test{b} + \test{c} \mid \test{b}\test{c} \mid \Bar{\test{b}}
\]
Here, \(0\) and \(1\) respectively denote false and true, \(\bar{\test{\cdot}}\) denotes negation, \(+\) is disjunction, and juxtaposition is conjunction.
Let \(\beq\) denote Boolean equivalence in \(\Bexp\).  Entailment is a preorder on \(\Bexp\) given by \(\test{b} \bleq \test{c} \iff \test{b} + \test{c} \beq \test{c}\).  The quotient of \(\Bexp\) by \(\beq\) is the free Boolean algebra on the set of generators \(\T\), in which entailment---\(\eqc{\test{b}}{\beq} \leq \eqc{\test{c}}{\beq} \iff \test{b} \bleq \test{c}\)---is a partial order, with bottom and top elements being the equivalence classes of \(\test{\zero}\) and \(\test{\one}\) respectively. The minimal non-zero elements of that partial order are called \emph{atoms}, and we will use \(\At\) to denote the set of atoms.
For fixed sets \(\Act\) of \emph{atomic actions} and \(\V\) of \emph{return values}, the set \(\Exp\) of \acro{ProbGKAT} expressions is defined by the grammar in \Cref{fig:probgkat}.
\begin{figure}[t]
\begin{align*}
    \expr{e}, \expr{f} \in \Exp ::= &\hspace{0.83em}\action{p} \in \Act && \mathsf{do}~\action{p}\\
    &\mid \test{b} \in \Bexp && \mathsf{assert}~\test{b}\\
    % &\mid \trmt{\expr{e}}\\
    &\mid \expr{e} +_{\test{b}} \expr{f} && \mathsf{if}~\test{b}~\mathsf{then}~\expr{e}~\mathsf{else}~\expr{f}\\
    &\mid \expr{e} \seq \expr{f} \\
    &\mid \expr{e}^{(\test{b})} && \mathsf{while}~\test{b}~\mathsf{do}~\expr{e}\\
    &\mid \var{v} \in \V && \mathsf{return}~\var{v}\\
    &\mid \expr{e} \oplus_{\prob{r}} \expr{f} && \mathsf{if}~\mathsf{flip(}\prob{r}\mathsf{)}~\mathsf{then}~\expr{e}~\mathsf{else}~\expr{f}\\ % chktex 9
    &\mid \expr{e}^{[\prob{r}]} && \mathsf{while}~\mathsf{flip(}\prob{r}\mathsf{)}~\mathsf{do}~\expr{e} % chktex 9
   \end{align*}
\caption{Syntax of \acro{ProbGKAT}.}\label{fig:probgkat}
\end{figure}

The syntax of \acro{GKAT} is captured by the first five cases in \cref{fig:probgkat}, and so is a proper fragment of \acro{ProbGKAT}.
There are three new constructs: \emph{return values}, \emph{probabilistic choices}, and \emph{probabilistic loops}.
Return values behave like return statements in imperative programs, introducing a form of non-local control flow.
The probabilistic choice \(\expr{e} \oplus_\prob{r} \expr{f}\) flips a biased coin with real bias $\prob r\in[0,1]$ and depending on the outcome runs \(\expr{e}\) with probability \(\prob{r}\) and \(\expr{f}\) with probability \(\prob{1-r}\).
The probabilistic loop \(\expr{e}^{[\prob{r}]}\) also begins with a biased coin flip, and depending on the outcome it either executes \(\expr{e}\) and starts again (probability $r$) or terminates (probability $1 -r$).
A probabilistic loop can be regarded as a generalised Bernoulli processes.

\begin{example}%
    \label{ex:3die}
    Recall the two programs from the introduction (\cref{fig:3die}): one directly implementing a 3-sided die and the other simulating a 3-sided die with a fair coin.
    We can express these programs using three output values, $\epsdice{1}$, $\epsdice{2}$, and $\epsdice{3}$, to model the possible outcomes of the three-sided die.
    The first program is the infinite while loop $\expr{f} = \expr{g}^{(\test{\one})}$, where the loop body is given by \(\expr{g} = (\epsdice{1} \oplus_{\prob{\frac{1}{2}}} \epsdice{2}) \oplus_{\prob{\frac{1}{2}}} (\epsdice{3} \oplus_{\prob{\frac{1}{2}}} \test{\one})\).
    In \(\expr{f}\), \(\epsdice{1}\) represents \texttt{heads-heads}, \(\epsdice{2}\) is \texttt{heads-tails}, \(\epsdice{3}\) is \texttt{tails-heads}, and \texttt{tails-tails} prompts a rethrow.
    The second program encodes the \acro{ProbGKAT} expression \(\expr{e} = \epsdice{1} \oplus_{\prob{\frac{1}{3}}}(\epsdice{2} \oplus_{\prob{\frac{1}{2}}} \epsdice{3})\).
\end{example}
% \begin{example}
% Let \(\V = \{\epsdice{1}, \epsdice{2}, \epsdice{3}\}\). \(\expr{e_1} = \epsdice{1} \oplus_{\prob{\frac{1}{3}}} \left(\epsdice{2}\oplus_{\prob{\frac{1}{2}}} \epsdice{3} \right) \) and
% \(\expr{e_2} = \left(\left( \epsdice{1} \oplus_{\prob{\frac{1}{2}}} \epsdice{2}\right) \oplus_{\prob{\frac{1}{2}}} \left(\epsdice{3} \oplus_{\prob{\frac{1}{2}}} \test{\one}\right) \right)^{(\test{\one})}\)
% \end{example}

\section{Operational semantics}\label{sec:opsem}
In this section, we formally introduce \acro{ProbGKAT} automata, the operational models of \acro{ProbGKAT} expressions.
We associate a \acro{ProbGKAT} automaton with each expression via a small-step semantics inspired by Brzozowski derivatives~\cite{Brzozowski:1964:Derivatives}.
As we will see, the biggest hurdle is the semantics of the probabilistic loop.
Before we provide our small-step semantics, we introduce the notation and operations on probability distributions that we will need.

\subparagraph*{Preliminary definitions. }
A function \(\nu: X \to [0,1]\) is called a \emph{(probability) distribution on \(X\)} if it satisfies \(\sum_{x \in X} \nu(x) = 1\).
In case \(\sum_{x \in X} \nu(x) \le 1\) we call \(\nu\) a \emph{subprobability distribution}, or \emph{subdistribution}.
Every (sub)distribution \(\nu\) in this paper is \emph{finitely supported}, which means that the set \(\supp(\nu) = \{x \in X \mid \nu(x) > 0 \}\) is finite.
Given \(A \subseteq X\), we define \(\nu[A]=\sum_{x \in A} \nu(x)\).
This sum is well-defined because only finitely many summands have non-zero probability.

We use \(\D(X)\) to denote the set of finitely supported probability distributions on the set \(X\).
A function \(f : X \to Y\) can be \emph{lifted} to a map \(\D(f) : \D(X) \to \D(Y)\) between distributions by setting \(\D(f)(\nu) = \nu[f^{-1}(y)]\).
Given \(x \in X\), its \emph{Dirac delta} is the distribution \(\delta_x\); here \(\delta_x(y)\) is equal to $1$ when $x=y$, and $0$ otherwise.
Given \(f : X \to \D (Y)\), there is a unique map \(\bar{f} \colon \D (X) \to \D (Y)\) such that \(f = \bar{f} \circ \delta\), called the \emph{convex extension of \(f\)}, and explicitly given by \(\bar{f}(\nu)(y) = \sum_{x \in X} \nu(x) f(x)(y) \).

When $\nu, \mu: X \to [0,1]$ are probability distributions and $r \in [0,1]$, we write $r\nu + (1-r)\mu$ for the \emph{convex combination} of $\nu$ and $\mu$, which is the probability distribution given by $(r\nu + (1-r)\mu)(x) = r\nu(x) + (1-r)\mu(x)$; this operation preserves finite support.

\subparagraph*{Operational model. }
Operationally, \acro{ProbGKAT} expressions denote states in a transition system called a \emph{\acro{ProbGKAT} automaton}.
Below, we write \(\2=\{\reject, \accept\}\) for a two element set of symbols denoting rejection and acceptance respectively.

\begin{definition}\label{def:transition_system}
    A \emph{\acro{ProbGKAT} automaton} is a pair \((X, \beta)\) consisting of set of \emph{states} \(X\) and a \emph{transition function} \(\beta : X \times \At \to \D (\2 + \V + \Act \times X)\).
\end{definition}

A state in a \acro{ProbGKAT} automaton associates each Boolean atom $\alpha \in \At$ (capturing the global state of the Boolean variables) with a finitely supported probability distribution over several possible outcomes.
One possible outcome is \emph{termination}, which ends execution and either signals success (\(\accept\)) or failure (\(\reject\)), or returns an output value ($\var{v} \in \V$).
The other possible outcome is \emph{progression}, performing an action ($\action{p} \in \Act$) and transitioning to a state.

\begin{example}\label{ex:automaton}
    Let \(X = \{x_1, x_2\}\), \(\At = \{\alpha, \beta\}\), \(\Act = \{\action{p}, \action{q}\}\) and \(\V = \{v\}\).
    On the right, there is a definition of a transition function \(\tau: X \times \At \to \D(\2 + \V + \Act \times X)\), while on the left there is a visual representation of \((X, \tau)\).
    Given a state \(x \in X\) and an atom \(\alpha \in \At\), we write \(\tau(x)_\alpha\) rather than \(\tau(x)(\alpha)\).
\[
        \begin{tikzpicture}%[gridded]
        \useasboundingbox (0,-1) rectangle (10.2,1);
            \node(0) {$x_1$};
            \node (1) [right=1cm of 0]{$\circ$};
            \node (2) [below=.5cm of 1]{$\circ$};
            \draw (0) edge[-latex] node[above] {\(\beta\)} (1);
            \draw (0) edge[-latex] node[above] {\(\alpha\)} (2);
            \path[-latex,dashed] (1) edge[bend right=80]  node[above] {\scriptsize{\(\action{p} \mid \prob{0.5}\)}} (0);
            % \path[->,dashed] (2) edge[bend left=80]  node[below, pos=0.4] {\footnotesize{\(\action{a} \mid \prob{0.5}\)}} (0);
            \node(3) [right=2.5cm of 0]{\(x_2\)};
            \path[-latex,dashed] (2) edge  node[below right, pos=0.55] {\footnotesize{\(\action{q} \mid \prob{0.5}\)}} (3);
            \path[-latex,dashed] (1) edge  node[above, pos=0.55] {\footnotesize{\(\action{q} \mid \prob{0.5}\)}} (3);
            \node (4) [right=.75cm of 3]{$\circ$};
            \draw (3) edge[-latex] node[above, pos=0.35] {\footnotesize\(\alpha, \beta\)} (4);
            \node(5) [right=.5cm of 4]{$\accept$};
            \draw (4) edge[-implies, double, double distance=0.5mm] node[above] {\scriptsize\(\prob{1}\)} (5);
            \node(6) [left=.5cm of 2] {$v$};
            \draw (2) edge[-implies, double, double distance=0.5mm] node[below] {\scriptsize\(\prob{0.5}\)} (6);
            \node[right=.1cm of 5,align=left,yshift=2mm] {
                \begin{minipage}{0.65\textwidth}
                \small
                \begin{align*}
                    &\tau(x_1)_\alpha = \prob{\frac{1}{2}}\delta_{(\action{q}, x_2)} + \prob{\frac{1}{2}}\delta_{\var{v}}\\[1ex]
                    & \tau(x_1)_\beta = \prob{\frac{1}{2}}\delta_{(\action{p}, x_1)} + \prob{\frac{1}{2}}\delta_{(\action{q}, x_2)}\\[1ex]
                    &\tau(x_2)_\alpha=\tau(x_2)_\beta=\delta_{\accept}
                \end{align*}
                \end{minipage}
            };
        \end{tikzpicture}
\]
    We use solid lines annotated with (sets of) atoms to denote Boolean guarded branching, dashed lines annotated with atomic actions and probabilities to denote probabilistic labelled transitions to a next state, and double bar arrows pointing at elements of \(\2 + \V\) annotated with probabilities to denote probabilistic transitions that result in termination or output.
\end{example}

The following notions of homomorphism and bisimulation describe structure-preserving maps and relations between \acro{ProbGKAT} automata.

\begin{definition}\label{def:homomorphism_of_transition_systems}
    A \emph{homomorphism} between \acro{ProbGKAT} automata \((X, \beta)\) and \((Y, \gamma)\) is a function \(f : X \to Y\) satisfying for all \(x \in X\) and \(\alpha \in \At\)
    \begin{enumerate}
        \item For any \(o \in \2 + \V\), \(\gamma(f(x))_\alpha(o) = \beta(x)_\alpha(o)\)
        \item For any \((\action{p}, y) \in \Act \times Y\), \(\gamma(f(x))_\alpha(\action{p}, y) = \beta(x)_\alpha[\{\action{p}\} \times f^{-1}(y)]\)
    \end{enumerate}
\end{definition}

\begin{definition}\label{def:bisimulation_of_transition_systems}
    Let \((X, \beta)\) and \((Y, \gamma)\) be \acro{ProbGKAT} automata and let \(R \subseteq X \times Y\) be a relation. \(R\) is a \emph{bisimulation} if there exists a transition function \(\rho: R \times \At \to \D(\2 + \V + \Act \times R)\) such that projection maps \(\pi_1 : R \to X\) and \(\pi_2 : R \to Y\) given by \(\pi_1(x,y)=x\) and \(\pi_2(x,y)=y\) are homomorphisms from \((R, \rho)\) to \((X, \beta)\) and \((Y, \gamma)\) respectively.
\end{definition}

\begin{remark}
    \cref{def:homomorphism_of_transition_systems,def:bisimulation_of_transition_systems} are direct translations from the coalgebraic theory of \acro{ProbGKAT} automata (see \cref{apx:coalgebra}).
    Coalgebra plays a central role in our proofs, but for purposes of exposition it does not appear in the body of the present paper.
\end{remark}

\subparagraph*{Brzozowski construction. }

 \acro{ProbGKAT} expressions can be endowed with an operational semantics in the form of a \acro{ProbGKAT} automaton \(\partial: \Exp \times \At \to \D(\2 + \V + \Act \times \Exp)\), which we refer to as the \emph{Brzozowski derivative}, as it is reminiscent of the analogous construction for regular expressions and deterministic finite automata due to Brzozowski~\cite{Brzozowski:1964:Derivatives}.

Given \(\alpha \in \At\), \(\expr{e}, \expr{f} \in \Exp\), \(\test{b} \in \Bexp\), \(\var{v} \in \V\), \(\prob{r} \in [0,1]\), and \(\action{p} \in \Act\), we define
\begin{mathpar}
    \partial(\test{b})_\alpha = \begin{cases} \delta_\accept & \alpha \bleq \test{b}\\\delta_{\reject} & \alpha\bleq\test{\bar{b}}\end{cases}
    \and
    \partial(\expr{e} +_\test{b} \expr{f})_\alpha = \begin{cases} \partial(\expr{e})_\alpha & \alpha \bleq \test{b}\\\partial(\expr{f})_\alpha & \alpha\bleq\test{\bar{b}}\end{cases}
    \and
    \partial(\var{v})_\alpha = \delta_{\var{v}} \quad\quad\quad\quad \partial(\action{p})_\alpha = \delta_{(\action{p}, \test{\one})}
    \and

    \partial(\expr{e} \oplus_{\prob{r}} \expr{f})_\alpha=\prob{r}\partial(\expr{e})_\alpha + \prob{(1-r)}\partial(\expr{f})_\alpha
\end{mathpar}
The derivatives of sequential composition and loops are defined below.
The outgoing transitions of \(\test{b} \in \Bexp\) depend on whether or not the input atom \(\alpha \in \At\) satisfies \(\test{b}\), either outputting \(\accept\) (success) or \(\reject\) (abort) with probability \(\prob{1}\).
The outgoing transitions of a guarded choice \(\expr{e} +_\test{b}\expr{f}\) consist of the outgoing transitions of \(\expr{e}\) labelled by atoms satisfying \(\test{b}\) and the outgoing transitions of \(\expr{f}\) labelled by atoms satisfying \(\test{\bar b}\) (as in \acro{GKAT}).
The output value \(v \in \V\) returns the value \(v\) with probability \(\prob{1}\) given any input atom.
The atomic action \(\action{p} \in \Act\) emits \(\action{p}\) given any input atom and transitions to the expression \(\test{\one}\).
The outgoing transitions of the probabilistic choice \(\expr e \oplus_{\prob r} \expr f\) consist of the outgoing transitions of \(\expr{e}\) with probabilities scaled by \(r\) and the outgoing transitions of \(\expr{f}\) scaled by \(1 - r\).

\smallskip
The behaviour of the sequential composition $\expr{e} \seq \expr{f}$ is more complicated.
We need to factor in the possibility that $\expr{e}$ may accept with some probability \(\prob{t}\) given an input atom \(\alpha\), in which case the \(\alpha\)-labelled outgoing transitions of \(f\) contribute to the outgoing transitions of \(\expr e \seq \expr f\).
Formally, we write \(\partial(\expr{e} \seq \expr{f})_\alpha = \partial(\expr{e})_\alpha \semseq_\alpha \expr{f}\), where given \(\alpha \in \At\) and \(\expr{f} \in \Exp\) we define \(( - \semseq_\alpha \expr{f}) : \D(\2 + \V + \Act \times \Exp) \to \D(\2 + \V + \Act \times \Exp)\) to be the convex extension of \(c_{\alpha, \expr{f}} : \2 + \V + \Act \times \Exp \to \D(\2 + \V + \Act \times \Exp)\) given below on the left.
\begin{gather*}
    c_{\alpha, \expr{f}}(x) =
    \begin{cases}
        \delta_x & x \in \{\reject\} \cup \V \\
        \partial(\expr{f})_\alpha & x = \accept\\
        \delta_{(\action{p}, \expr{e}'\seq\expr{f})} & x = (\action{p}, \expr{e'})\\
    \end{cases}
    \quad\quad\quad\quad
    \begin{tikzpicture}[baseline=-5ex]
        \node (0) {$\expr{e}\seq\expr{f}$};
        \node (1) [below=.75cm of 0] {$\circ$};
        \draw (0) edge[-latex] node[left, pos=0.5] {\(\alpha\)} (1);
        \node (4) [right=1.2cm of 1] {$\expr{e}'{\color{Maroon} {} \seq \expr{f}}$};
        \node (5) [left=1.2cm of 1] {{\color{gray} \bcancel\checkmark}};
                \node (5a) [left=-.2cm of 5] {{\color{Maroon} \(\partial(\expr{f})_\alpha\)}};
        \draw (1) edge[-implies, double, double distance=0.5mm] node[above] {\(\prob{t}\)} (5);
        \draw (1) edge[-latex, dashed] node[above] {\footnotesize\( \action{p} \mid \prob{s}\)} (4);
    \end{tikzpicture}
\end{gather*}

Intuitively, $c_{\alpha,\expr{f}}$ reroutes the transitions coming out of $\expr{e}$: acceptance (the second case) is replaced by the behaviour of $\expr{f}$, and the probability mass of transitioning to $\expr{e}'$ (the third case) is reassigned to $\expr e\seq\expr{f}$.
The branches that output the elements of \(\{\reject\} + \V\) are unchanged by this operation.
A pictorial representation of the effect on the derivatives of $\expr{e} \seq \expr{f}$ is given above on the right.
Here, we assume that \(\partial(\expr{e})_\alpha\) can perform a \(\action{p}\)-transition to \(\expr{e'}\) with probability \(\prob{s}\); we make the same assumption in the informal descriptions of derivatives for loops, below.

\smallskip
For guarded loops, we consider three cases when defining \(\partial\left(\expr{e}^{(\test{b})}\right)_\alpha\).
If \(\alpha \bleq \test{\bar{b}}\), then the current state does not satisfy the loop guard and can be skipped: \(\partial\left(\expr{e}^{(\test{b})}\right)_\alpha=\delta_{\accept}\).
If \(\alpha \bleq \test{b} \) and \(\partial(\expr{e})_\alpha(\accept)=\prob{1}\), then the loop body is called, but the inner program $\expr{e}$ does not perform actions.
 We identify divergent loops with rejection %
%    \footnote{
%        Finitely supported distributions on \(2 + \V + \Act \times \Exp\) are isomorphic to subdistributions on \(\{\accept\} + \V + \Act \times \Exp\), where probability assigned to \(\reject\) corresponds to the lost probability mass.
  %      Subdistributions form a CPO, and using the pointwise order on the unit interval \(\delta_{\reject}\) is the bottom element of that order.
 %       The reweighting of probabilities used in the definition of the loops comes from defining the loops as least fixpoints, similarly to the process calculus of Stark and Smolka~\cite{Stark:2000:Complete}.
   % }
and so in this case we set \(\partial\left(\expr{e}^{(\test{b})}\right)_\alpha=\delta_{\reject}\).
If  \(\alpha \bleq \test{{b}}\) and \(\partial(\expr{e})_\alpha(\accept)<\prob{1}\), the program executes the loop body and starts again, having to redistribute the probability mass of immediate acceptance \(\partial(\expr{e})_\alpha(\accept)\) through each execution.
So,  for \(\alpha \bleq \test{{b}}\) and \(\partial(\expr{e})_\alpha(\accept)<\prob{1}\), the definition of  $\partial(\expr{e}^{(\test{b})})_\alpha$ is given below on the left: it rejects or returns when $\expr{e}$ does, and transitions to $\expr{e'} \seq \expr{e}^{(b)}$ when $\expr{e}$ transitions to $\expr{e'}$.
\begin{gather*}
    \partial(\expr{e}^{(\test{b})})_\alpha(x)=\begin{cases} \frac{\partial(\expr{e})_\alpha(x)}{\prob{1}-\partial(\expr{e})_\alpha(\accept)} & x \in \{\reject\}\cup\V \\ \frac{\partial(\expr{e})_\alpha(\action{p}, \expr{e'})}{\prob{1}-\partial(\expr{e})_\alpha(\accept)} & x = \left(\action{p},\left( \expr{e'}\seq\expr{e}^{(\test{b})}\right)\right)\\
        \prob{0} &\text{otherwise} \\\end{cases}
    \quad\quad
    \begin{tikzpicture}[baseline=-5ex]
        \node (0) {$\expr{e}^{(\test{b})}$};
        \node (1) [below=.7cm of 0] {$\circ$};
        \node (2) [right=2cm of 0] {$\circ$};
        \node (3) [right = 0.65cm of 2] {\checkmark};
        \draw (2) edge[-implies, double, double distance=0.5mm] node[above] {\(1\)} (3);
        \draw (0) edge[-latex] node[left, pos=0.5] {\(\test{b}\)} (1);
        \draw (0) edge[-latex] node[above] {\(\test{\bar b}\)} (2);
        \node (4) [right=2cm of 1] {$\expr{e}'{\color{Maroon}{}\seq\expr{e}^{(\mathsf{b})}}$};
        \node (5) [left=.8cm of 1] {{\color{gray} \(\bcancel{\checkmark}\)}};
        \draw (1) edge[-implies, double, double distance=0.5mm, gray] node[above, gray] {\(\prob{t}\)} (5);
        \draw (1) edge[-latex, dashed] node[above] {\footnotesize\( \action{p} \mid \prob{s}/{\color{Maroon} (1-t)}\)} (4);
    %    \draw[dotted] (5) edge[-latex,bend right=15] (4.south);
    \end{tikzpicture}
\end{gather*}
% We have drawn an informal description of the guarded loop above on the right, setting $t=\partial(\expr{e})_\alpha(\accept)$.
The reweighing of probabilities used in the definition of the loops comes from defining loops as least fixpoints w.r.t.\ to an order on distributions, similarly to Stark and Smolka~\cite{Stark:2000:Complete}.
% \[\partial(\expr{e}^{(\test{b})})_\alpha(x)=\begin{cases} \frac{\partial(\expr{e})_\alpha(x)}{\prob{1}-\partial(\expr{e})_\alpha(\accept)} & x \in \{\reject\}\cup\V \\ \frac{\partial(\expr{e})_\alpha(\action{p}, \expr{f})}{\prob{1}-\partial(\expr{e})_\alpha(\accept)} & x = \left(\action{p},\left( \expr{f}\seq\expr{e}^{(\test{b})}\right)\right)\\
% \prob{0} &\text{otherwise} \\\end{cases}\]

\smallskip
Finally, we specify the behaviour of the probabilistic loop.
In the special case where \(\partial(\expr{e})_\alpha(\accept) = \prob{1}\) and \(\prob{r} = \prob{1}\), the loop will not terminate; hence we set \(\partial\left(\expr{e}^{[\prob{r}]}\right)_\alpha=\delta_{\reject}\).
In all other cases, we look at $\partial(\expr{e})_\alpha$ to build $\partial(\expr{e}^{[\prob{r}]})_\alpha$ for each $\alpha \in \At$.
First, we make sure that the loop may be skipped with probability $1-r$.
Next, we account for the possibility that $\expr{e}$ may reject or return a value, and we modify the productive branches by adding $\expr{e}^{[\prob{r}]}$ to be executed next, as was done for the guarded loop.
The remaining mass is $r\partial(e)_\alpha(\accept)$, the probability that we will enter the loop with an atom that can skip over the loop body.
As was the case for the guarded loop, we discard this possibility and redistribute it among the remaining branches.
The resulting definition of $\partial(\expr{e}^{[\prob{r}]})_\alpha$ is given below on the left.
\begin{gather*}
    \partial\left(\expr{e}^{[\prob{r}]}\right)_\alpha(x) = \begin{cases}
        \frac{\prob{1-r}}{\prob{1}-\prob{r}\partial(\expr{e})_\alpha(\accept)} & x = \accept\\
        \frac{\prob{r}\partial(\expr{e})_\alpha(x)}{\prob{1}-\prob{r}\partial(\expr{e})_\alpha(\accept)} & x \in \{\reject\} \cup \V \\[.7ex]
        \frac{\prob{r}\partial(\expr{e})_\alpha(\action{p},\expr{e'})}{\prob{1}-\prob{r}\partial(\expr{e})_\alpha(\accept)} & x = \left(\action{p}, \left(\expr{e'}\seq\expr{e}^{[\prob{r}]}\right)\right)\\[.7ex]
        \prob{0} &\text{otherwise}
        \end{cases}
    \quad
    \begin{tikzpicture}[baseline=-7ex]
        \node (0) {$\expr{e}^{[\prob{r}]}$};
        \node (1) [below=.5cm of 0] {$\circ$};
        \draw (0) edge[-latex] node[left, pos=0.5] {\(\alpha\)} (1);
        \node (6) [left=1.2cm of 1, Maroon] {\checkmark};
        \node (4) [right=2.2cm of 1] {$\expr{e}'{\color{Maroon}{} \seq \expr{e}^{[\prob{r}]}}$};
        \node (5) [below=.5cm of 1] {{\color{gray} \(\bcancel{\checkmark}\)}};
        \draw (1) edge[-implies, double, double distance=0.5mm, gray, Maroon] node[above] {\footnotesize \(\frac{1-\prob{r}}{1-\prob{rt}}\)} (6);
        \draw (1) edge[-implies, double, double distance=0.5mm, gray, pos=0.3] node[left, gray] {\(\prob{rt}\)} (5);
        \draw (1) edge[-latex, dashed] node[above] {\footnotesize\( \action{p} \mid {\textcolor{Maroon}{\prob{r}}}\prob{s}/{\color{Maroon} (1-\prob{rt})}\)} (4);
  %      \draw[dotted,bend right=20] (5) edge[-latex] (4);
   %     \draw[dotted,bend left=20] (5) edge[-latex] (6);
    \end{tikzpicture}
\end{gather*}
As before, we provide an informal visual depiction of the probabilistic loop semantics above on the right, using the same conventions.

\subparagraph*{Reachable states. }
For any \acro{ProbGKAT} automaton \((X, \beta)\) and any \(x \in X\), we denote by \(\cogen{x}{\beta}\) the set of states reachable from \(x\) via \(\beta\). Clearly, \((\cogen{x}{\beta}, \beta)\) is a \acro{ProbGKAT} automaton and is the smallest subautomaton of \((X, \beta)\) containing \(x\). The canonical inclusion map \((\cogen{x}{\beta}, \beta)\to(X, \beta)\) is a \acro{ProbGKAT} automaton homomorphism. In particular, \((\cogen{\expr e}{\partial}, \partial)\) is the smallest subautomaton of \((\Exp,\partial)\) containing $\expr e$. We will refer to this subautomaton as the small-step semantics of \(\expr{e}\). We will often abuse notation and write \(\cogen{\expr e}{\partial}\) for \((\cogen{\expr e}{\partial}, \partial)\).
% The semantics of an expression \(\expr{e} \in \Exp\) are given in terms of subcoalgebra of \((\Exp, \partial)\) generated by \(\expr{e}\). Even though the set \(\Exp\) is infinite, we have the following property

The following lemma says that every \acro{ProbGKAT} expression generates a finite automaton.
\begin{restatable}{lemma}{locallyfinite}\label{lem:locally_finite}
For all \(\expr{e} \in \Exp\), \(\cogen{\expr{e}}{\partial}\) is finite. In fact, the number of states is bounded above by \(\#(\expr{e}): \Exp \to \bN\), where \(\#(-)\) is defined recursively by
    {
        \small
    \[\#(\test{b})=1 \quad \#(\var{v})=1 \quad \#(\action{p})=2 \quad \#(\expr{e} +_\test{b} \expr{f}) = \#(\expr{e}) + \#(\expr{f}) \quad \#(\expr{e} \oplus_{\prob{r}} \expr{f}) = \#(\expr{e}) + \#(\expr{f}) \]
    \[\#(\expr{e}\seq\expr{f}) = \#(\expr{e}) + \#(\expr{f}) \quad \#\left(\expr{e}^{(\test{b})}\right)=\#(\expr{e}) \quad \#\left(\expr{e}^{[\prob{r}]}\right)=\#(\expr{e}) \]
    }
\end{restatable}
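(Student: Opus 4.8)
The plan is to prove the statement by induction on the structure of $\expr{e}$, showing simultaneously that (i) $\cogen{\expr{e}}{\partial}$ is finite and (ii) its cardinality is at most $\#(\expr{e})$. The heart of the argument is understanding, for each syntactic construct, which new expressions can appear as Brzozowski derivatives, and relating the reachable set of the compound expression to the reachable sets of its immediate subexpressions.

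\textbf{Base cases.} For $\test{b} \in \Bexp$, every derivative $\partial(\test{b})_\alpha$ is a Dirac delta on $\2$, so no new state is produced and $\cogen{\test{b}}{\partial} = \{\test{b}\}$, of size $1 = \#(\test{b})$. Similarly for $\var{v} \in \V$, where $\partial(\var{v})_\alpha = \delta_{\var{v}}$, giving $\cogen{\var{v}}{\partial} = \{\var{v}\}$. For $\action{p} \in \Act$, we have $\partial(\action{p})_\alpha = \delta_{(\action{p}, \test{\one})}$, so $\cogen{\action{p}}{\partial} = \{\action{p}, \test{\one}\}$ (since $\test{\one}$ only outputs $\accept$), of size $2 = \#(\action{p})$.

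\textbf{Inductive cases.} For the guarded and probabilistic choices $\expr{e} +_{\test{b}} \expr{f}$ and $\expr{e} \oplus_{\prob{r}} \expr{f}$, the derivatives are built from $\partial(\expr{e})_\alpha$ and $\partial(\expr{f})_\alpha$ without introducing new expressions, so any reachable state is either the root itself or reachable from $\expr{e}$ or from $\expr{f}$; hence $\cogen{\expr{e} +_{\test{b}} \expr{f}}{\partial} \subseteq \{\expr{e} +_{\test{b}} \expr{f}\} \cup \cogen{\expr{e}}{\partial} \cup \cogen{\expr{f}}{\partial}$, and likewise for $\oplus_{\prob{r}}$. By the induction hypothesis both summands are finite, so the union is finite; a small subtlety is that the root expression $\expr{e} +_{\test{b}} \expr{f}$ might itself be syntactically reachable from $\expr{e}$ or $\expr{f}$, but since we are bounding from above the crude count $1 + \#(\expr{e}) + \#(\expr{f})$ is more than $\#(\expr{e}+_{\test{b}}\expr{f}) = \#(\expr{e}) + \#(\expr{f})$ --- so in fact we want the sharper observation that the root, if not already counted, replaces nothing; I would argue instead that every reachable state from the root is either already a reachable state of $\expr{e}$ or of $\expr{f}$, giving the clean bound $\#(\expr{e}) + \#(\expr{f})$. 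For sequential composition $\expr{e} \seq \expr{f}$, the key structural fact is that every derivative of $\expr{e} \seq \expr{f}$ is either of the form $(\action{p}, \expr{e}' \seq \expr{f})$ with $\expr{e}'$ reachable from $\expr{e}$, or is a derivative of $\expr{f}$ (when $\expr{e}$ accepts), or is a plain output. Hence $\cogen{\expr{e} \seq \expr{f}}{\partial} \subseteq \{\, \expr{e}' \seq \expr{f} : \expr{e}' \in \cogen{\expr{e}}{\partial} \,\} \cup \cogen{\expr{f}}{\partial}$, and taking cardinalities gives the bound $\#(\expr{e}) + \#(\expr{f})$. For the loops $\expr{e}^{(\test{b})}$ and $\expr{e}^{[\prob{r}]}$: inspecting the derivative definitions, the only expressions reachable in one step are plain outputs, $\test{\one}$-like states ($\delta_{\accept}$, arising from the skip branch --- but note $\delta_\accept$ is already $\partial(\test{\one})$, which appears among the reachable states of $\expr{e}$ precisely when $\expr{e}$ does any action, and when it does not we must check that the guarded/probabilistic loop produces only $\delta_\accept$ or $\delta_\reject$, a single new state absorbed into the bound), or states of the form $(\action{p}, \expr{e}' \seq \expr{e}^{(\test{b})})$ with $\expr{e}'$ reachable from $\expr{e}$. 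Crucially the ``tail'' $\expr{e}^{(\test{b})}$ (resp.\ $\expr{e}^{[\prob{r}]}$) is fixed throughout, so the reachable set injects into $\{\, \expr{e}' \seq \expr{e}^{(\test{b})} : \expr{e}' \in \cogen{\expr{e}}{\partial} \,\} \cup \cogen{\expr{e}}{\partial} \cup \{\text{the root}\}$; the root $\expr{e}^{(\test{b})}$ itself equals $\test{\one} \seq \expr{e}^{(\test{b})}$ up to behaviour only in a loose sense, so one must be slightly careful, but the bound $\#(\expr{e}^{(\test{b})}) = \#(\expr{e})$ is obtained by noting the reachable states are in bijective correspondence (up to the semicolon wrapper) with $\cogen{\expr{e}}{\partial}$, including the root corresponding to $\expr{e}$ itself. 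This matching of the root of the loop with the root of the body is the one place where the accounting is tight.

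\textbf{Main obstacle.} The delicate step is the loop cases, and specifically making the cardinality bound $\#(\expr{e})$ (rather than $\#(\expr{e}) + 1$ or larger) come out exactly. The naive containment argument produces one extra state --- the root $\expr{e}^{(\test{b})}$ --- and one must argue that the root ``reuses'' the slot that would otherwise be occupied by the (unreachable-as-written) body root $\expr{e}$, i.e.\ that the map $\expr{e}' \mapsto \expr{e}' \seq \expr{e}^{(\test{b})}$ from $\cogen{\expr{e}}{\partial}$ together with the root $\expr{e}^{(\test{b})}$ covers all reachable states, so that $|\cogen{\expr{e}^{(\test{b})}}{\partial}| \le 1 + |\cogen{\expr{e}}{\partial} \setminus \{\expr{e}\}| = |\cogen{\expr{e}}{\partial}|$ when $\expr{e}$ is its own root --- one needs to track this carefully, likely by strengthening the induction hypothesis to record that $\expr{e} \in \cogen{\expr{e}}{\partial}$ always, and that the reachable set of a loop decomposes as the root plus $\{\expr{e}' \seq \expr{e}^{(\test{b})}\}$ for $\expr{e}'$ ranging over the \emph{non-root} reachable states of $\expr{e}$. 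The sequential-composition case carries the analogous bookkeeping but there the $+$ in $\#(\expr{e}\seq\expr{f}) = \#(\expr{e}) + \#(\expr{f})$ gives enough slack that no tightness issue arises. A secondary (purely notational) obstacle is that derivatives produce expressions like $\expr{e}' \seq \expr{e}^{(\test{b})}$ whose own derivatives must be shown to stay within the claimed set --- this is a routine unfolding of the definition of $\semseq_\alpha$ and the closure of the family under taking further derivatives, which I would fold into a single lemma stating $\partial(\expr{e}' \seq \expr{g})_\alpha$ only mentions expressions of the form $\expr{e}'' \seq \expr{g}$ with $\expr{e}''$ a derivative of $\expr{e}'$, together with derivatives of $\expr{g}$.
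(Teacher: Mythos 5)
Your proposal follows essentially the same route as the paper's proof: structural induction with the same base cases, the same containments of reachable sets (the two choices reuse the derivatives of their branches, sequencing wraps the derivatives of $\expr{e}$ with $\seq\expr{f}$ and adds those of $\expr{f}$, and loops wrap the derivatives of the body with the loop as continuation), yielding the same bounds. The root-accounting subtlety you isolate in the loop case is glossed over in the paper as well (``there are as many derivatives of $\expr{e}^{[\prob{r}]}$ as derivatives of $\expr{e}$''), and your suggested strengthening of the induction hypothesis --- bounding the root together with the states reachable in at least one step, rather than arguing a literal bijection with $\cogen{\expr{e}}{\partial}$ --- is exactly what makes that case tight.
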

% In other words, the semantics of each expression \(\expr{e} \in \Exp\) can be given in terms of a finite transition system.

\section{Bisimulations and their properties }\label{sec:bisim}
% The definition of the bisimulation (\cref{def:bisimulation_of_transition_systems}) can be viewed more concretely by instantiating \cref{def:homomorphism_of_transition_systems}. A relation \(R\subseteq X \times Y\) between state spaces of two \acro{ProbGKAT} automata \((X, \beta)\) and \((Y, \gamma)\) is a bisimulation if there exists a function \(\tau : R \times \At \to \D(\2 + \V + \Act \times R)\) such that for all \((x,y) \in R\) and \(\alpha \in \At\): (1) for all \(o \in \2 + \V\) we have that \(\beta(x)_\alpha(o)=\tau(x,y)_\alpha(o)=\gamma(y)_\alpha(o)\) and (2) for every \(\action{p} \in \Act\) the joint subdistribution \(\lambda (x',y').\tau(x,y)_\alpha(\action{p}, (x',y'))\) can be marginalised into subdistributions \(\lambda x' . \beta(x)_\alpha(\action{p}, x')\) and \(\lambda y' . \gamma(y)_\alpha(\action{p}, y')\).
Verifying that a given relation is a bisimulation (\cref{def:bisimulation_of_transition_systems}) requires that we construct a suitable transition structure on the relation.
In this section, we give necessary and sufficient conditions for the existence of such a transition structure.
We also study properties of the bisimilarity relation \(\bisim\), the largest bisimulation~\cite{Rutten:2000:Universal}.
%For the sake of presentation, we focus only on the results, which we use in the latter parts of the paper, while the rest is relegated to \cref{apx:bisim}.

\subparagraph*{Concrete characterisation of bisimulation equivalence. }
There is a beautiful characterisation of bisimulations between Markov chains in~\cite{Jones:1990:Probabilistic}, whose proof makes use of the max-flow min-cut theorem.
Adapting this work to \acro{ProbGKAT} automata produces a useful characterisation of \emph{bisimulation equivalences}, bisimulations that are also equivalence relations.

\begin{restatable}{lemma}{larsenskou}\label{lem:larsen_skou}
    Let \((X, \beta)\) be a \acro{ProbGKAT} automaton and let \(R \subseteq X \times X\) be an equivalence relation. \(R\) is a bisimulation if and only if and only if for all \((x,y) \in R\) and \(\alpha \in \At\),
    \begin{enumerate}
        \item for all \(o \in \2 + \V\), \(\beta(x)_\alpha(o)=\beta(y)_\alpha(o)\), and
        \item for all equivalence classes \(Q \in X / {R}\) and all \(\action{p} \in \Act\), \(\beta(x)_\alpha[\{p\}\times Q] = \beta(y)_\alpha[\{p\}\times Q]\)
    \end{enumerate}
\end{restatable}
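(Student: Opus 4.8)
The plan is to prove the two implications separately. The ``only if'' direction is a routine unwinding of \cref{def:bisimulation_of_transition_systems,def:homomorphism_of_transition_systems} together with the fact that $R$ is an equivalence; the ``if'' direction is the substantive half, where one must manufacture a witnessing transition structure on $R$. Concretely, for ``only if'', suppose $R$ is a bisimulation witnessed by $\rho : R \times \At \to \D(\2 + \V + \Act \times R)$ making $\pi_1, \pi_2$ homomorphisms, and fix $(x,y) \in R$, $\alpha \in \At$. For $o \in \2 + \V$, clause~1 of \cref{def:homomorphism_of_transition_systems} applied to $\pi_1$ and to $\pi_2$ gives $\beta(x)_\alpha(o) = \rho(x,y)_\alpha(o) = \beta(y)_\alpha(o)$, which is clause~1 of the lemma. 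For an $R$-class $Q$ and $\action{p} \in \Act$, clause~2 of \cref{def:homomorphism_of_transition_systems} summed over $z \in Q$ gives $\beta(x)_\alpha[\{\action{p}\}\times Q] = \rho(x,y)_\alpha[\{\action{p}\}\times \pi_1^{-1}(Q)]$ and, symmetrically, $\beta(y)_\alpha[\{\action{p}\}\times Q] = \rho(x,y)_\alpha[\{\action{p}\}\times \pi_2^{-1}(Q)]$. Since $R$ is an equivalence and $Q$ an $R$-class, every $(u,w)\in R$ satisfies $u \in Q \iff w \in Q$, so $\pi_1^{-1}(Q) = \pi_2^{-1}(Q)$ and the two masses agree, which is clause~2.

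For ``if'', assume clauses~1 and~2 and build $\rho$ block by block. Fix $(x,y) \in R$ and $\alpha \in \At$. On $\2 + \V$ put $\rho(x,y)_\alpha(o) = \beta(x)_\alpha(o)$, legitimate since this equals $\beta(y)_\alpha(o)$ by clause~1. For the action part, fix $\action{p} \in \Act$ and an $R$-class $Q$, and set $a_Q = \beta(x)_\alpha[\{\action{p}\}\times Q] = \beta(y)_\alpha[\{\action{p}\}\times Q]$, the two sides being equal by clause~2. If $a_Q = 0$, put no mass on $\{\action{p}\}\times(Q\times Q)$; otherwise couple the two conditional distributions independently:
\[
\rho(x,y)_\alpha\big(\action{p},(z,w)\big) = \frac{\beta(x)_\alpha(\action{p},z)\cdot\beta(y)_\alpha(\action{p},w)}{a_Q} \qquad\text{for } (z,w) \in Q\times Q \subseteq R,
\]
and $0$ on all remaining pairs. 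This is precisely the spot where \cite{Jones:1990:Probabilistic} appeals to max-flow/min-cut in the setting of an arbitrary relation; because $R$ here is an equivalence, the independent coupling restricted to one block already lives inside $R$, so no flow-network argument is needed and the construction is fully explicit.

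It then remains to verify three things. First, $\rho(x,y)_\alpha$ is finitely supported, since $\beta(x)_\alpha$ and $\beta(y)_\alpha$ are. Second, it is a genuine probability distribution: its total mass is $\sum_{o\in\2+\V}\beta(x)_\alpha(o) + \sum_{\action{p}\in\Act}\sum_{Q} a_Q$, and since the classes $Q$ partition $X$ and $\sum_{z\in Q}\beta(x)_\alpha(\action{p},z)=a_Q$, the inner sum telescopes to $\beta(x)_\alpha[\{\action{p}\}\times X]$, so the total is $\beta(x)_\alpha[\2+\V+\Act\times X]=1$. Third, $\pi_1$ and $\pi_2$ are homomorphisms: clause~1 of \cref{def:homomorphism_of_transition_systems} holds by construction, and for clause~2, given $z$ in a class $Q$ one has $\pi_1^{-1}(z) = \{(z,w): w\in Q\}$, so $\rho(x,y)_\alpha[\{\action{p}\}\times\pi_1^{-1}(z)] = \sum_{w\in Q}\rho(x,y)_\alpha(\action{p},(z,w))$, which equals $\beta(x)_\alpha(\action{p},z)$ when $a_Q>0$ (the $\beta(y)$-factors sum to $a_Q$) and equals $0 = \beta(x)_\alpha(\action{p},z)$ when $a_Q=0$; the computation for $\pi_2$ is symmetric.

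The main obstacle is not conceptual depth but bookkeeping: one must keep the coupling confined to $R$ (which is why working one $R$-class at a time is the right move), treat the degenerate blocks with $a_Q = 0$ correctly, and carry out the mass accounting so that $\rho(x,y)_\alpha$ really sums to $1$. The one genuinely load-bearing observation is that, for an equivalence $R$, clause~2 is exactly the statement that $\beta(x)_\alpha(\action{p},-)$ and $\beta(y)_\alpha(\action{p},-)$ assign the same total mass to every $R$-class, which is precisely what makes a block-wise coupling possible; dropping the equivalence hypothesis would force a Hall-type condition and reinstate the max-flow/min-cut argument of \cite{Jones:1990:Probabilistic}.
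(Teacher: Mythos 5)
Your proof is correct, but it takes a genuinely different route from the paper. The paper first establishes a characterisation of bisimulations between \emph{arbitrary} ProbGKAT automata and arbitrary relations (\cref{lem:max_flow_bisimulation}), whose proof constructs a family of flow networks and invokes the max-flow min-cut theorem, and then derives the present lemma by specialising to equivalence relations (using $R(Q)=Q=R^{-1}(Q)$ for the forward direction and $\sigma$-additivity over the partition $X/R$ for the converse). You instead bypass the flow-network machinery entirely: for the ``if'' direction you build the witnessing structure $\rho$ explicitly as a block-wise product coupling, $\rho(x,y)_\alpha(\action{p},(z,w)) = \beta(x)_\alpha(\action{p},z)\,\beta(y)_\alpha(\action{p},w)/a_{Q}$ on each class $Q$ with $a_Q>0$, and your verifications (total mass $1$, finite support, both projections homomorphisms, degenerate blocks with $a_Q=0$) are all sound; the ``only if'' direction via $\pi_1^{-1}(Q)=\pi_2^{-1}(Q)$ is likewise correct. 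What your argument buys is a fully elementary, self-contained proof for the equivalence case, and you correctly identify that the equivalence hypothesis is exactly what makes the block-wise coupling possible. What the paper's route buys is the general-relation lemma itself, which it reuses elsewhere (e.g.\ to characterise the operator $\Phi$ and in the soundness proofs), so in the paper's economy the max-flow argument is not wasted effort. One cosmetic point: your $a_Q$ also depends on $\action{p}$ (it is really $a_{Q,\action{p}}$); since $\action{p}$ is fixed throughout that passage this is harmless, but worth making explicit.
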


This lemma can be seen as an extension of Larsen-Skou bisimilarity~\cite{Larsen:1991:Bisimulation} to systems with outputs.
Intuitively, \(R\) is a bisimulation equivalence if for any atom $\alpha \in \At$ and \((x,y)\in R\), the transitions assign the same probabilities to any output, and the probability of transitioning into any given equivalence class after emitting \(\action{p}\) is the same for both \(x\) and \(y\).

\subparagraph*{Bisimilarity and its properties. }
Given a relation \(R \subseteq X \times Y\), define $R^{-1} = \{ (y, x) \mid x\mathrel{R}y \}$, and given \(A \subseteq X\), write \(R(A) = \{y \in Y \mid x\mathrel{R}y, x \in A\}\).
The bisimilarity relation  \({\bisim_{\beta, \gamma}} \subseteq X \times Y\) between \((X, \beta)\) and \((Y, \gamma)\) is the \emph{greatest fixpoint} of the following operator.
%\(\Phi_{\beta, \gamma} : 2^{X\times Y} \to 2^{X \times Y}\).
%In \cref{apx:order} we work out the order theoretic details.

\begin{definition}\label{def:bisimulation_functional}
    Let \((X, \beta)\) and \((Y, \gamma)\) be \acro{ProbGKAT} automata and let \(R \subseteq X \times Y\).
    We define the operator \(\Phi_{\beta, \gamma}: 2^{X \times Y} \to 2^{X \times Y}\) so that \((x,y)\in \Phi_{\beta, \gamma}(R)\) if for any given \(\alpha \in \At\),
    \begin{itemize}
        \item for all \(o \in \2 + \V\), \(\beta(x)_\alpha(o)=\gamma(y)_\alpha(o)\),
        \item for all \(A \subseteq X\) and all \(\action{p} \in \Act\), \(\beta(x)_\alpha[\{\action{p}\}\times A]\leq \gamma(y)_\alpha[\{\action{p}\} \times R(A)]\), and
        \item for all \(B \subseteq Y\) and \(\action{p} \in \Act\), \(\gamma(y)_\alpha[\{\action{p}\}\times B]\leq \beta(x)_\alpha[\{\action{p}\} \times R^{-1}(B)]\).
    \end{itemize}
    From now on, we will omit the subscripts from $\Phi$ when the automata are clear from context.%
    % \footnote{
    %     A coalgebraically minded reader might observe that \(\Phi_{\beta, \gamma}(R)\) is a pullback of the canonical inclusion from relation lifting of \(R\) for the functor \(\G = \D(\2 + \V + \Act \times \Id)^{\At}\) to \(\G(X) \times \G(Y)\) along \(\beta\times\gamma\)~\cite{Staton:2011:Relating}.
    % }
\end{definition}

The operator \(\Phi_{\beta,\gamma}\) can also be used to define a behavioural pseudometric.
Let \((X, \beta)\) be a \acro{ProbGKAT} automaton.
A \emph{relation refinement chain} is an indexed family \(\{\bisim^{(i)}\}_{i \in \mathbb N}\) of relations on \(X\) defined as:
$
    \bisim^{(0)}= X \times X$, $\bisim^{(i+1)} = \Phi(\bisim^{(i)})
$.
We can intuitively think of successive elements of this chain as closer approximations of bisimilarity (see also~\cite{Hennessy:80:Observing}).
%Another point of view is to consider the above family of relations as an adaptation of Henessy and Milner's \emph{stratification of bisimilarity} to \acro{ProbGKAT} automata~\cite{Hennessy:80:Observing}.
%We have the following property of the greatest bisimulation.

\begin{restatable}{theorem}{refinementchain}\label{thm:refinement_chain}
    Let \((X, \beta)\) be a \acro{ProbGKAT} automaton.
    For any $x, y \in X$, \(x \bisim y\) if and only if for all \(i \in \mathbb N\), we have \(x \bisim^{(i)} y\).
\end{restatable}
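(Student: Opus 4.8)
The statement is essentially the claim that the greatest fixpoint of the monotone operator $\Phi = \Phi_{\beta,\beta}$ is reached (from above) after $\omega$ steps, i.e.\ $\bisim\; =\; \bigcap_{i \in \mathbb N} \bisim^{(i)}$. The plan is to prove the two inclusions separately, the $(\subseteq)$ direction being routine and the $(\supseteq)$ direction being the crux.

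For the forward direction, I would first observe that $\Phi$ is monotone (immediate from \cref{def:bisimulation_functional}: enlarging $R$ only enlarges the sets $R(A)$ and $R^{-1}(B)$, making the inequality constraints easier to satisfy), and that $\bisim$ is a fixpoint, $\Phi(\bisim) = \bisim$. Then by induction on $i$: $\bisim\; \subseteq\; \bisim^{(0)} = X\times X$ trivially, and if $\bisim\; \subseteq\; \bisim^{(i)}$ then $\bisim\; = \Phi(\bisim)\; \subseteq\; \Phi(\bisim^{(i)}) = \bisim^{(i+1)}$ by monotonicity. Hence $\bisim\; \subseteq\; \bigcap_i \bisim^{(i)}$, which gives one half of the "if and only if": if $x \bisim y$ then $x \bisim^{(i)} y$ for all $i$.

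For the converse, I would set $R = \bigcap_{i\in\mathbb N}\bisim^{(i)}$ and show $R$ is a bisimulation (equivalently, a post-fixpoint: $R \subseteq \Phi(R)$); then $R \subseteq\; \bisim$ since $\bisim$ is the greatest fixpoint, hence the greatest post-fixpoint. Here is where finiteness enters, and it is the main obstacle: in general, for an arbitrary monotone operator on the powerset lattice, the $\omega$-indexed descending chain need \emph{not} stabilise at the greatest fixpoint --- one may need to continue transfinitely past $\omega$. The saving grace is \cref{lem:locally_finite}: we only ever need this for the operational model, where every expression generates a \emph{finite} subautomaton. Concretely, to check $R \subseteq \Phi(R)$, take $(x,y) \in R$ and $\alpha \in \At$. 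The output condition (first bullet of \cref{def:bisimulation_functional}) holds because it already holds for $\bisim^{(1)} \supseteq R$, which does not depend on the relation argument. For the two transition inequalities, fix $A \subseteq X$ and $\action p \in \Act$; I need $\beta(x)_\alpha[\{\action p\}\times A] \le \beta(y)_\alpha[\{\action p\}\times R(A)]$. Since $X$ is finite, the descending chain $\{\bisim^{(i)}(A)\}_i$ of subsets of $X$ is eventually constant, say equal to $R(A)$ for all $i \ge N$ (using that $\bigcap_i \bisim^{(i)}(A) = R(A)$ when there are only finitely many distinct sets in the chain); then $\beta(x)_\alpha[\{\action p\}\times A] \le \beta(y)_\alpha[\{\action p\}\times \bisim^{(N)}(A)] = \beta(y)_\alpha[\{\action p\}\times R(A)]$, the inequality coming from $(x,y) \in \bisim^{(N+1)} = \Phi(\bisim^{(N)})$. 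The symmetric inequality with $R^{-1}$ is handled the same way. This establishes $R \subseteq \Phi(R)$, hence $R \subseteq\; \bisim$, so $x \bisim^{(i)} y$ for all $i$ implies $x \bisim y$.

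One bookkeeping point to be careful about: \cref{thm:refinement_chain} is stated for a single automaton $(X,\beta)$ with both components drawn from the same $X$, so finiteness of $X$ is needed. For the genuinely infinite automaton $(\Exp,\partial)$ one applies the theorem after restricting to $\cogen{\expr e}{\partial} \cup \cogen{\expr f}{\partial}$, which is finite by \cref{lem:locally_finite}; since bisimilarity is preserved and reflected by the inclusion homomorphism into $(\Exp,\partial)$, this loses nothing. I would state the theorem as written and, if a version for two automata is needed later, note that it follows by taking the coproduct and invoking finiteness of both reachable parts.
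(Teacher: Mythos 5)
Your forward direction is fine, and your post-fixpoint argument is correct \emph{when \(X\) is finite}, but the theorem is stated (and needed) for an \emph{arbitrary} \acro{ProbGKAT} automaton \((X,\beta)\), and that is exactly where your proof has a gap. The step ``since \(X\) is finite, the descending chain \(\{\bisim^{(i)}(A)\}_i\) is eventually constant'' has no justification in general, and the patch you offer---restricting to \(\cogen{\expr{e}}{\partial}\cup\cogen{\expr{f}}{\partial}\)---only works for locally finite automata such as \((\Exp,\partial)\) (via \cref{lem:locally_finite}); a general \acro{ProbGKAT} automaton has finitely supported outgoing distributions but may well have infinite reachable parts, yet the paper invokes the theorem for \emph{every} \((X,\beta)\) in order to define the behavioural pseudometric \(d_X\). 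Moreover, even for \((\Exp,\partial)\) your reduction is incomplete as stated: it is not enough that bisimilarity is preserved and reflected by the subautomaton inclusion; you also need that each approximant \(\bisim^{(i)}\) computed in \((\Exp,\partial)\) agrees, on states of the reachable subautomaton, with the approximant computed inside it. This is true (subautomata are closed under transitions, so the relevant measures only see states of the subautomaton), but it is an induction on \(i\) that you never carry out.

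The paper avoids finiteness of \(X\) altogether: it shows that \(\Phi_\beta\) is \(\omega\)-cocontinuous on the lattice of equivalence relations (\cref{lem:bisimilarity_as_fixpoint}, an adaptation of Baier's argument, whose only finiteness ingredient is the finite support of each \(\beta(x)_\alpha\)), and then concludes via the Kleene fixpoint theorem (\cref{prop:prop_order}) together with \cref{cor:greatest_fixpoint}. Your argument can be repaired in the same spirit without any restriction on \(X\): given \((x,y)\in R=\bigcap_i \bisim^{(i)}\), \(\alpha\in\At\), \(\action{p}\in\Act\) and \(A\subseteq X\), first replace \(A\) by its intersection \(A'\) with the finitely many \(\action{p}\)-successors of \(x\) under \(\beta(x)_\alpha\) (this leaves the left-hand side unchanged and only shrinks \(R(A)\)); then the decreasing sequence \(\bisim^{(i)}(A')\) intersected with the finite support of \(\beta(y)_\alpha\) stabilises, and a pigeonhole argument over the finite set \(A'\) shows that every point of the stabilised set already lies in \(R(A')\), which yields \(\beta(x)_\alpha[\{\action{p}\}\times A]\leq \beta(y)_\alpha[\{\action{p}\}\times R(A)]\). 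With that change (and its symmetric counterpart) your proof establishes the theorem as stated, and the reduction to reachable subautomata becomes unnecessary.
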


Thus, if \(x,y \in X\) are not bisimilar, then there exists a maximal \(i \in \mathbb N\) such that \(x\bisim^{(i)} y\).
In \cref{sec: completeness}, we use this to define a pseudometric on the states of any \acro{ProbGKAT} automaton.
Informally speaking, this allows us to quantify \emph{how close} to being bisimilar two states are.

Our main goal is to axiomatise bisimilarity of \acro{ProbGKAT} expressions with a set of equational laws and reason about equivalence using \emph{equational logic}.
For such an axiomatisation to exist, bisimilarity needs to be both an equivalence relation and a congruence with respect to the \acro{ProbGKAT} operations.
The greatest bisimulation on any \acro{ProbGKAT} automaton is an equivalence~\cite{Rutten:2000:Universal}, but being congruence requires an inductive argument.% (see \cref{apx:bisim} for details).
%\footnote{\acro{ProbGKAT} automata are coalgebras for functor \(\G=\D(\2 + \V + \Act \times \Id)^\At\).
%The fact that bisimilarity is an equivalence follows from \(\G\) preserving weak pullbacks~\cite{Rutten:2000:Universal}.
%We elaborate in \cref{apx:coalgebra}.}
%In the case of \((\Exp, \partial)\), bisimilarity satisfies an even stronger property: it is preserved by the \acro{ProbGKAT} operations.

\begin{restatable}{theorem}{greatestcong}%
    \label{thm: greatest_bisim_is_congruence}
    The greatest bisimulation on \((\Exp, \partial)\) is a congruence with respect to \acro{ProbGKAT} operations.
\end{restatable}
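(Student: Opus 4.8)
The plan is to show that bisimilarity $\bisim$ on $(\Exp,\partial)$ is preserved by each of the \acro{ProbGKAT} operations: guarded choice $+_\test{b}$, sequential composition $\seq$, guarded loop $(-)^{(\test{b})}$, probabilistic choice $\oplus_{\prob r}$, and probabilistic loop $(-)^{[\prob r]}$. Since $\bisim$ is already an equivalence relation, it suffices, for each binary operation $\star$, to show that $\expr{e}_1 \bisim \expr{f}_1$ and $\expr{e}_2 \bisim \expr{f}_2$ imply $\expr{e}_1 \star \expr{e}_2 \bisim \expr{f}_1 \star \expr{f}_2$, and similarly for the unary loop operators. By symmetry and transitivity of $\bisim$, this reduces to the single-argument case: it is enough to show that if $\expr{e} \bisim \expr{f}$ then $\expr{e} \star \expr{g} \bisim \expr{f} \star \expr{g}$, $\expr{g} \star \expr{e} \bisim \expr{g} \star \expr{f}$, $\expr{e}^{(\test{b})} \bisim \expr{f}^{(\test{b})}$, and $\expr{e}^{[\prob r]} \bisim \expr{f}^{[\prob r]}$ for all $\expr{g}, \test{b}, \prob r$.

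The core technique is to exhibit, for each operation, a bisimulation relation witnessing the desired equivalence. Fix a bisimulation $R$ containing $(\expr{e},\expr{f})$, i.e.\ $R$ is an equivalence relation on $\cogen{\expr e}{\partial} \cup \cogen{\expr f}{\partial}$ satisfying the conditions of \cref{lem:larsen_skou}. For each operation I would define a candidate relation on the relevant reachable states and verify the two clauses of \cref{lem:larsen_skou} (working with bisimulation equivalences, closing under reflexivity, symmetry and transitivity as needed). For guarded choice, the relation $\{(\expr{e}' +_\test{b}\expr{g}, \expr{f}'+_\test{b}\expr{g}) \mid \expr{e}'\mathrel{R}\expr{f}'\} \cup \{(\expr{h},\expr{h})\}$ works, since $\partial(\expr{e}'+_\test{b}\expr{g})_\alpha$ is either $\partial(\expr{e}')_\alpha$ or $\partial(\expr{g})_\alpha$ depending only on $\alpha$. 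For sequential composition the natural candidate is (the equivalence closure of) $\{(\expr{e}'\seq\expr{g}, \expr{f}'\seq\expr{g}) \mid \expr{e}'\mathrel{R}\expr{f}'\}$ for the left argument, and $\{(\expr{g}'\seq\expr{e}, \expr{g}'\seq\expr{f}) \mid \expr{g}' \in \cogen{\expr g}{\partial}\} \cup R$ for the right argument; here one checks that applying $(-\semseq_\alpha\expr{g})$ to $\beta$-equivalent distributions yields $\beta$-equivalent distributions, using that the convex extension of $c_{\alpha,\expr{g}}$ is linear and that $R$ already equates the probabilities of each outcome and each post-$\action p$ equivalence class. For the two loop operators, the candidate relation is the equivalence closure of $\{(\expr{e}'\seq\expr{e}^{(\test b)}, \expr{f}'\seq\expr{f}^{(\test b)}) \mid \expr{e}'\mathrel{R}\expr{f}'\} \cup \{(\expr{e}^{(\test b)},\expr{f}^{(\test b)})\}$ (and analogously for $[\prob r]$), and one checks that the reweighing factor $1 - \partial(\expr e)_\alpha(\accept)$ (resp.\ $1 - \prob r\partial(\expr e)_\alpha(\accept)$) is the same for $\expr e$ and $\expr f$ because $R$ equates acceptance probabilities — so the normalised derivatives match outcome-by-outcome and class-by-class, and the three-way case split in the definition of the loop derivative is triggered identically for $\expr e$ and $\expr f$.

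The main obstacle is sequential composition in the \emph{left} argument together with the loop cases, because there the relevant reachable states are of the form $\expr{e}'\seq\expr{g}$ where $\expr{e}'$ ranges over $\cogen{\expr e}{\partial}$, and one must argue that whenever $\expr{e}'\mathrel R\expr{f}'$, the rerouted distributions $\partial(\expr e')_\alpha \semseq_\alpha \expr g$ and $\partial(\expr f')_\alpha \semseq_\alpha \expr g$ agree on all outputs in $\2+\V$ and push the same probability mass into each equivalence class of post-$\action p$ states. The subtlety is that $c_{\alpha,\expr g}$ sends $\accept$ to the whole distribution $\partial(\expr g)_\alpha$, which can itself be productive, so a transition of $\expr e'\seq\expr g$ into an equivalence class $Q$ receives contributions both from the $\accept$-mass of $\partial(\expr e')_\alpha$ (times the mass $\partial(\expr g)_\alpha$ puts into $Q\cap(\text{states of the form }\expr g'\text{)}$) and from the $(\action p,\expr e'')$-mass of $\partial(\expr e')_\alpha$ for those $\expr e''$ with $\expr e''\seq\expr g \in Q$. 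I would handle this by choosing the equivalence relation so that $\expr{e}''_1\seq\expr{g}$ and $\expr{e}''_2\seq\expr g$ lie in the same class exactly when $\expr{e}''_1\mathrel R\expr{e}''_2$, and keeping the $\cogen{\expr g}{\partial}$-states each in their own singleton class; then linearity of the convex extension plus clauses (1)–(2) of \cref{lem:larsen_skou} for the pair $(\expr e',\expr f')$ give the required equalities termwise. This is essentially the same argument as in the \acro{GKAT} case \cite{Schmid:2021:Guarded}, with the bookkeeping now carried out over distributions rather than deterministic transitions; no new topological input is needed here, as the genuinely hard soundness argument (for \acro{UA}) is deferred to \cref{sec: completeness}.
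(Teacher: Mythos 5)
Your proposal is correct in outline, but it takes a genuinely different route from the paper. You prove congruence in the classical process-algebra style: given a bisimulation equivalence $R$ with $\expr{e}\mathrel{R}\expr{f}$, you construct, operator by operator, an explicit relation on the reachable composite states and verify it against \cref{lem:larsen_skou}. The paper never builds such witnessing relations. Instead it shows that the bisimulation functional $\Phi_\partial$ of \cref{def:bisimulation_functional} maps congruences to congruences (\cref{lem:preservation_of_congruences}), doing the per-operator computations against an \emph{arbitrary} congruence $R$ using the asymmetric conditions of $\Phi$ together with the quotient-class bookkeeping lemmas ($Q/\expr{f}$, \cref{lem:cutting_postfixes,lem:sequencing_cutting_postfixes,lem:swapping_congruent_ends}); it then concludes by the fixpoint characterisation: by $\omega$-cocontinuity (\cref{lem:bisimilarity_as_fixpoint,prop:prop_order}) the greatest bisimulation is $\bigcap_{i}\Phi_\partial^{(i)}(\top)$, the full relation is a congruence, each approximant is therefore a congruence, and congruences are closed under intersection. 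What the paper's route buys is exactly the avoidance of what you identify as your main obstacle: there is no need to define and close a relation over the states $\expr{e}'\seq\expr{g}$, $\expr{e}'\seq\expr{e}^{(\test{b})}$, etc., and no risk of conflation between composite states and $\cogen{\expr{g}}{\partial}$-states; moreover the stratification machinery it uses is needed anyway for the behavioural pseudometric and the soundness of \acro{UA}. What your route buys is a more elementary, self-contained argument resting only on the Larsen--Skou-style characterisation. One small repair is needed in your construction for sequencing and loops: the stipulation that the $\cogen{\expr{g}}{\partial}$-states sit in singleton classes is not literally achievable when a state of the form $\expr{e}''\seq\expr{g}$ is itself reachable from $\expr{g}$ (e.g.\ when $\expr{g}$ is a loop whose body overlaps $\cogen{\expr{e}}{\partial}$), since such a state may be forced into a non-trivial class by the closure; the argument still goes through because every element of such a class is again a seq-state with right component $\expr{g}$, so the class masses computed from $\partial(\expr{e}')_\alpha$ and $\partial(\expr{f}')_\alpha$ (and from the common copy of $\partial(\expr{g})_\alpha$ injected via acceptance) still coincide, but this case should be stated and checked rather than assumed away.
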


%This will open the door for modular proof of soundness in \cref{sec: axiomatisation}.

\begin{table*}[t]\small\centering\noindent%
\caption{Axiomatisation of \acro{ProbGKAT}. In the figure \(\expr{e}, \expr{f}, \expr{g} \in \Exp\), \( \test{b}, \test{c} \in \Bexp\), \(\var{v} \in \V\), \(\action{p} \in \Act\) and \(\prob{r}, \prob{s} \in [0,1]\). Laws involving division of probabilities apply when the denominator is not zero. To simplify the notation, we write \(\trmt{\expr{e}} = 0\) to denote that for all \(\alpha \in \At\) it holds that \(\trmt{\expr{e}}_\alpha=0\).}
\begin{center}
    \fbox{\noindent \begin{minipage}[t]{.43\textwidth}
        \noindent%
        \textbf{Guarded Choice Axioms}

        \vspace{.3em}
        \begin{tabular}{l@{~} >{$}r<{$}@{~} >{$}c<{$}@{~} >{$}l<{$}}
        \acro{(G1)} &\expr{e} +_\test{b} \expr{e} & \equiv & \expr{e}\\
        \acro{(G2)} &\expr{e} +_\test{b} \expr{f} &\equiv& \test{b}\seq\expr{e} +_\test{b} \expr{f}\\
        \acro{(G3)} &\expr{e} +_\test{{b}} \expr{f} &\equiv& \expr{f} +_\test{\bar{b}} \expr{e}\\
        \acro{(G4)} &(\expr{e} +_\test{b} \expr{f}) +_\test{c} \expr{g} &\equiv& \expr{e} +_\test{bc} (\expr{f} +_\test{c} \expr{g})
        \end{tabular}
        \vspace{.3em}
        \hrule
        \vspace{.3em}
        \noindent%
        \textbf{Distributivity Axiom}

        \vspace{.3em}
        \begin{tabular}{l@{~} >{$}r<{$}@{~} >{$}c<{$}@{~} >{$}l<{$}}
        \acro{(D)} &\expr{e} \oplus_\prob{r} (\expr{f} +_\test{b} \expr{g}) &\equiv& (\expr{e} \oplus_\prob{r} \expr{f}) +_\test{b} (\expr{e} \oplus_\prob{r} \expr{g})
        \end{tabular}
        \vspace{.3em}
        \hrule
        \vspace{.3em}
        \noindent%
        \textbf{Sequencing Axioms}

        \vspace{.3em}
        \begin{tabular}{l@{~} >{$}r<{$}@{~} >{$}c<{$}@{~} >{$}l<{$}}
        \acro{(S1)} &\test{\one} \seq \expr{e} &\equiv& \expr{e}\\
        \acro{(S2)} &\expr{e} \seq \test{\one} &\equiv& \expr{e}\\
        \acro{(S3)} &(\expr{e} \seq \expr{f})\seq\expr{g}&\equiv& \expr{e} \seq (\expr{f}\seq\expr{g})\\
        \acro{(S4)} & \test{\zero}\seq\expr{e} &\equiv& \test{\zero} \\
        \acro{(S5)} & (\expr{e} +_\test{b} \expr{f})\seq \expr{g} &\equiv& \expr{e} \seq \expr{g} +_\test{b} \expr{f}\seq\expr{g}\\
        \acro{(S6)} & (\expr{e} \oplus_\prob{r} \expr{f})\seq \expr{g} &\equiv& \expr{e} \seq \expr{g} \oplus_\prob{r} \expr{f}\seq\expr{g}\\
        \acro{(S7)} & \var{v} \seq \expr{e} &\equiv& \var{v}\\
        \acro{(S8)} & \test{b} \seq \test{c} &\equiv& \test{b}\test{c}
        \end{tabular}
        \vspace{.3em}
        \hrule
        \vspace{.3em}
        \noindent%
        \textbf{Loop Axioms}
        \vspace{.3em}

        \begin{tabular}{l@{~} >{$}r<{$}@{~} >{$}c<{$}@{~} >{$}l<{$}}
            \acro{(L1)} &\expr{e}^{(\test{b})} &\equiv& \expr{e} \seq \expr{e}^{(\test{b})} +_\test{b} \test{\one}\\
            \acro{(L2)} &\expr{e}^{[\prob{r}]} &\equiv& \expr{e} \seq \expr{e}^{[\prob{r}]} \oplus_{\prob{r}} \test{\one}\\
            \acro{(L3)} &(\expr{e} +_\test{c} \test{\one})^{(\test{b})}&\equiv& (\test{c}\seq\expr{e})^{(\test{b})}\\
            \acro{(L4)} & \expr{e}^{(\test{\one})} &\equiv& \expr{e}^{[\prob{1}]} \\
        \end{tabular}
        \begin{tabular}{l@{~} >{$}r<{$}@{~} >{$}c<{$}@{~} >{$}l<{$}}
            \acro{(L5)} & \multicolumn{3}{c}{\multirow{3}{*}{\inferrule{\expr{e} \equiv (\expr{f} \oplus_{\prob{s}} \test{\one}) +_\test{c} \expr{g} \\ \prob{s} > \prob{0}}{\test{c}\seq\expr{e}^{(\test{b})}\equiv \test{c}\seq\left(\expr{f}\seq\expr{e}^{(\test{b})} +_\test{b} \test{\one}\right)}}} \\
            \multicolumn{4}{c}{} \\
            \multicolumn{4}{c}{} \\
            \acro{(L6)} & \multicolumn{3}{c}{\multirow{3}{*}{\inferrule{\expr{e} \equiv (\expr{f} \oplus_{\prob{s}} \test{\one}) +_\test{c} \expr{g}}{\test{c}\seq\expr{e}^{[\prob{r}]}\equiv \test{c}\seq\left(\expr{f}\seq\expr{e}^{[\prob{r}]} \oplus_{\prob{\frac{rs}{1-r(1-s)}}} \test{\one}\right)}}} \\
            \multicolumn{4}{c}{} \\
            \multicolumn{4}{c}{} \\
            \end{tabular}
    \end{minipage}\hspace{.01\linewidth}\vrule\hspace{.01\linewidth}
    \begin{minipage}[t]{.515\textwidth}
        \textbf{Probabilistic Choice Axioms}

        \vspace{.3em}
        \begin{tabular}{l@{~} >{$}r<{$}@{~} >{$}c<{$}@{~} >{$}l<{$}}
        \acro{(P1)} &\expr{e} \oplus_\prob{r} \expr{e} &\equiv& \expr{e}\\
        \acro{(P2)} &\expr{e} \oplus_{\prob{1}} \expr{f} &\equiv& \expr{e}\\
        \acro{(P3)} &\expr{e} \oplus_{\prob{r}} \expr{f} &\equiv& \expr{f} \oplus_{\prob{1-r}}\expr{e} \\
        \acro{(P4)} &(\expr{e} \oplus_{\prob{r}} \expr{f}) \oplus_{\prob{s}} \expr{g} &\equiv& \expr{e} \oplus_\prob{rs} (\expr{f} \oplus_\prob{\frac{(1-r)s}{1-rs}} \expr{g})
        \end{tabular}
        \vspace{.3em}
        \hrule
        \vspace{.3em}
        \noindent%
        \textbf{Fixpoint Rules}

        \vspace{.3em}
        \begin{tabular}{l@{~} >{$}r<{$}@{~} >{$}c<{$}@{~} >{$}l<{$}}
            \acro{(F1)} & \multicolumn{3}{c}{\multirow{3}{*}{\inferrule{\expr{g} \equiv \expr{e} \seq \expr{g} +_\test{b} \expr{f} \\
            \trmt{\expr{e}}=0}{\expr{g}\equiv \expr{e}^{(\test{b})}\seq\expr{f}}}} \\
            \multicolumn{4}{c}{} \\
            \multicolumn{4}{c}{} \\
            \acro{(F2)} & \multicolumn{3}{c}{\multirow{3}{*}{\inferrule{\expr{g} \equiv \expr{e} \seq \expr{g} \oplus_\prob{r} \expr{f} \\ \trmt{\expr{e}}=0}{\expr{g}\equiv \expr{e}^{[\prob{r}]}\seq\expr{f}}}} \\
            \multicolumn{4}{c}{} \\
            \multicolumn{4}{c}{}
        \end{tabular}
        \vspace{.3em}
        \hrule
        \vspace{.3em}
        \noindent%
        \vspace{.3em}
        Define \(\mathsf{E} : \Exp \to \At \to [0,1]\) inductively by
        \begin{align*}
             \trmt{\action{p}}_\alpha &= \trmt{\var{v}}_\alpha = \prob{0} \\
             \trmt{\test{b}}_\alpha &= \begin{cases}
                \prob{1} & \alpha \bleq \test{b} \\
                \prob{0} & \alpha \bleq \bar{\test{b}}
             \end{cases}\\
             \trmt{\expr{e} +_\test{b} \expr{f}}_\alpha &=
             \begin{cases}
                \trmt{\expr{e}}_\alpha & \alpha \bleq \test{b} \\
                \trmt{\expr{f}}_\alpha & \alpha \bleq \bar{\test{b}}
             \end{cases}\\
             \trmt{\expr{e} \oplus_{\prob{r}} \expr{f}}_\alpha &=
                \prob{r} \trmt{\expr{e}}_\alpha + \prob{(1-r)} \trmt{\expr{f}}_\alpha\\
             \trmt{\expr{e} \seq \expr{f}}_\alpha &= \trmt{\expr{e}}_\alpha \trmt{\expr{f}}_\alpha\\
             \trmt{\expr{e}^{(\test{b})}}_\alpha &= \trmt{\bar{\test{b}}}_\alpha\\
             \trmt{\expr{e}^{[\prob{r}]}}_\alpha &= \begin{cases}
                \prob{0} & \prob{r}=\prob{1}\text{ and } \trmt{\expr{e}}_\alpha=\prob{1} \\
                \frac{\prob{1}-\prob{r}}{\prob{1}-\prob{r}\trmt{\expr{e}}_\alpha} & \text{otherwise}
            \end{cases}
        \end{align*}
    \end{minipage}
    }
    \end{center}
    \vspace{1.5pt}
\label{fig:axiomatisation}
\end{table*}

\section{Axiomatisation}\label{sec: axiomatisation}
We turn our attention to \emph{axiomatisation} of bisimilarity of \acro{ProbGKAT} expressions, using an axiom system based on \acro{GKAT} modulo bisimilarity~\cite{Schmid:2021:Guarded}.
First, we give an overview of the axioms, and establish their soundness.
Finally, we show that our axioms are strong enough to decompose every expression into a certain syntactic normal form relating the expressions to their small-step semantics.
Completeness is tackled in the next section.
% Finally, we readapt the Salomaa systems of equations for Deterministic Finite Automata to \(\G\)-coalgebras.

\subparagraph*{Overview of the axioms. }
\cref{fig:axiomatisation} contains the axioms, which are either \emph{equational} (of the form $\expr{e} \equiv \expr{f}$), or \emph{quasi-equational} (of the form $\expr{e}_1 \equiv \expr{f}_1, \dots \expr{e}_n \equiv \expr{f}_n \implies \expr{e} \equiv \expr{f}$).
It also holds the definition of the function $\trmt{-}$, which is necessary to give a side condition to the fixpoint rules.
We define \({\equiv} \subseteq \Exp \times \Exp\) as the smallest congruence relation satisfying the axioms.

Axioms \acro{G1}--\acro{G4} are inherited from \acro{GKAT} and govern the behaviour of Boolean guarded choice.
\acro{P1}--\acro{P4} can be thought of as their analogues, but for the probabilistic choice. The distributivity axiom \acro{D} states that guarded choice distributes over a probabilistic choice, which reflects the way our operational model resolves both types of branching.

The sequencing axioms \acro{S1}--\acro{S8} are mostly inherited from \acro{GKAT}.
The new axioms include \acro{S6} which talks about right distributivity of sequencing over probabilistic choice and \acro{S7} which captures the intuitive property that any code executed after a \texttt{return} statement is not executed.
\acro{L1} and \acro{L3} come from \acro{GKAT}, while \acro{L2} is a probabilistic loop analogue of \acro{L1}, which captures the semantics of the probabilistic loop in terms of recursive unrolling.
\acro{L4} equates the \texttt{while(true)} and \texttt{while(flip(1))} loops.
\acro{F1} and \acro{F2} are inspired by Salomaa's axioms~\cite{Salomaa:1966:Two} and provide a partial converse to \acro{L1} and \acro{L2} respectively, given the loop body cannot immediately terminate.
The property that a loop body has a zero probability of outputting \(\accept\) is formally writen using the side condition \(\trmt{\expr{e}}=0\), which can be thought of as \emph{empty word property} from Salomaa's axiomatisation~\cite{Salomaa:1966:Two}.
% It is formally expressed using the \emph{exit condition} operator, defined in the bottom right part of the figure.

This leaves us with \acro{L5} and \acro{L6}, which describe the behaviour of guarded and probabilistic loops where parts of the loop body may be skipped. These  are quasi-equational, but can be replaced by equivalent equations --- see \cref{rem:axioms}.
\acro{L5} concerns a loop on an expression $\expr{e}$ that has probability $1 - \prob{s}$ of not performing any action, given that $\test{c}$ holds.
The rule says that, if we start the loop on $\expr{e}$ given that $\expr{c}$ holds, then either $\test{b}$ holds and we execute $\expr{f}$, or it does not, and the loop is skipped.
The reason that we can disregard the $\test{\one}$ part of $\expr{e}$ is that if this branch is taken, then $\expr{c}$ still holds on the next iteration of the loop, and so the program will have to choose probabilistically between $\expr{f}$ and $\test{\one}$ once more.
Since $\prob{s} > 0$, it will eventually choose the probabilistic branch $\expr{f}$ with almost sure probability.

The second rule, \acro{L6}, is the analogue of \acro{L5} for probabilistic choice.
In this case, however, a choice for $\test{\one}$ also means another probabilistic experiment to determine whether the loop needs to be executed once more, with probability $\prob{r}$.
The consequence is that if the loop on $\expr{e}$ is started given that $\test{c}$ holds, some more probability mass will shift towards skipping, as a result executing $\test{\one}$ some number of times before halting the loop.

% Define \(\mathsf{E} : \Exp \to \At \to [0,1]\) inductively by
% \[
%     \trmt{\action{p}}_\alpha = \trmt{\var{v}}_\alpha = \prob{0}
% \]
% \[
%     \trmt{\test{b}}_\alpha = \begin{cases}
%         \prob{1} & \alpha \bleq \test{b} \\
%         \prob{0} & \alpha \bleq \bar{\test{b}}\\
%      \end{cases}
% \]
% \[
%     \trmt{\expr{e} +_\test{b} \expr{f}}_\alpha = \begin{cases}
%         \trmt{\expr{e}}_\alpha & \alpha \bleq \test{b} \\
%         \trmt{\expr{f}}_\alpha & \alpha \bleq \bar{\test{b}}\\
%     \end{cases}
% \]
% \[
%     \trmt{\expr{e} \oplus_{\prob{r}} \expr{f}}_\alpha = \prob{r} \trmt{\expr{e}}_\alpha + \prob{(1-r)} \trmt{\expr{f}}_\alpha
% \]
% \[
%     \trmt{\expr{e} \seq \expr{f}}_\alpha = \trmt{\expr{e}}_\alpha \trmt{\expr{f}}_\alpha
%     \trmt{\expr{e}^{(\test{b})}}_\alpha = \trmt{\bar{\test{b}}}_\alpha
%     \trmt{\expr{e}^{[\prob{r}]}}_\alpha = \begin{cases}
%         \prob{0} & \prob{r}=\prob{1}\text{ and } \trmt{\expr{e}}_\alpha=\prob{1} \\
%         \frac{\prob{1}-\prob{r}}{\prob{1}-\prob{r}\trmt{\expr{e}}_\alpha} & \text{otherwise}\\
%      \end{cases}
% \]
\subparagraph*{Soundness with respect to bisimilarity. }
Using the characterisation from \cref{sec:bisim}, we can show that \(\equiv\) is a bisimulation equivalence on \((\Exp, \partial)\).
% Specifically, we exploit the characterisation of $\bisim$ from \Cref{thm:refinement_chain}, and proceed by induction on the maximal $i$ such that $\bisim^{(i)}$ relates the expressions of interest.
The proof is available in \cref{apx:soundness}.
\begin{restatable}{lemma}{soundness}\label{lem: soundness}
\(\equiv\) is a bisimulation equivalence on \((\Exp, \partial)\)
\end{restatable}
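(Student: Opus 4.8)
The plan is to verify that $\equiv$ satisfies the two conditions of \cref{lem:larsen_skou}, which characterises bisimulation equivalences on a \acro{ProbGKAT} automaton. Since $\equiv$ is by definition a congruence (hence an equivalence relation) and \cref{thm: greatest_bisim_is_congruence}'s companion results already give us that $\equiv$ is an equivalence, the real content is: for every pair $\expr{e} \equiv \expr{f}$, every atom $\alpha \in \At$, every output $o \in \2 + \V$, and every action $\action{p} \in \Act$ together with every $\equiv$-equivalence class $Q$, we have $\partial(\expr{e})_\alpha(o) = \partial(\expr{f})_\alpha(o)$ and $\partial(\expr{e})_\alpha[\{\action{p}\} \times Q] = \partial(\expr{f})_\alpha[\{\action{p}\} \times Q]$.

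First I would set up the proof as an induction on the derivation of $\expr{e} \equiv \expr{f}$. Because $\equiv$ is generated as the smallest congruence satisfying the axioms, it suffices to handle (i) the reflexivity/symmetry/transitivity and congruence-closure steps, which are routine since the conditions in \cref{lem:larsen_skou} are manifestly closed under these operations (transitivity and symmetry are immediate; congruence closure requires knowing that the two Larsen--Skou conditions are preserved by each syntactic operation, which is exactly the computational core and overlaps with the argument for \cref{thm: greatest_bisim_is_congruence}), and (ii) the base cases, one for each axiom in \cref{fig:axiomatisation}. For each equational axiom $\expr{e} \equiv \expr{f}$ I would expand $\partial(\expr{e})_\alpha$ and $\partial(\expr{f})_\alpha$ using the Brzozowski rules from \cref{sec:opsem} and check the two conditions directly. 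Most of these are short: e.g.\ \acro{P1}--\acro{P4} reduce to elementary identities about convex combinations of distributions, \acro{D} reduces to the case split on whether $\alpha \bleq \test{b}$, and \acro{S1}--\acro{S8} unfold the definition of $\semseq_\alpha$. The loop axioms \acro{L1}--\acro{L4} require carefully matching the reweighted distributions in the definitions of $\partial(\expr{e}^{(\test b)})_\alpha$ and $\partial(\expr{e}^{[\prob r]})_\alpha$ against the unfolded right-hand sides; here one must be careful about the degenerate cases ($\partial(\expr{e})_\alpha(\accept) = 1$, or $\prob r = 1$) where the automaton collapses to $\delta_{\reject}$.

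For the quasi-equational rules \acro{L5}, \acro{L6}, \acro{F1}, \acro{F2}, the induction hypothesis supplies that the premises are already bisimilarity-sound, and I would use that, together with the fact (provable by an easy induction, or cited from \cref{sec:bisim}) that $\expr{e} \equiv \expr{f}$ implies $\trmt{\expr{e}}_\alpha = \trmt{\expr{f}}_\alpha$ — the side condition $\trmt{\expr{e}} = 0$ is then a genuine semantic hypothesis — to verify the conclusion. The key subtlety in \acro{L5} is the "almost-sure" reasoning: operationally, the $\test{\one}$-branch inside the loop body, under the guard $\test{c}$, just loops back to the same state, so it does not affect the one-step outgoing transitions at all once we normalise; the reweighting by $\frac{1}{1 - t}$ (resp.\ $\frac{1-r}{1-r(1-s)}$ for \acro{L6}) in the Brzozowski definition is precisely what absorbs that self-loop, so the equality of one-step distributions can be checked algebraically. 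The main obstacle will be the sheer case analysis in the loop and sequencing axioms — in particular keeping track of the denominators $1 - \partial(\expr{e})_\alpha(\accept)$, $1 - \prob{r}\partial(\expr{e})_\alpha(\accept)$, $1 - \prob{r}(1-\prob{s})$ and confirming they are nonzero exactly when the side conditions (or the non-degenerate cases) hold — rather than any conceptual difficulty. I expect \acro{L5} and \acro{L6} to be the hardest, since there the reweighting of the premise's decomposition $\expr{e} \equiv (\expr{f} \oplus_{\prob s} \test{\one}) +_{\test c} \expr{g}$ interacts nontrivially with the loop's own reweighting, and one has to chase how the probability mass on the discarded $\accept$-branch is redistributed.
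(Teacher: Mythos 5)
Your plan matches the paper's proof essentially step for step: the paper establishes \(\equiv\;\subseteq\;\Phi_\partial(\equiv)\) by induction on the length of derivations, checking each equational axiom by unfolding the Brzozowski derivatives against the equivalence-class (Larsen--Skou) characterisation, treating \acro{F1}, \acro{F2}, \acro{L5}, \acro{L6} as inductive steps whose premise pairs are assumed to lie in \(\Phi_\partial(\equiv)\) (using \(\trmt{\expr{e}}_\alpha = \partial(\expr{e})_\alpha(\accept)\) to turn the side condition into a statement about derivatives), and discharging the congruence-closure steps via the separate lemma that \(\Phi_\partial\) preserves congruences --- precisely the step you identify as overlapping with \cref{thm: greatest_bisim_is_congruence}. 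The only caveat is terminological: your phrase ``the premises are already bisimilarity-sound'' must be read as ``the premise pair satisfies the one-step conditions relative to \(\equiv\)-classes'' (i.e.\ lies in \(\Phi_\partial(\equiv)\)), since full bisimilarity of the premises is not available inside this induction.
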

We immediately obtain that provable equivalence is contained in bisimilarity.
\begin{theorem}[Soundness]\label{thm: soundness}
For all \(\expr{e}, \expr{f} \in \Exp\), if \(\expr{e} \equiv \expr{f}\) then \(\expr{e} \bisim \expr{f}\)
\end{theorem}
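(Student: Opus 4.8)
The plan is to obtain \cref{thm: soundness} directly from \cref{lem: soundness}. Since $\bisim$ is by definition the greatest bisimulation on $(\Exp,\partial)$ and \cref{lem: soundness} states that $\equiv$ is itself a bisimulation on $(\Exp,\partial)$, we have ${\equiv}\subseteq{\bisim}$, and hence $\expr{e}\equiv\expr{f}$ implies $\expr{e}\bisim\expr{f}$. So the only content is \cref{lem: soundness}, and the rest of this plan concerns how I would establish it.

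To prove \cref{lem: soundness}, I would first note that $\equiv$ is an equivalence relation, since it is defined to be a congruence. By \cref{lem:larsen_skou} it then suffices to show that $\equiv$-related expressions have $\equiv$-equivalent Brzozowski derivatives: for all $\expr{e}\equiv\expr{f}$ and all $\alpha\in\At$, $\partial(\expr{e})_\alpha(o)=\partial(\expr{f})_\alpha(o)$ for every $o\in\2+\V$, and $\partial(\expr{e})_\alpha[\{\action{p}\}\times Q]=\partial(\expr{f})_\alpha[\{\action{p}\}\times Q]$ for every $\equiv$-class $Q$ and every $\action{p}\in\Act$. I would prove this by induction on the derivation of $\expr{e}\equiv\expr{f}$. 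The reflexivity, symmetry and transitivity cases are immediate (the two conditions are closed under these); the equational-axiom cases (\acro{G1}--\acro{G4}, \acro{P1}--\acro{P4}, \acro{D}, \acro{S1}--\acro{S8}, \acro{L1}--\acro{L4}) reduce to computing the derivatives of both sides and comparing them — routine but somewhat delicate for the sequencing and loop axioms because of the probability renormalisations appearing in $\partial(\expr{e}\seq\expr{f})$, $\partial(\expr{e}^{(\test{b})})$ and $\partial(\expr{e}^{[\prob{r}]})$; for \acro{L1} and \acro{L2} one must track how the $\accept$-mass of the loop body is redistributed through an unrolling, and the side conditions (e.g.\ $\trmt{\expr{e}}=0$, $\prob{s}>\prob{0}$) are exactly what keep the denominators nonzero. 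The congruence (compatibility-with-operations) cases require sub-lemmas stating that the relevant one-step constructions — convex combination $\prob{r}\nu+\prob{(1-r)}\mu$, the rerouting operation $(-\semseq_\alpha\expr{f})$, and the loop renormalisations — send $\equiv$-equivalent inputs to $\equiv$-equivalent outputs; these are close in spirit to the argument behind \cref{thm: greatest_bisim_is_congruence}.

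The main obstacle will be the soundness of the fixpoint rules \acro{F1} and \acro{F2} (and, relatedly, the quasi-equational loop rules \acro{L5}, \acro{L6}): given $\expr{g}\bisim\expr{e}\seq\expr{g}+_\test{b}\expr{f}$ (resp.\ $\expr{g}\bisim\expr{e}\seq\expr{g}\oplus_{\prob{r}}\expr{f}$) with $\trmt{\expr{e}}=0$, one must show $\expr{g}\bisim\expr{e}^{(\test{b})}\seq\expr{f}$ (resp.\ $\expr{g}\bisim\expr{e}^{[\prob{r}]}\seq\expr{f}$), i.e.\ that productive fixpoint equations have a unique solution up to bisimilarity. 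In the probabilistic setting, naively closing $\{(\expr{h}\seq\expr{g},\,\expr{h}\seq\expr{e}^{(\test{b})}\seq\expr{f})\}$ under the \cref{lem:larsen_skou} conditions does not obviously produce a bisimulation, so instead I would argue through the relation refinement chain of \cref{thm:refinement_chain}: show by induction on $i$ that the two candidate expressions are $\bisim^{(i)}$-related for every $i\in\mathbb{N}$ — using $\trmt{\expr{e}}=0$ to see that each unrolling strictly reduces the approximation index at which the two sides could be told apart — and then conclude $\expr{g}\bisim\expr{e}^{(\test{b})}\seq\expr{f}$ by \cref{thm:refinement_chain}. This is precisely the ``topological'' ingredient flagged in the introduction, and it is where the bulk of the effort lies; the remaining equational axioms are by comparison bookkeeping.
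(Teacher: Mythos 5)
Your reduction of \cref{thm: soundness} to \cref{lem: soundness}, and your overall plan for the lemma --- show by induction on derivations that every derivable pair satisfies the \cref{lem:larsen_skou} conditions with respect to $\equiv$-classes, i.e.\ that ${\equiv} \subseteq \Phi_\partial(\equiv)$ --- is exactly the paper's route, and your treatment of the equational axioms and of the congruence rules (via the analogue of \cref{lem:preservation_of_congruences}) is fine. The gap is in the fixpoint and loop rules. In the induction you set up, the hypothesis available in the \acro{F1} case is that the premise pair $(\expr{g},\, \expr{e}\seq\expr{g} +_\test{b} \expr{f})$ has matching one-step derivatives \emph{modulo $\equiv$-classes} (membership in $\Phi_\partial(\equiv)$), not that $\expr{g} \bisim \expr{e}\seq\expr{g} +_\test{b} \expr{f}$; and what the induction step must deliver is likewise $(\expr{g}, \expr{e}^{(\test{b})}\seq\expr{f}) \in \Phi_\partial(\equiv)$, not $\expr{g} \bisim \expr{e}^{(\test{b})}\seq\expr{f}$. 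Your paragraph on \acro{F1}/\acro{F2} silently switches to the bisimilarity reading on both sides, so it neither uses the hypothesis you actually have nor proves what your induction actually needs: bisimilarity only guarantees matching of transition mass on $\bisim$-saturated sets, which does not yield the required matching on $\equiv$-classes. Inside the correct framing no uniqueness-of-solutions or refinement-chain argument is needed at all: the paper's \acro{F1}/\acro{F2} cases are one-step computations using $\trmt{\expr{e}}=0 \Rightarrow \partial(\expr{e})_\alpha(\accept)=0$ (\cref{lem:no_termination_implies_success}) together with congruence facts such as $Q/\expr{g} = Q/(\expr{e}^{(\test{b})}\seq\expr{f})$ (\cref{lem:swapping_congruent_ends}); \acro{L5}/\acro{L6} are likewise direct one-step verifications (they are not uniqueness statements). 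The topological machinery you invoke is reserved for \acro{UA}, which is not part of $\equiv$.

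Your refinement-chain idea could be salvaged, but only by restructuring the whole proof: show that $\bisim$ is a congruence (\cref{thm: greatest_bisim_is_congruence}) that validates every axiom and is closed under every quasi-equational rule, and conclude ${\equiv} \subseteq {\bisim}$ by minimality of $\equiv$. In that architecture your stratified argument for \acro{F1}/\acro{F2} does work --- it is essentially \cref{lem: contractiveness} for a single unknown: each $\bisim^{(i)}$ is a congruence, so when $\trmt{\expr{e}}=0$ the map $x \mapsto \expr{e}\seq x +_\test{b} \expr{f}$ sends $\bisim^{(i)}$-related arguments to $\bisim^{(i+1)}$-related results, and one also needs that $\expr{e}^{(\test{b})}\seq\expr{f}$ satisfies the same equation up to $\bisim$, via the already-established soundness of \acro{L1}, \acro{S3}, \acro{S5}. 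But then the equational axioms are no longer settled by ``computing the derivatives of both sides and comparing them'': for \acro{S2}, \acro{S3}, \acro{S5}, \acro{S6} and \acro{L1}--\acro{L6} the one-step targets are syntactically different expressions that are only provably equivalent, so each such axiom would require exhibiting an actual bisimulation relating the targets --- precisely the bookkeeping that comparing modulo $\equiv$-classes is designed to avoid, and which your sketch does not account for. As written, the two halves of your plan belong to different proof architectures and do not compose.
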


\subparagraph*{Example of equational reasoning. }
Since our axioms are sound, we can reason about \acro{ProbGKAT} expressions equationally, without constructing bisimulations by hand.
Once again, we revisit the algorithm from \cref{fig:3die}. To show correctness, we need to prove the equivalence of expressions \(\expr{e}\) and \(\expr{g}^{(\test{\one})}\) from \cref{ex:3die}, as follows:
\begin{align*}
    \expr{g}^{(\test{\one})} &\equiv \left( \left(\epsdice{1} \oplus_{\prob{\frac{1}{2}}} \epsdice{2}\right) \oplus_{\prob{\frac{1}{2}}} \left(\epsdice{3} \oplus_{\prob{\frac{1}{2}}} \test{\one}\right) \right)^{(\test{1})} \tag{Def. of $\expr{g}$} \\
    &\equiv \left(\left( \left(\epsdice{1} \oplus_{\prob{\frac{1}{2}}} \epsdice{2}\right) \oplus_{\prob{\frac{2}{3}}} \epsdice{3}\right) \oplus_{\prob{\frac{1}{2}}} \test{\one} \right)^{(\test{1})} \tag{See below}\\
    &\equiv \left(\left(\left( \left(\epsdice{1} \oplus_{\prob{\frac{1}{2}}} \epsdice{2}\right) \oplus_{\prob{\frac{2}{3}}} \epsdice{3}\right) \oplus_{\prob{\frac{1}{2}}} \test{\one}\right) +_\test{\one} \test{\zero} \right)^{(\test{1})} \tag{See below}\\
    &\equiv \left( \left(\epsdice{1} \oplus_{\prob{\frac{1}{2}}} \epsdice{2}\right) \oplus_{\prob{\frac{2}{3}}} \epsdice{3}\right)^{(\test{\one})} \tag{Axioms \acro{L5} and \acro{S1}}\\
    &\equiv \left( \epsdice{1} \oplus_{\prob{\frac{1}{3}}} \left( \epsdice{2} \oplus_{\prob{\frac{1}{2}}} \epsdice{3}\right)\right)^{(\test{1})} \tag{Axiom \acro{P4}}\\
    &\equiv \left( \epsdice{1} \oplus_{\prob{\frac{1}{3}}} \left( \epsdice{2} \oplus_{\prob{\frac{1}{2}}} \epsdice{3}\right)\right)\seq\left( \epsdice{1} \oplus_{\prob{\frac{1}{3}}} \left( \epsdice{2} \oplus_{\prob{\frac{1}{2}}} \epsdice{3}\right)\right)^{(\test{\one})} +_\test{\one} \test{\one} \tag{Axiom \acro{L1}}\\
    &\equiv \left( \epsdice{1} \oplus_{\prob{\frac{1}{3}}} \left( \epsdice{2} \oplus_{\prob{\frac{1}{2}}} \epsdice{3}\right)\right)\seq\expr{e}^{(\test{\one})} +_\test{\one} \test{\one} \tag{Def. \(\expr{e}\)}\\
        &\equiv \left( \epsdice{1} \oplus_{\prob{\frac{1}{3}}} \left( \epsdice{2} \oplus_{\prob{\frac{1}{2}}} \epsdice{3}\right)\right)\seq \expr{e}^{(\test{\one})} \tag{See below}
\\
    &\equiv \left( \epsdice{1}\seq e^{(\test{\one})} \oplus_{\prob{\frac{1}{3}}} \left( \epsdice{2}\seq e^{(\test{\one})} \oplus_{\prob{\frac{1}{2}}} \epsdice{3}\seq e^{(\test{\one})}\right)\right)\tag{Axiom \acro{S6}}\\
    &\equiv\left( \epsdice{1} \oplus_{\prob{\frac{1}{3}}} \left( \epsdice{2} \oplus_{\prob{\frac{1}{2}}} \epsdice{3}\right)\right) \tag{Axiom \acro{S7}}\\
    &\equiv \expr{e} \tag{Def. \(\expr{e}\)}
\end{align*}
In the second step, we used that for all \(\expr{e}_1, \expr{e}_2, \expr{e}_3 \in \Exp\) and $\prob{r}, \prob{s} \in [0,1]$ with $(1-\prob{r})(1 - \prob{s}) > 0$, we have \(\expr{e}_1 \oplus_\prob{r} (\expr{e}_2 \oplus_{\prob{s}} \expr{e}_3) \equiv (\expr{e}_1 \oplus_\prob{k} \expr{e}_2) \oplus_\prob{l} \expr{e}_3\), where \(\prob{k}=\prob{\frac{r}{1-(1-r)(1-s)}}\) and \(\prob{l}=\prob{1-(1-r)(1-s)}\). In the third and eighth steps, we used that for all \(\expr{e}, \expr{f} \in \Exp\), we have that \(\expr{e} +_\test{\one} \expr{f} \equiv \expr{e}\). Both those equivalences follow from the other axioms; see \Cref{lem:derivable_facts} in the appendix for details.

\subparagraph*{Fundamental theorem. }
Every expression in the language of \acro{KA} (resp.\ \acro{KAT}, \acro{GKAT}) can be reconstructed from its small-step semantics, up to \(\equiv\).
This property, often referred to as the \emph{fundamental theorem} of (in analogy with the fundamental theorem of calculus and following the terminology of Rutten~\cite{Rutten:2000:Universal}) is useful in many contexts, and we will need it later on.
\begin{restatable}[Fundamental Theorem]{theorem}{fundamentaltheorem}\label{thm: fundamental_theorem}
    For every \(\expr{e} \in \Exp\) it holds that
	\[
		\expr{e} \equiv \bigsum{\alpha \in \At} \left(\bigplus{d \in \supp\left(\partial(\expr{e})_\alpha\right)} {\partial(\expr{e})_\alpha(d)} \cdot {\ex(d)}\right)
	\]
	where \(\ex\) defines a function \(\2 + \V + \Act \times \Exp \to \Exp\) given by
	\begin{equation*}
		\ex(\reject) = \test{\zero} \quad \ex(\accept) = \test{\one} \quad \ex(\var{v}) = \var{v} \quad \ex(\action{a},\expr{f}) = \action{a}\seq \expr{f}
		\quad \quad \text{ \((\var{v} \in \V, \action{a} \in \Act \) and \(\expr{f} \in \Exp)\)}
	\end{equation*}
%	for \(\var{v} \in \V, \action{a} \in \Act \) and \(\expr{f} \in \Exp\)
\end{restatable}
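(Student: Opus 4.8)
The argument is a structural induction on $\expr{e}$, with one case per production of \cref{fig:probgkat}. Before starting, I would collect a few derivable facts that tame the right-hand side. From \acro{G1}--\acro{G4} and the Boolean-algebra structure of $\Bexp$ (distinct atoms being disjoint): a guarded big-sum is, up to $\equiv$, independent of the enumeration of $\At$; $\bigsum{\alpha\in\At}\expr{e}\equiv\expr{e}$; and $\test{b}$ is provably equal to the guarded big-sum over $\At$ whose $\alpha$-th summand is $\test{\one}$ if $\alpha\bleq\test{b}$ and $\test{\zero}$ otherwise. From \acro{P1}--\acro{P4}: a probabilistic big-sum is likewise rearrangeable, and in particular a summand with coefficient $\prob{0}$ or $\prob{1}$ may be deleted or inserted. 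From \acro{S3},\acro{S5},\acro{S6} and \acro{D}: $\seq$, $+_{\test{b}}$ and $\oplus_{\prob{r}}$ distribute over both big-sums. Finally I would note $\trmt{\expr{e}}_\alpha = \partial(\expr{e})_\alpha(\accept)$, which is immediate from the two definitions. With these, each induction step reduces to substituting the hypotheses for the immediate subterms and matching, summand by summand, the decomposition dictated by $\partial$.

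The base and choice cases are then routine. For $\action{p}$ the right-hand side is $\bigsum{\alpha\in\At}\action{p}\seq\test{\one}\equiv\bigsum{\alpha\in\At}\action{p}\equiv\action{p}$ by \acro{S2}; for $\var{v}$ it is $\bigsum{\alpha\in\At}\var{v}\equiv\var{v}$; for $\test{b}$ it is the derivable fact above. For $\expr{e}+_{\test{b}}\expr{f}$ one substitutes the hypotheses and reassembles the combined guarded sum with \acro{G1}--\acro{G4}, using that $\partial$ is defined atomwise by the case split on $\alpha\bleq\test{b}$; for $\expr{e}\oplus_{\prob{r}}\expr{f}$ one merges the two probabilistic sums with \acro{P1}--\acro{P4} and \acro{D}, using that $\partial$ there is the convex combination. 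For $\expr{e}\seq\expr{f}$, substitute the hypothesis for $\expr{e}$, push $\seq\expr{f}$ inward with \acro{S3},\acro{S5},\acro{S6}, and simplify $\ex(d)\seq\expr{f}$ for each $d\in\supp(\partial(\expr{e})_\alpha)$: by \acro{S1},\acro{S2},\acro{S4},\acro{S7} an $\accept$-branch becomes $\test{\one}\seq\expr{f}\equiv\expr{f}$, a $\reject$-branch becomes $\test{\zero}$, a $\var{v}$-branch stays $\var{v}$, and an $(\action{a},\expr{e'})$-branch becomes $\action{a}\seq(\expr{e'}\seq\expr{f})$ --- exactly the rerouting performed by $c_{\alpha,\expr{f}}$; substituting the hypothesis for $\expr{f}$ into the $\accept$-branch recovers $\partial(\expr{e}\seq\expr{f})_\alpha$.

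The loop cases carry the weight of the proof. For $\expr{e}^{(\test{b})}$ I would partition $\At$ following the case split in the definition of $\partial(\expr{e}^{(\test{b})})_\alpha$. For atoms with $\alpha\bleq\test{\bar b}$ (summand $\test{\one}$), \acro{L1} plus Boolean reasoning ($\test{\bar b}$ absorbs the then-branch) suffices. For atoms with $\alpha\bleq\test{b}$ and $\partial(\expr{e})_\alpha(\accept)=\prob{1}$ (summand $\test{\zero}$), the hypothesis forces $\expr{e}$ to be provably $\test{\one}$ below the join $\test{c}$ of these atoms, so $\expr{e}\equiv\test{\one}+_{\test{c}}\expr{e}$; then \acro{G3} and \acro{L3} give $\expr{e}^{(\test{b})}\equiv(\test{\bar c}\seq\expr{e})^{(\test{b})}$, and unrolling once with \acro{L1} makes $\test{c}$ meet $\test{\bar c}$, so \acro{S8} and \acro{S4} collapse $\test{c}\seq\expr{e}^{(\test{b})}$ to $\test{\zero}$. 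For the productive atoms ($\alpha\bleq\test{b}$, $\partial(\expr{e})_\alpha(\accept)<\prob{1}$) I would work one atom at a time: unroll with \acro{L1}, replace $\expr{e}$ by its normal form, observe that the $\alpha$-summand of that normal form has, after reordering with \acro{P3} (and inserting a $\prob{1}$-coefficient $\test{\one}$ via \acro{P2} when the $\accept$-branch is absent), the shape $(\expr{f}\oplus_{\prob{s}}\test{\one})+_{\test{c}}\expr{g}$ demanded by \acro{L5} with $\prob{s}=\prob{1}-\partial(\expr{e})_\alpha(\accept)>\prob{0}$, and apply \acro{L5} to delete the $\test{\one}$-summand; the reweighting it induces is exactly division by $\prob{1}-\partial(\expr{e})_\alpha(\accept)$, so after distributing $\seq\expr{e}^{(\test{b})}$ and simplifying the $\ex(d)\seq\expr{e}^{(\test{b})}$ as in the sequencing case one lands on $\partial(\expr{e}^{(\test{b})})_\alpha$. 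The probabilistic loop $\expr{e}^{[\prob{r}]}$ is parallel: \acro{L2} for the unrolling, \acro{L6} for the reweighting --- whose coefficient $\prob{\frac{rs}{1-r(1-s)}}$ is precisely the $\prob{\frac{1-r}{1-r\,\partial(\expr{e})_\alpha(\accept)}}$ appearing in $\partial$ --- and the degenerate atoms with $\prob{r}=\prob{1}$ and $\partial(\expr{e})_\alpha(\accept)=\prob{1}$ dispatched via \acro{L4}, reducing them to the guarded-loop argument.

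I expect the loop cases to be the main obstacle, on two counts. First, \acro{L5} and \acro{L6} must be invoked per atom (their guard $\test{c}$ and body $\expr{f}$ vary with the atom and with the acceptance probability), so the per-atom conclusions have to be glued back into a single guarded big-sum with \acro{G4}, all while tracking which expression is looped on the inside; one also has to phrase the body's normal form carefully enough that it literally matches the premise of \acro{L5}/\acro{L6} on each atom and that the side conditions ($\prob{s}>\prob{0}$ for \acro{L5}; the stated non-degeneracy for $\expr{e}^{[\prob{r}]}$) hold --- this is exactly where the assumptions $\partial(\expr{e})_\alpha(\accept)<\prob{1}$ (resp.\ not both $\prob{r}=\prob{1}$ and $\partial(\expr{e})_\alpha(\accept)=\prob{1}$) get used. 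Second, one must verify that the probabilities produced by the iterated $\oplus$'s, after deleting the $\test{\one}$ summand and renormalising, equal the denominators $\prob{1}-\partial(\expr{e})_\alpha(\accept)$ and $\prob{1}-\prob{r}\,\partial(\expr{e})_\alpha(\accept)$ from the Brzozowski construction --- a finite computation (a geometric-series identity in disguise) whose legitimacy is guaranteed by precisely those side conditions.
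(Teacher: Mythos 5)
Your proposal is correct and follows essentially the same route as the paper's proof: structural induction with generalised guarded and convex sums (plus the same rearrangement, merging and distributivity facts), identical handling of the base, choice and sequencing cases, and the same atom-wise case analysis for the loops, collapsing the $\bar{\test{b}}$- and immediately-accepting atoms via \acro{L1}/\acro{L3} and applying \acro{L5}/\acro{L6} (with \acro{L4} reducing $\prob{r}=\prob{1}$ to the guarded loop) on the remaining atoms. Only a minor bookkeeping slip: the \acro{L6} coefficient $\prob{\frac{rs}{1-r(1-s)}}$ equals the productive-branch probability $\frac{\prob{r}(\prob{1}-\partial(\expr{e})_\alpha(\accept))}{\prob{1}-\prob{r}\partial(\expr{e})_\alpha(\accept)}$, whose complement (not the coefficient itself) is the acceptance probability $\frac{\prob{1-r}}{\prob{1}-\prob{r}\partial(\expr{e})_\alpha(\accept)}$ appearing in $\partial$; this does not affect the argument.
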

The proof is given in the appendix. We use a generalised type of guarded and probabilistic choice ranging over indexed collections of expressions, which is defined in \cref{apx:generalised}.
\section{Completeness}\label{sec: completeness}
Given the axioms presented in the previous section, a natural question is to ask whether they are \emph{complete} w.r.t.\ bisimilarity --- i.e., whether any bisimilar pair can be proved equivalent using the axioms that make up $\equiv$.
The traditional strategy is to develop the idea of \emph{systems of equations} within the calculus, and show that these systems have unique (least) solutions up to provable equivalence.
If we can characterise the expressions of interest as solutions to a common system of equations (typically derived from the bisimulation that relates them), then uniqueness of solutions guarantees their equivalence.
Unfortunately, the first step of this process, where systems of equations are shown to have unique solutions, does not transfer to \acro{ProbGKAT} (nor \acro{GKAT}).
Indeed, some systems of equations do not have \emph{any} solution~\cite{Kozen:2008:Bohm,Schmid:2021:Guarded}; the lack of a procedure to construct solutions also encumbers a proof of uniqueness.

Instead, we follow the approach from~\cite{Smolka:2020:Guarded} pioneered by Bergstra and Klop~\cite{Bergstra:1985:Verification}, and incorporate uniqueness of solutions into the axiomatisation.
The \emph{uniqueness axiom} (\acro{UA}) that accomplishes this is an \emph{axiom scheme}, which is to say it is a template for infinitely many axioms, one for each number of unknowns.
In the case of a single unknown, one can show that \acro{F1} and \acro{F2} are special cases of \acro{UA}, which moreover give a candidate solution.

With \acro{UA} in hand, the traditional roadmap towards completeness works out.
Before we get there, however, we must expend some energy to properly state this axiom scheme.
Moreover, showing soundness of \acro{UA} requires effort.
Both of these take up the bulk of the development in this section; we derive the desired completeness property at the end.

\subparagraph*{(Salomaa) systems of equations. } First, we define formally the idea of \emph{systems of equations} for \acro{ProbGKAT} automata.
The constraints on each variable will be built using the following two-sorted grammar, where \(X\) is a finite set of indeterminates.
\begin{align*}
    e_1, e_2 \in \Exp(X) &::= p \mid e_1 +_\test{b} e_2 \tag{\(p \in \Pexp(X), \test{b} \in \Bexp\)}\\
    p_1, p_2 \in \Pexp(X) &::= \expr{f} \mid \expr{g}x \mid p_1 \oplus_{\prob{r}} p_2 \tag{\(\expr{f},\expr{g} \in \Exp, x \in X, \prob{r} \in [0,1]\)}
\end{align*}

\begin{definition}
A \emph{system of equations} is a pair \((X, \tau : X \to \Exp(X))\) consisting of a finite set \(X\) of indeterminates and a function \(\tau : X \to \Exp(X)\).
If for all \(x \in X\), in each of \(\tau(x)\) all subterms of the form \(\expr{g}x\) satisfy \(\trmt{\expr{g}}=0\), then such system is called \emph{Salomaa}.\footnote{%
    In process algebra~\cite{Milner:1984:Complete}, Salomaa systems are usually called \emph{guarded}.
    We avoid the latter name to prevent confusion with Boolean guarded choice present in \acro{(Prob)GKAT}.
}
\end{definition}

Every finite state \acro{ProbGKAT} automaton yields a Salomaa system of equations.
\begin{definition}
    Let \((X, \beta)\) be a finite state \acro{ProbGKAT} automaton. A system of equations associated with \((X, \beta)\) is a Salomaa system \((X, \tau)\), with \(\tau : X \to \Exp(X)\) defined by
    \[
    \tau(x) = \bigsum{\alpha \in \At} \left(\bigplus{d \in \supp(\beta(x))_\alpha} \beta(x)_\alpha(d) \cdot \sys(d)\right)
    \]
    where \(\sys : \2 + \V + \Act \times \Exp \to \Pexp(X)\) is given by
    \[
    \sys(\reject) = \test{\zero} \quad \sys(\accept) = \test{\one} \quad \sys(\var{v}) = \var{v} \quad \sys(\action{a},x) = \action{a}x
    \]
\end{definition}
\begin{example}\label{ex:system}
    In the system associated with the automaton from \cref{ex:automaton}, $\tau$ is given by
    \begin{mathpar}
        x_1  \mapsto (\action{q}x_2 \oplus_\prob{\frac{1}{2}} \var{v}) +_\test{\alpha} \left((\action{p} x_1 \oplus_\prob{\frac{1}{2}} \action{q} x_2)+_{{\test{\beta}}}\test{\zero}\right) \and
        x_2  \mapsto \test{\one} +_\test{\alpha} \left( \test{\one} +_{{\test{\beta}}} \test{\zero} \right)
    \end{mathpar}
\end{example}
Given a function \(h : X \to \Exp\) that assigns a value to each indeterminate in $X$, we can derive a \acro{ProbGKAT} expression $h^{\#}(e)$ for each $e \in \Exp(X)$ inductively, as follows: $h^{\#}(\expr{f}) = \expr{f}$, $h^{\#}(p_1 \oplus_\prob{r} p_2) = h^{\#}(p_1) \oplus_\prob{r} h^{\#}(p_2)$, $h^{\#}(\expr{g}x) = \expr{g}\seq h(x)$, $h^{\#}(e_1 +_\test{b} e_2) = h^{\#}(e_1) +_\test{b} h^{\#}(e_2)$. We can now state the notion of a \emph{solution} to the Salomaa system.
Rather than expecting both sides of equations to be strictly equal, we require them to be related by a relation, which we leave as a parameter to instantiate later.
\begin{definition}
    Let $R \subseteq \Exp \times \Exp$.
    A \emph{solution} up to \(R\) to a system \((X, \tau)\) is a map \(h : X \to \Exp\) satisfying for all \(x \in X\) that \(\left( h(x), h^{\#}(\tau(x)) \right) \in R\).
\end{definition}

\begin{example}
A solution up to $\equiv$ to the system from \Cref{ex:system} would satisfy
\begin{mathpar}
    h(x_1) \equiv (\action{q} \seq h(x_2) \oplus_\prob{\frac{1}{2}} \var{v}) +_\test{\alpha} \left((\action{p} \seq h(x_1) \oplus_\prob{\frac{1}{2}} \action{q} \seq h(x_2))+_{{\test{\beta}}}\test{\zero}\right) \and
    h(x_2) \equiv \test{\one} +_\test{\alpha} \left( \test{\one} +_{{\test{\beta}}} \test{\zero} \right)
\end{mathpar}
In this case, choosing \(h(x_1)=(\action{p} +_\test{\beta} \var{v})^{[\prob{\frac{1}{2}}]}\seq\action{q}\) and \(h(x_2)=\test{\one}\) fits these constraints.
\end{example}
\begin{example}
    Let \(\mathsf{r} \in [0,1]\). The recursive specification on the left below describes a program \(\mathsf{randAdd(m,r)}\) which takes an integer \(\mathsf{m}\) and bias \(\mathsf{r}\).
    As long as \(\mathsf{m}\) is strictly below \(10\), this program flips an \(\mathsf{r}\)-biased coin to decide between incrementing \(\mathsf{m}\) followed by a recursive call or termination. That recursive specification can be thought of as a Salomaa system with one unknown; the program on the right is a solution up to \(\equiv\).
    \begin{center}
        \begin{tikzpicture}
            \node(p1){
                \begin{minipage}{60mm}
                    \begin{algorithm}[H]
                        \SetAlgoNoLine
                        \Def{$\mathsf{randAdd(m,r)}$}
                        {\eIf{$\mathsf{m<10}$}
                        {\eIf{$\mathsf{flip(r)}$}
                        {$\mathsf{m}\texttt{++}$\;$\mathsf{randAdd(m,r)}$}
                        {\kwSkip}}
                        {\kwSkip}}
                    \end{algorithm}
                    
                \end{minipage}
            };
            \node[right=6cm of p1.center,anchor=center] (p2) {
                \begin{minipage}{50mm}
                    \begin{algorithm}[H]
                        \SetAlgoNoLine
                        \While{$\mathsf{flip(r)}$}
                        {\eIf{$\mathsf{m<10}$} {$\mathsf{m}\texttt{++}$\;$\mathsf{randAdd(m,r)}$}{\kwSkip}}
                        
                    \end{algorithm}
                \end{minipage}
            };
            \draw[dotted] ($(p1.north west) + (0.2, 0)$) rectangle ($(p1.south east) + (-1.6, -0.05)$);
            \draw[dotted] ($(p2.north west) + (0.2, 0)$) rectangle ($(p2.south east) + (-1.1, -0.05)$);
        \end{tikzpicture}
    \end{center}
\end{example}
Solutions up to \(\equiv\) can be characterised concretely, using \Cref{thm: fundamental_theorem}.
\begin{restatable}{theorem}{solutionshomomorphisms}\label{lem: solutions_are_homomorphisms}
    Let \((X, \beta)\) be a finite state \acro{ProbGKAT} automaton. The map \(h : X \to \Exp\) is a solution up to \(\equiv\) to the system associated with \((X, \beta)\) if and only if \([-]_\equiv \circ h\) is a \acro{ProbGKAT} automata homomorphism from \((X, \beta)\) to \(({\Exp}/{\equiv}, \bar{\partial})\). We write \(\bar{\partial}\) to denote the unique transition function on \({\Exp}/{\equiv}\) which makes the quotient map \([-]_\equiv : \Exp \to {\Exp}/{\equiv}\) a \acro{ProbGKAT} automaton homomorphism from \((\Exp, \partial)\)~\cite[Proposition~5.8]{Rutten:2000:Universal}.
\end{restatable}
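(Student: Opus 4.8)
The plan is to extract the statement from the Fundamental Theorem (\cref{thm: fundamental_theorem}) together with the observation that, since $\equiv$ is a bisimulation equivalence (\cref{lem: soundness}), the quotient map $[-]_\equiv$ is by construction a \acro{ProbGKAT} automaton homomorphism $(\Exp,\partial)\to(\Exp/{\equiv},\bar\partial)$. Write $q\colon\2+\V+\Act\times\Exp\to\2+\V+\Act\times(\Exp/{\equiv})$ for the map that is the identity on $\2+\V$ and sends $(\action p,\expr g)$ to $(\action p,[\expr g]_\equiv)$, and write $\hat h\colon\2+\V+\Act\times X\to\2+\V+\Act\times\Exp$ for the map that is the identity on $\2+\V$ and sends $(\action p,x)$ to $(\action p,h(x))$. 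Being a homomorphism gives $\bar\partial([\expr e]_\equiv)_\alpha=\D(q)(\partial(\expr e)_\alpha)$ for all $\expr e,\alpha$, and unwinding \cref{def:homomorphism_of_transition_systems} shows that $f:=[-]_\equiv\circ h$ is a homomorphism $(X,\beta)\to(\Exp/{\equiv},\bar\partial)$ precisely when $\bar\partial(f(x))_\alpha=\D(q\circ\hat h)(\beta(x)_\alpha)$ for all $x\in X$ and $\alpha\in\At$. So the whole proof reduces to comparing the three distributions $\bar\partial(f(x))_\alpha$, $\bar\partial([h^\#(\tau(x))]_\equiv)_\alpha$, and $\D(q\circ\hat h)(\beta(x)_\alpha)$.

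The first key step is the computation $\bar\partial([h^\#(\tau(x))]_\equiv)_\alpha=\D(q\circ\hat h)(\beta(x)_\alpha)$ for all $x,\alpha$. One expands $h^\#(\tau(x))$ using $h^\#(\sys(d))=\ex(\hat h(d))$, pushes the Brzozowski derivative through the generalised guarded and probabilistic sums (at atom $\alpha$ only the summand indexed by $\alpha$ contributes, and the probabilistic sum turns into a convex combination of the derivatives of its summands --- see \cref{apx:generalised}), and computes $\partial(\ex(d))_\alpha$: it is $\delta_d$ for $d\in\2+\V$, and $\delta_{(\action p,\test{\one}\seq\expr g)}$ for $d=(\action p,\expr g)$. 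Since $\test{\one}\seq\expr g\equiv\expr g$ by \acro{S1}, applying $\D(q)$ collapses $\partial(\ex(d))_\alpha$ to $\delta_{q(d)}$ in every case; as $\D(q)$ commutes with convex combinations, pushing the resulting combination through $q$ then yields $\D(q)(\partial(h^\#(\tau(x)))_\alpha)=\D(q\circ\hat h)(\beta(x)_\alpha)$, and the left-hand side is exactly $\bar\partial([h^\#(\tau(x))]_\equiv)_\alpha$.

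The second key step is that $\bar\partial$ is injective on $\Exp/{\equiv}$: if $\bar\partial([\expr e]_\equiv)_\alpha=\bar\partial([\expr e']_\equiv)_\alpha$ for all $\alpha$, then $\expr e\equiv\expr e'$. By \cref{thm: fundamental_theorem} each side is $\equiv$ to its reconstruction $\bigsum{\alpha}(\bigplus{d\in\supp(\partial(\cdot)_\alpha)}\partial(\cdot)_\alpha(d)\cdot\ex(d))$, so it suffices to show that this reconstruction depends, up to $\equiv$, only on the pushforwards $\D(q)(\partial(\cdot)_\alpha)$. This holds because $\ex$ respects congruence in its expression argument ($\expr g\equiv\expr h$ gives $\action p\seq\expr g\equiv\action p\seq\expr h$), so branches whose successors fall into the same $\equiv$-class can first be rewritten to a common representative and then merged into one, using the laws \acro{P1}--\acro{P4} for the (generalised) probabilistic choice together with \acro{G1}--\acro{G4} for the guarded layer; what remains manifestly depends only on the numbers $\mu_\alpha(o)$ for $o\in\2+\V$ and $\mu_\alpha[\{\action p\}\times c]$ for $c\in\Exp/{\equiv}$. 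With both steps in place the theorem closes by a chain of equivalences: $h$ is a solution up to $\equiv$ iff $h(x)\equiv h^\#(\tau(x))$ for all $x$, iff $\bar\partial([h(x)]_\equiv)_\alpha=\bar\partial([h^\#(\tau(x))]_\equiv)_\alpha$ for all $x,\alpha$ (the forward direction is trivial, the backward direction is the injectivity just shown), iff $\bar\partial(f(x))_\alpha=\D(q\circ\hat h)(\beta(x)_\alpha)$ for all $x,\alpha$ (by the first step and since $f(x)=[h(x)]_\equiv$), iff $f$ is a homomorphism.

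I expect the main obstacle to be the second step, i.e.\ that the Fundamental Theorem reconstruction sees only the $\equiv$-classes of the successor states (equivalently, that $\bar\partial$ is an injective function on $\Exp/{\equiv}$). This is the point at which one genuinely needs the bookkeeping machinery of the generalised choice operators from \cref{apx:generalised} and the probabilistic-choice axioms in order to reorder, reassociate, and merge summands; by comparison the first step is a mechanical unfolding of the Brzozowski derivative, and the final chain of equivalences is routine.
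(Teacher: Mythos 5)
Your proposal is correct and follows essentially the same route as the paper's proof: unwind the homomorphism condition into the concrete output/class-mass equalities, expand $h(x)$ and $h^{\#}(\tau(x))$ via the Fundamental Theorem, and merge $\equiv$-equivalent branches using the generalised-sum idempotency-up-to-$\equiv$ lemma together with congruence. The only difference is packaging: you route both implications through an ``injectivity of $\bar{\partial}$ on $\Exp/{\equiv}$'' step, whereas the paper argues the two directions separately --- its right-to-left comparison of grouped normal forms is exactly your injectivity argument specialised to $h(x)$ versus $h^{\#}(\tau(x))$, and its left-to-right derivative computation followed by postcomposition with the quotient is your first step (your version even handles the $\test{\one}\seq h(x')\equiv h(x')$ detail a bit more carefully by quotienting from the start).
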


\subparagraph*{Uniqueness of Solutions axiom. }
Informally, \acro{UA} extends $\equiv$ by stating that solutions to Salomaa systems, if they exist, are unique.
Formally, we define \({\uaequiv} \subseteq \Exp \times \Exp\) to be the least congruence that contains \(\equiv\), and satisfies the following (quasi-equational) axiom:
\begin{equation}
    \inferrule{(X, \tau) \text{ is a Salomaa system} \quad f,g : X \to \Exp \text{ are solutions of } (X, \tau)\text{ up to } \uaequiv}{f(x)\uaequiv g(x) \text{ for all }x \in X }
    \tag{\acro{UA}}
\end{equation}
\acro{F1} and \acro{F2} are instantiations of \acro{UA} for Salomaa systems with one variable.

\subparagraph*{Behavioural pseudometric. }
We now develop the theory necessary to verify soundness of \acro{UA}.
First, note that for every \acro{ProbGKAT}, we can define a function $d_X: X \times X \to \bR^{+}$:
\[
    d_{X}(x,y) = \begin{cases}
        2^{-n} &n \text{ is maximal such that } x \bisim^{(n)} y \\
        0 & \text{if } x \bisim y
    \end{cases}
\]
The above is well-defined by \cref{thm:refinement_chain}, and is a \emph{pseudometric}, in the following sense.
\begin{definition}
    A \emph{pseudometric space} is a pair \((X, d_X)\), where \(d_X : X \times X \to \mathbb{R}^+\) is a \emph{pseudometric}, which means that for all \(x,y,z \in X\) we have
    \begin{mathpar}
        d_X(x,x)=0
        \and
        d_X(x,y)=d_X(y,x)
        \and
        d_X(x,z)\leq d_X(x,y) + d_X(y,z)
    \end{mathpar}
    Let \(k \in \mathbb{R}^{+}\). A mapping \(f : X \to Y\) between pseudometric spaces \((X, d_X)\) and \((Y, d_Y)\) is called \emph{\(k\)-Lipschitz} if for all \(x,y \in X\) \(d_Y(f(x), f(y))\leq kd_X(x,y)\).
\end{definition}

The behavioural pseudometric satisfies the definition above, in a strong sense.
\begin{restatable}{lemma}{metricwelldefined}
    For every \acro{ProbGKAT} automaton, \((X,\beta)\), \((X, d_X)\) a pseudometric space that is \emph{ultra}, in the sense that for all $x, y, z \in X$ we have \(d_X(x,z) = \max\{d_X(x,y), d_X(y,z)\}\).
\end{restatable}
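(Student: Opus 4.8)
The plan is to prove the three pseudometric axioms for $d_X$ together with the stronger \emph{ultra} equality, the latter carrying all the content. Well-definedness of $d_X$ is already \cref{thm:refinement_chain}, but it is worth recording why: since $\Phi$ is monotone and $\bisim^{(1)} \subseteq X \times X = \bisim^{(0)}$, an easy induction shows the relation refinement chain is weakly decreasing, so $\{i \mid x \bisim^{(i)} y\}$ is a downward-closed subset of $\mathbb{N}$; by \cref{thm:refinement_chain} it is a proper subset exactly when $x \not\bisim y$, and in that case it is a finite initial segment and hence has a maximum.

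The first step, and the one I expect to be the main obstacle, is to show by induction on $i$ that each approximant $\bisim^{(i)}$ is an equivalence relation. The base case $\bisim^{(0)} = X \times X$ is trivial, so the task reduces to checking that $\Phi$ carries equivalence relations to equivalence relations. Preservation of reflexivity and symmetry will be routine from \cref{def:bisimulation_functional}: if $R$ is reflexive then $A \subseteq R(A)$ for all $A \subseteq X$, which immediately yields $(x,x) \in \Phi(R)$; and if $R$ is symmetric then $R^{-1} = R$, so the last two clauses of $\Phi$ swap under $(x,y) \mapsto (y,x)$. The delicate case is transitivity. Given $(x,y),(y,z) \in \Phi(R)$ with $R$ an equivalence, the output clause composes trivially, and for each $A \subseteq X$ and $\action{p} \in \Act$ I would chain the two inequality clauses and then collapse the doubled relational image using $R(R(A)) = R(A)$, which holds because $R$ is reflexive and transitive; the $R^{-1}$-clause is handled identically since $R^{-1} = R$. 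Equivalently, this is the instance for equivalences of a general composition lemma $\Phi_{\beta,\gamma}(R) \circ \Phi_{\gamma,\delta}(S) \subseteq \Phi_{\beta,\delta}(R \circ S)$, with $\circ$ denoting relational composition and $R \circ R = R$.

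The remaining steps are short bookkeeping. We have $d_X(x,x) = 0$ since $x \bisim x$, and $d_X$ is symmetric because each $\bisim^{(i)}$ is, so the largest $i$ with $x \bisim^{(i)} y$ equals the largest $i$ with $y \bisim^{(i)} x$. For the ultra equality, fix $x,y,z$ and assume without loss of generality $d_X(x,y) \ge d_X(y,z)$. If $d_X(x,y) = 0$ then $x \bisim y$ and $y \bisim z$, so $x \bisim z$ as $\bisim$ is an equivalence~\cite{Rutten:2000:Universal}, giving $d_X(x,z) = 0$. Otherwise write $d_X(x,y) = 2^{-n}$ and $d_X(y,z) = 2^{-m}$ with $n \le m$ (reading $m = \infty$ and $\bisim^{(\infty)} = \bisim$ if $y \bisim z$). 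Since the chain is decreasing, $x \bisim^{(n)} y$ and $y \bisim^{(n)} z$, hence $x \bisim^{(n)} z$ by transitivity of $\bisim^{(n)}$, so $d_X(x,z) \le 2^{-n}$; and if $x \bisim^{(n+1)} z$ held, then combined with $z \bisim^{(n+1)} y$ (valid since $n+1 \le m$ or $y \bisim z$) and transitivity of $\bisim^{(n+1)}$ we would obtain $x \bisim^{(n+1)} y$, contradicting maximality of $n$. Thus $d_X(x,z) = 2^{-n} = \max\{d_X(x,y), d_X(y,z)\}$, and the ordinary triangle inequality follows because $\max\{a,b\} \le a + b$ for $a,b \ge 0$.
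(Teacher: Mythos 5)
Your proposal follows essentially the same route as the paper: the paper also reduces everything to the fact that each approximant $\bisim^{(i)}$ is an equivalence relation --- it invokes \cref{lem:preservation_of_equivalences} (that $\Phi_\beta$ preserves equivalence relations) instead of re-deriving it, but your direct check via $A \subseteq R(A)$ for reflexivity, $R^{-1}=R$ for symmetry, and $R(R(A)) = R(A)$ for transitivity is a perfectly adequate substitute --- and then performs the same case analysis using transitivity of the stratified relations to get the ultra property, with the ordinary triangle inequality as a consequence.

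One caveat, which you share with the paper rather than introduce: in the ultra step, your parenthetical justification of $z \bisim^{(n+1)} y$ (``valid since $n+1 \le m$ or $y \bisim z$'') covers nothing when $d_X(x,y) = d_X(y,z) = 2^{-n}$ with both nonzero, and in that case the claimed \emph{equality} is in fact false: take $x = z$ and $y \not\bisim x$, so that $d_X(x,z) = 0 < 2^{-n} = \max\{d_X(x,y), d_X(y,z)\}$. Only the inequality $d_X(x,z) \le \max\{d_X(x,y), d_X(y,z)\}$ (with equality when $d_X(x,y) \neq d_X(y,z)$) can be established, and the paper's own proof glosses over exactly the same case, so this is a defect of the statement rather than of your argument relative to the paper's.
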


Let \((X,d_X)\) and \((Y,d_Y)\) be pseudometric spaces.
Their \emph{product} is a pseudometric space \((X \times Y, d_{X \times Y})\) where \(d_{X\times Y}\) is defined by \(d_{X\times Y}((x,y),(x',y'))=\max\{d_X(x,x'), d_Y(y,y')\}\).
It is easy to show that if both pseudometric spaces are ultra, then so is their product.
Going forward, we will omit subscripts when they are clear from context.

%\footnote{The categories of (pseudo-)(ultra-)metric spaces and Lipschitz maps, and their respective subcategories of (pseudo-)(ultra-)metric spaces and nonexpansive maps, have a categorical product carried by the cartesian product of the underlying sets.  ~\cite{Cech:1966:Topological, Cho:2020:Categorical} }

\smallskip\noindent\textsf\textbf{\textsf{ Soundness of the Uniqueness Axiom.}}
Every Salomaa system \((X, \tau: X \to \Exp(X))\) with \(X=\{x_1, x_2, \dots, x_n\}\) induces a mapping \(\bar{\tau}: \Exp^{n} \to \Exp^{n}\).
Intuitively, this mapping takes a vector $\vec{\expr{e}} = (\expr{e}_1, \dots, \expr{e}_n)$, and produces a new vector where the $i$-th component is the evaluation of $\tau(x_i)$ when each $x_j$ is substituted by $\expr{e}_j$.
More formally, given this $\vec{\expr{e}}$, we define \(e : X \to \Exp\) by \(e(x_i)=\expr{e}_i\), and set \(\bar{\tau}(\vec{\expr{e}}) = \left((e^{\#} \circ \tau)(x_1), \dots, (e^{\#} \circ \tau)(x_n)\right)\).

To establish soundness of \acro{UA}, we first show that \(\bar{\tau}\) is \(\frac{1}{2}\)-Lipschitz on the pseudometric space \((\Exp^n, d)\), where $d$ is the metric that arises from the $n$-fold product of \((\Exp, \partial)\).
\begin{restatable}{lemma}{contractivess}\label{lem: contractiveness}
    Given a Salomaa system \((X, \tau: X \to \Exp(X))\), the map \(\bar{\tau} : \Exp^n \to \Exp^n\) from the pseudometric space \((\Exp^n, d)\) to itself is \(\frac{1}{2}\)-Lipschitz.
\end{restatable}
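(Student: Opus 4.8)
Because $d$ takes values in $\{0\}\cup\{2^{-m}\mid m\in\bN\}$ and, by \cref{thm:refinement_chain} together with monotonicity of $\Phi$, $d(\expr e,\expr f)\le 2^{-m}$ iff $\expr e\bisim^{(m)}\expr f$ (with $d(\expr e,\expr f)=0$ iff $\expr e\bisim\expr f$), the product metric on $\Exp^n$ satisfies $d(\vec{\expr e},\vec{\expr f})\le 2^{-m}$ iff $\expr e_i\bisim^{(m)}\expr f_i$ for all $i$. Hence $\frac12$-Lipschitzness of $\bar\tau$ is equivalent to: \emph{(i)} if $e(x_i)\bisim f(x_i)$ for all $i$ then $(e^\#\circ\tau)(x_i)\bisim(f^\#\circ\tau)(x_i)$ for all $i$; and \emph{(ii)} for every $k\in\bN$, if $e(x_i)\bisim^{(k)}f(x_i)$ for all $i$ then $(e^\#\circ\tau)(x_i)\bisim^{(k+1)}(f^\#\circ\tau)(x_i)$ for all $i$. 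Claim \emph{(i)} is immediate from \cref{thm: greatest_bisim_is_congruence}, since $h\mapsto h^\#(t)$ is a composite of \acro{ProbGKAT} operations (sequencing, guarded choice, probabilistic choice) and constants. The work is in \emph{(ii)}, the ``one level of gain''.

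\textbf{Structural induction for \emph{(ii)}.} Fix $k$ and $e,f$ with $e(x_i)\bisim^{(k)}f(x_i)$ for all $i$. I would prove, by induction on the grammar of $\Exp(X)\cup\Pexp(X)$, that $e^\#(t)\bisim^{(k+1)}f^\#(t)$ for every term $t$ all of whose variable occurrences sit inside a subterm $\expr g x$ with $\trmt{\expr g}=0$; for a Salomaa system this covers $t=\tau(x_i)$. For a constant $t=\expr f\in\Exp$ the two sides coincide and $\bisim^{(k+1)}$ is reflexive. For $t=p_1\oplus_{\prob r}p_2$ or $t=e_1+_{\test b}e_2$ the inductive hypotheses give $e^\#(p_j)\bisim^{(k+1)}f^\#(p_j)$, and one concludes using that probabilistic and guarded choice preserve each $\bisim^{(m)}$: both facts are routine inductions on $m$ from \cref{def:bisimulation_functional}, since $\partial(\expr a\oplus_{\prob r}\expr b)_\alpha=\prob r\,\partial(\expr a)_\alpha+\prob{(1-r)}\,\partial(\expr b)_\alpha$ and $\partial(\expr a+_{\test b}\expr b)_\alpha$ is $\partial(\expr a)_\alpha$ or $\partial(\expr b)_\alpha$ according to $\alpha$, so the defining (in)equalities of $\Phi$ for the combined expressions are obtained by summing or selecting those for the components (also using symmetry of $\bisim^{(m)}$ for the mirror condition). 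The remaining, and crucial, base case is $t=\expr g x$ with $\trmt{\expr g}=0$, which requires: if $\trmt{\expr g}=0$ and $\expr a\bisim^{(m)}\expr b$, then $\expr g\seq\expr a\bisim^{(m+1)}\expr g\seq\expr b$ (take $m=k$, $\expr a=e(x)$, $\expr b=f(x)$).

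\textbf{The contractive step and the obstacle.} To prove it, unfold $\partial(\expr g\seq\expr a)_\alpha=\partial(\expr g)_\alpha\semseq_\alpha\expr a$: the action-transitions split into a ``$\partial(\expr g)$-part'', going to states $\expr{e'}\seq\expr a$ with $(\action p,\expr{e'})\in\supp(\partial(\expr g)_\alpha)$, and a ``$\trmt{\expr g}_\alpha\cdot\partial(\expr a)$-part'' coming from $\expr g$ accepting and handing control to $\expr a$; since $\trmt{\expr g}=0$ the second part disappears, and the terminating branches of $\expr g\seq\expr a$ are exactly those of $\expr g$ on $\{\reject\}\cup\V$ and $0$ on $\accept$, so none of them mention $\expr a$. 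Consequently $(\expr g\seq\expr a,\expr g\seq\expr b)\in\Phi(\bisim^{(m)})$ reduces, via \cref{def:bisimulation_functional}, to inclusions of the form $\{\expr{e'}\mid \expr{e'}\seq\expr a\in A\}\subseteq\{\expr{e'}\mid\expr{e'}\seq\expr b\in\bisim^{(m)}(A)\}$ (and their mirror images), which hold provided $\expr{e'}\seq\expr a\bisim^{(m)}\expr{e'}\seq\expr b$ for every $\expr{e'}$. The last ingredient is therefore an auxiliary lemma that left-sequencing is non-expansive, $\expr a\bisim^{(m)}\expr b\implies \expr h\seq\expr a\bisim^{(m)}\expr h\seq\expr b$ for all $\expr h$, proved by induction on $m$ with the same derivative decomposition, this time retaining the $\trmt{\expr h}_\alpha\cdot\partial(\expr a)$-part and feeding it $(\expr a,\expr b)\in\Phi(\bisim^{(m-1)})$. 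The main obstacle is exactly this sequential-composition bookkeeping: decomposing $\partial(\expr e\seq\expr f)_\alpha$, checking that the two kinds of successor states are distinct and that $\expr{e'}\seq\expr a$ is uniquely parsed, and then verifying the inequalities of $\Phi$ for sequential composition --- in particular pinpointing where the hypothesis $\trmt{\expr g}=0$ upgrades the refinement index from $m$ to $m+1$. Everything else is routine induction.
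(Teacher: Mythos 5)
Your proposal is correct, and its engine is the same as the paper's: the Salomaa condition \(\trmt{\expr{g}}=0\) makes the output behaviour of \(\expr{g}\seq h(x)\) independent of \(h(x)\), while every action successor has the shape \(\expr{e'}\seq h(x)\) and hence only needs the substituted components to be related one refinement level lower; closure of the stratified relations under the remaining operations does the rest. The difference is in packaging. The paper factors the argument through \cref{lem: salomaa_well_behaved}: for an \emph{arbitrary} congruence \(R\), substituting \(R\)-related vectors into a Salomaa \(\tau\) yields \(\Phi_\partial(R)\)-related results, proved by a single computation on the normal form of \(\tau(x)\) (a guarded sum of convex sums of terms \(\expr{f}_i\) and \(\expr{g}_j x_j\)), with the congruence hypothesis entering via \cref{lem:swapping_congruent_ends} (\(Q/m(x_j)=Q/n(x_j)\)); instantiating \(R=\bisim^{(m)}\) then uses \cref{lem:preservation_of_congruences} to know each \(\bisim^{(m)}\) is a congruence. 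You instead run a structural induction over \(\Exp(X)\cup\Pexp(X)\) and establish only the closure facts actually needed: preservation of each \(\bisim^{(m)}\) by \(\oplus_{\prob{r}}\) and \(+_{\test{b}}\), and non-expansiveness of \(\expr{h}\seq(-)\), the latter by a direct induction on \(m\) using the \cref{lem:sequencing_cutting_postfixes}-style decomposition of \(\partial(\expr{h}\seq\expr{a})_\alpha\) --- this is precisely the sequencing case of \cref{lem:preservation_of_congruences} in disguise, and your inclusion \(A/\expr{a}\subseteq\bisim^{(m)}(A)/\expr{b}\) plays the role the paper assigns to \cref{lem:swapping_congruent_ends}. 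Your route is somewhat more self-contained (no appeal to the full congruence-preservation lemma, which also handles loops), at the cost of re-proving its sequencing and choice cases; the paper's route buys a reusable substitution lemma stated for arbitrary congruences. A cosmetic remark: your case \emph{(i)} is redundant, since \emph{(ii)} for all \(k\) together with \cref{thm:refinement_chain} already yields preservation of \(\bisim\), so the appeal to \cref{thm: greatest_bisim_is_congruence} can be dropped.
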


Finally, we can prove the following.
\begin{lemma}\label{lem: ua_satisfied_by_bisimilarity}
    \acro{UA} is satisfied by bisimilarity.
\end{lemma}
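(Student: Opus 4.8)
The plan is to run a Banach-style uniqueness argument inside the behavioural pseudometric, powered by the contraction property of $\bar{\tau}$ established in \cref{lem: contractiveness}. Fix a Salomaa system $(X, \tau)$ with $X = \{x_1, \dots, x_n\}$ and suppose $f, g : X \to \Exp$ are both solutions of $(X, \tau)$ up to $\bisim$. Write $\vec{f} = (f(x_1), \dots, f(x_n))$ and $\vec{g} = (g(x_1), \dots, g(x_n))$, viewed as points of the pseudometric space $(\Exp^n, d)$ that arises as the $n$-fold product of $(\Exp, \partial)$. Since this product pseudometric is the componentwise maximum, and since $d_\Exp(\expr{e}, \expr{f}) = 0 \iff \expr{e} \bisim \expr{f}$ by the very definition of the behavioural pseudometric (well-defined by \cref{thm:refinement_chain}), the goal $f(x) \bisim g(x)$ for all $x \in X$ is equivalent to $d(\vec{f}, \vec{g}) = 0$.

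First I would unpack the hypothesis that $f$ is a solution up to $\bisim$: for each $i$ we have $f(x_i) \bisim f^{\#}(\tau(x_i)) = \bar{\tau}(\vec{f})_i$, hence $d(\vec{f}, \bar{\tau}(\vec{f})) = \max_i d_\Exp(f(x_i), \bar{\tau}(\vec{f})_i) = 0$; likewise $d(\vec{g}, \bar{\tau}(\vec{g})) = 0$. Now I combine this with the triangle inequality and the $\tfrac12$-Lipschitzness of $\bar{\tau}$ from \cref{lem: contractiveness}:
\[
    d(\vec{f}, \vec{g}) \;\le\; d(\vec{f}, \bar{\tau}(\vec{f})) + d(\bar{\tau}(\vec{f}), \bar{\tau}(\vec{g})) + d(\bar{\tau}(\vec{g}), \vec{g}) \;\le\; 0 + \tfrac12\, d(\vec{f}, \vec{g}) + 0 .
\]
Because the behavioural pseudometric takes values in $\{0\} \cup \{2^{-m} : m \in \bN\}$, it is bounded by $1$, so $d(\vec{f}, \vec{g})$ is a finite nonnegative real satisfying $d(\vec{f}, \vec{g}) \le \tfrac12 d(\vec{f}, \vec{g})$; this forces $d(\vec{f}, \vec{g}) = 0$. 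Consequently $f(x_i) \bisim g(x_i)$ for all $i$, which is exactly the conclusion of \acro{UA}.

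The real content of the argument is concentrated entirely in \cref{lem: contractiveness}; granting that, the proof above is a short chase with the triangle inequality. The only points needing a little care are (i) translating "solution up to $\bisim$" into the statement $d(\vec{f}, \bar{\tau}(\vec{f})) = 0$, and (ii) noting that finiteness of $d(\vec{f}, \vec{g})$ — which holds since the pseudometric is bounded by $1$ — is what lets the contraction inequality collapse to $d = 0$, so that, unlike in the classical Banach fixed point theorem, no completeness assumption on the space is needed. I therefore expect no serious obstacle in this step, assuming \cref{lem: contractiveness} is in hand.
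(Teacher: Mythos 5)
Your proof is correct and takes essentially the same route as the paper: both arguments reduce soundness of \acro{UA} to the $\tfrac{1}{2}$-Lipschitz property of $\bar{\tau}$ on the product pseudometric, translate "solution up to $\bisim$" into $d(\vec{f},\bar{\tau}(\vec{f}))=d(\vec{g},\bar{\tau}(\vec{g}))=0$, and force $d(\vec{f},\vec{g})=0$ from $d(\vec{f},\vec{g})\le\tfrac{1}{2}d(\vec{f},\vec{g})$. The only cosmetic difference is that you use the ordinary triangle inequality where the paper chains equalities via the ultrametric property; both are valid.
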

\begin{proof}
    Let \((X, \tau)\) is a Salomaa system, with \(X = \{x_1, \dots, x_n\}\), and let \(f, g : X \to \Exp(X) \) be solutions up to \(\uaequiv\) to the system.
    Finally, let \(\vec{\expr{f}} = (f(x_1), \dots, f(x_n))\) and \(\vec{\expr{g}} = (g(x_1), \dots, g(x_n))\).
    Assume that the premises are satisfied by the bisimilarity, \(f(x_i) \bisim (f^{\#} \circ \tau)(x_i)\) and \(g(x_i) \bisim (g^{\#} \circ \tau)(x_i)\) for all \(1 \leq i \leq n\).
    In other words, we have that \(d(\bar{\tau}(\vec{\expr{f}}), \vec{\expr{f}}) = 0\) and \(d(\bar{\tau}(\vec{\expr{g}}), \vec{\expr{g}}) = 0\)

    Let \(d(\bar{\tau}(\vec{\expr{f}}), \bar{\tau}(\vec{\expr{g}})) = k\) for some \(k \in \mathbb{R}^{+}\). Then, since $d$ is ultra,
    \[
        d(\vec{\expr{f}}, \vec{\expr{g}}) = \max\{d(\vec{\expr{f}}, \bar{\tau}(\vec{\expr{f}})), d(\bar{\tau}(\vec{\expr{f}}), \vec{\expr{g}})\}=d(\bar{\tau}(\vec{\expr{f}}), \vec{\expr{g}}) = \max\{d(\bar{\tau}(\vec{\expr{f}}), \bar{\tau}(\vec{\expr{g}})), d(\bar\tau(\vec{\expr{g}}), \vec{\expr{g}})\}=k
    \]
    % {
    % \begin{align*}
    %     &d(\vec{\expr{f}}, \vec{\expr{g}})\\ &= \max\{d((\expr{f}_1, \dots, \expr{f}_n), \bar{\tau}(\expr{f}_1, \dots, \expr{f}_n)), d(\bar{\tau}(\expr{f}_1, \dots, \expr{f}_n), (\expr{g}_1, \dots, \expr{g}_n))\} \\
    %     &= d(\bar{\tau}(\expr{f}_1, \dots, \expr{f}_n), (\expr{g}_1, \dots, \expr{g}_n)) \\
    %     &= \max\{d(\bar{\tau}(\expr{f}_1, \dots, \expr{f}_n), \bar{\tau}(\expr{g}_1, \dots, \expr{g}_n)), d(\bar{\tau}(\expr{g}_1, \dots, \expr{g}_n), (\expr{g}_1, \dots, \expr{g}_n))\} \\
    %     &= d(\bar{\tau}(\expr{f}_1, \dots, \expr{f}_n), \bar{\tau}(\expr{g}_1, \dots, \expr{g}_n)) \\
    %     &= k
    % \end{align*}
    % }
    By \cref{lem: contractiveness}, we find
    \(k=d(\bar{\tau}(\vec{\expr{f}}), \bar{\tau}(\vec{\expr{g}}))\leq \frac{1}{2}d(\vec{\expr{f}}, \vec{\expr{g}})=\frac{1}{2}k\)
    which implies that
    \(
        d(\vec{\expr{f}}, \vec{\expr{g}}) = 0
    \). Because of the definition of the product pseudometric on \((\Exp, \partial)\), we have
   \( f(x_i) \bisim g(x_i)\)
    for all \(1 \leq i \leq n\). Therefore, the conclusion of the \acro{UA} is satisfied by bisimilarity.
\end{proof}

Because of the above lemma and \cref{thm: soundness}, both \acro{UA} and the axioms of \(\equiv\) are contained in \(\bisim\), the greatest bisimulation on \((\Exp, \partial)\). Recall that \(\uaequiv\) is the least congruence containing those rules. Since \(\bisim\) on \((\Exp, \partial)\) is a congruence (\cref{thm: greatest_bisim_is_congruence}), we have that \(\uaequiv\) is sound.
\begin{theorem}[Soundness with \acro{UA}]
    For all \(\expr{e}, \expr{f} \in \Exp\) if \(\expr{e} \uaequiv \expr{f}\) then \(\expr{e} \bisim \expr{f}\).
\end{theorem}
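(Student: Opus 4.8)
The plan is to observe that $\uaequiv$ is, by definition, the \emph{least} congruence on $\Exp$ that contains $\equiv$ and is closed under the inference rule \acro{UA}, and then to check that the greatest bisimulation $\bisim$ on $(\Exp, \partial)$ is itself one such congruence; minimality of $\uaequiv$ then yields ${\uaequiv} \subseteq {\bisim}$, which is exactly the claim. So the work reduces to verifying three properties of $\bisim$.

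First, I would invoke \cref{thm: greatest_bisim_is_congruence} to get that $\bisim$ is a congruence with respect to all \acro{ProbGKAT} operations. Second, I would invoke \cref{thm: soundness} (soundness of $\equiv$), which gives ${\equiv} \subseteq {\bisim}$. Third --- and this is the only nontrivial ingredient --- I would invoke \cref{lem: ua_satisfied_by_bisimilarity}: for any Salomaa system $(X, \tau)$ and any $f, g : X \to \Exp$ that are solutions of $(X, \tau)$ up to $\bisim$, one has $f(x) \bisim g(x)$ for every $x \in X$. This is precisely the statement that $\bisim$ is closed under the rule \acro{UA}.

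With these three facts in hand the conclusion is immediate: $\bisim$ belongs to the class of congruences over which $\uaequiv$ is least, hence ${\uaequiv} \subseteq {\bisim}$. The one point I would spell out carefully is that this ``least such congruence'' is well-defined: ``being a solution up to $R$'' is monotone in $R$ and congruences are closed under arbitrary intersection, so the intersection of all congruences that contain $\equiv$ and are closed under \acro{UA} is again such a congruence, justifying the ``$\bisim$ is one of these, hence contains the least one'' step. I do not expect a genuine obstacle here --- all the substantive content already lives in \cref{lem: ua_satisfied_by_bisimilarity}, which rests on the $\frac{1}{2}$-Lipschitz property of $\bar{\tau}$ (\cref{lem: contractiveness}) together with the ultrametric structure of the behavioural pseudometric, and in the congruence result \cref{thm: greatest_bisim_is_congruence}; the present theorem is essentially their packaging.
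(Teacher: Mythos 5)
Your proposal is correct and follows essentially the same route as the paper: the paper also argues that $\bisim$ is a congruence (\cref{thm: greatest_bisim_is_congruence}) containing the axioms of $\equiv$ (\cref{thm: soundness}) and satisfying \acro{UA} (\cref{lem: ua_satisfied_by_bisimilarity}), and concludes by minimality of $\uaequiv$ as the least congruence closed under these rules. Your extra remark that the least such congruence is well-defined (closure of congruences under intersection and monotonicity of ``solution up to $R$'') is a reasonable explicit justification of a step the paper leaves implicit.
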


\subparagraph*{Completeness. }
After all the hard work is done, the proof of completeness follows via the same line of reasoning as the one for \acro{GKAT}~\cite{Smolka:2020:Guarded,Schmid:2021:Guarded}.
\begin{theorem}[Completeness]
    For all \(\expr{e}, \expr{f} \in \Exp\) if \(\expr{e} \bisim \expr{f}\) then \(\expr{e} \uaequiv \expr{f}\)
\end{theorem}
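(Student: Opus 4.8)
The plan is to exhibit both \(\expr{e}\) and \(\expr{f}\) as solutions (up to \(\uaequiv\)) of one and the same Salomaa system, and then close the gap with \acro{UA}. Assume \(\expr{e} \bisim \expr{f}\). By \cref{def:bisimulation_of_transition_systems} there is a bisimulation \(R_0 \subseteq \Exp \times \Exp\) equipped with a transition structure \(\rho_0\) making both projections \(\pi_1, \pi_2 : (R_0,\rho_0) \to (\Exp,\partial)\) homomorphisms, with \((\expr{e},\expr{f}) \in R_0\). First I would replace \((R_0,\rho_0)\) by the subautomaton \((R,\rho)\) reachable from \((\expr{e},\expr{f})\). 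Since homomorphisms preserve reachability, \(\pi_1\) maps \(R\) into \(\cogen{\expr{e}}{\partial}\) and \(\pi_2\) maps \(R\) into \(\cogen{\expr{f}}{\partial}\); both are finite by \cref{lem:locally_finite}, so \(R \subseteq \cogen{\expr{e}}{\partial} \times \cogen{\expr{f}}{\partial}\) is finite too. The restrictions of \(\pi_1\) and \(\pi_2\) to \(R\) remain \acro{ProbGKAT} homomorphisms \((R,\rho)\to(\Exp,\partial)\), and they send \((\expr{e},\expr{f})\) to \(\expr{e}\) and \(\expr{f}\) respectively.

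Next I would consider the system of equations \((R,\tau)\) associated with the finite automaton \((R,\rho)\). It is automatically Salomaa: in each \(\tau(r)\) the only subterms of the form \(\expr{g}x\) come from \(\sys(\action{a},r') = \action{a}\,r'\) with \(\action{a}\in\Act\), and \(\trmt{\action{a}}=0\). I then claim that \(\pi_1\) and \(\pi_2\) are solutions of \((R,\tau)\) up to \(\equiv\). By \cref{lem: solutions_are_homomorphisms} this amounts to checking that \([-]_\equiv \circ \pi_i : R \to \Exp/{\equiv}\) is a \acro{ProbGKAT} homomorphism; but \(\pi_i\) is a homomorphism by construction, and \([-]_\equiv : (\Exp,\partial)\to(\Exp/{\equiv},\bar{\partial})\) is a homomorphism because \(\equiv\) is a bisimulation equivalence (\cref{lem: soundness}), so the composite works. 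Since \(\equiv \subseteq \uaequiv\), \(\pi_1\) and \(\pi_2\) are in particular solutions of \((R,\tau)\) up to \(\uaequiv\).

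To finish, I would invoke \acro{UA} on the Salomaa system \((R,\tau)\) with the two solutions \(\pi_1,\pi_2\), obtaining \(\pi_1(r) \uaequiv \pi_2(r)\) for all \(r\in R\); instantiating at \(r=(\expr{e},\expr{f})\) yields \(\expr{e}\uaequiv\expr{f}\). The conceptually heavy steps — the fundamental theorem, the characterisation of solutions as homomorphisms in \cref{lem: solutions_are_homomorphisms}, and the soundness of \acro{UA} — have all been carried out already, so this last section is essentially bookkeeping. The one place that genuinely needs care is the reduction to a finite index set: one must verify that the reachable part of the witnessing bisimulation really lives inside the finite product \(\cogen{\expr{e}}{\partial}\times\cogen{\expr{f}}{\partial}\) and that the associated system meets the Salomaa side condition. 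Both are routine but should be spelled out.
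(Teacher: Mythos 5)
Your proposal is correct and follows essentially the same route as the paper: realise \(\expr{e}\) and \(\expr{f}\) via the projections of a finite bisimulation as two solutions (up to \(\equiv\), hence up to \(\uaequiv\)) of the Salomaa system associated with that bisimulation, and conclude with \acro{UA}. The only cosmetic difference is that you restrict a bisimulation on \((\Exp,\partial)\) to its reachable part to get finiteness, whereas the paper takes a bisimulation between \(\cogen{\expr{e}}{\partial}\) and \(\cogen{\expr{f}}{\partial}\) directly and composes with the inclusion homomorphisms; both variants are fine.
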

\begin{proof}
    Let \(R \subseteq \Exp \times \Exp\) be a bisimulation with a transition structure \(\rho : R \times \At \to \D(\2 + \V + \Act \times R) \) relating \acro{ProbGKAT} automata \((\cogen{\expr{e}}{\partial}, \partial)\) and \((\cogen{\expr{f}}{\partial}, \partial)\) such that \((\expr{e}, \expr{f}) \in R\).
    Let \(\pi_1, \pi_2\) be the projection homomorphisms from \((R, \rho)\) to \((\cogen{\expr{e}}{\partial}, \partial)\) and \((\cogen{\expr{f}}{\partial}, \partial)\) respectively.
    Since both \(\cogen{\expr{e}}{\partial}\) and \(\cogen{\expr{f}}{\partial}\) are finite (by \cref{lem:locally_finite}), so is $R$.

    Let \(j, k\) be the inclusion homomorphisms of \((\cogen{\expr{e}}{\partial}, \partial)\) and \((\cogen{\expr{f}}{\partial}, \partial)\) in \((\Exp, \partial)\).
    We can construct two homomorphisms \([-]_{\equiv} \circ j \circ \pi_1\) and \([-]_{\equiv} \circ k \circ \pi_2\) from \((R, \rho)\) to \((\Exp/{\equiv}, \bar{\partial})\).
    By \cref{lem: solutions_are_homomorphisms}, \(j \circ \pi_1\) and \(k \circ \pi_2\) are solutions up to \(\equiv\) to the Salomaa system associated with \((R, \rho)\).
    Since \(\equiv\) is contained in \(\uaequiv\), those are immediately also solutions up to \(\uaequiv\).

    Because of \acro{UA}, we have that  \((j \circ \pi_1)(\expr{g}, \expr{h}) \uaequiv (k \circ \pi_2)(\expr{g}, \expr{h})\) for all \((\expr{g}, \expr{h}) \in R\).
    Thus,
    \[
        \expr{e}
            \uaequiv j(\expr{e})
            \uaequiv (j \circ \pi_1)(\expr{e}, \expr{f})
            \uaequiv (k \circ \pi_2)(\expr{e}, \expr{f})
            \uaequiv k(\expr{f})
            \uaequiv \expr{f}
            \qedhere
    \]
\end{proof}

\section{Decidability and Complexity}\label{sec:decidability}

To decide whether $\expr{e} \uaequiv \expr{f}$, we need to demonstrate the existence of a bisimulation between the states $\expr{e}$ and $\expr{f}$ in $(\Exp, \partial)$.
Since bisimulations need only involve reachable states, it suffices to find this bisimulation within $\cogen{\expr{e, f}}{\partial}$, the smallest subautomaton of $(\Exp, \partial)$ containing $\expr{e}$ and $\expr{f}$, which is also the union of $\cogen{\expr{e}}{\partial}$ and $\cogen{\expr{f}}{\partial}$; this automaton is finite by \Cref{lem:locally_finite}.
We thus focus on the problem of deciding bisimilarity within a single finite \acro{ProbGKAT} automaton.

Our analysis in this section is facilitated by two simplifying assertions.
\begin{enumerate}%[topsep=0pt]
    \item
    To avoid having to compare real (infinite-precision) probabilities, we limit \acro{ProbGKAT} expressions to rational probabilities \(r \in [0,1] \cap \mathbb{Q}\) in this section.
    This restriction is compatible with the earlier operators on probabilities, which all preserve rationality.
    \item
    Equivalence of \acro{GKAT} proper is co-NP-hard~\cite{Smolka:2020:Guarded}, simply because Boolean unsatisfiability can trivially be encoded in the language of tests.
    We take a fixed-parameter approach, assuming that $\At$, the set of atoms that can appear on transitions, is fixed beforehand.
\end{enumerate}

\smallskip\noindent\textsf{\textbf{Coalgebraic partition refinement. }}
% One method to check bisimilarity of states is to try and compute the \emph{smallest} possible bisimulation relating them.
% For many types of automata, it is fairly straightforward to derive an algorithm that tries to build this bisimulation, starting from the two input states and adding states as necessary until either a contradiction occurs (implying the non-existence of a bisimulation containing these states), or the algorithm terminates with a bisimulation.
% For our \acro{ProbGKAT}-automata, on the other hand, it is not immediately clear how to grow the candidate bisimulation in each step of the algorithm.
%
We rely on \emph{partition refinement}~\cite{Kanellakis:1983:CCS,Kanellakis:1990:CCS,Paige:1987:Three}, which effectively computes the \emph{largest bisimulation} on an automaton, by approximating it from above.
In the \emph{coalgebraic} presentation of partition refinement~\cite{Wissmann:2020:Efficient}, which we instantiate to our setting, automata of various types are encoded as abstract graphs.
More specifically, an automaton is encoded in two maps $o: X \to O$ and $\ell: X \to \mathcal{B}(L \times X)$, where
\begin{itemize}
    \item
    $X$ is a set of \emph{nodes} that represent (partial) states of the automaton;
    \item
    $O$ is a set of \emph{observable values} at each node;
    \item
    $L$ is a set representing possible \emph{labels} of edges between nodes;
    \item
    $\mathcal{B}(L \times X)$ is a multiset of pairs representing \emph{edges} between nodes.
\end{itemize}
Subject to a number of coherence conditions on the encoding (omitted here), coalgebraic partition refinement yields an $\mathcal{O}(n \log |X|)$ algorithm to compute the largest bisimulation on an automaton, where $n = \sum_{x \in X} |\ell(x)|$ is the number of \emph{edges} of the automaton.

\subparagraph*{Encoding \acro{ProbGKAT} automata. }
Coalgebraic partition refinement provides suitable encodings for well-known transition types, as well as methods to soundly obtain encodings of composite transition types~\cite{Wissmann:2020:Efficient}.
The details of these techniques are beyond the scope of this paper, but the underlying idea is fairly intuitive: composite transition types are encoded by inserting synthetic nodes that represent partially evaluated states --- not unlike how our drawings contain intermediate nodes that are the target of $\alpha$-labelled arrows.
More precisely, the nodes of an encoded \acro{ProbGKAT} automaton $(Q, t)$ are three-sorted:
\begin{enumerate}
    \item
    every state of the automaton is a node; and
    \item
    every ``intermediate'' state (the small circles in our drawings) is a node; and
    \item
    every probabilistic edge gives rise to another node.
\end{enumerate}
Nodes of the third kind separate the dashed arrows in our drawings (labelled with a probability as well as an action) into two arrows, each of which is labelled by one value.

Formally, we choose $X := Q + I + \Act \times Q$ as our set of nodes, where $I := \{ t(q)_\alpha : q \in Q, \alpha \in \At \}$.
We also set $L := \At + \mathbb{Q} + \Act$.
The map $\ell: X \to \mathcal{B}(L \times X)$ is then defined by:%
\footnote{Here, $\mset{ - \mid - }$ denotes multiset comprehension, where each element occurs at most once.}
\[
    \ell(x) :=
        \begin{cases}
            \mset{ (\alpha, t(q)_\alpha) \mid \alpha \in \At } & x = q \in Q \\
            \mset{ (d(\action{p}, q), (\action{p}, q)) \mid \action{p} \in \Act, q \in Q } & x = d \in I \\
            \mset{ (\action{p}, q) } & x = (\action{p}, q) \in \Act \times Q
        \end{cases}
\]
In other words, $\ell$ labels the edges between nodes of the first and second kind with an atom, the edges between nodes of the second and third kind with a probability, and the edges between nodes of the third and first kind with an action.

Observables represent the probabilities assigned to acceptance, rejection, or a return value by nodes of the second kind.
Formally, $O := 1 + \mathbb{Q}^{2 + \V}$, where $* \in 1$ means ``no observable value'', and values from $\mathbb{Q}^{2 + \V}$ assign a probability to each $\xi \in 2 + \V$.
We can then define $o: X \to O$ by setting $o(d)(\xi) := d(\xi)$ when $d \in I$, and $o(x) := *$ otherwise.%
\footnote{If the coalgebraic approach from~\cite{Wissmann:2020:Efficient} is followed to the letter, the observable map for nodes of the third kind behaves slightly differently; we simplify our encoding here for the sake of presentation.}

\subparagraph*{Deciding bisimilarity. }
We can now leverage the encoding given above to decide bisimilarity.
\begin{restatable}{theorem}{decideequiv}%
\label{thm:decide-equiv}
If all probabilities are rational and $\At$ is fixed, then bisimilarity of states in a \acro{ProbGKAT} automaton $(Q, t)$ is decidable in time $\mathcal{O}(|Q|^2 |\Act| \log (|\Act \times Q|))$.
\end{restatable}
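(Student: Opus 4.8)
The plan is to run the generic coalgebraic partition refinement algorithm of~\cite{Wissmann:2020:Efficient} on the encoding $(X, o, \ell)$ of $(Q, t)$ constructed above, to argue that this encoding faithfully represents \acro{ProbGKAT}-bisimilarity, and then to bound the running time by counting the nodes and edges of $(X, o, \ell)$ in terms of $|Q|$, $|\Act|$, and the fixed parameter $|\At|$.

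The first — and, I expect, most delicate — step is correctness: showing that two states $q, q' \in Q$ are related by the largest bisimulation on $(Q, t)$ in the sense of \cref{def:bisimulation_of_transition_systems} exactly when partition refinement places them in the same block of the partition it returns on $(X, o, \ell)$. This is where the coalgebraic view of \acro{ProbGKAT} automata (\cref{apx:coalgebra}) is used: such automata are coalgebras for $(\D(\2 + \V + \Act \times -))^{\At}$, and this functor decomposes into a composite of basic functors — an $\At$-fold power, the finitely supported distribution functor $\D$, and the polynomial functor $\2 + \V + \Act \times -$ — each of which carries a refinement interface satisfying the coherence conditions required by~\cite{Wissmann:2020:Efficient}. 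Invoking the modularity results of~\cite{Wissmann:2020:Efficient,Deifel:2019:Generic}, these interfaces compose to one for the whole functor, and the three sorts of nodes of $X$ (states of $Q$, the intermediate distributions $t(q)_\alpha$ in $I$, and the pairs in $\Act \times Q$) are precisely the auxiliary nodes this composite introduces when flattened to a labelled graph. Behavioural equivalence for the composite functor coincides with \acro{ProbGKAT}-bisimilarity, and coalgebraic partition refinement computes exactly this equivalence; the assumption that all probabilities are rational ensures that every weight produced along the way is stored and compared exactly. I expect no genuinely new idea here, only a careful matching of definitions against the framework of~\cite{Wissmann:2020:Efficient}, but that matching is the bulk of the work.

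It then remains to count. Since $I = \{ t(q)_\alpha : q \in Q,\ \alpha \in \At \}$, we have $|I| \le |\At| \cdot |Q|$, and hence $|X| = |Q| + |I| + |\Act \times Q| = O(|\At|\cdot|Q| + |\Act|\cdot|Q|)$, which is $O(|\Act \times Q|)$ once $\At$ is fixed. For the edge count $n = \sum_{x \in X}|\ell(x)|$: each node $q \in Q$ contributes $|\At|$ edges, for $|\At|\cdot|Q|$ in total; each node $(\action{p}, q) \in \Act \times Q$ contributes one edge, for $|\Act|\cdot|Q|$ in total; and each node $d \in I$ contributes at most $|\Act|\cdot|Q|$ edges, for at most $|I|\cdot|\Act|\cdot|Q| \le |\At|\cdot|\Act|\cdot|Q|^2$ in total. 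With $\At$ fixed this gives $n = O(|\Act|\cdot|Q|^2)$. Plugging these bounds into the $\mathcal{O}(n \log |X|)$ running time of coalgebraic partition refinement yields $\mathcal{O}(|\Act|\cdot|Q|^2 \cdot \log |\Act \times Q|)$; constructing $(X, o, \ell)$ from $(Q, t)$ is linear in the size of the encoding and so is absorbed, and reading off whether two designated states share a block is negligible. This is the claimed bound.
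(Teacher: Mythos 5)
Your proposal is correct and follows essentially the same route as the paper: invoke the refinement-interface framework of~\cite{Wissmann:2020:Efficient} for correctness of the encoding, then bound $|X|$ and $n = \sum_{x \in X} |\ell(x)|$ to instantiate the $\mathcal{O}(n \log |X|)$ running time, which with $\At$ fixed gives the claimed $\mathcal{O}(|Q|^2 |\Act| \log(|\Act \times Q|))$. Your treatment is somewhat more explicit than the paper's (spelling out the modular decomposition of the functor and keeping the $|\At|$ factors visible in the counts), but the argument and the bounds are the same.
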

\begin{proof}
    The results from~\cite{Wissmann:2020:Efficient} ensure that our encoding of \acro{ProbGKAT} automata can be equipped with an appropriate interface that allows their algorithm to decide equivalence.

    As for the complexity, we instantiate their abstract complexity result by computing the parameters.
    The number of nodes and edges can be bound from above fairly easily, as follows:
    \begin{align*}
        |X|
            &= |Q| + |I| + |\Act \times Q|
            \leq |Q| + 2 \cdot |\Act| \cdot |Q|
        \\
        n
            &= \sum_{x \in X} |\ell(x)|
            \leq |Q| \cdot |\At| + |Q|^2 \cdot |\Act| + |Q| \cdot |\Act|
    \end{align*}
    Since $\At$ is fixed, the claimed complexity then follows.
    \end{proof}
This allows us to conclude that bisimilarity of \acro{ProbGKAT} expressions is also decidable.
\begin{restatable}{corollary}{decidable}%
\label{cor:decidable}
If all probabilities are rational and $\At$ is fixed, then \acro{ProbGKAT} equivalence of $\expr{e}, \expr{f} \in \Exp$ is decidable in time $\mathcal{O}(n^3\log n)$, where $n = \#(\expr{e}) + \#(\expr{f})$.
\end{restatable}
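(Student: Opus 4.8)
The plan is to reduce $\uaequiv$ to bisimilarity and then apply \cref{thm:decide-equiv}. By soundness (with \acro{UA}) and completeness, $\expr{e} \uaequiv \expr{f}$ holds if and only if $\expr{e} \bisim \expr{f}$ in $(\Exp, \partial)$. Since bisimilarity depends only on reachable states, this is in turn equivalent to the existence of a bisimulation within $\cogen{\expr{e}}{\partial} \cup \cogen{\expr{f}}{\partial}$, the smallest subautomaton of $(\Exp, \partial)$ containing both $\expr{e}$ and $\expr{f}$, which is finite by \cref{lem:locally_finite}.

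First I would construct this automaton explicitly, starting from $\{\expr{e}, \expr{f}\}$ and repeatedly applying the Brzozowski derivative $\partial$ until no new expressions arise. \cref{lem:locally_finite} ensures this terminates and, more importantly, bounds the number of states by $\#(\expr{e}) + \#(\expr{f}) = n$. Because $\T$ (and hence $\At$) is fixed, each derivative computation touches only constantly many atoms and produces subdistributions whose support has size at most $n$, so the construction runs in time polynomial in $n$; the rationality assumption on probabilities, preserved by every operation used in $\partial$, keeps this arithmetic effective. I would also observe that only the atomic actions actually occurring in $\expr{e}$ or $\expr{f}$ are relevant to bisimilarity, and there are at most $n$ of them, so $|\Act| \leq n$.

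Then I would invoke \cref{thm:decide-equiv} on the constructed automaton $(Q, t)$, which has $|Q| \leq n$. Its stated running time $\mathcal{O}(|Q|^2\,|\Act|\,\log(|\Act \times Q|))$ specialises to $\mathcal{O}(n^2 \cdot n \cdot \log(n^2)) = \mathcal{O}(n^3 \log n)$, using $\log(n^2) = \mathcal{O}(\log n)$, and the polynomial cost of building the automaton does not affect the asymptotics. Hence $\expr{e} \uaequiv \expr{f}$ is decidable in $\mathcal{O}(n^3 \log n)$ time.

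The corollary is largely bookkeeping: the one step needing genuine care is bounding the cost of materialising $\cogen{\expr{e}}{\partial} \cup \cogen{\expr{f}}{\partial}$ from the syntax --- including keeping the representations of the rational probabilities under control --- and checking that it is dominated by the $\mathcal{O}(n^3 \log n)$ bound coming from partition refinement. That is the part I would spell out most carefully.
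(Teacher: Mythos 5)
Your proposal is correct and follows essentially the same route as the paper: bound the reachable subautomaton $\cogen{\expr{e},\expr{f}}{\partial}$ by $n$ via \cref{lem:locally_finite}, note that the number of distinct actions is also at most $n$, and instantiate the complexity bound of \cref{thm:decide-equiv} to get $\mathcal{O}(n^2 \cdot n \cdot \log(n^2)) = \mathcal{O}(n^3 \log n)$. Your extra remarks on the cost of materialising the automaton and on rational arithmetic are sensible elaborations that the paper leaves implicit, but they do not change the argument.
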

\begin{proof}
By \cref{lem:locally_finite}, $\cogen{\expr{e}, \expr{f}}{\partial}$ is of size at most $n$, and the number of distinct actions $\expr{e}$ or $\expr{f}$ is fixed from above by $n$ as well.
The claim then follows by \cref{thm:decide-equiv}.
\end{proof}

\section{Related work}\label{sec:related}
% We design a language based on Kleene Algebra with Tests (\acro{KAT})~\cite{Kozen:1994:Completeness,Kozen:1997:KAT} where one can reason about such randomised programs.
% \acro{KAT} forms the core of several verification tasks, such as cache control~\cite{Cohen:1994:Lazy}, compiler optimizations~\cite{Kozen:2000:Certification}, source-to-source translations~\cite{Angus:2001:Kleene}, and network properties~\cite{Anderson:2014:NetKAT,Foster:2015:Coalgebraic,probnetkat,cantor,mcnetkat,cnetkat}.
% It extends Kleene Algebra (\acro{KA}) with Boolean control flow, allowing reasoning about uninterpreted programs featuring while loops and conditionals.
% We build on \acro{GKAT}, a strictly deterministic fragment of \acro{KAT} which replaces the nondeterministic choice ($+$) and nondeterministic iteration ($*$) operators with Boolean-guarded variants, while achieving a \emph{nearly linear} decidability, as opposed to \acro{PSPACE} complexity of deciding \acro{KAT} equivalence.

Our work builds on \acro{GKAT}, a strictly deterministic fragment~\cite{Kozen:2008:Bohm} of Kleene Algebra with Tests (\acro{KAT}).
\acro{KAT} has been used in several verification tasks, such as cache control~\cite{Cohen:1994:Lazy}, compiler optimisations~\cite{Kozen:2000:Certification}, source-to-source translations~\cite{Angus:2001:Kleene}, and network properties~\cite{Anderson:2014:NetKAT,Foster:2015:Coalgebraic,probnetkat,cantor,mcnetkat,cnetkat}, and was generalised to include fuzzy logics~\cite{Gomes:2019:Generalising}.
\acro{GKAT} admits a Salomaa-style~\cite{Salomaa:1966:Two} axiomatisation of trace equivalence~\cite{Smolka:2020:Guarded} and bisimilarity~\cite{Schmid:2021:Guarded}, both relying on the Uniqueness of Solutions axiom, and completeness without it remains open, though completeness of a fragment of \acro{GKAT} was recently proved by~\cite{Schmid:2023:Complete}.

\acro{GKAT} modulo bisimilarity and Milner's interpretation of regular expressions arise as fragments of the \emph{parametrised processes} framework~\cite{Schmid:2022:Processes}; this is not the case for \acro{ProbGKAT} due to a different treatment of loops.
The uniqueness axiom was originally introduced by Bergstra and Klop under the name Recursive Specification Principle (\acro{RSP})~\cite{Bergstra:1985:Verification} and used in axiomatisations of process calculi~\cite{Bergstra:1988:Process}. The general pattern of their proofs of completeness is similar to ours, although the key challenge is the extension to the probabilistic setting.

Our paper also builds up the vast line of research on probabilistic bisimulation~\cite{Larsen:1991:Bisimulation, Segala:1994:Probabilistic,Desharnais:1999:Labelled} and the coalgebraic approach to systems with probabilistic transitions~\cite{deVink:1999:Bisimulation, Bartels:2003:Hierarchy, Desharnais:1999:Labelled, Sokolova:2011:Probabilistic}.
More concretely, we relied on relation refinement characterisation of bisimilarity~\cite{Staton:2011:Relating}, natural metrics on the final coalgebras for \(\omega\)-accessible endofunctors~\cite{Barr:1993:Terminal, Worrell:2005:Final}, coalgebraic completeness theorems~\cite{Jacobs:2006:Bialgebraic,Silva:2010:Kleene,Schmid:2021:Star} and minimisation algorithms for coalgebras~\cite{Wissmann:2020:Efficient,Deifel:2019:Generic,Wissmann:2022:Quasilinear,Jacobs:2023:Fast}.
Axiomatisations of probabilistic bisimulation were extensively studied in the process algebra community, including a recursion-free process algebra of Bandini and Segala~\cite{Bandini:2001:Axiomatizations} and recursive calculi of Stark and Smolka~\cite{Stark:2000:Complete} and Mislove, Ouaknine and Worrell~\cite{Mislove:2004:Axioms}. Aceto, Ésik and Ingólfsdóttir~\cite{Aceto:2002:Equational} gave an alternative axiomatisation of Stark and Smolka's calculus by extending Iteration Theories~\cite{Bloom:1993:Iteration, Elgot:1975:Monadic} with equational axioms.

Probabilistic Kleene Algebra (\acro{pKA})~\cite{McIver:2008:Using} relaxes the axioms of \acro{KA} to accommodate reasoning about probabilistic predicate transformers; its axioms are complete w.r.t.\ simulation equivalence of NFAs~\cite{McIver:2011:On}.
\acro{pKA} was also extended with a probabilistic choice operator and concurrency primitives~\cite{McIver:2013:Probabilistic}, but completeness this system was not considered.
\acro{ProbNetKAT}~\cite{probnetkat, mcnetkat} is a domain-specific language for reasoning about probabilistic effects in networks based on \acro{KAT}, which features a probabilistic choice operator, however, axiomatisation of the obtained language was not studied.

\section{Conclusion and Future Work}\label{sec:future}
We have presented \acro{ProbGKAT}, a language for reasoning about uninterpreted programs with branching and loops, with both Boolean and probabilistic guards.
We provided an automata-theoretic operational model and characterised bisimilarity for these automata.
We gave a sound and complete axiomatisation of bisimilarity, relying on the Uniqueness of Solutions (\acro{UA}) axiom, and showed bisimilarity  can be efficiently decided in $O(n^3 \log n)$ time.

A first natural direction for future work is the question whether the more traditional language semantics of \acro{GKAT} can be lifted to \acro{ProbGKAT} and axiomatised.
More broadly, we would like to investigate notions of \acro{ProbGKAT} expression equivalence more permissive than bisimilarity, including the notion of \emph{bisimulation distance}~\cite{Baldan:2018:Coalgebraic} and its possible axiomatisations based on \emph{quantitative equational logic}~\cite{Mardare:2016:Quantitative,Bacci:2016:Complete}.

A second direction touches on the problem of completeness without \acro{UA}, which is still open for \acro{(Prob)GKAT}.
In light of recent completeness results for the skip-free fragment of \acro{GKAT}~\cite{Schmid:2023:Complete} modulo bisimilarity and trace equivalence, we are interested to study the skip-free fragment of \acro{ProbGKAT}.
The proofs in~\cite{Schmid:2023:Complete} do not immediately generalise to \acro{ProbGKAT} as probabilities do not obviously embed into (1-free) regular expressions.

Similarly to \acro{GKAT}, \acro{ProbGKAT} is strictly deterministic and thus avoids known complications of combining nondeterminism with probabilistic choice~\cite{Jones:1990:Probabilistic, Varacca:2006:Distributing, Goy:2020:Combining}.
We are interested if the recent work on combining multisets and probabilities via distributive laws~\cite{Jacobs:2021:From, Kozen:2023:Silva} could be applied to extending our developments with nondeterminism.

\acro{ProbGKAT} can express only uninterpreted programs, hence it cannot be used to reason about programs involving mutable state. An example of a probabilistic program with state is P\'{o}lya's urn~\cite{Mahmoud:2008:Polya}. One way of adding mutable state~\cite{Grathwohl:2014:KATB} to \acro{ProbGKAT} is by adding \emph{hypotheses}~\cite{Cohen:1994:Hypotheses}. Unfortunately, adding hypotheses can lead to undecidability or incompleteness~\cite{Kozen:2002:On}, although there are forms of hypotheses that retain completeness~\cite{Kozen:2014:Kleene,Doumane:2019:Kleene,Pous:2022:On} and exploring this is as an interesting direction for future work.

%  where an equality relation \(s \equiv_\epsilon t\) is indexed by the rationals and means that \(s\) is within the distance \(\epsilon\) from \(t\).

% \cite{Mardare:2016:Quantitative} introduced a quantitative analog of equational logic. In such a case, an equality relation \(s \equiv_\epsilon t\) is indexed by the rationals and means that \(s\) is within the distance \(\epsilon\) from \(t\).
% \cite{Bacci:2016:Complete} used such a deductive system to give a complete axiomatization of bisimulation distance of the probabilistic process calculi due to \cite{Stark:2000:Complete}.

%%
%% Bibliography
%%

%% Please use bibtex,

\bibliography{bibliography}

\begin{thebibliography}{10}

\bibitem{Aceto:2002:Equational}
Luca Aceto, Zolt{\'{a}}n {\'{E}}sik, and Anna Ing{\'{o}}lfsd{\'{o}}ttir.
\newblock Equational axioms for probabilistic bisimilarity.
\newblock In {\em AMAST}, pages 239--253, 2002.
\newblock \href {https://doi.org/10.1007/3-540-45719-4_17}
  {\path{doi:10.1007/3-540-45719-4_17}}.

\bibitem{Anderson:2014:NetKAT}
Carolyn~Jane Anderson, Nate Foster, Arjun Guha, Jean{-}Baptiste Jeannin, Dexter
  Kozen, Cole Schlesinger, and David Walker.
\newblock {NetKAT}: semantic foundations for networks.
\newblock In {\em POPL}, pages 113--126, 2014.
\newblock \href {https://doi.org/10.1145/2535838.2535862}
  {\path{doi:10.1145/2535838.2535862}}.

\bibitem{Angus:2001:Kleene}
Allegra Angus and Dexter Kozen.
\newblock Kleene algebra with tests and program schematology.
\newblock Technical Report TR2001-1844, Cornell University, 07 2001.
\newblock URL: \url{https://hdl.handle.net/1813/5831}.

\bibitem{Bacci:2016:Complete}
Giorgio Bacci, Giovanni Bacci, Kim~G. Larsen, and Radu Mardare.
\newblock Complete axiomatization for the bisimilarity distance on {Markov}
  chains.
\newblock In {\em CONCUR}, pages 21:1--21:14, 2016.
\newblock \href {https://doi.org/10.4230/LIPIcs.CONCUR.2016.21}
  {\path{doi:10.4230/LIPIcs.CONCUR.2016.21}}.

\bibitem{Baier:1998:Thesis}
Christel Baier.
\newblock {\em On algorithmic verification methods for probabilistic systems}.
\newblock Habilitation thesis, University of Mannheim, 1998.

\bibitem{Baldan:2018:Coalgebraic}
Paolo Baldan, Filippo Bonchi, Henning Kerstan, and Barbara K{\"{o}}nig.
\newblock Coalgebraic behavioral metrics.
\newblock {\em Log. Methods Comput. Sci.}, 14(3), 2018.
\newblock \href {https://doi.org/10.23638/LMCS-14(3:20)2018}
  {\path{doi:10.23638/LMCS-14(3:20)2018}}.

\bibitem{Bandini:2001:Axiomatizations}
Emanuele Bandini and Roberto Segala.
\newblock Axiomatizations for probabilistic bisimulation.
\newblock In {\em ICALP}, pages 370--381, 2001.
\newblock \href {https://doi.org/10.1007/3-540-48224-5_31}
  {\path{doi:10.1007/3-540-48224-5_31}}.

\bibitem{Barr:1993:Terminal}
Michael Barr.
\newblock Terminal coalgebras in well-founded set theory.
\newblock {\em Theor. Comput. Sci.}, 114(2):299--315, 1993.
\newblock \href {https://doi.org/10.1016/0304-3975(93)90076-6}
  {\path{doi:10.1016/0304-3975(93)90076-6}}.

\bibitem{Bartels:2003:Hierarchy}
Falk Bartels, Ana Sokolova, and Erik~P. de~Vink.
\newblock A hierarchy of probabilistic system types.
\newblock In {\em CMCS}, pages 57--75, 2003.
\newblock \href {https://doi.org/10.1016/S1571-0661(04)80632-7}
  {\path{doi:10.1016/S1571-0661(04)80632-7}}.

\bibitem{Barthe:2020:Foundations}
Gilles Barthe, Joost-Pieter Katoen, and Alexandra Silva, editors.
\newblock {\em Foundations of Probabilistic Programming}.
\newblock Cambridge University Press, Cambridge, 2020.
\newblock \href {https://doi.org/10.1017/9781108770750}
  {\path{doi:10.1017/9781108770750}}.

\bibitem{Bergstra:1985:Verification}
Jan~A. Bergstra and Jan~Willem Klop.
\newblock Verification of an alternating bit protocol by means of process
  algebra.
\newblock In {\em Mathematical Methods of Specification and Synthesis of
  Software Systems}, volume 215 of {\em LNCS}, pages 9--23. Springer, 1985.
\newblock \href {https://doi.org/10.1007/3-540-16444-8_1}
  {\path{doi:10.1007/3-540-16444-8_1}}.

\bibitem{Bergstra:1988:Process}
Jan~A. Bergstra and Jan~Willem Klop.
\newblock Process theory based on bisimulation semantics.
\newblock In {\em Linear Time, Branching Time and Partial Order in Logics and
  Models for Concurrency}, volume 354 of {\em LNCS}, pages 50--122, 1988.
\newblock \href {https://doi.org/10.1007/BFb0013021}
  {\path{doi:10.1007/BFb0013021}}.

\bibitem{Bloom:1993:Iteration}
Stephen~L. Bloom and Zolt{\'{a}}n {\'{E}}sik.
\newblock {\em Iteration Theories - The Equational Logic of Iterative
  Processes}.
\newblock {EATCS} Monographs on Theoretical Computer Science. Springer, 1993.
\newblock \href {https://doi.org/10.1007/978-3-642-78034-9}
  {\path{doi:10.1007/978-3-642-78034-9}}.

\bibitem{Brzozowski:1964:Derivatives}
Janusz~A. Brzozowski.
\newblock Derivatives of regular expressions.
\newblock {\em J. {ACM}}, 11(4):481--494, 1964.
\newblock \href {https://doi.org/10.1145/321239.321249}
  {\path{doi:10.1145/321239.321249}}.

\bibitem{Burris:1981:Algebra}
Stanley Burris and H.~P. Sankappanavar.
\newblock {\em A Course in Universal Algebra}.
\newblock Springer, 1981.
\newblock URL: \url{http://www.math.uwaterloo.ca/~snburris/htdocs/ualg.html}.

\bibitem{Cohen:1994:Hypotheses}
Ernie Cohen.
\newblock Hypotheses in {K}leene algebra.
\newblock Technical report, Bellcore, 1994.

\bibitem{Cohen:1994:Lazy}
Ernie Cohen.
\newblock Lazy caching in {Kleene} algebra.
\newblock Technical report, Bellcore, 1994.

\bibitem{Dantzig:1957:maxflow}
G.~B. Dantzig and D.~R. Fulkerson.
\newblock {\em On the Max-Flow Min-Cut Theorem of Networks}, pages 215--222.
\newblock Princeton University Press, Princeton, 1957.
\newblock \href {https://doi.org/10.1515/9781400881987-013}
  {\path{doi:10.1515/9781400881987-013}}.

\bibitem{deVink:1999:Bisimulation}
Erik~P. de~Vink and Jan J. M.~M. Rutten.
\newblock Bisimulation for probabilistic transition systems: {A} coalgebraic
  approach.
\newblock {\em Theor. Comput. Sci.}, 221(1-2):271--293, 1999.
\newblock \href {https://doi.org/10.1016/S0304-3975(99)00035-3}
  {\path{doi:10.1016/S0304-3975(99)00035-3}}.

\bibitem{Deifel:2019:Generic}
Hans{-}Peter Deifel, Stefan Milius, Lutz Schr{\"{o}}der, and Thorsten
  Wi{\ss}mann.
\newblock Generic partition refinement and weighted tree automata.
\newblock In {\em {FM}}, pages 280--297, 2019.
\newblock \href {https://doi.org/10.1007/978-3-030-30942-8_18}
  {\path{doi:10.1007/978-3-030-30942-8_18}}.

\bibitem{Desharnais:1999:Labelled}
Josée Desharnais.
\newblock {\em Labelled {Markov} processes}.
\newblock PhD thesis, McGill University, 1999.

\bibitem{Doumane:2019:Kleene}
Amina Doumane, Denis Kuperberg, Damien Pous, and Pierre Pradic.
\newblock Kleene algebra with hypotheses.
\newblock In {\em FoSSaCS}, pages 207--223, 2019.
\newblock \href {https://doi.org/10.1007/978-3-030-17127-8_12}
  {\path{doi:10.1007/978-3-030-17127-8_12}}.

\bibitem{Elgot:1975:Monadic}
Calvin~C. Elgot.
\newblock Monadic computation and iterative algebraic theories.
\newblock In H.E. Rose and J.C. Shepherdson, editors, {\em Logic Colloquium
  '73}, volume~80 of {\em Studies in Logic and the Foundations of Mathematics},
  pages 175--230. Elsevier, 1975.
\newblock \href {https://doi.org/10.1016/S0049-237X(08)71949-9}
  {\path{doi:10.1016/S0049-237X(08)71949-9}}.

\bibitem{probnetkat}
Nate Foster, Dexter Kozen, Konstantinos Mamouras, Mark Reitblatt, and Alexandra
  Silva.
\newblock Probabilistic {NetKAT}.
\newblock In {\em ESOP}, pages 282--309, 2016.
\newblock \href {https://doi.org/10.1007/978-3-662-49498-1_12}
  {\path{doi:10.1007/978-3-662-49498-1_12}}.

\bibitem{Foster:2015:Coalgebraic}
Nate Foster, Dexter Kozen, Mae Milano, Alexandra Silva, and Laure Thompson.
\newblock A coalgebraic decision procedure for {NetKAT}.
\newblock In {\em POPL}, pages 343--355, 2015.
\newblock \href {https://doi.org/10.1145/2676726.2677011}
  {\path{doi:10.1145/2676726.2677011}}.

\bibitem{Gomes:2019:Generalising}
Leandro Gomes, Alexandre Madeira, and Lu{\'{\i}}s~Soares Barbosa.
\newblock Generalising {KAT} to verify weighted computations.
\newblock {\em Sci. Ann. Comput. Sci.}, 29(2):141--184, 2019.
\newblock \href {https://doi.org/10.7561/SACS.2019.2.141}
  {\path{doi:10.7561/SACS.2019.2.141}}.

\bibitem{Goy:2020:Combining}
Alexandre Goy and Daniela Petrisan.
\newblock Combining probabilistic and non-deterministic choice via weak
  distributive laws.
\newblock In {\em LICS}, pages 454--464, 2020.
\newblock \href {https://doi.org/10.1145/3373718.3394795}
  {\path{doi:10.1145/3373718.3394795}}.

\bibitem{Grathwohl:2014:KATB}
Niels Bj{\o}rn~Bugge Grathwohl, Dexter Kozen, and Konstantinos Mamouras.
\newblock {KAT} + {B}!
\newblock In {\em CSL}, pages 44:1--44:10, 2014.
\newblock \href {https://doi.org/10.1145/2603088.2603095}
  {\path{doi:10.1145/2603088.2603095}}.

\bibitem{Gumm:2000:Elements}
H.~Peter Gumm.
\newblock Elements of the general theory of coalgebras, 2000.
\newblock URL:
  \url{https://www.mathematik.uni-marburg.de/~gumm/Papers/Luatcs.pdf}.

\bibitem{Gumm:2000:Coalgebraic}
H.~Peter Gumm and Tobias Schr{\"{o}}der.
\newblock Coalgebraic structure from weak limit preserving functors.
\newblock In {\em CMCS}, pages 111--131, 2000.
\newblock \href {https://doi.org/10.1016/S1571-0661(05)80346-9}
  {\path{doi:10.1016/S1571-0661(05)80346-9}}.

\bibitem{Gumm:2002:Bounded}
H.~Peter Gumm and Tobias Schr{\"{o}}der.
\newblock Coalgebras of bounded type.
\newblock {\em Math. Struct. Comput. Sci.}, 12(5):565--578, 2002.
\newblock \href {https://doi.org/10.1017/S0960129501003590}
  {\path{doi:10.1017/S0960129501003590}}.

\bibitem{Hennessy:80:Observing}
Matthew Hennessy and Robin Milner.
\newblock On observing nondeterminism and concurrency.
\newblock In {\em ICALP}, pages 299--309, 1980.
\newblock \href {https://doi.org/10.1007/3-540-10003-2_79}
  {\path{doi:10.1007/3-540-10003-2_79}}.

\bibitem{Jacobs:2006:Bialgebraic}
Bart Jacobs.
\newblock A bialgebraic review of deterministic automata, regular expressions
  and languages.
\newblock In {\em Algebra, Meaning, and Computation, Essays Dedicated to Joseph
  A. Goguen on the Occasion of His 65th Birthday}, pages 375--404, 2006.
\newblock \href {https://doi.org/10.1007/11780274_20}
  {\path{doi:10.1007/11780274_20}}.

\bibitem{Jacobs:2021:From}
Bart Jacobs.
\newblock From multisets over distributions to distributions over multisets.
\newblock In {\em LICS}, pages 1--13, 2021.
\newblock \href {https://doi.org/10.1109/LICS52264.2021.9470678}
  {\path{doi:10.1109/LICS52264.2021.9470678}}.

\bibitem{Jacobs:2023:Fast}
Jules Jacobs and Thorsten Wi\ss{}mann.
\newblock Fast coalgebraic bisimilarity minimization.
\newblock In {\em POPL}, pages 1514--1541, 2023.
\newblock \href {https://doi.org/10.1145/3571245} {\path{doi:10.1145/3571245}}.

\bibitem{Jones:1990:Probabilistic}
Claire Jones.
\newblock {\em Probabilistic non-determinism}.
\newblock PhD thesis, University of Edinburgh, {UK}, 1990.
\newblock URL: \url{https://hdl.handle.net/1842/413}.

\bibitem{Kanellakis:1983:CCS}
Paris~C. Kanellakis and Scott~A. Smolka.
\newblock {CCS} expressions, finite state processes, and three problems of
  equivalence.
\newblock In {\em PODC}, pages 228--240, 1983.
\newblock \href {https://doi.org/10.1145/800221.806724}
  {\path{doi:10.1145/800221.806724}}.

\bibitem{Kanellakis:1990:CCS}
Paris~C. Kanellakis and Scott~A. Smolka.
\newblock {CCS} expressions, finite state processes, and three problems of
  equivalence.
\newblock {\em Inf. Comput.}, 86(1):43--68, 1990.
\newblock \href {https://doi.org/10.1016/0890-5401(90)90025-D}
  {\path{doi:10.1016/0890-5401(90)90025-D}}.

\bibitem{Schmid:2023:Complete}
Tobias Kapp{\'{e}}, Todd Schmid, and Alexandra Silva.
\newblock A complete inference system for skip-free guarded {Kleene} algebra
  with tests.
\newblock In {\em ESOP}, pages 309--336, 2023.
\newblock \href {https://doi.org/10.1007/978-3-031-30044-8_12}
  {\path{doi:10.1007/978-3-031-30044-8_12}}.

\bibitem{Knuth:1976:Complexity}
Donald~E. Knuth and Andrew~C. Yao.
\newblock The complexity of nonuniform random number generation.
\newblock In {\em Algorithms and Complexity: New Directions and Recent
  Results}, 1976.

\bibitem{Kozen:2002:On}
Dexter Kozen.
\newblock On the complexity of reasoning in {Kleene} algebra.
\newblock {\em Inf. Comput.}, 179(2):152--162, 2002.
\newblock \href {https://doi.org/10.1006/inco.2001.2960}
  {\path{doi:10.1006/inco.2001.2960}}.

\bibitem{Kozen:2014:Kleene}
Dexter Kozen and Konstantinos Mamouras.
\newblock Kleene algebra with equations.
\newblock In {\em ICALP (Part II)}, pages 280--292, 2014.
\newblock \href {https://doi.org/10.1007/978-3-662-43951-7_24}
  {\path{doi:10.1007/978-3-662-43951-7_24}}.

\bibitem{Kozen:2000:Certification}
Dexter Kozen and Maria-Christina Patron.
\newblock Certification of compiler optimizations using {Kleene} algebra with
  tests.
\newblock In {\em CL}, pages 568--582, 2000.
\newblock \href {https://doi.org/10.1007/3-540-44957-4_38}
  {\path{doi:10.1007/3-540-44957-4_38}}.

\bibitem{Kozen:2023:Silva}
Dexter Kozen and Alexandra Silva.
\newblock Multisets and distributions, 2023.
\newblock \href {http://arxiv.org/abs/2301.10812} {\path{arXiv:2301.10812}}.

\bibitem{Kozen:2008:Bohm}
Dexter Kozen and Wei{-}Lung~Dustin Tseng.
\newblock The {B}{\"{o}}hm-{J}acopini theorem is false, propositionally.
\newblock In {\em MPC}, pages 177--192, 2008.
\newblock \href {https://doi.org/10.1007/978-3-540-70594-9_11}
  {\path{doi:10.1007/978-3-540-70594-9_11}}.

\bibitem{Larsen:1991:Bisimulation}
Kim~G. Larsen and Arne Skou.
\newblock Bisimulation through probabilistic testing.
\newblock {\em Information and Computation}, 94(1):1--28, 1991.
\newblock \href {https://doi.org/10.1016/0890-5401(91)90030-6}
  {\path{doi:10.1016/0890-5401(91)90030-6}}.

\bibitem{Mahmoud:2008:Polya}
Hosam Mahmoud.
\newblock {\em P\'{o}lya Urn Models}.
\newblock Texts in Statistical Science. Chapman \& Hall, 2008.

\bibitem{Mardare:2016:Quantitative}
Radu Mardare, Prakash Panangaden, and Gordon~D. Plotkin.
\newblock Quantitative algebraic reasoning.
\newblock In {\em LICS}, pages 700--709, 2016.
\newblock \href {https://doi.org/10.1145/2933575.2934518}
  {\path{doi:10.1145/2933575.2934518}}.

\bibitem{McIver:2008:Using}
Annabelle McIver, Carlos Gonzal{\'{\i}}a, Ernie Cohen, and Carroll~C. Morgan.
\newblock Using probabilistic {Kleene} algebra {pKA} for protocol verification.
\newblock {\em J. Log. Algebraic Methods Program.}, 76(1):90--111, 2008.
\newblock \href {https://doi.org/10.1016/j.jlap.2007.10.005}
  {\path{doi:10.1016/j.jlap.2007.10.005}}.

\bibitem{McIver:2011:On}
Annabelle McIver, Tahiry~M. Rabehaja, and Georg Struth.
\newblock On probabilistic {Kleene} algebras, automata and simulations.
\newblock In {\em RAMICS}, pages 264--279, 2011.
\newblock \href {https://doi.org/10.1007/978-3-642-21070-9_20}
  {\path{doi:10.1007/978-3-642-21070-9_20}}.

\bibitem{McIver:2013:Probabilistic}
Annabelle McIver, Tahiry~M. Rabehaja, and Georg Struth.
\newblock Probabilistic concurrent {Kleene} algebra.
\newblock In {\em QAPL}, pages 97--115, 2013.
\newblock \href {https://doi.org/10.4204/EPTCS.117.7}
  {\path{doi:10.4204/EPTCS.117.7}}.

\bibitem{Milner:1984:Complete}
Robin Milner.
\newblock A complete inference system for a class of regular behaviours.
\newblock {\em J. Comput. Syst. Sci.}, 28(3):439--466, 1984.
\newblock \href {https://doi.org/10.1016/0022-0000(84)90023-0}
  {\path{doi:10.1016/0022-0000(84)90023-0}}.

\bibitem{Mislove:2004:Axioms}
Michael~W. Mislove, Jo{\"{e}}l Ouaknine, and James Worrell.
\newblock Axioms for probability and nondeterminism.
\newblock In {\em EXPRESS}, pages 7--28, 2003.
\newblock \href {https://doi.org/10.1016/j.entcs.2004.04.019}
  {\path{doi:10.1016/j.entcs.2004.04.019}}.

\bibitem{Moss:1999:Coalgebraic}
Lawrence~S. Moss.
\newblock Coalgebraic logic.
\newblock {\em Ann. Pure Appl. Log.}, 96(1-3):277--317, 1999.
\newblock \href {https://doi.org/10.1016/S0168-0072(98)00042-6}
  {\path{doi:10.1016/S0168-0072(98)00042-6}}.

\bibitem{Paige:1987:Three}
Robert Paige and Robert~Endre Tarjan.
\newblock Three partition refinement algorithms.
\newblock {\em {SIAM} J. Comput.}, 16(6):973--989, 1987.
\newblock \href {https://doi.org/10.1137/0216062} {\path{doi:10.1137/0216062}}.

\bibitem{Pous:2022:On}
Damien Pous, Jurriaan Rot, and Jana Wagemaker.
\newblock On tools for completeness of kleene algebra with hypotheses, 2022.
\newblock \href {http://arxiv.org/abs/2210.13020} {\path{arXiv:2210.13020}}.

\bibitem{Rutten:2000:Universal}
Jan J. M.~M. Rutten.
\newblock Universal coalgebra: a theory of systems.
\newblock {\em Theor. Comput. Sci.}, 249(1):3--80, 2000.
\newblock \href {https://doi.org/10.1016/S0304-3975(00)00056-6}
  {\path{doi:10.1016/S0304-3975(00)00056-6}}.

\bibitem{Salomaa:1966:Two}
Arto Salomaa.
\newblock Two complete axiom systems for the algebra of regular events.
\newblock {\em J. ACM}, 13(1):158--169, 1966.
\newblock \href {https://doi.org/10.1145/321312.321326}
  {\path{doi:10.1145/321312.321326}}.

\bibitem{Schmid:2021:Guarded}
Todd Schmid, Tobias Kapp{\'{e}}, Dexter Kozen, and Alexandra Silva.
\newblock Guarded {Kleene} algebra with tests: Coequations, coinduction, and
  completeness.
\newblock In {\em ICALP}, pages 142:1--142:14, 2021.
\newblock \href {https://doi.org/10.4230/LIPIcs.ICALP.2021.142}
  {\path{doi:10.4230/LIPIcs.ICALP.2021.142}}.

\bibitem{Schmid:2021:Star}
Todd Schmid, Jurriaan Rot, and Alexandra Silva.
\newblock On star expressions and coalgebraic completeness theorems.
\newblock In {\em MFPS}, pages 242--259, 2021.
\newblock \href {https://doi.org/10.4204/EPTCS.351.15}
  {\path{doi:10.4204/EPTCS.351.15}}.

\bibitem{Schmid:2022:Processes}
Todd Schmid, Wojciech Różowski, Alexandra Silva, and Jurriaan Rot.
\newblock Processes parametrised by an algebraic theory.
\newblock In {\em ICALP}, 2022.
\newblock \href {https://doi.org/10.4230/LIPIcs.ICALP.2022.132}
  {\path{doi:10.4230/LIPIcs.ICALP.2022.132}}.

\bibitem{Segala:1994:Probabilistic}
Roberto Segala and Nancy~A. Lynch.
\newblock Probabilistic simulations for probabilistic processes.
\newblock In {\em CONCUR}, pages 481--496, 1994.
\newblock \href {https://doi.org/10.1007/978-3-540-48654-1_35}
  {\path{doi:10.1007/978-3-540-48654-1_35}}.

\bibitem{Silva:2010:Kleene}
Alexandra Silva.
\newblock {\em Kleene coalgebra}.
\newblock PhD thesis, University of Nijmegen, 2010.

\bibitem{Smolka:2020:Guarded}
Steffen Smolka, Nate Foster, Justin Hsu, Tobias Kapp{\'{e}}, Dexter Kozen, and
  Alexandra Silva.
\newblock Guarded {Kleene} algebra with tests: verification of uninterpreted
  programs in nearly linear time.
\newblock In {\em {POPL}}, pages 61:1--61:28, 2020.
\newblock \href {https://doi.org/10.1145/3371129} {\path{doi:10.1145/3371129}}.

\bibitem{cantor}
Steffen Smolka, Praveen Kumar, Nate Foster, Dexter Kozen, and Alexandra Silva.
\newblock Cantor meets {S}cott: semantic foundations for probabilistic
  networks.
\newblock In {\em POPL}, pages 557--571, 2017.
\newblock \href {https://doi.org/10.1145/3009837.3009843}
  {\path{doi:10.1145/3009837.3009843}}.

\bibitem{mcnetkat}
Steffen Smolka, Praveen Kumar, David~M. Kahn, Nate Foster, Justin Hsu, Dexter
  Kozen, and Alexandra Silva.
\newblock Scalable verification of probabilistic networks.
\newblock In {\em PLDI}, pages 190--203, 2019.
\newblock \href {https://doi.org/10.1145/3314221.3314639}
  {\path{doi:10.1145/3314221.3314639}}.

\bibitem{Sokolova:2011:Probabilistic}
Ana Sokolova.
\newblock Probabilistic systems coalgebraically: {A} survey.
\newblock {\em Theor. Comput. Sci.}, 412(38):5095--5110, 2011.
\newblock \href {https://doi.org/10.1016/j.tcs.2011.05.008}
  {\path{doi:10.1016/j.tcs.2011.05.008}}.

\bibitem{Stark:2000:Complete}
Eugene~W. Stark and Scott~A. Smolka.
\newblock A complete axiom system for finite-state probabilistic processes.
\newblock In {\em Proof, Language, and Interaction, Essays in Honour of Robin
  Milner}, pages 571--596, 2000.

\bibitem{Staton:2011:Relating}
Sam Staton.
\newblock Relating coalgebraic notions of bisimulation.
\newblock {\em Log. Methods Comput. Sci.}, 7(1), 2011.
\newblock \href {https://doi.org/10.2168/LMCS-7(1:13)2011}
  {\path{doi:10.2168/LMCS-7(1:13)2011}}.

\bibitem{threesideddie}
Joseph~Aaron Toumanios.
\newblock Three sided die, 2019.
\newblock US patent 10384119.
\newblock URL:
  \url{https://image-ppubs.uspto.gov/dirsearch-public/print/downloadPdf/10384119}.

\bibitem{Varacca:2006:Distributing}
Daniele Varacca and Glynn Winskel.
\newblock Distributing probability over non-determinism.
\newblock {\em Mathematical Structures in Computer Science}, 16(1):87--113,
  2006.
\newblock \href {https://doi.org/10.1017/S0960129505005074}
  {\path{doi:10.1017/S0960129505005074}}.

\bibitem{cnetkat}
Jana Wagemaker, Nate Foster, Tobias Kapp{\'{e}}, Dexter Kozen, Jurriaan Rot,
  and Alexandra Silva.
\newblock Concurrent {NetKAT} - modeling and analyzing stateful, concurrent
  networks.
\newblock In {\em ESOP}, pages 575--602, 2022.
\newblock \href {https://doi.org/10.1007/978-3-030-99336-8_21}
  {\path{doi:10.1007/978-3-030-99336-8_21}}.

\bibitem{Wissmann:2020:Efficient}
Thorsten Wi{\ss}mann, Ulrich Dorsch, Stefan Milius, and Lutz Schröder.
\newblock Efficient and modular coalgebraic partition refinement.
\newblock {\em Logical Methods in Computer Science}, 16:1:8:1--8:63, 2020.
\newblock \href {https://doi.org/10.23638/LMCS-16(1:8)2020}
  {\path{doi:10.23638/LMCS-16(1:8)2020}}.

\bibitem{Wissmann:2022:Quasilinear}
Thorsten Wi{\ss}mann, Stefan Milius, and Lutz Schr{\"{o}}der.
\newblock Quasilinear-time computation of generic modal witnesses for
  behavioural inequivalence.
\newblock {\em Log. Methods Comput. Sci.}, 18(4), 2022.
\newblock \href {https://doi.org/10.46298/lmcs-18(4:6)2022}
  {\path{doi:10.46298/lmcs-18(4:6)2022}}.

\bibitem{Worrell:2005:Final}
James Worrell.
\newblock On the final sequence of a finitary set functor.
\newblock {\em Theor. Comput. Sci.}, 338(1-3):184--199, 2005.
\newblock \href {https://doi.org/10.1016/j.tcs.2004.12.009}
  {\path{doi:10.1016/j.tcs.2004.12.009}}.

\end{thebibliography}

\appendix

\section{Coalgebra}\label{apx:coalgebra}
In the main text of the paper, we avoided using the language of universal coalgebra~\cite{Rutten:2000:Universal,Gumm:2000:Elements} in the presentation, so as not to distract from the main concepts, which can be described concretely.
We have however used coalgebra in our development, and concrete definitions in the main text are instances of abstract notions.
This is helpful in simplifying proofs, so in the appendix, we will present the proofs of the results using coalgebra.

We assume that the reader is familiar with the basic notions of category theory, such as functors, pullbacks and natural transformations.
We first recall the basic notions from universal coalgebra; for a more detailed introduction, we refer to~\cite{Rutten:2000:Universal,Gumm:2000:Elements}.

\begin{definition}\label{def:coalgebra}
    Let \(\B\) a \(\Set\)-endofunctor. A \emph{\(\B\)-coalgebra} is a pair \((X, \beta)\) where \(X\) is a set and \(\beta : X \to \B X\) is a \emph{transition function}.
    A homomorphism between two \(\B\)-coalgebras \((X, \beta)\) and \((Y, \gamma)\) is a function \(h: X \to Y\) satisfying \(\B h \circ \beta = \gamma \circ h\).
    \(\B\)-coalgebras and homomorphisms between them form a category, which we denote \(\coalg{\B}\).
\end{definition}

Recall that the set \(\D(X)\) of finitely supported probability distributions is an endofunctor on \(\Set\).
As alluded to before, \acro{ProbGKAT} automata can modelled as coalgebras for the functor \(\G = \D(\2 + \V + \Act \times \Id)^{\At}\), and we will study them as such going forward.

\begin{definition}\label{def:subcoalgebra}
    Let \((X, \beta)\) and \((Y, \gamma)\) be \(\B\)-coalgebras, with \(Y \subseteq X\). If the inclusion \(i : Y \to X\) is a \(\B\)-coalgebra homomorphism, then \((Y, \gamma)\) is called a \emph{subcoalgebra} of \((X, \beta)\).
\end{definition}

There can be at most one coalgebra structure map \(\gamma : Y \to \B Y\) that makes the inclusion \(i : Y \to X\) a \(B\)-coalgebra homomorphism from \((Y, \gamma)\) to \((X, \beta)\)~\cite{Rutten:2000:Universal}.
Subcoalgebras of any \(\B\)-coalgebra \((X, \beta)\) form a complete lattice~\cite{Rutten:2000:Universal}.
Given \(x \in X\), we will write \(\cogen{x}{\beta}\) for the smallest subcoalgebra of \((X, \beta)\) containing \(x\), the \emph{subcoalgebra generated by \(x\)}.

\begin{definition}\label{def:coalgebraic_bisimulation}
    Let \((X, \beta)\) and \((Y, \gamma)\) be \(\G\)-coalgebras.
    A relation \(R \subseteq X \times Y\) is a \emph{bisimulation} if there exists a function \(\rho: R \to \B R\) such that the projections \(\pi_1, \pi_2 : R \to X\) are \(\B\)-coalgebra homomorphisms from \((R, \rho)\) to \((X, \beta)\) and \((Y, \gamma)\) respectively.
    We say the elements \(x \in X\) and \(y \in Y\) are \emph{bisimilar} if there exists a bisimulation \(R\) such that \((x,y) \in R\)
\end{definition}

Note that \cref{def:homomorphism_of_transition_systems} is an instantiation of the abstract definition of coalgebra homomorphism from \cref{def:coalgebra} to \(\G\).
Similarly, \cref{def:bisimulation_of_transition_systems} is an instantiation of \cref{def:coalgebraic_bisimulation}.

\begin{restatable}{proposition}{bounded}\label{prop:bounded_and_pullbacks}
\begin{enumerate}
    \item The functor \(\G\) is bounded and preserves weak pullbacks
    \item There exists a \(\G\)-coalgebra \((Z, \zeta)\) which is a final object in \(\coalg{\G}\). In other words, for any \(\G\)-coalgebra \((X, \beta)\) there exists a unique homomorphism \(\beh_\beta : X \to Z\)
    \item Let \((X, \beta)\) and \((Y, \gamma)\) be \(\G\)-coalgebras. The elements \(x \in X\) and \(y \in Y\) are bisimilar if and only if \({\beh_\beta(x)}={\beh_\gamma(y)}\)
\end{enumerate}
\end{restatable}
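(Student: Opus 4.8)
The plan is to decompose the functor $\G = \D(\2 + \V + \Act \times \Id)^{\At}$ into elementary building blocks and establish each of the three claims by closure arguments over that decomposition. Write $\B_0 = \2 + \V + \Act \times \Id$ for the polynomial ``kernel'' and $\B_1 = \D\B_0$, so that $\G = \B_1^{\At}$. The two properties I need to propagate through these constructors are \emph{preservation of weak pullbacks} and \emph{boundedness}; the existence of a final coalgebra and the coincidence of bisimilarity with behavioural equivalence then follow from standard coalgebraic machinery.

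For item~1, I would first recall the routine facts that constant functors and $\Id$ preserve weak pullbacks, and that weak-pullback preservation is closed under composition and under (small) products and coproducts; hence the polynomial functor $\B_0$ preserves weak pullbacks. The only genuinely nontrivial ingredient is that the finitely supported distribution functor $\D$ preserves weak pullbacks, which I would cite (its proof is a coupling/transportation argument, in the same spirit as the max-flow min-cut reasoning of \cref{sec:bisim}). Since $\At \times -$ is left adjoint to $(-)^{\At}$, the exponential functor $(-)^{\At}$ preserves all limits, in particular weak pullbacks; composing everything, $\G$ preserves weak pullbacks. For boundedness I would argue that $\Id$ occurs only linearly in $\B_0$, so each element of $\B_0 X$ mentions at most one element of $X$; therefore each element of $\B_1 X = \D(\2 + \V + \Act \times X)$, being finitely supported, lies in the image of $\B_1(Y \hookrightarrow X)$ for a finite $Y \subseteq X$, and each element of $\G X = (\B_1 X)^{\At}$, being an $\At$-tuple of such distributions with $\At$ finite, lies in $\G Y$ for a finite $Y$. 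Thus $\G$ is $\omega$-bounded; this uses finiteness of $\At$ (guaranteed since $\At$ is the atom set of the free Boolean algebra on the finite set $\T$), whereas the cardinalities of $\Act$ and $\V$ play no role.

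Items~2 and~3 then follow from item~1 together with known theorems. A bounded $\Set$-endofunctor admits a final coalgebra (\cite{Worrell:2005:Final, Gumm:2000:Elements}); concretely $(Z,\zeta)$ arises as the limit of the terminal sequence, which stabilises because $\G$ is $\omega$-bounded, and by finality each $\G$-coalgebra $(X,\beta)$ has a unique homomorphism $\beh_\beta : X \to Z$. For item~3, one direction is purely formal: a bisimulation $(R,\rho)$ with $(x,y) \in R$ satisfies $\beh_\beta \circ \pi_1 = \beh_\rho = \beh_\gamma \circ \pi_2$ by uniqueness, so $\beh_\beta(x) = \beh_\gamma(y)$. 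Conversely, given $\beh_\beta(x) = \beh_\gamma(y)$, I would take the pullback $P = X \times_Z Y$ of $\beh_\beta$ and $\beh_\gamma$ in $\Set$; since $\G$ preserves weak pullbacks, applying $\G$ turns this square into a weak pullback through which the commuting cone $(\beta \circ p_1, \gamma \circ p_2)$ factors, yielding $\rho : P \to \G P$ that makes the projections coalgebra homomorphisms --- so $P$ is a bisimulation relating $x$ and $y$. This is the standard equivalence of span-bisimilarity and behavioural equivalence for weak-pullback-preserving functors, which I would carry out at the level of a diagram chase following \cite{Rutten:2000:Universal}.

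The main obstacle is entirely concentrated in item~1: the weak-pullback preservation of $\D$, which is the one non-routine closure fact and precisely what makes item~3 go through. Everything else --- the polynomial and exponential closure lemmas, the boundedness count, and the final-coalgebra and bisimilarity-versus-behavioural-equivalence theorems --- is standard bookkeeping once that ingredient is available.
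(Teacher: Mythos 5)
Your proposal is correct and follows essentially the same route as the paper's proof: weak-pullback preservation of \(\D\) plus closure of that property under the polynomial and exponential constructors, boundedness of \(\G\), and then the standard final-coalgebra and bisimilarity-equals-behavioural-equivalence results. The differences are cosmetic: you establish boundedness by a direct finite-support count (correctly observing that only finiteness of \(\At\) is needed, not of \(\Act\) or \(\V\)) and you inline the weak-pullback argument for item~3, where the paper simply cites \cite{deVink:1999:Bisimulation,Gumm:2002:Bounded} and \cite[Theorems~9.3 and~10.3]{Rutten:2000:Universal}.
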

\begin{proof}
(1) It was proved by Moss in~\cite{Moss:1999:Coalgebraic} that \(\D\) preserves weak pullbacks.
Gumm and Schr\"{o}der~\cite{Gumm:2000:Coalgebraic} showed that preservation of weak pullbacks is closed under functor composition, products and coproducts.
Since the \(\Set\) endofunctors \(F(X)=X^{A}\), \(G(X)=B + X\) and \(H(X)=C\) (where \(A, B\) and \(C\) are arbitrary sets) preserve weak pullbacks, so does $\G$.

(2) The functor \(\D\) is known to be bounded by \(\omega\)~\cite[Theorem~4.6]{deVink:1999:Bisimulation}.
Since \(\Act\) and \(\2 + \V\) are finite, they are also \(\omega\)-bounded.
Boundedness is preserved under functor composition, binary products and binary coproducts~\cite[Corollary~4.9]{Gumm:2002:Bounded}.
Since \(\At\) is finite, the exponential functor \(\Id^\At\) is also bounded.
Altogether, we can conclude that $\G$ is bounded.
Existence of the final coalgebra follows from boundedness~\cite[Theorem~10.3]{Rutten:2000:Universal}.

(3) Follows from the fact that \(\G\) preserves weak pullbacks~\cite[Theorem~9.3]{Rutten:2000:Universal}.
\end{proof}
\section{Proofs from \texorpdfstring{\cref{sec:opsem}}{Section~\ref{sec:opsem}}}
\locallyfinite*
\begin{proof}
    We adapt the analogous proof for \acro{GKAT}~\cite{Schmid:2021:Guarded}. For any \(\expr{e} \in \Exp\), let \(|\cogen{\expr{e}}{\partial}|\) be the cardinality of the carrier set of the least subcoalgebra of \((\Exp, \partial)\) containing \(\expr{e}\). We show by induction that for all \(\expr{e} \in \Exp\) it holds that \(|\cogen{\expr{e}}{\partial}|\leq \#(\expr{e})\).

     For the base cases, observe that for \(\var{v} \in \V\) the generated subcoalgebra has exactly one state, which outputs the appropriate value with probability \(1\). Hence, \(\#(\var{v}) = 1\). Similarly, for \(\test{b} \in \Bexp\), we have \(\#(\test{b})=1\)
     For \(\action{p} \in \Act\), we have two states; the initial state, which transitions with probability \(1\) on \(\action{p}\) to the state which outputs \(\accept\) with probability \(1\).
    %  For any expression \(\expr{e} \in \Exp\), \(\cogen{\trmt{\expr{e}}}{\partial}\) has only one state, regardless of the \(|\cogen{\expr{e}}{\partial}|\).

     For the inductive cases, assume that \(|\cogen{\expr{e}}{\partial}| \leq \#(e)\), \(|\cogen{\expr{f}}{\partial}| \leq \#(f)\), \(\test{b} \in \Bexp\) and \(\prob{r} \in [0,1]\).
     \begin{itemize}
         \item
          Every derivative of \(\expr{e} +_\test{b} \expr{f}\) is either a derivative of \(\expr{e}\) or \(\expr{f}\) and hence \(|\cogen{\expr{e} +_\test{b} \expr{f}}{\partial}| \leq |\cogen{\expr{e}}{\partial}| + |\cogen{\expr{f}}{\partial}| = \#(\expr{e}) + \#(\expr{f}) = \#(\expr{e} +_\test{b} \expr{f})\).
          By analogous reasoning, \(|\cogen{\expr{e} \oplus_{\prob{r}} \expr{f}}{\partial}| \leq \#(\expr{e} \oplus_\prob{r} \expr{f}) \).

          \item
          In the case of \(\expr{e}\seq\expr{f}\), every derivative of this expression is either a derivative of \(\expr{f}\) or some derivative of \(\expr{e}\) followed by \(\expr{f}\). Hence, \(|\cogen{\expr{e}\seq\expr{f}}{\partial}| = |\cogen{\expr{e}}{\partial}\times \{\expr{f}\}| + |\cogen{\expr{e}}{\partial}| \leq \#(\expr{e}) + \#(\expr{f}) = \#(\expr{e}\seq\expr{f}) \).

          \item
          For the probabilistic loop case, observe that every derivative of \(\expr{e}^{[\prob{r}]}\) is a derivative of \(\expr{e}\) followed by \(\expr{e}^{[\prob{r}]}\). It can be easily observed, that there is as many derivatives of \(\expr{e}^{[\prob{r}]}\) as derivatives of \(\expr{e}\). Hence, \(|\cogen{\expr{e}^{[\prob{r}]}}{\partial}| \leq |\cogen{\expr{e}}{\partial}| = \#(\expr{e}) = \#(\expr{e}^{[\prob{r}]})\). We omit the case of guarded loop, as reasoning is identical to the case of the probabilistic loop.
          \qedhere
     \end{itemize}
\end{proof}

\section{Proofs from section \texorpdfstring{\cref{sec:bisim}}{Section~\ref{sec:bisim}}}\label{apx:bisim}
A \emph{flow network} is a pair \((G, c)\) where \(G=(V,E)\) is a directed graph, and the edges \(E \subseteq V \times V\) are equipped with a capacity function \(c : E \to \eR\).
An \emph{\((s,t)\)-flow} through a network \((G,c)\) where \(s, t \in V\) is a function \(f : E \to \eR \) satisfying the following:
\begin{enumerate}
	\item For all \(v \in V \setminus \{s,t\}\) \[\sum_{(u,v) \in E} f(u,v) = \sum_{(v,u)\in E} f(v,u)\]
	\item For all \(e \in E\), \(f(e)\leq c(e)\) (Admissibility of the flow)
	\item There exists a constant \(F \in \eR\) called the value of the flow, satisfying \[F = \sum_{(s,u) \in E} f(s,u) = \sum_{(v,t) \in E} f(v,t)\]
\end{enumerate}
Let \(S\subseteq V\), such that \( s \in S\).
An \emph{\((s,t)\)-cut} is the set of edges \(E(S, V \setminus S)=\{ (u, v) \in E \mid u \in S, v \in V \setminus S\}\).
The \emph{capacity} of the cut is given by \[C_S = \sum_{\substack{(u,v) \in E(S, V \setminus S)}} c(u,v)\]
The following result about flows and cuts is well known.

\begin{theorem}[Max-flow min-cut theorem~\cite{Dantzig:1957:maxflow}]\label{thm:maxflow}
    Let \((G,c)\) be a flow network with \(G=(V,E)\) and let \(s,t \in V\) be vertices. The maximum value of an admissible \((s,t)\)-flow equals the minimum capacity of any \((s,t)\)-cut.
\end{theorem}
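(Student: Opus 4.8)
\emph{Proof sketch.}
The plan is to prove the two inequalities separately: that the value of every admissible flow is at most the capacity of every cut (\emph{weak duality}), and that there is a cut whose capacity equals the value of a maximum flow. The second direction is obtained by the method of augmenting paths applied to a maximum flow.

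First I would prove weak duality. Fix an admissible $(s,t)$-flow $f$ of value $F$ and any $S \subseteq V$ with $s \in S$ and $t \notin S$. Summing the conservation constraints over the vertices of $S \setminus \{s\}$ and combining with the definition of $F$, every edge with both endpoints in $S$ appears with opposite signs and cancels, leaving the identity
\[
F \;=\; \sum_{e \in E(S, V \setminus S)} f(e) \;-\; \sum_{e \in E(V \setminus S, S)} f(e).
\]
Since $0 \le f(e) \le c(e)$ for every edge $e$, the right-hand side is at most $\sum_{e \in E(S, V \setminus S)} c(e) = C_S$. Taking the supremum over flows and the infimum over such sets $S$ yields that the maximum flow value is at most the minimum cut capacity.

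For the converse, I would first note that a maximum flow $f^{\ast}$ exists: when the capacities are finite, the admissible $(s,t)$-flows form a closed, bounded subset of the finite-dimensional space $\bR^{E}$, and the value functional is continuous, so it attains its maximum (the general statement, allowing $\eR$-valued capacities, reduces to this case). Given such an $f^{\ast}$ with value $F^{\ast}$, form the \emph{residual network} on $V$: for each $(u,v) \in E$ introduce a forward residual edge $(u,v)$ with capacity $c(u,v) - f^{\ast}(u,v)$ and a backward residual edge $(v,u)$ with capacity $f^{\ast}(u,v)$, keeping only edges of strictly positive residual capacity. Let $S$ be the set of vertices reachable from $s$ in the residual network. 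The crucial claim is that $t \notin S$: if there were a directed path from $s$ to $t$ in the residual network, then pushing its minimum residual capacity $\varepsilon > 0$ along it --- adding $\varepsilon$ to $f^{\ast}$ on forward edges and subtracting $\varepsilon$ on backward edges --- would again be an admissible flow, now of value $F^{\ast} + \varepsilon$, contradicting maximality. Verifying that this modified function is still a flow (conservation at each internal vertex, and the capacity bounds on each edge) is the one genuinely technical lemma, though it is routine.

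Finally, since $t \notin S$, the set $S$ determines an $(s,t)$-cut. By definition of the residual network, every edge $(u,v) \in E(S, V \setminus S)$ lacks a forward residual edge, so $f^{\ast}(u,v) = c(u,v)$, and every edge $(u,v) \in E(V \setminus S, S)$ lacks a backward residual edge, so $f^{\ast}(u,v) = 0$. Substituting into the weak-duality identity gives $F^{\ast} = \sum_{e \in E(S, V \setminus S)} c(e) = C_S$, so the minimum cut capacity is at most $F^{\ast}$, the maximum flow value. Combined with weak duality, the two are equal. The main obstacle is conceptual rather than computational: one cannot simply run the augmenting-path procedure to termination, since with irrational capacities it need not terminate, so existence of a maximum flow has to be secured by the compactness argument above, after which the residual-cut construction applies to an arbitrary maximum flow.
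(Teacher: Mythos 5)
The paper does not actually prove this statement: it is imported as a classical black-box result (cited to Dantzig and Fulkerson) and used only inside the proof of \cref{lem:max_flow_bisimulation}, so there is no in-paper argument to compare yours against. Your sketch is the standard and correct route: weak duality by summing conservation over $S \setminus \{s\}$, existence of a maximum flow by compactness of the admissible set in $\bR^{E}$ (rightly noting that augmenting paths alone need not terminate with irrational capacities), and then the residual-network cut $S$ of vertices reachable from $s$, which is saturated forward and empty backward, forcing $F^{\ast} = C_S$.

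Two points are worth tightening, because they matter for the way the theorem is used here. First, the paper's flow networks genuinely have edges of capacity $+\infty$ (the central edges in the bisimulation construction), so the phrase ``reduces to this case'' needs a sentence: either observe that if $t$ is reachable from $s$ through infinite-capacity edges then both sides are $+\infty$, and otherwise cap each infinite capacity at (say) one plus the capacity of some finite cut --- together with a flow-decomposition step to discard circulations, so that a maximum flow of the original network is not lost by the capping --- or argue compactness directly after bounding edge flows. Second, the paper's definition of the value $F$ counts only edges out of $s$ and into $t$ (not net flow), so your weak-duality identity $F = \sum_{e \in E(S, V\setminus S)} f(e) - \sum_{e \in E(V\setminus S, S)} f(e)$ is literally valid only when $s$ has no incoming and $t$ no outgoing flow (as is the case in the networks $(G_{\alpha,\action{p}}, c_{\alpha,\action{p}})$ constructed in \cref{apx:bisim}); otherwise you should either switch to the net-value convention or note the harmless discrepancy. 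With those remarks added, your proof is complete and is the proof one would expect to see.
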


With this in hand, we can verify our characterisation of bisimulations.

\begin{restatable}{lemma}{maxflowbisim}\label{lem:max_flow_bisimulation}
    Let \((X, \beta)\) and \((Y, \gamma)\) be \acro{ProbGKAT} automata and let \(R \subseteq X \times Y\) be a relation. \(R\) is a bisimulation if and only if for all \((x,y) \in R\) and \(\alpha \in \At\), the following hold.
    \begin{enumerate}
        \item For all \(o \in \2 + \V\), \(\beta(x)_\alpha(o)=\gamma(y)_\alpha(o)\)
        \item For all \(A \subseteq X\) and all \(\action{p} \in \Act\), \(\beta(x)_\alpha[\{\action{p}\}\times A]\leq \gamma(y)_\alpha[\{\action{p}\} \times R(A)]\)
        \item For all \(B \subseteq Y\) and all \(\action{p} \in \Act\), \(\gamma(y)_\alpha[\{\action{p}\}\times B]\leq \beta(x)_\alpha[\{\action{p}\} \times R^{-1}(B)]\)
    \end{enumerate}
\end{restatable}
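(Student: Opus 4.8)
The plan is to prove both implications, using the max-flow min-cut theorem (\cref{thm:maxflow}) to power the harder backward direction. For the forward direction, suppose $R$ is a bisimulation witnessed by some $\rho : R \times \At \to \D(\2 + \V + \Act \times R)$ making the projections $\pi_1, \pi_2$ homomorphisms. Fix $(x,y) \in R$ and $\alpha \in \At$. Condition~1 is immediate, since for $o \in \2 + \V$ both $\beta(x)_\alpha(o)$ and $\gamma(y)_\alpha(o)$ equal $\rho(x,y)_\alpha(o)$ by clause~(1) of \cref{def:homomorphism_of_transition_systems} applied to $\pi_1$ and to $\pi_2$. For condition~2, given $A \subseteq X$ and $\action{p} \in \Act$, clause~(2) for $\pi_1$ gives $\beta(x)_\alpha[\{\action{p}\} \times A] = \rho(x,y)_\alpha[\{\action{p}\} \times \pi_1^{-1}(A)]$; since $(a,b) \in \pi_1^{-1}(A)$ implies $b \in R(A)$, we have $\pi_1^{-1}(A) \subseteq \pi_2^{-1}(R(A))$, so the right-hand side is at most $\rho(x,y)_\alpha[\{\action{p}\} \times \pi_2^{-1}(R(A))] = \gamma(y)_\alpha[\{\action{p}\} \times R(A)]$ by clause~(2) for $\pi_2$. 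Condition~3 is symmetric, swapping the roles of $\pi_1$ and $\pi_2$.

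For the backward direction, assume conditions~1--3 hold; the task is to construct a witnessing $\rho$. Fix $(x,y) \in R$ and $\alpha \in \At$, and on outputs set $\rho(x,y)_\alpha(o) := \beta(x)_\alpha(o) = \gamma(y)_\alpha(o)$ for $o \in \2 + \V$. For each $\action{p} \in \Act$, instantiating condition~2 with $A = X$ and condition~3 with $B = Y$ forces $\beta(x)_\alpha[\{\action{p}\} \times X] = \gamma(y)_\alpha[\{\action{p}\} \times Y] =: m_{\action{p}}$. Build a finite flow network (finite because all distributions are finitely supported) with source $s$, sink $t$, a left node for each $x'$ in the $\action{p}$-support of $\beta(x)_\alpha$ and a right node for each $y'$ in the $\action{p}$-support of $\gamma(y)_\alpha$, with edges $s \to x'$ of capacity $\beta(x)_\alpha(\action{p}, x')$, edges $y' \to t$ of capacity $\gamma(y)_\alpha(\action{p}, y')$, and edges $x' \to y'$ of capacity $\infty$ (or $m_{\action{p}}$, truncation being harmless) whenever $(x', y') \in R$. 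A cut analysis shows every $(s,t)$-cut has capacity at least $m_{\action{p}}$: any finite-capacity cut is determined by the set $A$ of left nodes on the source side and the set $B$ of right nodes on the source side, which must satisfy $R(A) \subseteq B$, and its capacity is $\beta(x)_\alpha[\{\action{p}\} \times (X \setminus A)] + \gamma(y)_\alpha[\{\action{p}\} \times B] \geq (m_{\action{p}} - \beta(x)_\alpha[\{\action{p}\} \times A]) + \gamma(y)_\alpha[\{\action{p}\} \times R(A)] \geq m_{\action{p}}$ by condition~2. So \cref{thm:maxflow} yields an admissible flow $f$ of value $m_{\action{p}}$, which necessarily saturates every source and sink edge.

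It then remains to set $\rho(x,y)_\alpha(\action{p}, (x',y')) := f(x' \to y')$ for $(x',y') \in R$ and $0$ on pairs outside $R$. Flow conservation together with saturation of the source and sink edges gives the marginal identities $\sum_{y'} \rho(x,y)_\alpha(\action{p}, (x',y')) = \beta(x)_\alpha(\action{p}, x')$ and $\sum_{x'} \rho(x,y)_\alpha(\action{p}, (x',y')) = \gamma(y)_\alpha(\action{p}, y')$; summing over all outputs and all $\action{p}$ then confirms $\rho(x,y)_\alpha$ is a genuine probability distribution. These marginal identities are precisely clauses~(1)--(2) of \cref{def:homomorphism_of_transition_systems} for $\pi_1$ and for $\pi_2$, so $R$ is a bisimulation. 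The main obstacle is this backward direction: recognising that a transition structure on $R$ is exactly a family of couplings, supported on $R$, of the $\action{p}$-transition subdistributions of $x$ and $y$, and checking that conditions~2 and~3 are precisely what guarantees the min-cut bound $m_{\action{p}}$; the infinite capacities require only a one-line remark, and the homomorphism verification is bookkeeping.
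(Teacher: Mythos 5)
Your proposal is correct and follows essentially the same route as the paper: the same per-action flow network with infinite-capacity \(R\)-edges, the same min-cut analysis forcing the maximum flow to saturate all source and sink edges, and the same definition of the witnessing transition structure \(\rho\) from that flow. The only (harmless) deviation is in the easy direction, which you prove directly from \(\pi_1^{-1}(A) \subseteq \pi_2^{-1}(R(A))\) and monotonicity of the measures, whereas the paper argues by contradiction through the same flow network; your version is if anything cleaner.
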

\begin{proof}
    Let \((X, \beta)\) and \((Y, \gamma)\) be \(\G\)-coalgebras and let \(R \subseteq X \times Y\) be a relation.
    \cref{def:coalgebraic_bisimulation} states that \(R\) is a bisimulation if and only if there exists a \(\G\)-coalgebra structure map \(\rho: R \to \G R \), which makes the canonical projection maps into homomorphsims from \((R, \rho)\) to \((X, \beta)\) and \((Y, \gamma)\) respectively.
    We can work out this definition more concretely.
    Namely, relation \(R \subseteq X \times Y\) is a bisimulation if and only if there exists \(\G\) coalgebra structure map \(\rho: R \to \G R\), such that for all \((x, y) \in R\) and all \(\alpha \in \At\), the following hold:
    \begin{enumerate}
        \item For all \(o \in 2 + \V\), \(\beta(x)_\alpha(o) = \rho(x,y)_\alpha(o) = \gamma(y)_\alpha(o)\)
        \item For all \((\action{p},x') \in \Act \times X\) \[\beta(x)_\alpha(\action{p}, x')=\sum_{(x',y')\in R} \rho(x,y)_\alpha(\action{p},(x',y'))\]
        \item For all \((\action{p},y') \in \Act \times Y\) \[\gamma(y)_\alpha(\action{p}, y')=\sum_{(x',y')\in R} \rho(x,y)_\alpha(\action{p},(x',y'))\]
    \end{enumerate}
    Condition (1) of the above definition is readily equivalent to the condition (1) of ths lemma.

    We now show that existence of the coalgebra structure map \(\rho: R \to \G R\) satisfying the latter conditions is equivalent to conditions of this lemma, by constructing a family of flow networks and employing the max-flow min-cut theorem.
    Assume \((x,y) \in R\).
    For each \(\alpha \in \At\) and \(p \in \Act\), construct a flow network \((G_{\alpha, \action{p}}, c_{\alpha, p})\) where \(G_{\alpha, \action{p}} = (V_{\alpha, \action{p}}, E_{\alpha, \action{p}})\), as follows.
	\begin{align*}
    V_{\alpha, \action{p}} ={}& \{s,t\} \cup \{x' \in X \mid \beta(x)_\alpha(\action{p},x') > 0\} \cup \{y' \in Y \mid \gamma(y)_\alpha(\action{p},y') > 0\} \\
    E_{\alpha, \action{p}} ={}& \{(s,x') \mid x' \in X \cap V_{\alpha, \action{p}}\} \\
                              & {} \cup \{(x',y') \mid x' \in (X \cap V_{\alpha, \action{p}}), y' \in (Y \cap V_{\alpha,\action{p}}), (x', y') \in R\} \\
                              & {} \cup \{(y',t) \mid y' \in Y \cap V_{\alpha, \action{p}}\}
    \end{align*}

    Informally speaking, \(V_{\alpha,\action{p}}\) contains the designated source and target vertices \(s\) and \(t\), as well as the support of \(\beta(x)_\alpha\) and \(\gamma(y)_\alpha\) restricted to tuples having \(\action{p}\) in the first coordinate.

    We add an edge from \(s\) (resp.\ \(t\)) to any element in the support of \(\beta(x)_\alpha\) (resp.\ \(\gamma(y)_\alpha\)) contained in the set of vertices.
    Vertices are connected when they are related by \(R\).

	The capacity function \(c_{\alpha,\action{p}}\) is defined as follows: for all \(x' \in X \cap V_{\alpha,p}\) we set \(c_{\alpha,p}(s,x')=\beta(x)_\alpha(\action{p},x')\) and for all \(y' \in Y \cap V_{\alpha, \action{p}}\) we set \(c_{\alpha,\action{p}}(y',t)=\gamma(y)_\alpha(\action{p},y')\). Otherwise, for the remaining \((x',y') \in E\), we choose \(c_{\alpha, \action{p}}(x', y') = +\infty\).

	First, assume that conditions (1--3) of the lemma hold. Let \(S\subseteq V_{\alpha, \action{p}}\) be the set containing \(s\) such that \(C = E(S, V_{\alpha,\action{p}} \setminus S)\) is the minimal capacity cut.
    This set cannot contain the central edges of infinite capacity, as otherwise the corresponding flow would not be admissible.

	Without loss of generality, \(C = A \cup B \) where \(A = \{(s,x') \mid s \in S, x' \in V_{\alpha, \action{p}} \setminus S\}\) and \(B = \{(y',t) \mid y' \in S, t \in V_{\alpha, \action{p}} \setminus S\}\). The capacity of the minimal cut, and (by \Cref{thm:maxflow}) the value of the maximum admissible flow, is given by:
	\[
		F_{\alpha,\action{p}} = \beta(x)_\alpha[\{\action{p}\}\times \pi_2(A)] + \gamma(y)_\alpha[\{\action{p}\} \times \pi_1(B)]
	\]
	Now, consider the set \(A' = \{(s,x') \in E \mid (s,x') \notin A \}\). Since elements of \(A'\) are not included in the cut, all vertices in \(\pi_2(A')\) appear in \(S\).
    Because of that, all elements related by \(R\) with those elements must also be included in \(S\), to avoid including the central edges of infinite capacity in the minimal cut. Hence \(R(\pi_2(A')) \subseteq S\), which implies that \(R(\pi_2(A')) \subseteq \pi_1(B)\).

    Using these observations, we can bound the value of the maximal flow from below:
	\begin{align*}
		F_{\alpha, \action{p}} &= \beta(x)_\alpha[\{\action{p}\}\times \pi_2(A)] + \gamma(y)_\alpha[\{\action{p}\} \times \pi_1(B)]\\
		&\geq \beta(x)_\alpha [\{\action{p}\}\times \pi_2(A)] + \gamma(y)_\alpha[\{\action{p}\} \times R(\pi_2(A'))]\\
		&\geq \beta(x)_\alpha [\{\action{p}\}\times \pi_2(A)] + \beta(x)_\alpha[\{\action{p}\} \times \pi_2(A')] \tag{2}\\
		&= \beta(x)_\alpha [\{\action{p}\} \times (X \cap V_{\alpha, \action{p}})] \tag{\(A\) and \(A'\) are disjoint}
	\end{align*}
    Because of the admissibility constraint, $F_{\alpha,\action{p}}$ is also bounded from above by
    \[\sum_{x' \in X \cap V_{\alpha, \action{p}}} c_{\alpha,\action{p}}(s,x') = \beta(x)_\alpha [\{\action{p}\} \times (X \cap V_{\alpha, \action{p}})]\]
    which means the lower bound given above is tight.
    By symmetric reasoning, we can derive
    \[F_{\alpha,\action{p}} = \gamma(y)_\alpha [\{\action{p}\} \times (Y \cap V_{\alpha, \action{p}})]\]
    In particular, this means that the flow at every edge connected to the source $s$ must be at capacity, because the sum of $c_{\alpha,\action{p}}(s, x')$ for $x' \in V_{\alpha,\action{p}}$ is at its maximum permitted value.

	Now, we define \(\rho : R \to GR\) in the following way.
	\begin{enumerate}
		\item For all \(\alpha \in \At\) and \(o \in \2 + \V\), set \(\rho(x,y)_\alpha(o)=\beta(x)_\alpha(o)=\gamma(y)_\alpha(o)\).
		\item For all \(\alpha \in \At, \action{p} \in \Act\) and \((x',y') \in R\), set \(\rho(x,y)_\alpha(\action{p},(x',y')) = f_{\alpha, \action{p}}(x',y')\)
	\end{enumerate}
	Condition (1) of the coalgebraic definition of the bisimulation holds immediately. To see (2), observe that for any \(\alpha \in \At, p \in \Act, x' \in X\), we have that
	\begin{align*}
		\beta(x)_\alpha(\action{p},x') &= c_{\alpha, \action{p}}(s, x')\\
        &= f_{\alpha, \action{p}}(s,x') \\
        &= \sum_{(x',y') \in E_{\alpha, p}} f_{\alpha, \action{p}}(x', y')\\
        &= \sum_{(x',y') \in R} \rho(x,y)_\alpha(\action{p},(x',y'))
	\end{align*}
	Condition (3) holds by symmetric argument.

	For the converse, assume that conditions (2) and (3) of the coalgebraic definition hold. We show that (2) holds by contradiction.
	Thus, assume the negation of condition (2) in the lemma statement, namely that for some \(\alpha \in \At\) and \(\action{p} \in \Act, A \subseteq X\) we have that
	\[\beta(x)_\alpha[\{\action{p}\} \times A] > \gamma(y)_\alpha[\{\action{p}\} \times R(A)]\] As before, the capacity of the minimal \((s,t)\)-cut for each flow network \((G_{\alpha, \action{p}}, c_{\alpha,\action{p}})\) is given by
	\begin{align*}
		F_{\alpha,p} &= \beta(x)_\alpha[\{\action{p}\} \times \pi_2(A)] + \gamma(y)_\alpha[\{\action{p}\} \times \pi_1(B)]\\
		&\geq  \beta(x)_\alpha[\{\action{p}\} \times \pi_2(A)] + \gamma(y)_\alpha[\{\action{p}\} \times R(\pi_1(A'))]\\
		&> \beta(x)_\alpha[\{\action{p}\} \times \pi_2(A)] + \beta(x)_\alpha[\{\action{p}\} \times R(\pi_1(A'))]\\
		&= \beta(x)_\alpha[\{\action{p}\} \times (X \cap V_{\alpha, \action{p}})]
	\end{align*}
	Hence, the maximal flow is below capacity.
    Thus there exists \(x' \in X\) for which
	\begin{align*}
		\beta(x)_\alpha(\action{p},x')&=c_{\alpha, \action{p}}(s,x')\\
        &> f_{\alpha, \action{p}}(s,x')\\
        &=\sum_{(x',y')\in E_{\alpha, \action{p}}}f_{\alpha, \action{p}} (x',y')\\
        &=\sum_{(x',y')\in R} \rho(x,y)_\alpha(\action{p}, (x',y'))
	\end{align*}
	which leads to contradiction.
    Condition (3) holds by symmetry.
\end{proof}

\larsenskou*
\begin{proof}
    Assume \(R\) is a bisimulation.
    Condition (1) holds immediately as a corollary of \cref{lem:max_flow_bisimulation}.
    To show (2), take any equivalence class \(Q \in {X}/{R}\).
    Since by assumption \(R\) is an equivalence relation, observe that \(R(Q)=Q=R^{-1}(Q)\).
    Assume \((x,y) \in R\).
    We can use \cref{lem:max_flow_bisimulation} to get the following for all \(\alpha \in At\), and \(\action{p} \in \Act\)
	\[
		\beta(x)_\alpha[\{\action{p}\} \times Q] \leq \gamma(x)_\alpha[\{\action{p}\} \times Q] \leq \beta(y)_\alpha[\{\action{p}\} \times Q]
	\]
	from which we can conclude (2).

	For the converse, observe that condition (1) of \cref{lem:max_flow_bisimulation} holds immediately.
    To see (2) take an arbitrary \(A \subseteq X\).
    Let \(A / {R}\) be the quotient of \(A\) by the relation \(R\) and let \(X / {R}\) be the quotient of \(X\) by \(R\).
    Observe that \(A / {R}\) is a partition of \(A\) and \(X / {R}\) is a partition of \(X\).

    Moreover, for each equivalence class \(P \in A / {R}\), there exists an equivalence class \(Q_P \in X / {R}\), such that \(P \subseteq Q_P = R(P)\).
    Because of monotonicity, we also have that \(\beta(x)_\alpha[{\action{p}} \times P] \leq \beta(x)_\alpha[\{\action{p}\} \times Q_P]\) for all \(\alpha \in \At\), \(\action{p} \in \Act\).
    By \(\sigma\)-additivity we have that
	\begin{align*}
		\beta(x)_\alpha[\{\action{p}\} \times A] &= \beta(x)_\alpha\left[\{\action{p}\} \times \bigcup_{P \in A / {R}} P \right] \\
        &= \sum_{P \in A / {R}} \beta(x)_\alpha[\{\action{p}\} \times P]\\
        &\leq \sum_{P \in A / {R}} \beta(y)_\alpha[\{\action{p}\} \times Q_P]\\
        &= \beta(y)_\alpha\left[\{\action{p}\} \times \bigcup_{P \in A / {R}} Q_P\right]\\
        &= \beta(y)_\alpha\left[\{\action{p}\} \times \bigcup_{P \in A / {R}} R(P)\right]\\
        &= \beta(y)_\alpha[\{\action{p}\} \times R(A)]
	\end{align*}
	which proves (2) for all \(\alpha \in \At, \action{p} \in \Act\). The case for (3) proceeds symmetrically.
\end{proof}

\subsection{Order theoretic characterisation of bisimulations}\label{apx:order}
The bisimulation functional $\Phi$ (\cref{def:bisimulation_functional}) can be shown to be monotone, while bisimulations can be characterised as postfixed points of that operator.
\begin{restatable}{lemma}{bisimfunctionalproperties}\label{lem:properties_of_bisim_functional}
    Let \((X, \beta)\) and \((Y, \gamma)\) be \(\G\)-coalgebras.
    The following hold:
    \begin{enumerate}
        \item \(\Phi_{\beta, \gamma}\) is monotone with respect to inclusion order
        \item \(R\) is a bisimulation between \((X, \beta)\) and \((Y, \gamma)\) if and only if \(R \subseteq \Phi_{\beta, \gamma}(R)\)
    \end{enumerate}
\end{restatable}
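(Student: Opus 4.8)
The plan is to prove both parts by unwinding \cref{def:bisimulation_functional} and, for part (2), invoking \cref{lem:max_flow_bisimulation}.

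For part (1), suppose $R \subseteq S \subseteq X \times Y$ and take $(x,y) \in \Phi_{\beta,\gamma}(R)$; I must verify $(x,y) \in \Phi_{\beta,\gamma}(S)$. Fix $\alpha \in \At$. The first clause of \cref{def:bisimulation_functional} — equality of the probabilities assigned to each $o \in \2 + \V$ — does not mention the relation, so it transfers verbatim. For the second clause, observe that $R \subseteq S$ implies $R(A) \subseteq S(A)$ for every $A \subseteq X$; since the map $C \mapsto \gamma(y)_\alpha[\{\action{p}\} \times C]$ is monotone (it is a sum of non-negative terms), we get $\gamma(y)_\alpha[\{\action{p}\} \times R(A)] \leq \gamma(y)_\alpha[\{\action{p}\} \times S(A)]$, and composing this with the hypothesis $\beta(x)_\alpha[\{\action{p}\}\times A] \leq \gamma(y)_\alpha[\{\action{p}\} \times R(A)]$ yields the clause for $S$. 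The third clause follows symmetrically using $R^{-1}(B) \subseteq S^{-1}(B)$. Hence $\Phi_{\beta,\gamma}$ is monotone.

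For part (2), I would simply observe that, after expanding \cref{def:bisimulation_functional}, the assertion $R \subseteq \Phi_{\beta,\gamma}(R)$ says exactly that every $(x,y) \in R$ satisfies, for all $\alpha \in \At$, the three conditions: $\beta(x)_\alpha(o)=\gamma(y)_\alpha(o)$ for all $o \in \2 + \V$; $\beta(x)_\alpha[\{\action{p}\}\times A]\leq \gamma(y)_\alpha[\{\action{p}\} \times R(A)]$ for all $A \subseteq X$ and $\action{p} \in \Act$; and $\gamma(y)_\alpha[\{\action{p}\}\times B]\leq \beta(x)_\alpha[\{\action{p}\} \times R^{-1}(B)]$ for all $B \subseteq Y$ and $\action{p} \in \Act$. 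These are precisely conditions (1)--(3) of \cref{lem:max_flow_bisimulation}, which characterise $R$ as a bisimulation. Therefore $R \subseteq \Phi_{\beta,\gamma}(R)$ if and only if $R$ is a bisimulation.

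I do not expect a genuine obstacle here: the one substantive ingredient is \cref{lem:max_flow_bisimulation} itself, whose proof passes through the max-flow min-cut theorem, and once that is available the present lemma is a matter of matching definitions. The only point that warrants a moment's care is confirming that the quantifier "$\alpha \in \At$" in \cref{def:bisimulation_functional} and the quantifier "for all $(x,y) \in R$ and $\alpha \in \At$" in \cref{lem:max_flow_bisimulation} are arranged compatibly when one rewrites membership in $\Phi_{\beta,\gamma}(R)$ as a pointwise condition on pairs — they are, since both range universally over $\At$ for each fixed pair.
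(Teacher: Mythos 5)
Your proposal is correct and follows essentially the same route as the paper: part (1) via monotonicity of $C \mapsto \gamma(y)_\alpha[\{\action{p}\}\times C]$ together with $R(A)\subseteq S(A)$ (and the symmetric case), and part (2) by noting that $R \subseteq \Phi_{\beta,\gamma}(R)$ is just a pointwise restatement of the conditions in \cref{lem:max_flow_bisimulation}. No gaps.
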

\begin{proof}
    For (1), let \(R \subseteq S \subseteq X \times Y\) and assume that \((x,y) \in \Phi_{\beta, \gamma}(R)\).
    Then for all \(\alpha \in \At\) and  \(o \in \2 + \V\) we have that \(\beta(x)_\alpha(o)=\gamma(y)_\alpha(o)\).
    For all \(\alpha \in \At\), \(\action{p} \in \Act\) and \(A \subseteq X\) we have that \(\beta(x)_\alpha[\{\action{p}\} \times A] \leq \gamma(y)_\alpha[\{\action{p}\} \times R(A)]\).
    Since \(R \subseteq S\), we have that \(R(A) \subseteq S(A)\).
    By monotonicity, it holds that \(\gamma(y)_\alpha(\{\action{p}\} \times R(A)) \leq \gamma(y)[\{\action{p}\} \times S(A)]\) and therefore \(\beta(x)_\alpha[\{\action{p}\} \times A] \leq \gamma(y)_\alpha[\{\action{p}\} \times S(A)]\).
    The remaining case is symmetric.
    Therefore \((x,y)\in \Phi_{\beta, \gamma}(S)\), which proves that \(\Phi_{\beta, \gamma}(R) \subseteq \Phi_{\beta, \gamma}(S)\).

    (2) is trivial as it is rephrasing of \cref{lem:max_flow_bisimulation}.
\end{proof}

Recall that the Knaster-Tarski fixpoint theorem states that the greatest fixpoint of a monotone endofunction \(f : X \to X\) on a complete lattice \((X, \sqsubseteq)\) is given by the following
\[
\gfp(f) = \bigsqcup \{ x \in X \mid x \sqsubseteq f(x)\}
\]

In the following, let \(\bisim_{\beta, \gamma}\) denote the greatest bisimulation between \((X, \beta)\) and \((Y, \gamma)\). The subscripts can be omitted when the coalgebras are obvious from the context.
\begin{restatable}{corollary}{greatestfixpoint}\label{cor:greatest_fixpoint}
    Let \((X, \beta)\) and \((Y, \gamma)\) be \(\G\)-coalgebras. The greatest fixpoint \(\gfp(\Phi_{\beta, \gamma})\) of the functional \(\Phi_{\beta, \gamma}\) is the greatest bisimulation between \((X, \beta)\) and \((Y, \gamma)\).
\end{restatable}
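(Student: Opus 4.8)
The plan is to invoke the Knaster--Tarski fixpoint theorem on the powerset lattice $(2^{X \times Y}, \subseteq)$, which is a complete lattice, together with the two facts about $\Phi_{\beta,\gamma}$ already established in \cref{lem:properties_of_bisim_functional}. First I would recall that, by part (1) of that lemma, $\Phi_{\beta,\gamma}$ is monotone with respect to inclusion, so the hypotheses of Knaster--Tarski are met and the displayed formula
\[
    \gfp(\Phi_{\beta,\gamma}) = \bigcup \{ R \subseteq X \times Y \mid R \subseteq \Phi_{\beta,\gamma}(R) \}
\]
applies verbatim (with $\sqsubseteq$ instantiated to $\subseteq$ and $\bigsqcup$ to $\bigcup$).

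Next I would rewrite the index set of this union using part (2) of \cref{lem:properties_of_bisim_functional}: a relation $R \subseteq X \times Y$ satisfies $R \subseteq \Phi_{\beta,\gamma}(R)$ if and only if $R$ is a bisimulation between $(X,\beta)$ and $(Y,\gamma)$. Hence $\gfp(\Phi_{\beta,\gamma})$ is precisely the union of all bisimulations between the two coalgebras. It then remains to argue that this union is itself the \emph{greatest} bisimulation, i.e.\ that it is a bisimulation and that it contains every bisimulation. Containment is immediate from the union. For being a bisimulation, I would use that $\gfp(\Phi_{\beta,\gamma})$ is a fixpoint, in particular a postfixed point, of $\Phi_{\beta,\gamma}$, so by \cref{lem:properties_of_bisim_functional}(2) again it is a bisimulation; alternatively one can note directly that $\Phi_{\beta,\gamma}$ preserves unions of postfixed points by monotonicity, so the union lies below its own image. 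Either way we conclude $\gfp(\Phi_{\beta,\gamma}) = {\bisim_{\beta,\gamma}}$.

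This argument is essentially bookkeeping: all the real content has been front-loaded into \cref{lem:properties_of_bisim_functional} (and ultimately into the max-flow/min-cut characterisation of \cref{lem:max_flow_bisimulation}). The only point requiring a sentence of care is the final ``greatest'' claim, since Knaster--Tarski by itself only gives that the greatest fixpoint equals the supremum of postfixed points; one must additionally observe that this supremum is again a postfixed point (hence a bisimulation) — which is exactly why the greatest bisimulation exists at all. I do not anticipate any genuine obstacle here; the proof is a two-line corollary once the lemmas are in place.
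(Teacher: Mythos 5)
Your proposal is correct and follows essentially the same route as the paper: Knaster--Tarski gives that $\gfp(\Phi_{\beta,\gamma})$ is a (post)fixed point and equals the union of all postfixed points, and \cref{lem:properties_of_bisim_functional} identifies postfixed points with bisimulations, so it is a bisimulation containing every bisimulation. The only nitpick is the phrase that $\Phi_{\beta,\gamma}$ ``preserves unions of postfixed points''; what monotonicity actually gives is that the union of postfixed points is again a postfixed point, but your primary argument (fixpoint $\Rightarrow$ postfixed point $\Rightarrow$ bisimulation) is exactly the paper's and needs no such detour.
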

\begin{proof}
    Because of Knaster-Tarski fixpoint theorem, \(\gfp(\Phi_{\beta, \gamma})\) is a fixpoint, and in particular a post-fixed point.
    Thus, by \cref{lem:properties_of_bisim_functional} it is a bisimulation.

    Let \(S\) be an arbitrary bisimulation between \(\G\)-coalgebras \((X, \beta)\) and \((Y, \gamma)\).
    Again, by \cref{lem:properties_of_bisim_functional} we have that \(S \subseteq \Phi_{\beta, \gamma}(S)\) and therefore
    \[
    S \subseteq \Phi_{\beta, \gamma}(S) \subseteq \bigcup \{R \subseteq X \times Y \mid R \subseteq \Phi_{\beta, \gamma}(R) \} = \gfp(\Phi_{\beta, \gamma})
    \qedhere
    \]
\end{proof}

We can simplify the characterisation of \(\Phi\) when dealing with equivalence relations (similarly to \cref{lem:larsen_skou}).
In terms of notation, we will use \(\Phi_{\beta}\) to denote \(\Phi_{\beta, \beta}\).
\begin{restatable}{lemma}{eqpreservation}\label{lem:preservation_of_equivalences}
    Let \((X, \beta)\) be \(\G\)-coalgebra and let \(R \subseteq X \times X\) be an equivalence relation.
    We have \((x, y) \in \Phi_\beta(R)\) if and only if for all \(\alpha \in \At\), the following are true.
    \begin{enumerate}
        \item For all \(o \in \2 + \V\), \(\beta(x)_\alpha(o)=\beta(y)_\alpha(o)\)
        \item For all \(Q \in X / {R}\) and for all \(\action{p} \in \Act\), we have \(\beta(x)_\alpha[\{p\}\times Q] = \beta(y)_\alpha[\{p\}\times Q]\).
    \end{enumerate}
    Moreover, \(\Phi_\beta(R)\) is also an equivalence relation.
\end{restatable}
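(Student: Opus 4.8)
The plan is to follow the same template as the proof of \cref{lem:larsen_skou}, exploiting the fact that when $R$ is an equivalence relation the two inequalities in \cref{def:bisimulation_functional} are forced to collapse into equalities, and that quantification over arbitrary subsets of $X$ can be reduced to quantification over $R$-classes by additivity.

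For the forward implication, suppose $(x,y) \in \Phi_\beta(R)$. Condition (1) of the lemma is verbatim the first clause of \cref{def:bisimulation_functional}. For condition (2), fix an equivalence class $Q \in X / {R}$; since $R$ is reflexive, symmetric and transitive, $R(Q) = Q = R^{-1}(Q)$. Instantiating the second clause of \cref{def:bisimulation_functional} with $A = Q$ yields $\beta(x)_\alpha[\{\action{p}\} \times Q] \le \beta(y)_\alpha[\{\action{p}\} \times Q]$, and instantiating the third clause with $B = Q$ yields the reverse inequality, so the two are equal.

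For the converse, assume (1) and (2) and verify the three clauses of \cref{def:bisimulation_functional}. The first is immediate. For the second, take an arbitrary $A \subseteq X$ and partition it into $R$-classes: $A / {R}$ is a partition of $A$, and each $P \in A / {R}$ is contained in a unique class $Q_P \in X / {R}$ with $Q_P = R(P)$, so that $R(A) = \bigcup_{P \in A / {R}} Q_P$ and this union is disjoint. Combining monotonicity of $\beta(x)_\alpha[\{\action{p}\} \times -]$, finite additivity (all distributions in sight are finitely supported), and hypothesis (2) applied to each $Q_P$ gives
\begin{align*}
    \beta(x)_\alpha[\{\action{p}\} \times A]
        &= \sum_{P \in A / {R}} \beta(x)_\alpha[\{\action{p}\} \times P]
        \le \sum_{P \in A / {R}} \beta(x)_\alpha[\{\action{p}\} \times Q_P] \\
        &= \sum_{P \in A / {R}} \beta(y)_\alpha[\{\action{p}\} \times Q_P]
        = \beta(y)_\alpha[\{\action{p}\} \times R(A)].
\end{align*}
The third clause follows by the same computation with the roles of $x$ and $y$ swapped, using $R^{-1} = R$. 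Hence $(x,y) \in \Phi_\beta(R)$.

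Finally, to see that $\Phi_\beta(R)$ is an equivalence relation I would argue directly from the characterisation just established: reflexivity holds because $\beta(x)_\alpha(o) = \beta(x)_\alpha(o)$ and $\beta(x)_\alpha[\{\action{p}\} \times Q] = \beta(x)_\alpha[\{\action{p}\} \times Q]$ trivially; symmetry holds because conditions (1) and (2) are symmetric in $x$ and $y$; and transitivity holds by chaining the two families of equalities coming from $(x,y)$ and $(y,z)$. The only genuinely load-bearing step is the reduction in the converse direction from all subsets $A \subseteq X$ to $R$-classes, which is precisely where finite additivity and the equivalence-relation hypothesis on $R$ are used; the remainder is bookkeeping already rehearsed in \cref{lem:larsen_skou}.
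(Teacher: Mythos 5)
Your proposal is correct and follows essentially the same route as the paper: the paper defers the characterisation to the argument of \cref{lem:larsen_skou} (forward direction by instantiating the inequalities of \cref{def:bisimulation_functional} at an equivalence class $Q$ with $R(Q)=Q=R^{-1}(Q)$; converse by partitioning an arbitrary $A$ into $R$-classes and using monotonicity plus additivity over the disjoint classes $Q_P$), and then proves that $\Phi_\beta(R)$ is an equivalence relation exactly as you do, via reflexivity, symmetry and transitivity of equality in the established characterisation. No gaps.
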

\begin{proof}
    The proof of the first claim is identical to the proof of \cref{lem:larsen_skou} and hence we only focus on showing that that \(\Phi_\beta(R)\) is also an equivalence relation.

    As for the latter claim, the first half of the lemma tells us that we can use the simpler characterisation of \(\Phi_\beta\).
    Reflexivity holds immediately, as for all \(x \in X\) and all \(\alpha \in \At\)
    \begin{enumerate}
        \item For all \(o \in \2 + \V\), \(\beta(x)_\alpha(o) = \beta(x)_\alpha(o)\).
        \item For all \(\action{p} \in \Act\), \(Q \in X / R\), we have \(\beta(x)_\alpha[\{\action{p}\} \times Q] = \beta(x)_\alpha[\{\action{p}\} \times Q]\).
    \end{enumerate}
    and hence \((x,x) \in \Phi_\beta(R)\).

    To see symmetry, assume that \((x,y) \in \Phi_\beta(R)\) for some \(x,y \in X\). Then, because of symmetry of equality we have that all \(\alpha \in \At\)
    \begin{enumerate}
        \item For all \(o \in \2 + \V\), \(\beta(y)_\alpha(o) = \beta(x)_\alpha(o)\).
        \item For all \(\action{p} \in \Act\), \(Q \in X / R\), we have \(\beta(y)_\alpha[\{\action{p}\} \times Q] = \beta(x)_\alpha[\{\action{p}\} \times Q]\).
    \end{enumerate}
    and hence \((y,x) \in \Phi_\beta(R)\).

    Finally, to see transitivity, assume that \((x,y),(y,z) \in \Phi_\beta(R)\) for some arbitrary \(x,y,z \in X\). Then again, by transitivity of equality we have that for all \(\alpha \in \At \)
    \begin{enumerate}
        \item For all \(o \in \2 + \V\), \(\beta(x)_\alpha(o) = \beta(y)_\alpha(o) =  \beta(z)_\alpha(o)\).
        \item For all \(\action{p} \in \Act\), \(Q \in X / R\), we have \(\beta(x)_\alpha[\{\action{p}\} \times Q] = \beta(y)_\alpha[\{\action{p}\} \times Q]= \beta(z)_\alpha[\{\action{p}\} \times Q]\)
    \end{enumerate}
    and hence \((y,z) \in \Phi_\beta(R)\).
\end{proof}
Since \(\Phi_\beta\) preserves equivalence relations, it can also be viewed as an endofunction on the lattice of equivalence relations, ordered by inclusion (refinement).
Because \(\G\) preserves weak pullbacks, the greatest bisimulation is an equivalence relation~\cite[Corollary 5.6]{Rutten:2000:Universal}.
Thus, to characterise the greatest bisimulation, we can restrict ourselves to equivalence relations.

Before continuing, we recall some definitions.
An \emph{\(\omega\)-cochain} is a sequence \(\{x_i\}_{i\in\omega}\) of elements of \(X\), such that for all \(i \in \omega\), we have that \(x_i \sqsupseteq x_{i+1}\).
We call an endofunction \(f : X \to X\) \emph{\(\omega\)-cocontinuous} if it preserves meets of \(\omega\)-cochains, that is \(f\left(\sqcap_{i \in \omega} x_i\right)=\sqcap_{i \in \omega}f(x_i)\).

\begin{restatable}{proposition}{orderproperties}\label{prop:prop_order}
Let \((X, \sqsubseteq)\) be a complete lattice and \(f: X \to X\) an endofunction.
\begin{enumerate}
    \item (Kleene fixpoint theorem) If \(f\) is \(\omega\)-cocontinuous, then it possesses a greatest fixed point given by \(\gfp(f) = \sqcap_{i \in \omega} f^{(i)}(\top)\) where \(f^{(0)}=\id\) and \(f^{(n+1)}=f^{(n)} \circ f\).
    \item If \(f\) is monotone then for every \(\omega\)-cochain \(\{x_i\}_{i \in \omega}\) it holds that \(f\left(\sqcap_{i \in \omega} x_i\right)\sqsubseteq\sqcap_{i \in \omega}f(x_i)\).
\end{enumerate}
\end{restatable}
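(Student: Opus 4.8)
The plan is to handle the two parts in turn; part (2) is an immediate consequence of monotonicity, while part (1) is the standard order-dual of the Kleene fixed-point construction.

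For part (2), I would argue as follows. Given an \(\omega\)-cochain \(\{x_i\}_{i\in\omega}\), for every \(j \in \omega\) we have \(\sqcap_{i\in\omega} x_i \sqsubseteq x_j\), so monotonicity of \(f\) yields \(f(\sqcap_{i\in\omega} x_i) \sqsubseteq f(x_j)\). Since \(j\) was arbitrary, \(f(\sqcap_{i\in\omega} x_i)\) is a lower bound of the family \(\{f(x_j)\}_{j\in\omega}\), and hence lies below its meet \(\sqcap_{j\in\omega} f(x_j)\), which is exactly the claim.

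For part (1), I would first observe that \(\omega\)-cocontinuity already forces monotonicity: if \(a \sqsubseteq b\), then \(b, a, a, a, \dots\) is an \(\omega\)-cochain with meet \(a\), so \(f(a) = f(b) \sqcap f(a) \sqsubseteq f(b)\). Next I would check that \(\{f^{(i)}(\top)\}_{i\in\omega}\) is an \(\omega\)-cochain: \(f^{(1)}(\top) = f(\top) \sqsubseteq \top = f^{(0)}(\top)\) because \(\top\) is greatest, and applying \(f\) and monotonicity propagates \(f^{(i+1)}(\top) \sqsubseteq f^{(i)}(\top)\) to \(f^{(i+2)}(\top) \sqsubseteq f^{(i+1)}(\top)\). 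Setting \(a = \sqcap_{i\in\omega} f^{(i)}(\top)\), \(\omega\)-cocontinuity gives
\[
f(a) = \sqcap_{i\in\omega} f(f^{(i)}(\top)) = \sqcap_{i\in\omega} f^{(i+1)}(\top) = \sqcap_{i\geq 1} f^{(i)}(\top) = \sqcap_{i\geq 0} f^{(i)}(\top) = a,
\]
the fourth equality using that the term \(f^{(0)}(\top) = \top\) does not affect the meet; thus \(a\) is a fixed point. To see it is the greatest, I would take any \(b\) with \(f(b) = b\): then \(b \sqsubseteq \top = f^{(0)}(\top)\), and from \(b \sqsubseteq f^{(i)}(\top)\) monotonicity gives \(b = f(b) \sqsubseteq f(f^{(i)}(\top)) = f^{(i+1)}(\top)\), so by induction \(b \sqsubseteq f^{(i)}(\top)\) for all \(i\), hence \(b \sqsubseteq a\).

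I do not expect a real obstacle: the proposition is classical. The only points that merit a moment's care are the index shift in the displayed computation of \(f(a)\) (which relies on \(\top\) being absorbed by the meet) and the preliminary observation that \(\omega\)-cocontinuity entails monotonicity, which is used both to show \(\{f^{(i)}(\top)\}_{i\in\omega}\) is decreasing and in the greatest-fixed-point argument.
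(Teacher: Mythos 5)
Your proposal is correct. For part (2) your argument is essentially identical to the paper's: \(\sqcap_{i\in\omega} x_i \sqsubseteq x_j\) for each \(j\), monotonicity makes \(f(\sqcap_{i\in\omega} x_i)\) a lower bound of \(\{f(x_j)\}_{j\in\omega}\), hence it lies below the meet. For part (1) the paper gives no proof at all (it is invoked as the classical dual Kleene fixpoint theorem), so your self-contained argument goes beyond the paper; it is sound, including the two points you flag — deriving monotonicity from \(\omega\)-cocontinuity via the cochain \(b, a, a, \dots\), and the index shift \(\sqcap_{i\geq 1} f^{(i)}(\top) = \sqcap_{i\geq 0} f^{(i)}(\top)\) — and the greatest-fixed-point step by induction on \(b \sqsubseteq f^{(i)}(\top)\) is correct.
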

\begin{proof}
    We prove the second claim.
    Take a descending \(\omega\)-cochain \(\{x_i\}_{i \in \omega}\).
    Because of the monotonicity, applying \(f\) to the elements of that cochain yields another descending \(\omega\)-cochain \(\{f(x_i)\}_{i \in \omega}\).
    Because \(\sqcap_{i \in \omega} x_i \sqsubseteq x_i\) for all \(i \in \omega\), by monotonicity we have that \(f\left(\sqcap_{i \in \omega} x_i  \right) \sqsubseteq f(x_i)\), which makes \(f\left(\sqcap_{i \in \omega} x_i  \right)\) a lower bound of the cochain \(\{f(x_i)\}_{i \in \omega}\) and hence it is below its meet.
    Therefore, it holds that \(f \left(\sqcap_{i \in \omega} x_i\right) \sqsubseteq \sqcap_{i \in \omega} f(x_i)\).
\end{proof}

Given a \(\G\)-coalgebra \((X, \beta)\), we show the \(\omega\)-cocontinuity of \(\Phi_\beta\) on the lattice of equivalence relations on the set \(X\), by readapting the result of Baier~\cite[Lemma~3.7.5]{Baier:1998:Thesis}.

\begin{restatable}{lemma}{bisimfixpoint}\label{lem:bisimilarity_as_fixpoint}
    Let \((X, \beta)\) be \(\G\)-coalgebra.
    Then \(\Phi_{\beta}\) is \(\omega\)-cocontinous on the lattice of equivalence relations on \(X\).
\end{restatable}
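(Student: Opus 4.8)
The plan is to fix an arbitrary descending $\omega$-cochain $\{R_i\}_{i\in\omega}$ of equivalence relations on $X$, write $R := \bigcap_{i\in\omega} R_i$ for its meet (which is again an equivalence relation), and prove the two inclusions $\Phi_\beta(R) \subseteq \bigcap_{i\in\omega}\Phi_\beta(R_i)$ and $\bigcap_{i\in\omega}\Phi_\beta(R_i) \subseteq \Phi_\beta(R)$. Throughout I would use that $\Phi_\beta$ restricts to an endofunction on the complete lattice of equivalence relations on $X$ (by \cref{lem:preservation_of_equivalences}), in which meets of descending chains are intersections. The first inclusion is immediate: $\Phi_\beta$ is monotone by \cref{lem:properties_of_bisim_functional}, so \cref{prop:prop_order}(2) yields it for free.

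The bulk of the work is the reverse inclusion. I would take a pair $(x,y)$ lying in $\Phi_\beta(R_i)$ for every $i$, fix $\alpha\in\At$, and verify that $(x,y)$ satisfies the two conditions of the simplified characterisation of $\Phi_\beta$ on equivalence relations (\cref{lem:preservation_of_equivalences}, applicable since $R$ is an equivalence relation). Condition~(1), equality of the $\2+\V$-outputs, does not mention the relation at all, so it transfers verbatim. For condition~(2) the key idea is to exploit finite support: let $S\subseteq X$ be the finite set of all states $x'$ such that $(\action{p},x')\in\supp(\beta(x)_\alpha)\cup\supp(\beta(y)_\alpha)$ for some $\action{p}\in\Act$. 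On the finite set $S$, the sequence $\{R_i\cap(S\times S)\}_{i\in\omega}$ is a descending chain of equivalence relations, whose partitions only ever get finer and hence stabilise at some index $N$; consequently $R$ and $R_N$ agree on $S$.

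With that in place the transfer is routine bookkeeping. Given $Q\in X/R$ and $\action{p}\in\Act$, if $Q$ meets $S$ pick $s\in Q\cap S$ and let $Q'\in X/R_N$ be its $R_N$-class; since $R$ and $R_N$ agree on $S$ we get $Q\cap S = Q'\cap S$, and since no probability mass of $\beta(x)_\alpha$ or $\beta(y)_\alpha$ sits on states outside $S$, both $\beta(x)_\alpha[\{\action{p}\}\times Q] = \beta(x)_\alpha[\{\action{p}\}\times Q']$ and the analogous identity for $y$ hold; then $(x,y)\in\Phi_\beta(R_N)$ gives $\beta(x)_\alpha[\{\action{p}\}\times Q'] = \beta(y)_\alpha[\{\action{p}\}\times Q']$, and combining the three equalities finishes this case. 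If $Q\cap S=\emptyset$ both masses are $0$. Hence $(x,y)$ meets condition~(2) for $R$, so $(x,y)\in\Phi_\beta(R)$ by \cref{lem:preservation_of_equivalences}.

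The only point requiring genuine care — and the closest thing to an obstacle — is this reduction to the finite set $S$ and the stabilisation of the restricted chain: the index $N$ depends on $\alpha$, $x$, and $y$, but this is harmless precisely because the conditions of \cref{lem:preservation_of_equivalences} are checked separately for each $\alpha$ and each pair. Everything else is direct manipulation of the finite-support hypothesis together with the definition of $\Phi_\beta$.
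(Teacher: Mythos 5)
Your proof is correct, and its skeleton matches the paper's: one inclusion comes for free from monotonicity via \cref{prop:prop_order}, and the other is obtained by verifying the two conditions of \cref{lem:preservation_of_equivalences} for a pair $(x,y) \in \bigcap_{i\in\omega}\Phi_\beta(R_i)$, with finite support doing the real work. Where you genuinely diverge is in how stabilisation along the chain is extracted. The paper fixes a class $Q \in X/R$, considers the descending chain of approximating classes $Q_i \in X/R_i$ containing it (so that $Q = \bigcap_i Q_i$), notes that the values $\beta(x)_\alpha[\{\action{p}\}\times Q_i]$ can take only finitely many values because $\beta(x)_\alpha$ is finitely supported, and then rules out a strict gap between the stable value and $\beta(x)_\alpha[\{\action{p}\}\times Q]$ by a small contradiction argument; the stabilisation index $i(x,Q)$ there is per class and per state. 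You instead restrict the entire chain $\{R_i\}$ to the finite support set $S$ of $\beta(x)_\alpha$ and $\beta(y)_\alpha$, use that a descending chain of equivalence relations on a finite set is eventually constant, and conclude that $R$ and $R_N$ agree on $S$, whence $Q \cap S = Q' \cap S$ and the relevant masses coincide. This buys a single index $N$ working uniformly for all classes $Q$ and all actions $\action{p}$ (for the fixed $\alpha$, $x$, $y$) and dispenses with the per-class contradiction, at the modest cost of introducing the auxiliary set $S$; the paper's route works directly with the $R$-classes and their $R_i$-approximants without ever naming the support set. Both arguments are sound, and your handling of the dependence of $N$ on $\alpha$, $x$, $y$ is exactly right, since the conditions of \cref{lem:preservation_of_equivalences} are checked pointwise.
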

\begin{proof}
    Because of \cref{prop:prop_order} and  \cref{lem:properties_of_bisim_functional}, to establish \(\omega\)-cocontinuity it suffices to show that for any descending \(\omega\)-cochain of equivalence relations \(\{R_i\}_{i \in \omega}\) ordered by inclusion,
    \[\bigcap_{i \in \omega} \Phi_{\beta}(R_i) \subseteq \Phi_{\beta,}\left( \bigcap_{i \in \omega} R_i \right)\]

    To this end, first note that for any $Q \in X/R$ and $i \in \omega$ there exists a unique $Q_i \in X/R_i$ such that $Q \subseteq Q_i$, and so $Q \subseteq \bigcap_{i \in \omega} Q_i$.
    Conversely, if $x \in Q_i$ for all $i \in \omega$, then take any $y \in Q$ (guaranteed to exist, since $Q$ is an equivalence class).
    Since $y \in Q_i$ for all $i$, we have that $x \mathrel{R_i} y$ for all $i \in \omega$, and so $x \mathrel{R} y$, meaning $x \in Q$.
    This tells us that $\bigcap_{i \in \omega} Q_i \subseteq Q$, and hence $Q = \bigcap_{i \in \omega} Q_i$.
    Furthermore, it follows that for $i \in \omega$ we have $Q_{i+1} \subseteq Q_i$.

    Now, fix $x \in X$; by monotonicity of probability, we find that
    \[\beta(x)_\alpha[\{\action{p}\} \times Q_0] \geq \beta(x)_\alpha[\{\action{p}\} \times Q_{1}] \geq \cdots \geq \beta(x)_\alpha[\{\action{p}\} \times Q]\]
    However, since $\beta(x)_\alpha$ has finite support, this sequence can take only finitely many different values.
    This means that it stabilises for some $i(x, Q) \in \omega$, meaning that for $k \geq i(x, Q)$ it holds that $\beta(x)_\alpha[\{\action{p}\} \times Q_i] = \beta(x)_\alpha[\{\action{p}\} \times Q_{i+1}] \geq \beta(x)_\alpha[\{\action{p}\} \times Q]$.

    Suppose now that $\beta(x)_\alpha[\{\action{p}\} \times Q] < \beta(x)_\alpha[\{\action{p}\} \times Q_i]$ for all $i$.
    Then in particular there exists some $x' \in X$ with $\beta(x)_\alpha(\action{p}, x') > 0$, such that $x' \in Q_i$ for all $i \in \omega$, but $x' \not\in Q$.
    This would contradict that $Q = \bigcap_{i \in \omega} Q_i$, and so we conclude that $\beta(x)_\alpha[\{\action{p}\} \times Q_i] = \beta(x)_\alpha[\{\action{p}\} \times Q]$ for some $i \in \omega$ --- in particular, this should hold for all $i \geq i(x, Q)$.

    Now, given $(x, y) \in \bigcap_{i \in \omega} \Phi_{\beta}(R_i)$, by \Cref{lem:preservation_of_equivalences} we have that $\beta(x)_\alpha(o) = \beta(y)_\alpha(o)$ for all $o \in 2 + \V$, and for all $i \in \omega$ and $S \in X/R_i$ we have $\beta(x)_\alpha[\{\action{p}\} \times S] = \beta(y)_\alpha[\{\action{p}\} \times S]$.
    Given $Q \in X/R$, if we choose $k = \max(i(x, Q), i(y, Q))$, then the above tells us that
    \[
        \beta(x)_\alpha[\{\action{p}\} \times Q]
            = \beta(x)_\alpha[\{\action{p}\} \times Q_k]
            = \beta(y)_\alpha[\{\action{p}\} \times Q_k]
            = \beta(y)_\alpha[\{\action{p}\} \times Q]
    \]
    Both conditions of \Cref{lem:preservation_of_equivalences} are now satisfied, and so $(x, y) \in \Phi_\beta(R) = \Phi_\beta(\bigcap_{i \in \omega} R_i)$.
\end{proof}

\refinementchain*
\begin{proof}
Follows from the \(\omega\)-cocontinuity of \(\Phi_\beta\) and \cref{prop:prop_order}
\end{proof}

Let \(A \subseteq \Exp\) and \(\expr{f}, \in \Exp\).
We define \(A / \expr{f} = A / \expr{f} = \{\expr{g} \in \Exp \mid \expr{g}\seq\expr{f} \in A\} \).

\begin{lemma}\label{lem:cutting_postfixes}
    If \(R \subseteq \Exp \times \Exp\) is a congruence relation with respect to \acro{ProbGKAT} operators and \((\expr{e}, \expr{f}) \in R\) then \(R(A / {\expr{e}}) \subseteq R(A) / \expr{f} \).
\end{lemma}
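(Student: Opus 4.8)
The plan is to unfold the definitions of $R(-)$ and $(-)/\expr{f}$ and exhibit an explicit witness, using only that $R$ respects sequential composition. Concretely, I would start by picking an arbitrary $\expr{h} \in R(A/\expr{e})$. By definition of $R(-)$, there is some $\expr{g} \in A/\expr{e}$ with $\expr{g} \mathrel{R} \expr{h}$, and by definition of $(-)/\expr{e}$ this means $\expr{g}\seq\expr{e} \in A$. The goal $\expr{h} \in R(A)/\expr{f}$ unfolds to $\expr{h}\seq\expr{f} \in R(A)$, i.e.\ to the existence of some element of $A$ that is $R$-related to $\expr{h}\seq\expr{f}$.

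The natural candidate for that element is $\expr{g}\seq\expr{e}$ itself, which we already know lies in $A$. So it remains to check $\expr{g}\seq\expr{e} \mathrel{R} \expr{h}\seq\expr{f}$. This follows immediately from the hypotheses $\expr{g} \mathrel{R} \expr{h}$ and $\expr{e} \mathrel{R} \expr{f}$ together with the assumption that $R$ is a congruence with respect to the ProbGKAT operations, in particular with respect to $\seq$. Chaining these facts gives $\expr{h}\seq\expr{f} \in R(A)$ and hence $\expr{h} \in R(A)/\expr{f}$, establishing the claimed inclusion.

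There is essentially no real obstacle here: the statement is a bookkeeping lemma, and the only ingredient beyond unfolding notation is congruence of $R$ for sequential composition, which is assumed. The one point worth stating carefully is that the witness in $A$ on the right-hand side is \emph{the same} expression $\expr{g}\seq\expr{e}$ produced on the left-hand side, so no additional closure property of $A$ (or of $R$ beyond congruence) is needed.
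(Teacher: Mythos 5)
Your proof is correct and follows essentially the same route as the paper's: take an element of \(R(A/\expr{e})\), use its witness \(\expr{g}\) with \(\expr{g}\seq\expr{e}\in A\), and apply congruence of \(R\) for \(\seq\) to the pairs \((\expr{g},\expr{h})\) and \((\expr{e},\expr{f})\) to land in \(R(A)/\expr{f}\). Your closing remark that the witness in \(A\) is the very same expression \(\expr{g}\seq\expr{e}\), so no extra closure assumptions are needed, matches the paper's argument exactly.
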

\begin{proof}
    If \(\expr{g} \in  R(A / \expr{e})\), then there exists some \(\expr{h} \in A / \expr{e}\) such that \((\expr{h}, \expr{g})\in R\).
    Since \(\expr{h} \in A / \expr{e}\), also \(\expr{h} \seq \expr{e} \in A\).
    Because \(R\) is a congruence relation, we have that \((\expr{h}\seq \expr{e}, \expr{g}\seq \expr{f})\in R\) and hence \(\expr{g}\seq \expr{f} \in R(A)\), which in turn implies that \(\expr{g} \in A / \expr{f}\).
\end{proof}

\begin{lemma}\label{lem:sequencing_cutting_postfixes}
    For all \(\alpha \in \At\), \(\expr{e}, \expr{f} \in \Exp\), \(\action{p} \in \Act\), \(A \subseteq \Exp\) we have that
    \[
    \partial(\expr{e}\seq\expr{f})_\alpha[\{\action{p}\}\times A] = \partial(\expr{e})_\alpha[\{\action{p}\}\times A / \expr{f}] + \partial(\expr{e})_\alpha(\accept)\partial(\expr{f})_\alpha[\{\action{p}\}\times A]
    \]
\end{lemma}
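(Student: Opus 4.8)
The plan is to unfold the definition of the sequencing derivative and compute the mass assigned to $\{\action{p}\}\times A$ directly. Recall that $\partial(\expr{e}\seq\expr{f})_\alpha = \partial(\expr{e})_\alpha \semseq_\alpha \expr{f}$ is obtained by applying the convex extension of $c_{\alpha,\expr{f}}$ to $\partial(\expr{e})_\alpha$, so that for any target $y \in \2 + \V + \Act\times\Exp$ we have $\partial(\expr{e}\seq\expr{f})_\alpha(y) = \sum_{x} \partial(\expr{e})_\alpha(x)\, c_{\alpha,\expr{f}}(x)(y)$, which is a finite sum since $\partial(\expr{e})_\alpha$ is finitely supported.

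First I would fix $\expr{g}\in\Exp$ and evaluate $\partial(\expr{e}\seq\expr{f})_\alpha(\action{p},\expr{g})$ by inspecting which of the three clauses of $c_{\alpha,\expr{f}}$ can place mass on $(\action{p},\expr{g})$. For $x\in\{\reject\}\cup\V$ the value $c_{\alpha,\expr{f}}(x)=\delta_x$ is never of the form $(\action{p},\expr{g})$, so this clause contributes nothing; the clause $x=\accept$ contributes $\partial(\expr{e})_\alpha(\accept)\,\partial(\expr{f})_\alpha(\action{p},\expr{g})$; and the clause $x=(\action{q},\expr{e}')$, which sends mass to $\delta_{(\action{q},\expr{e}'\seq\expr{f})}$, contributes $\partial(\expr{e})_\alpha(\action{p},\expr{e}')$ exactly when $\action{q}=\action{p}$ and $\expr{e}'\seq\expr{f} = \expr{g}$. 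Since $\seq$ is a syntactic constructor, unique readability of the grammar of $\Exp$ gives that $\expr{e}'\seq\expr{f}=\expr{e}''\seq\expr{f}$ forces $\expr{e}'=\expr{e}''$; hence there is at most one $\expr{e}'$ with $\expr{e}'\seq\expr{f}=\expr{g}$, and it exists precisely when $\expr{g}$ has the form $\expr{e}'\seq\expr{f}$. Combining, $\partial(\expr{e}\seq\expr{f})_\alpha(\action{p},\expr{g})$ equals $\partial(\expr{e})_\alpha(\accept)\partial(\expr{f})_\alpha(\action{p},\expr{g})$ plus $\partial(\expr{e})_\alpha(\action{p},\expr{e}')$ when $\expr{g}=\expr{e}'\seq\expr{f}$ for the (unique) such $\expr{e}'$, and plus $0$ otherwise.

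Then I would sum this identity over $\expr{g}\in A$ --- again a finite sum, since all distributions involved are finitely supported, so the rearrangements below are legitimate --- and split it into the two contributions. The first sums to $\partial(\expr{e})_\alpha(\accept)\,\partial(\expr{f})_\alpha[\{\action{p}\}\times A]$. For the second, the injectivity of $\expr{e}'\mapsto\expr{e}'\seq\expr{f}$ lets me reindex the sum over $\{\expr{g}\in A \mid \expr{g}=\expr{e}'\seq\expr{f}\text{ for some }\expr{e}'\}$ as a sum over $\{\expr{e}' \mid \expr{e}'\seq\expr{f}\in A\} = A/\expr{f}$, which gives $\partial(\expr{e})_\alpha[\{\action{p}\}\times A/\expr{f}]$. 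Adding the two and commuting the summands yields the claimed equation. I do not expect a real obstacle here: the only delicate point is justifying the reindexing of the action-transition summand, which rests on unique readability of \acro{ProbGKAT} expressions (so that $\expr{e}'$ is recoverable from $\expr{e}'\seq\expr{f}$) together with finite support of the distributions, and both are readily available.
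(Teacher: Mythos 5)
Your proposal is correct and is essentially the paper's own proof: the paper likewise expands $\partial(\expr{e}\seq\expr{f})_\alpha[\{\action{p}\}\times A]$ as a sum over $A$, splits it into the contribution from the $\accept$-clause and the productive clause of $c_{\alpha,\expr{f}}$, and reindexes the latter over $A/\expr{f}$. Your extra remarks on unique readability of $\seq$ and finite support merely make explicit the bookkeeping the paper leaves implicit.
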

\begin{proof}
    We derive as follows.
    \begin{align*}
        \partial(\expr{e}\seq\expr{f})_\alpha[\{\action{p}\} \times A] &= \sum_{\expr{a} \in A} \partial(\expr{e}\seq\expr{f})_\alpha(\action{p}, \expr{a}) \\
        &= \sum_{\expr{a_0}\seq\expr{f} \in A}\partial(\expr{e})_\alpha(\action{p}, \expr{a_0}) + \sum_{\expr{a}\in A}\partial(\expr{e})_\alpha(\accept)\partial(\expr{f})(\action{p}, \expr{a}) \\
        &= \partial(\expr{e})_\alpha[\{\action{p}\} \times A / \expr{f}] + \partial(\expr{e})_\alpha(\accept)\partial(\expr{f})_\alpha[\{\action{p}\}\times A] \qedhere
    \end{align*}
\end{proof}

When considering the Brzozowski coalgebra \((\Exp, \partial)\), the operator \(\Phi_\partial\) has an another desirable property in addition to preserving equivalences.

\begin{restatable}{lemma}{congruencepreservation}\label{lem:preservation_of_congruences}
    Let \(R\) be a congruence relation with respect to \acro{ProbGKAT} operators. Then \(\Phi_\partial(R)\) is also a congruence relation.
\end{restatable}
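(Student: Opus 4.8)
The plan is to unfold the definition of a congruence. Since $R$ is a congruence it is in particular an equivalence, so by \cref{lem:preservation_of_equivalences} the relation $\Phi_\partial(R)$ is again an equivalence and, since $R$ is an equivalence, membership in $\Phi_\partial(R)$ admits the simplified test: $(x,y)\in\Phi_\partial(R)$ iff for every $\alpha\in\At$ the distributions $\partial(x)_\alpha$ and $\partial(y)_\alpha$ agree on every $o\in\2+\V$ and assign the same mass to $\{\action p\}\times Q$ for every $R$-class $Q$ and every $\action p\in\Act$. It then remains to check that $\Phi_\partial(R)$ is closed under each \acro{ProbGKAT} operation. The $0$-ary cases ($\action p$, $\var v$, $\test b$) hold by reflexivity, and the two choice operators are immediate from the Brzozowski formulas: for $(\expr{e}_1,\expr{e}_2),(\expr{f}_1,\expr{f}_2)\in\Phi_\partial(R)$, $\partial(\expr{e}_1 +_\test b \expr{f}_1)_\alpha$ is literally $\partial(\expr{e}_1)_\alpha$ or $\partial(\expr{f}_1)_\alpha$ (according to whether $\alpha\bleq\test b$), while $\partial(\expr{e}_1 \oplus_\prob r \expr{f}_1)_\alpha=\prob r\,\partial(\expr{e}_1)_\alpha+\prob{(1-r)}\,\partial(\expr{f}_1)_\alpha$, so its mass on any $o\in\2+\V$ or on $\{\action p\}\times Q$ is determined by the corresponding masses of $\partial(\expr{e}_1)_\alpha,\partial(\expr{f}_1)_\alpha$, which by hypothesis (using that $Q$ is an $R$-class) equal those of $\partial(\expr{e}_2)_\alpha,\partial(\expr{f}_2)_\alpha$.

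The substantive case is sequential composition. Take $(\expr{e}_1,\expr{e}_2),(\expr{f}_1,\expr{f}_2)\in\Phi_\partial(R)$; we must show $(\expr{e}_1\seq\expr{f}_1,\expr{e}_2\seq\expr{f}_2)\in\Phi_\partial(R)$. Unfolding the definition of $\semseq_\alpha$ gives $\partial(\expr{e}_i\seq\expr{f}_i)_\alpha(\reject)=\partial(\expr{e}_i)_\alpha(\reject)+\partial(\expr{e}_i)_\alpha(\accept)\,\partial(\expr{f}_i)_\alpha(\reject)$, and likewise with $\var v$ in place of $\reject$, while $\partial(\expr{e}_i\seq\expr{f}_i)_\alpha(\accept)=\partial(\expr{e}_i)_\alpha(\accept)\,\partial(\expr{f}_i)_\alpha(\accept)$; these coincide for $i=1,2$ because the corresponding quantities for the $\expr{e}$'s and $\expr{f}$'s do, which settles the output condition. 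For the transition condition, fix an $R$-class $Q$ and $\action p\in\Act$ and expand with \cref{lem:sequencing_cutting_postfixes}:
\[
    \partial(\expr{e}_i\seq\expr{f}_i)_\alpha[\{\action p\}\times Q]=\partial(\expr{e}_i)_\alpha[\{\action p\}\times Q/\expr{f}_i]+\partial(\expr{e}_i)_\alpha(\accept)\,\partial(\expr{f}_i)_\alpha[\{\action p\}\times Q].
\]
The second summand already agrees for $i=1,2$ (acceptance probabilities of $\expr{e}_1,\expr{e}_2$ are equal by the output condition, and $\partial(\expr{f}_1)_\alpha[\{\action p\}\times Q]=\partial(\expr{f}_2)_\alpha[\{\action p\}\times Q]$ since $Q$ is an $R$-class). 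For the first summand, note that since $R$ is a congruence $Q/\expr{f}_i$ is a union of $R$-classes, and \cref{lem:cutting_postfixes} applied to $(\expr{f}_1,\expr{f}_2)\in R$ and (by symmetry of $R$) to $(\expr{f}_2,\expr{f}_1)\in R$, together with $R(Q)=Q$ and reflexivity of $R$, forces $Q/\expr{f}_1=Q/\expr{f}_2$; since $\partial(\expr{e}_1)_\alpha$ and $\partial(\expr{e}_2)_\alpha$ assign equal mass to $\{\action p\}\times Q'$ for every $R$-class $Q'$, summing over the classes comprising $Q/\expr{f}_1=Q/\expr{f}_2$ yields $\partial(\expr{e}_1)_\alpha[\{\action p\}\times Q/\expr{f}_1]=\partial(\expr{e}_2)_\alpha[\{\action p\}\times Q/\expr{f}_2]$, as needed.

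The two loop constructs are handled in the same way: the derivative of $\expr{e}_i^{(\test b)}$ (resp.\ $\expr{e}_i^{[\prob r]}$) routes a $\action p$-transition of $\expr{e}_i$ into $\expr{e}'\seq\expr{e}_i^{(\test b)}$ (resp.\ $\expr{e}'\seq\expr{e}_i^{[\prob r]}$) and rescales all masses by $\tfrac{1}{1-\partial(\expr{e}_i)_\alpha(\accept)}$ (resp.\ $\tfrac{1}{1-\prob r\,\partial(\expr{e}_i)_\alpha(\accept)}$), a factor equal for $i=1,2$ by the output condition; after dividing it out, the ``cut by the postfix'' argument above applies verbatim with $\expr{e}_i^{(\test b)}$ (resp.\ $\expr{e}_i^{[\prob r]}$) playing the role of $\expr{f}_i$, and the degenerate cases ($\alpha\bleq\test{\bar b}$, a diverging loop, or $\prob r=\prob 1$) are checked directly from preservation of acceptance probabilities. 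The only real obstacle is the first summand for $\seq$: one must match $Q/\expr{f}_1$ with $Q/\expr{f}_2$ even though the bisimulation functional only gives information about $\expr{f}_1$ and $\expr{f}_2$ themselves, which is precisely why \cref{lem:sequencing_cutting_postfixes,lem:cutting_postfixes} are needed and why the hypothesis must be that $R$ --- rather than just $\Phi_\partial(R)$ --- is a congruence; in the intended application $R$ is a stage $\bisim^{(n)}$ of the relation refinement chain, so $\Phi_\partial(R)\subseteq R$ and hence the facts $(\expr{f}_1,\expr{f}_2)\in R$ and $(\expr{e}_1,\expr{e}_2)\in R$ used above are available.
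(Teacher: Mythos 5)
Your proposal follows essentially the same route as the paper's proof: the nullary and choice cases are dispatched via the simplified characterisation of \cref{lem:preservation_of_equivalences}, and the sequencing and loop cases are handled by splitting the transition mass with \cref{lem:sequencing_cutting_postfixes} and matching the postfix-cut sets via \cref{lem:cutting_postfixes} (the paper also uses \cref{lem:swapping_congruent_ends} for this). The only superficial difference is that for those cases the paper verifies the inequality conditions of \cref{def:bisimulation_functional} for arbitrary $A \subseteq \Exp$, whereas you stay with $R$-classes $Q$ and observe that $Q/\expr{f}_1$ is a union of $R$-classes before summing the class-wise equalities; both variants work, since the distributions are finitely supported.

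The point you flag at the end is real, but it is not a divergence from the paper: to conclude $Q/\expr{f}_1 = Q/\expr{f}_2$ you invoke $(\expr{f}_1,\expr{f}_2) \in R$, which the stated hypotheses do not provide, and the paper's proof relies on exactly the same unstated fact --- its applications of \cref{lem:cutting_postfixes} in the sequencing and loop cases are to the pairs $(\expr{f},\expr{h})$, $(\expr{e}^{(\test{b})},\expr{g}^{(\test{b})})$ and $(\expr{e}^{[\prob{r}]},\expr{g}^{[\prob{r}]})$, which the hypotheses place only in $\Phi_\partial(R)$. Some such strengthening is genuinely needed: taking $R$ to be the identity relation (a congruence), $\Phi_\partial(R)$ relates $\test{\one}$ and $\test{\one}\seq\test{\one}$ but not $\action{p}\seq\test{\one}$ and $\action{p}\seq(\test{\one}\seq\test{\one})$, because these have Dirac derivatives onto syntactically distinct continuations; so $\Phi_\partial(R)$ fails to be a congruence there. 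Your repair --- assuming $\Phi_\partial(R)\subseteq R$, so that $(\expr{f}_1,\expr{f}_2)\in R$ and, by congruence of $R$, also the loop pairs lie in $R$ --- is precisely what the sole application (\cref{thm: greatest_bisim_is_congruence}) supplies, where $R$ ranges over the stages $\bisim^{(n)}$ of the decreasing refinement chain and their congruence property is the induction hypothesis. So your argument proves what the paper actually uses, and in making the extra hypothesis explicit you have identified a gap in the lemma as literally stated rather than introduced one.
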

\begin{proof}
    Let \(R \subseteq \Exp \times \Exp\) be a congruence relation, and let \(\expr{e}, \expr{f}, \expr{g}, \expr{h} \in \Exp\), such that \((\expr{e}, \expr{g}), (\expr{f}, \expr{h})\in \Phi_\partial(R)\). Let \(\test{b} \in \Bexp\) and \(\prob{r} \in [0,1]\).
    % First,  we wish to show that \((\trmt{\expr{e}}, \trmt{\expr{g}}) \in \Phi_\partial(R)\).
    % Since \((\expr{e}, \expr{g}) \in \Phi_\partial(R)\), we have in particular that for all \(\alpha \in \At\), \(\partial(\expr{e})_\alpha(\accept)=\partial(\expr{g})_\alpha(\accept)\).
    % Recall that
    % \[ \partial(\trmt{\expr{e}})_\alpha = \partial(\expr{e})_\alpha(\accept)\delta_{\accept} + (\prob{1}-\partial(\expr{e})_\alpha(\accept))\delta_{\reject} \]
    % For all \(\alpha \in \At\)
    % \[\partial(\trmt{\expr{e}})_\alpha(\accept) = \partial(\expr{e})_\alpha(\accept) =  \partial(\expr{g})_\alpha(\accept) = \partial(\trmt{\expr{g}})_\alpha(\accept) \]
    % Similarly, for all \(\alpha \in \At\)
    % \[\partial(\trmt{\expr{e}})_\alpha(\reject) = \prob{1}-\partial(\expr{e})_\alpha(\accept) =  1-\partial(\expr{g})_\alpha(\accept) = \partial(\trmt{\expr{g}})_\alpha(\reject)
    % \]
    % For all \(\alpha \in \At\), \(\var{v} \in \V\), \(\action{p} \in \Act\) and \(Q \in \Exp / R \) observe that
    % \[
    % \partial(\expr{e})_\alpha(\var{v})=0=\partial(\expr{g})_\alpha(\var{v})
    % \]
    % and
    % \[
    % \partial(\expr{e})_\alpha[\{\expr{e}\}\times Q] = 0 = \partial(\expr{g})_\alpha[\{\expr{e}\}\times Q]
    % \]
    % Hence \((\trmt{\expr{e}}, \trmt{\expr{g}}) \in \Phi_\partial(R)\)
    To show that \((\expr{e} \oplus_\prob{r} \expr{f}, \expr{g} \oplus_\prob{r} \expr{h}) \in \Phi_\partial(R)\), observe that for all \(\alpha \in \At\) and all \(o \in \2 + \V\) we have that
    \begin{align*}
            \partial(\expr{e} \oplus_\prob{r} \expr{f})_\alpha(o)&=\prob{r}\partial(\expr{e})_\alpha(o) + \prob{(1-r)}\partial(\expr{f})_\alpha(o) \\
            &=\prob{r}\partial(\expr{g})_\alpha(o) + \prob{(1-r)}\partial(\expr{h})_\alpha(o) \\
            &= \partial(\expr{g} \oplus_\prob{r} \expr{h})_\alpha(o)
    \end{align*}
    Similarly, for all \(\alpha \in \At\), \(\action{p} \in \Act\) and \(Q \in \Exp / R\) we have that
    {
    \small
    \begin{align*}
        \partial(\expr{e} \oplus_\prob{r} \expr{f})_\alpha[\{\action{p}\} \times Q] &= \sum_{\expr{q} \in Q} \partial(\expr{e} \oplus_\prob{r} \expr{f})_\alpha(\action{p},\expr{q}) \\
        &= \sum_{\expr{q} \in Q} \prob{r}\partial(\expr{e})_\alpha(\action{p},\expr{q}) + \prob{(1-r)}\partial(\expr{f})_\alpha(\action{p},
        \expr{q}) \\
        &= \prob{r} \partial(\expr{e})_\alpha[\{\action{p}\} \times Q] + \prob{(1-r)}\partial(\expr{f})_\alpha[\{\action{p}\}\times Q] \\
        &= \prob{r} \partial(\expr{g})_\alpha[\{\action{p}\} \times Q] + \prob{(1-r)}\partial(\expr{h})_\alpha[\{\action{p}\}\times Q] \\
        &= \partial(\expr{g} \oplus_\prob{r} \expr{h})_\alpha[\{\action{p}\}\times Q]
    \end{align*}
    }
    and therefore \((\expr{e} \oplus_\prob{r} \expr{f}, \expr{g} \oplus_\prob{r} \expr{h}) \in \Phi_\partial(R) \)

    To show that \((\expr{e} +_\test{b} \expr{f}, \expr{g} +_\test{b} \expr{h}) \in \Phi_\partial(R)\), consider the case when \(\alpha \bleq \test{b}\).
    For all \(o \in \2 + \V\) we then have that
    \[
        \partial(\expr{e} +_\test{b} \expr{f})_\alpha(o) = \partial(\expr{e})_\alpha(o)
        = \partial(\expr{g})_\alpha(o)
        = \partial(\expr{g} +_\test{b} \expr{h})_\alpha(o)
    \]
    Similarly, for all \(\action{p} \in \Act\) and \(Q \in \Exp / R\) we have that
    \[
        \partial(\expr{e} +_\test{b} \expr{f})_\alpha[\{\action{p}\} \times Q] = \partial(\expr{e})_\alpha[\{\action{p}\} \times Q]
        = \partial(\expr{g})_\alpha[\{\action{p}\} \times Q]
        = \partial(\expr{g} +_\test{b} \expr{h})_\alpha[\{\action{p}\} \times Q]
    \]
    The case when \(\alpha \bleq \bar{\test{b}}\) follows analogously.

    Now, we wish to show that \((\expr{e}\seq \expr{f}, \expr{g}\seq\expr{h} ) \in \Phi_\partial(R)\). Instead of using the simpler characterisation of \(\Phi\) when dealing with equivalence relations from \cref{lem:preservation_of_equivalences}, we will establish the conditions of \cref{def:bisimulation_functional}. For all \(\alpha \in \At\), we have that
    \[
    \partial(\expr{e}\seq\expr{f})_\alpha(\accept) = \partial(\expr{e})_\alpha(\accept)\partial(\expr{f})_\alpha(\accept)
    =\partial(\expr{g})_\alpha(\accept)\partial(\expr{h})_\alpha(\accept)
    =\partial(\expr{e}\seq\expr{f})_\alpha(\accept)
    \]
    For all \(\alpha \in \At\) and \(o \in \{\reject\} + \V\) it holds that
    \begin{align*}
        \partial(\expr{e}\seq\expr{f})_\alpha(o) &= \partial(\expr{e})_\alpha(o) + \partial(\expr{e})_\alpha(\accept)\partial(\expr{f})_\alpha(o) \\
        &=\partial(\expr{g})_\alpha(o) + \partial(\expr{g})_\alpha(\accept)\partial(\expr{h})_\alpha(o) \\
        &=\partial(\expr{g}\seq\expr{h})_\alpha(o)
    \end{align*}
    Now, we use \cref{lem:cutting_postfixes} and \cref{lem:sequencing_cutting_postfixes} to show that for all \(\alpha \in \Act\), \(\action{p} \in \Act\) and \(A \subseteq \Exp\)
    \begin{align*}
        \partial(\expr{e}\seq\expr{f})_\alpha[\{\action{p}\} \times A] &= \partial(\expr{e})_\alpha[\{\action{p}\}\times A / \expr{f}] + \partial(\expr{e})_\alpha(\accept)\partial(\expr{f})_\alpha[\{\action{p}\}\times A]\\
        &\leq \partial(\expr{g})_\alpha[\{\action{p}\}\times R(A / \expr{f})] + \partial(\expr{g})_\alpha(\accept)\partial(\expr{h})_\alpha[\{\action{p}\}\times R(A)]\\
        &\leq \partial(\expr{g})_\alpha[\{\action{p}\}\times R(A) / \expr{g}] + \partial(\expr{g})_\alpha(\accept)\partial(\expr{h})_\alpha[\{\action{p}\}\times R(A)]\\
        &=\partial(\expr{g}\seq\expr{h})_\alpha[\{\action{p}\}\times R(A)]
    \end{align*}
    The proof of \(\partial(\expr{g}\seq \expr{h})_\alpha[\{\action{p}\}\times B] \leq \partial(\expr{g}\seq \expr{h})_\alpha[\{\action{p}\}\times R^{-1}(B)]\) for arbitrary \(B \subseteq \Exp\) is analogous.

    Now, consider the case of showing that \((\expr{e}^{(\test{b})}, \expr{f}^{(\test{b})})\in\Phi_\partial(R)\). First assume that \(\alpha \in \bleq \bar{\test{b}}\).
    In such a case, we have that:
    \[
    \partial(\expr{e}^{(\test{b})})_\alpha(\accept) = \prob{1} =\partial(\expr{g}^{(\test{b})})_\alpha(\accept)
    \]
    Since all probability mass is assigned to the element \(\accept\), all remaining conditions of \(\Phi_\partial\) are immediately satisfied.
    Now, assume that \(\alpha \bleq \test{b}\).
    First, consider the subcase when \(\partial(\expr{e})_\alpha(\accept)=\prob{1}\). Since \((\expr{e}, \expr{g}) \in \Phi_\partial(R)\), it means that also \(\partial(\expr{g})_\alpha(\accept)=\prob{1}\).
    Hence,
    \[
    \partial(\expr{e}^{(\test{b})})_\alpha(\reject) = \prob{1} = \partial(\expr{g}^{(\test{b})})_\alpha(\reject)
    \]
    Therefore, in such a subcase we have that \((\expr{e}^{(\test{b})},\expr{g}^{(\test{b})})\in\Phi_\partial(R)\).
    For the remainder of this case, we can safely assume that \(\partial(\expr{e})_\alpha(\accept)<\prob{1}\) and \(\partial(\expr{g})_\alpha(\accept) < 1\). We have that
    \[
    \partial(\expr{e}^{(\test{b})})_\alpha(\accept) = \prob{0} =  \partial(\expr{g}^{(\test{b})})_\alpha(\accept)
    \]
    For all \(o \in \{\reject\} + \V\) it holds that
    \[
        \partial(\expr{e}^{(\test{b})})_\alpha(o) =\frac{\partial(\expr{e})_\alpha(o)}{\prob{1}-\partial(\expr{e})_\alpha(\accept)}
        =\frac{\partial(\expr{g})_\alpha(o)}{\prob{1}-\partial(\expr{g})_\alpha(\accept)}
        = \partial(\expr{g}^{(\test{b})})_\alpha(o)
    \]
    Again, we use \cref{lem:cutting_postfixes} to show that for arbitrary \(\action{p} \in \Act\) and \(A \subseteq \Exp\), we have
    \begin{align*}
        \partial(\expr{e}^{(\test{b})})_\alpha[\{\action{p}\}\times A] &= \sum_{\expr{a} \in A}  \partial(\expr{e}^{(\test{b})})_\alpha(\action{p}, \expr{a}) \\
        &= \sum_{\expr{a_o}\seq\expr{e}^{(\test{b})} \in A} \frac{\partial(\expr{e})_\alpha(\action{p}, \expr{a_0})}{\prob{1}-\partial(\expr{e})_\alpha(\accept)} \\
        &= \frac{\prob{1}}{\prob{1}-\partial(\expr{g})_\alpha(\accept)} \partial(\expr{e})_\alpha[\{\action{p}\}\times A / {\expr{e}^{(\test{b})}}]\\
        &\leq \frac{\prob{1}}{\prob{1}-\partial(\expr{g})_\alpha(\accept)} \partial(\expr{g})_\alpha[\{\action{p}\}\times R(A / {\expr{e}^{(\test{b})}})]\\
        &\leq \frac{\prob{1}}{\prob{1}-\partial(\expr{e})_\alpha(\accept)} \partial(\expr{g})_\alpha[\{\action{p}\}\times R(A) / {\expr{g}^{(\test{b})}}]\\
        &= \sum_{\expr{a_o}\seq\expr{g}^{(\test{b})} \in R(A)} \frac{\partial(\expr{g})_\alpha(\action{p}, \expr{a_0})}{\prob{1}-\partial(\expr{g})_\alpha(\accept)} \\
        &= \partial(\expr{g}^{(\test{b})})_\alpha[\{\action{p}\} \times R(A)]
    \end{align*}
    The proof that \(\partial(\expr{g}^{(\test{b})})[\{\action{p}\}\times B] \leq \partial(\expr{e}^{(\test{b})})_\alpha[\{\action{p}\} \times R^{-1}(B)]\) for arbitrary \(B \subseteq \Exp\) and \(\action{p} \in \Act\) is symmetric.
    We conclude that \((\expr{e}^{(\test{b})}, \expr{g}^{(\test{b})})\in \Phi_\partial(R)\).

    Finally, to show that \((\expr{e}^{[\prob{r}]}, \expr{g}^{[\prob{r}]})\in \Phi_\partial(R)\) we first consider the subcase when \(\prob{r} = \prob{1}\) and \(\partial(\expr{e})_\alpha(\accept)=\prob{1}\).
    Here, we also have that \(\partial(\expr{g})_\alpha(\accept) = \prob{1}\). Therefore, we have
    \[
    \partial(\expr{e}^{[\prob{r}]})_\alpha(\reject) = \prob{1} =  \partial(\expr{g}^{[\prob{r}]})_\alpha(\reject)
    \]
    which is enough to show that in that subcase indeed \((\expr{e}^{[\prob{r}]}, \expr{g}^{[\prob{r}]})\in \Phi_\partial(R)\).
    For the remainder, we can safely assume that \(\prob{r} \partial(\expr{e})_\alpha(\accept) \neq 1\) and hence so does \(\partial(\expr{g})_\alpha(\accept)\). We have that
    \[
    \partial(\expr{e}^{[\prob{r}]})_\alpha(\accept) = \prob{1-r} = \partial(\expr{g}^{[\expr{r}]})_\alpha(\accept)
    \]
    For all \(o \in \{\reject\} + \V\), we have that
    \[
        \partial(\expr{e}^{[\prob{r}]})_\alpha(o) = \frac{\prob{r}\partial(\expr{e})_\alpha(o)}{\prob{1}-\prob{r}\partial(\expr{e})_\alpha(\accept)}
        =\frac{\prob{r}\partial(\expr{g})_\alpha(o)}{\prob{1}-\prob{r}\partial(\expr{g})_\alpha(\accept)}
        = \partial(\expr{g}^{[\prob{r}]})_\alpha(o)
    \]
    Finally, we use \cref{lem:cutting_postfixes} to show that for arbitrary \(\action{p} \in \Act\) and \(A \subseteq \Exp\) we have
    \begin{align*}
        \partial(\expr{e}^{[\prob{r}]})[\{\action{p}\}\times A]&=\sum_{\expr{a} \in A} \partial(\expr{e}^{[\prob{r}]})_\alpha(\action{p}, \expr{a}) \\
        &=\frac{\prob{r}}{\prob{1}-\prob{r}\partial(\expr{e})_\alpha(\accept)}\sum_{\expr{a_0}\seq\expr{e}^{[\prob{r}]}} \partial(\expr{e})_\alpha(\action{p}, \expr{a_0})\\
        &= \frac{\prob{r}}{\prob{1}-\prob{r}\partial(\expr{e})_\alpha(\accept)} \partial(\expr{e}^{[\prob{r}]})_\alpha[\{\action{p}\}\times A / {\expr{e}^{[\prob{r}]}}] \\
        &\leq \frac{\prob{r}}{\prob{1}-\prob{r}\partial(\expr{e})_\alpha(\accept)} \partial(\expr{e}^{[\prob{r}]})_\alpha[\{\action{p}\}\times R(A / {\expr{e}^{[\prob{r}]}})] \\
        &\leq \frac{\prob{r}}{\prob{1}-\prob{r}\partial(\expr{g})_\alpha(\accept)} \partial(\expr{g}^{[\prob{r}]})_\alpha[\{\action{p}\}\times R(A) / {\expr{g}^{[\prob{r}]}}] \\
        &= \partial(\expr{g}^{[\prob{r}]})_\alpha[\action{p}\times R(A)]
    \end{align*}
    The case of showing that \(\partial(\expr{g}^{[\prob{r}]})_\alpha[\{\action{p}\} \times B] \leq \partial(\expr{e}^{[\prob{r}]})_\alpha[\{\action{p}\} \times R^{-1}(B)] \) for arbitrary \(B \subseteq \Exp\) and \(\action{p} \in \Act \) is symmetric and is omitted. Therefore \((\expr{e}^{[\prob{r}]}, \expr{g}^{[\prob{r}]})\in \Phi_\partial(R)\).
\end{proof}
\greatestcong*
\begin{proof}
Observe that the full relation on \(\Exp\) is a congruence. Since
\(\Phi_\partial\) preserves congruences and congruence relations are also preserved by intersections~\cite[Theorem~5.3]{Burris:1981:Algebra}, by the fixed point characterisation of the greatest fixed point of \(\Phi_\partial\), we have that the greatest bisimulation on \((\Exp, \partial)\) is a congruence.
\end{proof}

\section{Proofs from \texorpdfstring{\cref{sec: axiomatisation}}{Section~\ref{sec: axiomatisation}}}\label{apx:soundness}
\begin{lemma}\label{lem:no_termination_implies_success}
    Let \(\expr{e} \in \Exp\) and \(\alpha \in \At\). It holds that, \( \trmt{\expr{e}}_\alpha = \partial(\expr{e})_\alpha(\accept)\).
\end{lemma}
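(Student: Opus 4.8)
The plan is to proceed by structural induction on $\expr{e}$, directly matching the inductive clauses defining $\trmt{-}$ in \cref{fig:axiomatisation} against the clauses defining the Brzozowski derivative $\partial$ in \cref{sec:opsem}. The base cases are immediate: $\partial(\action{p})_\alpha = \delta_{(\action{p},\test{\one})}$ and $\partial(\var{v})_\alpha = \delta_{\var{v}}$ both assign probability $0$ to $\accept$, matching $\trmt{\action{p}}_\alpha = \trmt{\var{v}}_\alpha = 0$; and for $\test{b} \in \Bexp$, $\partial(\test{b})_\alpha$ is $\delta_\accept$ when $\alpha \bleq \test{b}$ and $\delta_\reject$ when $\alpha \bleq \bar{\test{b}}$, so $\partial(\test{b})_\alpha(\accept)$ is $1$ or $0$ respectively, exactly as in the definition of $\trmt{\test{b}}_\alpha$.

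For the inductive step, the cases for guarded choice $\expr{e} +_\test{b} \expr{f}$ and probabilistic choice $\expr{e} \oplus_\prob{r} \expr{f}$ follow immediately from the induction hypothesis, since $\partial(\expr{e} +_\test{b} \expr{f})_\alpha$ is a case split on whether $\alpha \bleq \test{b}$ and $\partial(\expr{e} \oplus_\prob{r} \expr{f})_\alpha$ is a convex combination with weights $\prob{r}, \prob{1-r}$, exactly as in the clauses for $\trmt{-}$. For sequential composition, I would unfold $\partial(\expr{e} \seq \expr{f})_\alpha = \partial(\expr{e})_\alpha \semseq_\alpha \expr{f}$ as the convex extension of $c_{\alpha, \expr{f}}$ and observe that $c_{\alpha, \expr{f}}(x)(\accept) = 0$ unless $x = \accept$, in which case it equals $\partial(\expr{f})_\alpha(\accept)$; hence $\partial(\expr{e} \seq \expr{f})_\alpha(\accept) = \partial(\expr{e})_\alpha(\accept)\cdot\partial(\expr{f})_\alpha(\accept)$, which equals $\trmt{\expr{e}}_\alpha \trmt{\expr{f}}_\alpha$ by the induction hypothesis.

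The loop cases require the most care, since both $\trmt{-}$ and $\partial$ are defined by a case analysis whose conditions mention $\partial(\expr{e})_\alpha(\accept)$ (for $\partial$) and $\trmt{\expr{e}}_\alpha$ (for $\trmt{-}$); the induction hypothesis guarantees these agree, so the case splits line up. For the guarded loop $\expr{e}^{(\test{b})}$: if $\alpha \bleq \bar{\test{b}}$ the derivative is $\delta_\accept$, so the value is $1 = \trmt{\bar{\test{b}}}_\alpha$; if $\alpha \bleq \test{b}$, the derivative is either $\delta_\reject$ or a distribution supported on $\{\reject\} \cup \V$ and on tuples $(\action{p}, \expr{e'} \seq \expr{e}^{(\test{b})})$, so in every sub-case it assigns $0$ to $\accept$, matching $\trmt{\bar{\test{b}}}_\alpha = 0$. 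For the probabilistic loop $\expr{e}^{[\prob{r}]}$: in the non-terminating special case the derivative is $\delta_\reject$, giving $0$, and by the induction hypothesis the side condition $\prob{r} = \prob{1}$ and $\trmt{\expr{e}}_\alpha = \prob{1}$ of the matching clause of $\trmt{\expr{e}^{[\prob{r}]}}_\alpha$ is met, so that value is $0$ as well; otherwise the derivative assigns $\frac{\prob{1-r}}{\prob{1}-\prob{r}\partial(\expr{e})_\alpha(\accept)}$ to $\accept$, which by the induction hypothesis equals $\frac{\prob{1-r}}{\prob{1}-\prob{r}\trmt{\expr{e}}_\alpha} = \trmt{\expr{e}^{[\prob{r}]}}_\alpha$. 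The only real obstacle is bookkeeping: in each loop clause one must check that $\accept$ genuinely falls into the ``otherwise'' branch of $\partial$ (it is neither in $\{\reject\} \cup \V$ nor of the prescribed product shape), and that the side conditions of the two definitions are translated correctly through the induction hypothesis.
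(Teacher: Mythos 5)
Your proposal is correct and follows essentially the same route as the paper's proof: a structural induction on \(\expr{e}\) matching each clause of \(\trmt{-}\) against the corresponding clause of \(\partial\), with the sequencing case reduced to \(\partial(\expr{e}\seq\expr{f})_\alpha(\accept)=\partial(\expr{e})_\alpha(\accept)\partial(\expr{f})_\alpha(\accept)\) and the loop cases handled by aligning the case splits of the two definitions through the induction hypothesis. No gaps.
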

\begin{proof}
    By induction on the construction of \(\expr{e} \in \Exp\).
    The base cases hold immediately.
    First, consider the case when \(\expr{e} = \expr{f} +_\test{b} \expr{g}\) for some test \(\test{b} \in \Bexp\). Assume that \(\alpha \bleq \test{b}\); then
    \begin{align*}
        \trmt{\expr{f} +_\test{b} \expr{g}}_\alpha &= \trmt{\expr{f}} \tag{\(\alpha \bleq \test{b}\)} \\
        &= \partial(\expr{f})_\alpha(\accept) \tag{Induction hypothesis} \\
        &= \partial(\expr{f} +_\test{b} \expr{g})_\alpha(\accept) \tag{\(\alpha \bleq \test{b}\)}
    \end{align*}
    The case when \(\alpha \bleq \bar{\test{b}}\) is symmetric.

    Now, let \(\expr{e} = \expr{f} \oplus_{\prob{r}} \expr{g}\), where \(\prob{r} \in [0,1]\). We have that
    \begin{align*}
        \trmt{\expr{f} \oplus_{\prob{r}} \expr{g}}_\alpha &= \prob{r} \trmt{\expr{f}}_\alpha + \prob{(1-r)}\trmt{\expr{g}}_\alpha \tag{Def. of \(\mathsf{E}\)}\\
        &= \prob{r} \partial(\expr{f})_\alpha(\accept) + \prob{(1-r)}\partial(\expr{g})_\alpha(\accept) \tag{Induction hypothesis}\\
        &= \partial(\expr{f} \oplus_\prob{r} \expr{g})_\alpha(\accept) \tag{Def. of \(\partial\)}
    \end{align*}
    If \(\expr{e} = \expr{f} \seq \expr{g}\), then we have that
    \begin{align*}
        \trmt{\expr{f} \seq \expr{g}}_\alpha &= \trmt{\expr{f}}_\alpha \trmt{\expr{g}}_\alpha \tag{Def. of \(\mathsf{E}\)} \\
        &=\partial(\expr{f})_\alpha(\accept)\partial(\expr{g})_\alpha(\accept) \tag{Induction hypothesis} \\
        &=\partial(\expr{f} \seq \expr{g})_\alpha(\accept) \tag{Def. of \(\partial\)}
    \end{align*}
    As for the loops, let's consider first \(\expr{e} = \expr{f}^{(\test{b})}\) for some test \(\test{b} \in \Bexp\). First, consider the case when \(\alpha \bleq \bar{\test{b}}\).
    From the definition of \(\partial\) it immediately follows that
    \[
        \trmt{\expr{f}^{(\test{b})}}_\alpha = \prob{1} = \partial(\expr{e}^{(\test{b})})_\alpha(\accept)
    \]
    Again, in the case when \(\alpha \bleq \test{b}\), it follows from the definition of \(\partial\) that
    \[
        \trmt{\expr{f}^{(\test{b})}}_\alpha = \prob{0} = \partial(\expr{f}^{(\test{b})})_\alpha(\accept)
    \]
    Finally, let's consider probabilistic loop \(\expr{e} = \expr{f}^{[\prob{r}]}\) for some \(\prob{r} = \prob{1}\).
    First, consider the case when \(\trmt{f}_\alpha = \prob{1}\) and \(\prob{r} = \prob{1}\).
    By the induction hypothesis we have that \(\partial(\expr{f})_\alpha(\accept) = 1\).
    In such a case, by the definition of \(\partial\) we have that \(\partial(\expr{f}^{[\prob{r}]})_\alpha(\accept) = 0\) and therefore \[\trmt{\expr{f}^{[\prob{r}]}}_\alpha = \prob{0} = \partial(\expr{f}^{[\prob{r}]})_\alpha(\accept)\]
    For the remaining case we have that
    \begin{align*}
        \trmt{\expr{f}^{[\prob{r}]}}_\alpha &= \frac{1-r}{1-r\trmt{\expr{f}}_\alpha} \tag{Def. of \(\mathsf{E}\)} \\
        &= \frac{1-r}{1-r\partial(\expr{f})_\alpha(\accept)} \tag{Induction hypothesis} \\
        &= \partial(\expr{f}^{[\prob{r}]})_\alpha(\accept) \tag*{(Def. of \(\partial\)) \qedhere}
    \end{align*}
\end{proof}
% \begin{lemma}\label{lem:no_termination_implies_success}
%     Let \(\expr{e} \in \Exp\).
%     If \((\trmt{\expr{e}}, \test{\zero}) \in \Phi_\partial(\equiv)\), then for all \(\alpha \in \At\) \(\partial(\expr{e})_\alpha(\accept) = \prob{0}\)
% \end{lemma}
% \begin{proof}
%     Recall that \(\partial(\trmt{\expr{e}})_\alpha(\reject)=\prob{1}-\partial(\expr{e})_\alpha(\accept)\) and \(\partial(\test{\zero})_\alpha(\reject)=\prob{1}\). If \((\trmt{\expr{e}}, \test{\zero}) \in \Phi_\partial(\equiv)\), then in particular for all \(\alpha \in \At\) we have that \(\partial(\trmt{\expr{e}})_\alpha(\reject) = \partial(\test{0})_\alpha(\reject)\) and hence \(\partial(\expr{e})_\alpha(\accept)=\prob{0}\)
% \end{proof}
\begin{lemma}\label{lem:swapping_congruent_ends}
    Let \(R \subseteq \Exp \times \Exp\) be a congruence with respect to \acro{ProbGKAT} operators such that \((\expr{e}, \expr{f}) \in R\), and let \(Q \in {\Exp} /{R}\) be an equivalence class of \(R\). It holds that
    \(
    Q/\expr{e} = Q / \expr{f}
    \)
\end{lemma}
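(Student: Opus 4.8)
The plan is to prove the equality of the two sets by showing both inclusions, via a straightforward element-chasing argument that exploits the congruence property of $R$ together with the definition $A/\expr{h} = \{\expr{g}\in\Exp \mid \expr{g}\seq\expr{h}\in A\}$. First I would unfold what membership in $Q/\expr{e}$ means: $\expr{g}\in Q/\expr{e}$ precisely when $\expr{g}\seq\expr{e}\in Q$. The key observation is that, since $R$ is a congruence with respect to the \acro{ProbGKAT} operations (in particular with respect to $\seq$) and $(\expr{e},\expr{f})\in R$, we get $(\expr{g}\seq\expr{e},\,\expr{g}\seq\expr{f})\in R$ for every $\expr{g}\in\Exp$ — we simply compose the reflexive pair $(\expr{g},\expr{g})$ with $(\expr{e},\expr{f})$ under $\seq$.

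Now I would carry out the inclusion $Q/\expr{e}\subseteq Q/\expr{f}$. Take $\expr{g}\in Q/\expr{e}$, so $\expr{g}\seq\expr{e}\in Q$. By the observation above, $\expr{g}\seq\expr{e}\mathrel{R}\expr{g}\seq\expr{f}$; since $Q$ is an equivalence class of the equivalence relation $R$ and $\expr{g}\seq\expr{e}\in Q$, it follows that $\expr{g}\seq\expr{f}\in Q$ as well, i.e. $\expr{g}\in Q/\expr{f}$. For the reverse inclusion $Q/\expr{f}\subseteq Q/\expr{e}$, I would use symmetry of $R$ to note $(\expr{f},\expr{e})\in R$ and repeat the identical argument with the roles of $\expr{e}$ and $\expr{f}$ swapped. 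Combining the two inclusions yields $Q/\expr{e}=Q/\expr{f}$.

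There is essentially no serious obstacle here: the only point that requires a moment's care is that ``congruence relation'' is being used in the sense that already includes being an equivalence relation, so that equivalence classes are closed under $R$ — this is exactly what licenses the step ``$\expr{g}\seq\expr{e}\in Q$ and $\expr{g}\seq\expr{e}\mathrel{R}\expr{g}\seq\expr{f}$ imply $\expr{g}\seq\expr{f}\in Q$''. This lemma is intended as a small bookkeeping fact, presumably used together with \cref{lem:cutting_postfixes} and \cref{lem:sequencing_cutting_postfixes} to analyse how $\Phi_\partial$ acts on expressions of the form $\expr{e}\seq\expr{f}$, so a one-paragraph proof along these lines suffices.
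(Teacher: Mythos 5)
Your proof is correct and follows essentially the same argument as the paper: apply congruence of $R$ to obtain $(\expr{g}\seq\expr{e},\expr{g}\seq\expr{f})\in R$, then use closure of the equivalence class $Q$ under $R$ to transfer membership. The paper phrases it as a single chain of equivalences rather than two inclusions, but the content is identical.
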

\begin{proof}
Let \(\expr{g} \in Q\).
Since \(R\) is a congruence and \((\expr{e}, \expr{f}) \in R\), we have \((\expr{g}\seq\expr{e},\expr{g}\seq\expr{f}) \in R \).
Hence,
\[\expr{g} \in Q / \expr{e} \iff \expr{g} \seq \expr{e} \in Q \iff \expr{g} \seq \expr{f} \in Q \iff \expr{g} \in Q/ \expr{f} \qedhere\]
\end{proof}
\begin{lemma}\label{lem:associativity_of_cutting}
    Let \(\expr{e}, \expr{f} \in \Exp\) and let \(Q \in \Exp / {\equiv}\).
    Now \((Q/\expr{f})/\expr{e} = Q / \expr{e}\seq\expr{f}\)
\end{lemma}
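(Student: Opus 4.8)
The plan is to chase an arbitrary element through the definitions on both sides and then appeal to associativity of sequencing. Recall that for $A \subseteq \Exp$ and $\expr{h} \in \Exp$ we have $A / \expr{h} = \{\expr{g} \in \Exp \mid \expr{g}\seq\expr{h} \in A\}$. Fix $\expr{g} \in \Exp$. Unfolding the definition of the cutting operation twice, $\expr{g} \in (Q/\expr{f})/\expr{e}$ holds if and only if $\expr{g}\seq\expr{e} \in Q/\expr{f}$, which in turn holds if and only if $(\expr{g}\seq\expr{e})\seq\expr{f} \in Q$. On the other side, $\expr{g} \in Q/(\expr{e}\seq\expr{f})$ holds if and only if $\expr{g}\seq(\expr{e}\seq\expr{f}) \in Q$.

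It therefore suffices to show $(\expr{g}\seq\expr{e})\seq\expr{f} \in Q$ if and only if $\expr{g}\seq(\expr{e}\seq\expr{f}) \in Q$. This is immediate from axiom \acro{S3}, which gives $(\expr{g}\seq\expr{e})\seq\expr{f} \equiv \expr{g}\seq(\expr{e}\seq\expr{f})$, together with the fact that $Q$, being an equivalence class of $\equiv$, is closed under $\equiv$: two $\equiv$-related expressions lie in the same equivalence class, so one belongs to $Q$ precisely when the other does. Since $\expr{g}$ was arbitrary, the two sets have the same elements.

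There is no genuine obstacle here; the lemma is a one-line element chase once associativity is invoked. The only point to be mindful of is that the ambient relation in the statement is $\equiv$ (so that \acro{S3} is available and equivalence classes are $\equiv$-closed), rather than a coarser congruence such as $\uaequiv$ or bisimilarity $\bisim$ — but as the statement is phrased for $Q \in \Exp/{\equiv}$, this is exactly the setting we are in. If desired, the argument generalises verbatim to any congruence containing $\equiv$, since all that is used is $\equiv$-relatedness of the two reassociated expressions and closure of $Q$ under the chosen congruence.
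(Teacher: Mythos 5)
Your proof is correct and follows essentially the same route as the paper's: an element chase through the definition of the cutting operation, with the only substantive step being that $(\expr{g}\seq\expr{e})\seq\expr{f} \equiv \expr{g}\seq(\expr{e}\seq\expr{f})$ by \acro{S3} and that the equivalence class $Q$ is closed under $\equiv$. You in fact make explicit the $\equiv$-closure point that the paper leaves implicit under the word ``associativity'', which is a minor improvement in precision rather than a different approach.
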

\begin{proof}
    Let \(\expr{g} \in Q\); we derive as follows
    \[
        \expr{g} \in (Q/ \expr{f})/\expr{e}
            \iff \expr{g} \seq \expr{e} \in Q/\expr{f}
            \iff (\expr{g} \seq \expr{e}) \seq{\expr{f}} \in Q
            \iff \expr{g} \seq (\expr{e} \seq \expr{f}) \in Q
            \iff \expr{g} \in Q / {\expr{e} \seq \expr{f}}
    \]
    Here, the second to last step follows by associativity.
\end{proof}

\begin{lemma}\label{lem: equality_enough_for_bisimilarity}
    Let \((X, \beta)\) be an arbitrary \(\G\)-coalgebra, \(x, y \in X\) and \(R \subseteq X \times X\) an equivalence relation on \(X\).  If for all \(\alpha \in \At\), \(\beta(x)_\alpha = \beta(y)_\alpha\) then \((x,y) \in \Phi_\beta(R)\)
\end{lemma}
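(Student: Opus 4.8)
The statement is essentially immediate once the right characterisation of $\Phi_\beta$ is in place. The plan is to invoke \cref{lem:preservation_of_equivalences}: since $R$ is an equivalence relation, $(x,y) \in \Phi_\beta(R)$ holds precisely when, for every $\alpha \in \At$, (i) $\beta(x)_\alpha(o) = \beta(y)_\alpha(o)$ for all $o \in \2 + \V$, and (ii) $\beta(x)_\alpha[\{\action{p}\} \times Q] = \beta(y)_\alpha[\{\action{p}\} \times Q]$ for every $Q \in X/R$ and every $\action{p} \in \Act$. So it suffices to check these two conditions under the hypothesis $\beta(x)_\alpha = \beta(y)_\alpha$.

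First I would dispatch condition (i): from $\beta(x)_\alpha = \beta(y)_\alpha$ as elements of $\D(\2 + \V + \Act \times X)$, equality of the two distributions means they agree on every point of the underlying set, in particular on every $o \in \2 + \V$. For condition (ii), fix $\alpha \in \At$, $\action{p} \in \Act$, and $Q \in X/R$; then, unfolding the notation $\nu[A] = \sum_{z \in A} \nu(z)$ from the preliminaries, both sides equal the same finite sum, namely $\sum_{z \in \{\action{p}\}\times Q} \beta(x)_\alpha(z) = \sum_{z \in \{\action{p}\}\times Q} \beta(y)_\alpha(z)$, again by pointwise equality of $\beta(x)_\alpha$ and $\beta(y)_\alpha$. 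Both conditions of the simplified characterisation are thus satisfied, so $(x,y) \in \Phi_\beta(R)$.

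There is no real obstacle in this lemma; it is a bookkeeping step that packages the trivial observation ``equal distributions have equal marginals'' into the language of the bisimulation functional, so that later soundness arguments (e.g.\ for $\equiv$ or within the completeness development) can appeal to it directly whenever they have already established $\beta(x)_\alpha = \beta(y)_\alpha$.
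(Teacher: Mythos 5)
Your proposal is correct and follows the paper's own argument exactly: the paper likewise dismisses the lemma by observing that pointwise equality of $\beta(x)_\alpha$ and $\beta(y)_\alpha$ makes both conditions of \cref{lem:preservation_of_equivalences} hold immediately. Your write-up just spells out the marginal computations a little more explicitly, which is fine.
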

\begin{proof}
    Observe that if for all \(\alpha \in \At\) the distributions  \(\beta(x)_\alpha\) and \(\beta(y)_\alpha\) are equal, then conditions of \cref{lem:preservation_of_equivalences} immediately hold.
 \end{proof}
 \begin{lemma}\label{lem:asserting_derivative}
    Let \(\expr{e} \in \Exp\) and \(\test{b} \in \Bexp\). If \(\alpha \bleq \test{b}\), then \(\partial(\test{b} \seq \expr{e})_\alpha = \partial(\expr{e})_\alpha\).
\end{lemma}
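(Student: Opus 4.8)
The plan is to unfold the definition of the Brzozowski derivative of a sequential composition and then apply the defining property of the convex extension. First, I would recall that, by definition, $\partial(\test{b} \seq \expr{e})_\alpha = \partial(\test{b})_\alpha \semseq_\alpha \expr{e}$, where $(- \semseq_\alpha \expr{e}) \colon \D(\2 + \V + \Act \times \Exp) \to \D(\2 + \V + \Act \times \Exp)$ is precisely the convex extension $\overline{c_{\alpha, \expr{e}}}$ of the map $c_{\alpha, \expr{e}} \colon \2 + \V + \Act \times \Exp \to \D(\2 + \V + \Act \times \Exp)$ introduced in the definition of $\partial$ on sequential compositions.

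Next, since $\alpha \bleq \test{b}$, the defining case split for the derivative of a test gives $\partial(\test{b})_\alpha = \delta_\accept$; this is unambiguous because $\alpha$ is an atom, so $\alpha \bleq \test{b}$ and $\alpha \bleq \bar{\test{b}}$ cannot both hold. Hence $\partial(\test{b} \seq \expr{e})_\alpha = \overline{c_{\alpha, \expr{e}}}(\delta_\accept)$. I would then invoke the characterising property of the convex extension, namely $\overline{c_{\alpha,\expr{e}}} \circ \delta = c_{\alpha, \expr{e}}$; instantiated at the point $\accept$ this gives $\overline{c_{\alpha, \expr{e}}}(\delta_\accept) = c_{\alpha, \expr{e}}(\accept)$, and the defining clause of $c_{\alpha, \expr{e}}$ for the argument $\accept$ says $c_{\alpha, \expr{e}}(\accept) = \partial(\expr{e})_\alpha$. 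Chaining these equalities yields $\partial(\test{b} \seq \expr{e})_\alpha = \partial(\expr{e})_\alpha$.

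There is essentially no obstacle: the statement follows directly from the definitions of $\partial$ on tests and on sequential compositions, together with the universal property of the convex extension. The only point to state carefully is that $\alpha \bleq \test{b}$ genuinely selects the $\delta_\accept$ branch in the definition of $\partial(\test{b})$, which holds because atoms cannot simultaneously entail $\test{b}$ and $\bar{\test{b}}$.
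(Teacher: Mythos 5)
Your proof is correct and follows essentially the same route as the paper: unfold $\partial(\test{b}\seq\expr{e})_\alpha = \partial(\test{b})_\alpha \semseq_\alpha \expr{e}$, use $\alpha \bleq \test{b}$ to get $\partial(\test{b})_\alpha = \delta_\accept$, and apply the convex-extension property of $c_{\alpha,\expr{e}}$ at $\accept$. Your explicit appeal to $\overline{c_{\alpha,\expr{e}}} \circ \delta = c_{\alpha,\expr{e}}$ is just a slightly more formal phrasing of the paper's computation.
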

\begin{proof}
   Recall that \(\partial(\test{b}\seq\expr{e})_\alpha = \partial(\test{b})_\alpha \semseq_\alpha \expr{e}\).  If \(\alpha \bleq \test{b}\), then \(\partial(\test{b})_\alpha(\accept) = \prob{1}\). Since \((- \semseq_\alpha \expr{e})\) is a convex extension of the map \(c_{\alpha, \expr{e}} : \2 + \V + \Act \times \Exp \to \D(\2 + \V + \Act \times \Exp) \), satisfying that \(c_{\alpha, \expr{e}}(\accept) = \partial(\expr{e})_\alpha\), we have that \[\partial(\test{b})_\alpha \semseq_{\alpha} \expr{e} = \prob{1}\partial(\expr{f})_\alpha = \partial(\expr{f})_\alpha \qedhere\]
\end{proof}
\soundness*
\begin{proof}
   Because of \cref{lem:properties_of_bisim_functional}, it suffices to show that \({\equiv} \subseteq \Phi_\partial(\equiv)\).
   We will interchangeably use different characterisations of \(\Phi_\partial(\equiv)\), including \cref{def:bisimulation_functional}, \cref{lem:preservation_of_equivalences} and \cref{lem: equality_enough_for_bisimilarity}.

    We proceed by induction on the length of derivation of \(\equiv\). The equational axioms are base cases, while the quasi-equational axioms are inductive steps.
    We need not verify the congruence rules of \(\equiv\), since by \cref{lem:preservation_of_congruences}, \(\Phi_\partial(\equiv)\) is also a congruence.

    Let \(\expr{e}, \expr{f}, \expr{g} \in \Exp\), \(\test{b}, \test{c} \in \Bexp\), \(\var{v} \in \V\) and \(\prob{r}, \prob{s} \in [0,1]\). The soundness of \emph{guarded choice axioms}, \emph{probabilistic choice axioms} and \emph{distributivity axiom} can be proved using \cref{lem: equality_enough_for_bisimilarity} by just showing that the each of the distributions given by the derivatives are equal.
    \begingroup
    \allowdisplaybreaks%
    \begin{description}
        \item[(\acro{G1})] If \(\alpha \bleq \test{b}\), then \(\partial(\expr{e} +_\test{b} \expr{e})_\alpha = \partial(\expr{e})_\alpha\).
        Similarly, for \(\alpha \bleq \bar{\test{b}}\), \(\partial(\expr{e} +_\test{b} \expr{e})_\alpha = \partial(\expr{e})_\alpha\).
        Therefore, for all \(\alpha \in \At\), \(\partial(\expr{e} +_\test{b} \expr{e})_\alpha = \partial(\expr{e})_\alpha\).

        % \item[(\acro{G2})] For all \(\alpha \in \At\), we have that \(\alpha \bleq \test{\one}\). Therefore \(\partial(\expr{e} +_\test{\one} \expr{f})_\alpha = \partial(\expr{e})_\alpha\)
        \item[(\acro{G2})] If \(\alpha \bleq \test{b}\), then \(\partial(\expr{e} +_\test{b} \expr{f})_\alpha = \partial(\expr{e})_\alpha\). By \cref{lem:asserting_derivative} we have that \(\partial(\expr{e})_\alpha = \partial(\test{b}\seq\expr{e})_\alpha\), so \(\partial(\expr{e} +_\test{b} \expr{f})_\alpha = \partial(\test{b}\seq\expr{e} +_\test{b} \expr{f})_\alpha\). If \(\alpha \leq \bar{\test{b}}\), then \(\partial(\expr{e} +_\test{b} \expr{f})_\alpha = \partial(\expr{f})_\alpha = \partial(\test{b}\seq\expr{e} +_\test{b} \expr{f})_\alpha\)

        \item[(\acro{G3})] If \(\alpha \bleq \test{b}\), then \(\partial(\expr{e} +_{\test{b}} \expr{f})_\alpha = \partial(\expr{e})_\alpha = \partial(\expr{f} +_{\bar{\test{b}}} \expr{e})_\alpha\).
        Similarly, if \(\alpha \bleq \bar{\test{b}}\), then \(\partial(\expr{e} +_{\test{b}} \expr{f})_\alpha = \partial(\expr{f})_\alpha = \partial(\expr{f} +_{\bar{\test{b}}} \expr{e})_\alpha\).

        \item[(\acro{G4})] If \(\alpha \bleq \test{b}\) and \(\alpha \bleq \test{c}\), then \(\alpha \bleq \test{bc}\), and so \[\partial((\expr{e} +_\test{b} \expr{f}) +_\test{c} \expr{g})_\alpha = \partial(\expr{e})_\alpha = \partial(\expr{e} +_{\test{bc} }(\expr{f} +_\test{c} \expr{g}))_\alpha\]

        Now, let \(\alpha \bleq \bar{\test{b}}\) and \(\alpha \bleq \test{c}\). In such a case, \(\alpha \bleq \test{\overline{bc}}\) and \(\alpha \bleq \test{c}\), and so
        \[\partial((\expr{e} +_\test{b} \expr{f}) +_\test{c} \expr{g})_\alpha = \partial(\expr{f})_\alpha = \partial(\expr{e} +_{\test{bc} }(\expr{f} +_\test{c} \expr{g}))_\alpha\]

        Finally, when \(\alpha \bleq \bar{\test{c}}\), we have that $\alpha \leq \test{\overline{bc}}$, and so \[\partial((\expr{e} +_\test{b} \expr{f}) +_\test{c} \expr{g})_\alpha = \partial(\expr{g})_\alpha = \partial(\expr{e} +_{\test{bc} }(\expr{f} +_\test{c} \expr{g}))_\alpha\]

        \item[(\acro{P1})] For all \(\alpha \in \At\), \(\partial(\expr{e} \oplus_\prob{r} \expr{e})_\alpha = \prob{r}\partial(\expr{e})_\alpha + \prob{(1-r)}\partial(\expr{e})_\alpha = \partial(\expr{e})_\alpha\)

        \item[(\acro{P2})] For all \(\alpha \in \At\), \(\partial(\expr{e} \oplus_{\prob{1}} \expr{f})_\alpha = \prob{1}\partial(\expr{e})_\alpha + \prob{0} \partial(\expr{f})_\alpha = \partial(\expr{e})_\alpha\)
        \item[(\acro{P3})] For all \(\alpha \in \At\),
        \begin{align*}
            \partial(\expr{e} \oplus_\prob{r} \expr{f})_\alpha &= \prob{r}\partial(\expr{e})_\alpha + \prob{(1-r)}\partial(\expr{f})_\alpha\\
            &=\prob{(1-r)}\partial(\expr{f})_\alpha + \prob{(1-(1-r))} \partial(\expr{e})_\alpha\\
            &= \partial(\expr{f} \oplus_{\prob{1-r}} \expr{e})_\alpha
        \end{align*}
        \item[(\acro{P4})] Recall that by assumption \(\prob{rs} \neq \prob{1}\). For all \(\alpha \in \At\),
        \begin{align*}
            \partial((\expr{e} \oplus_\prob{r} \expr{f}) \oplus_{\prob{s}} \expr{g})_\alpha &= \prob{rs} \partial(\expr{e})_\alpha + \prob{(1-r)s}\partial(\expr{f})_\alpha + \prob{(1-s)}\partial(\expr{g})_\alpha\\
            &= \prob{rs} \partial(\expr{e})_\alpha + \prob{(1-rs)}\left(\prob{\frac{(1-r)s}{1-rs}} \partial(\expr{f})_\alpha + \prob{\frac{1-s}{1-rs}}\partial(\expr{g})_\alpha\right)
        \end{align*}
        Let \(\prob{t} = \prob{\frac{(1-r)s}{1-rs}}\). Hence, the previous is equal to
        \[
        \prob{rs} \partial(\expr{e})_\alpha + \prob{(1-rs)}\left(\prob{t} \partial(\expr{f})_\alpha + \prob{(1-t)}\partial(\expr{g})_\alpha\right) = \partial(\expr{e} \oplus_{\prob{rs}}  (\expr{f} \oplus_{\prob{t}} \expr{g}))_\alpha
        \]
        \item[(\acro{D})] For \(\alpha \bleq \test{b}\), we have \[\partial(\expr{e} \oplus_\prob{r} (\expr{f} +_\test{b} \expr{g}))_\alpha = \prob{r} \partial(\expr{e})_\alpha + \prob{(1-r)} \partial(\expr{f})_\alpha = \partial((\expr{e} +_\test{b} \expr{f}) \oplus_\prob{r} (\expr{e} +_\test{b} \expr{g}))_\alpha\] Similarly, when \(\alpha \bleq \bar{\test{b}}\), we have \[\partial(\expr{e} \oplus_\prob{r} (\expr{f} +_\test{b} \expr{g}))_\alpha = \prob{r} \partial(\expr{e})_\alpha + \prob{(1-r)} \partial(\expr{g})_\alpha = \partial((\expr{e} +_\test{b} \expr{f}) \oplus_\prob{r} (\expr{e} +_\test{b} \expr{g}))_\alpha\]
    \end{description}
    As for the \emph{sequencing axioms}, we will use characterisations of \(\Phi_\partial(\equiv)\) both from \cref{lem:preservation_of_equivalences} and \cref{lem: equality_enough_for_bisimilarity} depending on the case.
    \begin{description}
        \item[(S1)] Since for all \(\alpha \in \At\) we have that \(\alpha \bleq \test{\one}\), we can immediately use \cref{lem:asserting_derivative} to show that \(\partial(\test{\one}\seq\expr{e})_\alpha = \partial(\expr{e})_\alpha\). By \cref{lem: equality_enough_for_bisimilarity} we have that \((\test{\one}\seq\expr{e}, \expr{e}) \in \Phi_\partial(\equiv)\)
        \item[(S2)] We verify the conditions of \cref{lem:preservation_of_equivalences}. Let \(\alpha \in \At\).
        \begin{enumerate}
            \item  For all \(o \in \{\reject\} + \V\), we have that \[\partial(\expr{e}\seq\test{\one})_\alpha(o)= \partial(\expr{e})_\alpha(o) + \partial(\expr{e})_\alpha(\accept)\partial(\test{\one})_\alpha(o) = \partial(\expr{e})_\alpha(o)\]
            In the remaining case, we have that \[\partial(\expr{e}\seq\test{\one})_\alpha(\accept) = \partial(\expr{e})_\alpha(\accept) \partial(\test{\one})_\alpha(\accept) = \partial(\expr{e})_\alpha(\accept)\]
            \item Let \(\action{p} \in \Act\) and \(Q \in \Exp / \equiv \) be an equivalence class of \(\equiv\). By \cref{lem:sequencing_cutting_postfixes} we have that
            \begin{align*}
                \partial(\expr{e} \seq \test{\one})_\alpha[\{\action{p}\} \times Q] &= \partial(\expr{e})_\alpha[\{\action{p}\} \times Q / \test{\one}] + \partial(\expr{e})_\alpha(\accept)\partial(\test{\one})_\alpha[\{p\} \times Q]\\
                &=  \partial(\expr{e})_\alpha [\{\action{p}\} \times Q / \test{\one}]
            \end{align*}
        Observe that since \(Q\) is an equivalence class of the relation \(\equiv\), we have that \[\expr{h} \in Q / \test{\one} \iff \expr{h}\seq\test{\one} \in Q \iff \expr{h} \in Q \] Since \(Q = Q / \test{\one}\), we have that \[\partial(\expr{e}\seq\test{\one})_\alpha[\{\action{p}\} \times Q] = \partial(\expr{e})_\alpha[\{\action{p}\}\times Q]\]
        \end{enumerate}
    \item[(S3)] We verify the conditions of \cref{lem:preservation_of_equivalences}. Let \(\alpha \in \At\).
    \begin{enumerate}
        \item For all \(o \in \{\reject\} + \V\), we have that
        \begin{align*}
            \partial((\expr{e}\seq\expr{f})\seq\expr{g})_\alpha(o) &= \partial(\expr{e}\seq\expr{f})_\alpha(o) + \partial(\expr{e}\seq\expr{f})_\alpha(\accept)\partial(\expr{g})_\alpha(o)\\
            &= \partial(\expr{e})_\alpha(o)+\partial(\expr{e})_\alpha(\accept)\partial(\expr{f})_\alpha(o) + \partial(\expr{e}\seq\expr{f})_\alpha(\accept)\partial(\expr{g})_\alpha(o)\\
            &= \partial(\expr{e})_\alpha(o)+\partial(\expr{e})_\alpha(\accept)\partial(\expr{f})_\alpha(o) + \partial(\expr{e})_\alpha(\accept)\partial(\expr{f})_\alpha(\accept)\partial(\expr{g})_\alpha(o)\\ &= \partial(\expr{e}\seq(\expr{f} \seq \expr{g}))_\alpha(o)
        \end{align*}
        In the remaining case, we have that \[\partial((\expr{e}\seq\expr{f})\seq\expr{g})_\alpha(\accept) = \partial(\expr{e})_\alpha(\accept)\partial(\expr{f})_\alpha(\accept)\partial(\expr{g})_\alpha(\accept) = \partial(\expr{e} \seq (\expr{f} \seq \expr{g}))_\alpha(\accept)\]
        \item Let \(\action{p} \in \Act\) and \(Q \in \Exp / \equiv\). By \cref{lem:sequencing_cutting_postfixes} and \cref{lem:associativity_of_cutting},
        \begin{align*}
            \partial((\expr{e}\seq\expr{f})\seq\expr{g})_\alpha[\{\action{p}\}\times Q] &= \partial(\expr{e}\seq\expr{f})_\alpha[\{\action{p}\}\times Q / \expr{g}] + \partial(\expr{e} \seq \expr{f})_\alpha(\accept)\partial(\expr{g})_\alpha[\{\action{p}\} \times Q]\\
            &= \partial(\expr{e})_\alpha[\{\action{p}\}\times (Q / \expr{f})/\expr{g}] + \partial(\expr{e})_\alpha(\accept)\partial(\expr{f})_\alpha[\{\action{p}\}\times Q / \expr{g}] \\
            &\phantom{=} {}+ \partial(\expr{e})_\alpha(\accept)\partial(\expr{f})_\alpha(\accept)\partial(\expr{g})_\alpha[\{p\} \times Q]\\
            &= \partial(\expr{e})_\alpha[\{\action{p}\}\times Q / \expr{f}\seq\expr{g}] + \partial(\expr{e})_\alpha(\accept)\partial(\expr{f}\seq\expr{g})_\alpha[\{\action{p}\}\times Q]\\
            &= \partial(\expr{e}\seq(\expr{f}\seq\expr{g}))_\alpha[\{\action{p}\}\times Q]
        \end{align*}
    \end{enumerate}
    \item[(\acro{S4})] For all \(\alpha \in \At\), we have that \[\partial(\test{\zero}\seq\expr{e})_\alpha(\reject) = \partial(\test{\zero})_\alpha(\reject) + \partial(\test{\zero})_\alpha(\accept)\partial(\expr{e})_\alpha(\reject) = \partial(\test{\zero})_\alpha(\reject)=\prob{1}\]
    All the remaining elements of \(\2 + \V + \Act  \times \Exp\) have no probability mass assigned by those distributions. Hence, for all \(\alpha \in \At\), \(\partial(\test{\one}\seq\expr{e})_\alpha = \partial(\test{\zero})_\alpha\). Therefore, we can use \cref{lem: equality_enough_for_bisimilarity} to obtain the desired result.

    \item[(\acro{S5})] We use \cref{lem:preservation_of_equivalences}. Let \(\alpha \bleq \test{b}\).
    \begin{enumerate}
        \item Let \(o \in \{\reject\} + \V\). We have that
        \begin{align*}
            \partial((\expr{e} +_\test{b} \expr{f})\seq\expr{g})_\alpha(o) &= \partial(\expr{e} +_\test{b} \expr{f})_\alpha(o) + \partial(\expr{e} +_\test{b} \expr{f})_\alpha(\accept)\partial(\expr{g})_\alpha(o)\\
            &= \partial(\expr{e} )_\alpha(o) + \partial(\expr{e})_\alpha(\accept)\partial(\expr{g})_\alpha(o)\\
            &= \partial(\expr{e}\seq \expr{g} +_\test{b} \expr{f} \seq \expr{g})_\alpha(o)
        \end{align*}
        For the remaining case, we have
        \begin{align*}
            \partial((\expr{e} +_\test{b} \expr{f})\seq\expr{g})_\alpha(\accept) &= \partial(\expr{e} +_\test{b} \expr{f})_\alpha(\accept)\partial(\expr{g})_\alpha(\accept)\\
            &= \partial(\expr{e})_\alpha(\accept)\partial(\expr{g})_\alpha(\accept)\\
            &= \partial(\expr{e} \seq \expr{g} +_\test{b} \expr{f}\seq\expr{g})_\alpha(\accept)
        \end{align*}

        \item Let \(\action{p} \in \Act\) and \(Q \in \Exp / \equiv\). By \cref{lem:sequencing_cutting_postfixes} we have that
        \begin{align*}
                    \partial((\expr{e} +_\test{b} \expr{f})\seq\expr{g})_\alpha[\{\action{p}\}\times Q] &= \partial(\expr{e} +_\test{b} \expr{f})_\alpha[\{\action{p}\}\times Q/\expr{g}] + \partial(\expr{e} +_\test{b} \expr{f})_\alpha(\accept)\partial(g)_\alpha[\{\action{p}\}\times Q]\\
                    &= \partial(\expr{e})_\alpha[\{\action{p}\}\times Q/\expr{g}] + \partial(\expr{e})_\alpha(\accept)\partial(g)_\alpha[\{\action{p}\}\times Q]\\
                    &= \partial(\expr{e}\seq\expr{g})_\alpha[\{\action{p}\} \times Q]\\
                    &= \partial(\expr{e}\seq\expr{g} +_\test{b} \expr{f}\seq\expr{g})_\alpha[\{\action{p}\} \times Q]
        \end{align*}
    \end{enumerate}
    We omit the symmetric cases when \(\alpha \bleq \bar{\test{b}}\).

    \item[(\acro{S6})] Let \(\alpha \in \At\). As in the case of (\acro{S5}), we rely on \cref{lem:preservation_of_equivalences} and \cref{lem:sequencing_cutting_postfixes}.
    \begin{enumerate}
        \item For all \(o \in \{\reject\} + \V\), we have that
        \begin{align*}
            \partial((\expr{e} \oplus_\prob{r} \expr{f})\seq\expr{g})_\alpha(o) &= \partial(\expr{e} \oplus_\prob{r} \expr{f})_\alpha(o) + \partial(\expr{e} \oplus_\prob{r} \expr{f})(\accept)\partial(\expr{g})(o)\\
            &= \prob{r}\partial(\expr{e})_\alpha(o) + \prob{(1-r)}\partial(\expr{f})_\alpha(o) \\
            &\phantom{=} {} + \left(\prob{r}\partial(\expr{e})_\alpha(\accept) + \prob{(1-r)}\partial(\expr{f})_\alpha(\accept)\right)\partial(\expr{g})_\alpha(o)\\
            &= \prob{r} \partial(\expr{e} \seq g)_\alpha(o) + \prob{(1-r)}\partial(\expr{f}\seq\expr{g})_\alpha(o)\\
            &= \partial(\expr{e} \seq \expr{g} \oplus_\prob{r} \expr{f} \seq \expr{g})_\alpha(o)
        \end{align*}
        For the remaining case, we have that
        \begin{align*}
            \partial((\expr{e} \oplus_\prob{r} \expr{f})\seq\expr{g})_\alpha(o) &= \partial(\expr{e} \oplus_\prob{r} \expr{f})_\alpha(\accept)\partial(\expr{g})_\alpha(\accept)\\
            &= \prob{r}\partial(\expr{e})_\alpha(\accept)\partial(\expr{g})_\alpha(\accept) + \prob{(1-r)}\partial(\expr{f})_\alpha(\accept)\partial(\expr{g})_\alpha(\accept)\\
            &= \partial(\expr{e} \seq \expr{g} \oplus_\prob{r} \expr{f} \seq \expr{g})_\alpha(\accept)
        \end{align*}
        \item Let \(\action{p} \in \Act\) and \(Q \in \Exp /{\equiv}\). We have that
        \begin{align*}
            &\partial((\expr{e} \oplus_\prob{r} \expr{f})\seq\expr{g} )_\alpha[\{\action{p}\} \times Q] =\partial(\expr{e} \oplus_\prob{r} \expr{f})_\alpha[\{\action{p}\}\times {Q}/{\expr{g}} ] + \partial(\expr{e} \oplus_\prob{r} \expr{f})_\alpha(\accept)\partial(\expr{g})_\alpha[\{\action{p}\} \times Q]\\
            &=\prob{r}\left(\partial(\expr{e})_\alpha[\{\action{p}\}\times {Q}/\expr{g}] + \partial(\expr{e})_\alpha(\accept)\partial(\expr{g})_\alpha[\{\action{p}\}\times Q]\right)\\
            &\phantom{=}+\prob{(1-r)}\left(\partial(\expr{f})_\alpha[\{\action{p}\}\times {Q}/\expr{g}] + \partial(\expr{f})_\alpha(\accept)\partial(\expr{g})_\alpha[\{\action{p}\}\times Q]\right)\\
            &=\prob{r}\partial(\expr{e}\seq\expr{g})_\alpha[\{\action{p}\}\times Q] + \prob{(1-r)}\partial(\expr{f}\seq\expr{g})_\alpha[\{\action{p}\}\times Q] \\
            &=\partial(\expr{e}\seq\expr{g} \oplus_{\prob{r}}\expr{f}\seq\expr{g})_\alpha[\{\action{p}\}\times Q]
            % &\partial((\expr{e} \oplus_\prob{r} \expr{f})\seq\expr{g})_\alpha[\{\action{p}\}\times Q] = \partial(\expr{e} \oplus_\prob{r} \expr{f})_\alpha[\{\action{p}\} \times Q/\expr{g}] + \partial(\expr{g})_\alpha[\{\action{p}\}\times Q]\\
            % &= \prob{r} \partial(\expr{e})_\alpha[\{\action{p}\}\times Q / \expr{g}] \\
            % &\phantom{=} {} + \prob{(1-r)}\partial(\expr{f})_\alpha[\{\action{p}\}\times Q / \expr{g}] + \partial(\expr{g})_\alpha[\{\action{p}\}\times Q]\\
            % &= \prob{r} \left(\partial(\expr{e})_\alpha[\{\action{p}\} \times Q/ \expr{g}] + \partial(\expr{g})_\alpha[\{\action{p}\}\times Q]\right) \\
            % &\phantom{=} {} + \prob{(1-r)} \left(\partial(\expr{f})_\alpha[\{\action{p}\} \times Q/ \expr{g}] + \partial(\expr{g})_\alpha[\{\action{p}\}\times Q]\right)\\
            % &= \partial(\expr{e}\seq \expr{g} \oplus_\prob{r} \expr{f} \seq \expr{g})_\alpha[\{\action{p}\}\times Q]
        \end{align*}
    \end{enumerate}

    \item[(\acro{S7})] Identical line of reasoning to the case of (\acro{S4})

    \item[(\acro{S8})] We use \cref{lem: equality_enough_for_bisimilarity} and show the equality of distributions for all \(\alpha \in \At\). First, consider the case when \(\alpha \bleq \test{b}\) and \(\alpha \bleq \test{c}\). Observe, that it is equivalent to \(\alpha \bleq \test{bc}\). In such a case, we have that \[\partial(\test{b}\seq\test{c})_\alpha(\accept) = \partial(\test{b})_\alpha(\accept)\partial(\test{c})_\alpha(\accept)=\prob{1}=\partial(\test{bc})_\alpha(\accept)\] Since both distributions assign all the probability mass to the same element, they are equal. Now, consider the case when \(\alpha \bleq \bar{\test{b}}\). In such a case, this implies that \(\alpha \bleq \bar{\test{bc}}\). We have that
    \[
    \partial(\test{b}\seq\test{c})_\alpha(\reject) = \partial(\test{b})_\alpha(\reject) + \partial(\test{b})_\alpha(\accept)\partial(\test{c})_\alpha(\reject) = \prob{1} = \partial(\test{bc})_\alpha(\reject)
    \]
    which is enough to show that both distributions are equal.
    Finally, consider the case when \(\alpha \bleq \test{b}\) and \(\alpha \bleq \bar{\test{c}}\), which also implies that \(\alpha \bleq \bar{\test{bc}}\). We again have that
    \[
    \partial(\test{b}\seq\test{c})_\alpha(\reject) = \partial(\test{b})_\alpha(\reject) + \partial(\test{b})_\alpha(\accept)\partial(\test{c})_\alpha(\reject) = \prob{1} = \partial(\test{bc})_\alpha(\reject)
    \]
    which proves that both distributions are equal.
\end{description}
Now, let's consider the \emph{loop axioms}. All cases will rely on \cref{lem:preservation_of_equivalences}.
\begin{description}
    \item[(\acro{L1})] First, consider the situation when \(\alpha \bleq \bar{\test{b}}\). Then, \[\partial\left(\expr{e}^{(\test{b})}\right)_\alpha(\accept) = \prob{1} = \partial(\test{\one})_\alpha(\accept) = \partial\left(\expr{e} \seq \expr{e}^{(\test{b})} +_\test{b} \test{\one}\right)_\alpha(\accept) \] Since both distributions assign all the probability mass to \(\accept\), all conditions of \cref{lem:preservation_of_equivalences} are immediately satisfied. For the rest of cases we will assume that \(\alpha \bleq \test{b}\). First consider the subcase when \(\partial(\expr{e})_\alpha(\accept)=\prob{1}\). Then,
    \begin{align*}
         \partial\left(\expr{e} \seq \expr{e}^{(\test{b})} +_\test{b} \test{\one} \right)_\alpha(\reject) &= \partial\left(\expr{e} \seq \expr{e}^{(\test{b})}\right)_\alpha(\reject)\\
         &= \partial(\expr{e})_\alpha(\reject) + \partial(\expr{e})_\alpha(\accept)\partial\left(\expr{e}^{(\test{b})}\right)(\reject)\\
         &= \partial(\expr{e})_\alpha(\accept)\partial\left(\expr{e}^{(\test{b})}\right)(\reject)\\
         &= \prob{1}\\
         &=\partial \left(\expr{e}^{(\test{b})} \right) _\alpha(\reject)
    \end{align*}
Since again both distributions assign all probability mass to \(\reject\), all conditions of \cref{lem:preservation_of_equivalences} are satisfied.
For the last case, we can now assume that \(\partial(\expr{e})_\alpha(\accept)\neq\prob{1}\). We verify all the conditions of \cref{lem:preservation_of_equivalences}.
\begin{enumerate}
    \item For \(o \in \{\reject\} + \V \), we have that
    \begin{align*}
        \partial\left(\expr{e}\seq\expr{e}^{(\test{b})} +_\test{b} \test{\one}\right)_\alpha(o) &=  \partial\left(\expr{e}\seq\expr{e}^{(\test{b})}\right)_\alpha(o) \\
        &=\partial(\expr{e})_\alpha(o) + \partial(\expr{e})_\alpha(\accept) \partial\left(\expr{e}^{(\test{b})}\right)_\alpha(o)\\
        &= \partial(\expr{e})_\alpha(o) + \partial(\expr{e})_\alpha(\accept)  \frac{\partial(\expr{e})_\alpha(o)}{\prob{1}-\partial(\expr{e})_\alpha(\accept)}\\
        &=\frac{\partial(\expr{e})_\alpha(o)(\prob{1} - \partial(\expr{e})_\alpha(\accept)) + \partial(\expr{e})_\alpha(\accept)\partial(\expr{e})_\alpha(o)}{1-\partial(\expr{e})_\alpha(\accept)} \\
        &= \frac{\partial(\expr{e})_\alpha(o)}{\prob{1}-\partial(\expr{e})_\alpha(\accept)} \\
        &= \partial\left(\expr{e}^{(\test{b})}\right)_\alpha(o)
    \end{align*}

    For the remaining case of acceptance, we have that
    \begin{align*}
        \partial\left(\expr{e}\seq\expr{e}^{(\test{b})} +_\test{b} \test{\one}\right)_\alpha(\accept) &= \partial\left(\expr{e}\seq\expr{e}^{(\test{b})}\right)_\alpha(\accept) \\
        &=\partial(\expr{e})_\alpha(\accept)\partial\left(\expr{e}^{(\test{b})}\right)_\alpha(\accept)\\
        &=\prob{0}\\
        &=\partial\left(\expr{e}^{(\test{b})}\right)_\alpha(\accept)
    \end{align*}
    \item Let \(\action{p} \in \Act\) and \(Q \in \Exp / \equiv\) be an equivalence class of \(\equiv\). We have that
    \begin{align*}
        \partial\left(\expr{e}\seq\expr{e}^{(\test{b})} +_\test{b} \test{\one}\right)_\alpha[\{\action{p}\}\times Q] &=  \partial\left(\expr{e}\seq\expr{e}^{(\test{b})}\right)_\alpha[\{\action{p}\}\times Q] \\
        &=\partial(\expr{e})_\alpha[\{\action{p}\}\times Q / \expr{e}^{(\test{b})}] + \partial(\expr{e})_\alpha(\accept) \partial\left(\expr{e}^{(\test{b})}\right)_\alpha[\{\action{p}\}\times Q]\\
        &=\partial(\expr{e})_\alpha[\{\action{p}\}\times Q / \expr{e}^{(\test{b})}] + \partial(\expr{e})_\alpha(\accept) \frac{\partial(\expr{e})_\alpha[\{\action{p}\}\times Q / {\expr{e}^{(\test{b})}}]}{\prob{1}-\partial(\expr{e})_\alpha(\accept)}\\
        &=\frac{\partial(\expr{e})_\alpha[\{\action{p}\}\times Q / {\expr{e}^{(\test{b})}}]}{\prob{1}-\partial(\expr{e})_\alpha(\accept)}\\
        &=\partial\left(\expr{e}^{(\test{b})}\right)_\alpha[\{\action{p}\}\times Q]
    \end{align*}
\end{enumerate}

\item[(L2)] First, consider the case when \(\partial(\expr{e})_\alpha(\accept) = \prob{1} \) and \(\prob{r} = \prob{1}\). Then, for all \(\alpha \in \At\)
\begin{align*}
    \partial\left(\expr{e} \seq \expr{e}^{[\prob{r}]} \oplus_{\prob{r}} \test{\one }\right)_\alpha(\reject) &= \partial \left(\expr{e} \seq \expr{e}^{[\prob{r}]}\right)_\alpha(\reject)\\
    &= \partial(\expr{e})_\alpha(\reject) + \partial(\expr{e})_\alpha(\accept)\partial\left(\expr{e}^{[\prob{r}]}\right)_\alpha(\reject)\\
    &=\partial\left(\expr{e}^{[\prob{r}]}\right)_\alpha(\reject)\\
    &=\prob{1}
\end{align*}
Since both distributions assign all probability mass to \(\reject\), all conditions of \cref{lem:preservation_of_equivalences} are immediately satisfied.
From now on, we can safely assume that \(\prob{r}\partial(\expr{e})_\alpha(\accept)\neq 1\). We now verify all the conditions of \cref{lem:preservation_of_equivalences} for all \(\alpha \in \At\).
\begin{enumerate}
    \item Let \(o \in \{\reject\} + \V\). Then,
    \begin{align*}
        \partial\left(\expr{e}\seq\expr{e}^{[\prob{r}]} \oplus_{\prob{r}} \test{\one}\right)_\alpha(o)  &= \prob{r}\partial \left(\expr{e}\seq\expr{e}^{[\prob{r}]}\right)_\alpha(o) + \prob{(1-r)}\partial(\test{\one})_\alpha(o) \\
        &= \prob{r}\partial(\expr{e})_\alpha(o) + \prob{r}\partial(\expr{e})_\alpha(\accept)\frac{\prob{r}\partial(\expr{e})_\alpha(o)}{\prob{1}-\prob{r}\partial(\expr{e})_\alpha(\accept)}\\
        &=\frac{\prob{r}\partial(\expr{e})_\alpha(o)}{\prob{1}-\prob{r}\partial(\expr{e})_\alpha(\accept)}\\
        &=\partial \left(\expr{e}^{[\prob{r}]}\right)_\alpha(o)
    \end{align*}
    Now, consider the remaining case of successful termination.
    \begin{align*}
        \partial\left(\expr{e}\seq\expr{e}^{[\prob{r}]} \oplus_{\prob{r}} \test{\one}\right)_\alpha(\accept) &= \prob{r} \partial\left(\expr{e} \seq \expr{e}^{[\prob{r}]}\right) + \prob{(1-r)}\\
        &= \prob{r}\partial(\expr{e})_\alpha(\accept)\frac{\prob{1-r}}{\prob{1} - \prob{r}\partial(\expr{e})_\alpha(\accept)} + \prob{(1-r)}\\
        &=\frac{\prob{1-r}}{\prob{1}-\prob{r}\partial(\expr{e})_\alpha(\accept)}\\
        &= \partial\left(\expr{e}^{[\prob{r}]}\right)_\alpha(\accept)
    \end{align*}
    \item Let \(\action{p} \in \Act\) and let \(Q \in \Exp /{\equiv}\) be an equivalence class of \(\equiv\).
    \begin{align*}
        \partial\left(\expr{e}\seq\expr{e}^{[\prob{r}]} \oplus_{\prob{r}} \test{\one}\right)_\alpha[\{\action{p}\}\times Q]  &= \prob{r}\partial \left(\expr{e}\seq\expr{e}^{[\prob{r}]}\right)_\alpha[\{\action{p}\}\times Q] + \prob{(1-r)}\partial(\test{\one})_\alpha[\{\action{p}\}\times Q] \\
        &= \prob{r}\partial(\expr{e})_\alpha[\{\action{p}\}\times Q/{\expr{e}^{[\prob{r}]}}] + \prob{r}\frac{ \prob{r}\partial(\expr{e})_\alpha[\{\action{p}\}\times Q/{\expr{e}^{[\prob{r}]}}]}{\prob{1}-\prob{r}\partial(\expr{e})_\alpha(\accept)}\\
        &=\frac{ \prob{r}\partial(\expr{e})_\alpha[\{\action{p}\}\times Q/{\expr{e}^{[\prob{r}]}}]}{\prob{1}-\prob{r}\partial(\expr{e})_\alpha(\accept)} \\
        &= \partial\left(\expr{e}^{[\prob{r}]}\right)_\alpha[\{\action{p}\}\times Q]
    \end{align*}
\end{enumerate}
\item[(\acro{L3})] First, consider the case when \(\alpha \bleq \bar{\test{b}}\). Then,
\[
\partial\left((\expr{e} +_\test{c} \test{\one})^{(\test{b})}\right)_\alpha(\accept) = \prob{1} = \partial \left( (\test{c} \seq \expr{e})^{(\test{b})}\right)_\alpha(\accept)
\]
Since both distributions assign all the probability mass to \(\accept\), all conditions of \cref{lem:preservation_of_equivalences} are immediately satisfied. Now, consider the case when \(\alpha \bleq \test{b}\) and \(\alpha \bleq \bar{\test{c}}\). Now
\[
\partial(\expr{e} +_\test{c} \test{\one})_\alpha(\accept) = \partial(\test{\one})_\alpha(\accept) = \prob{1}
\]
and therefore
\begin{align*}
    \partial\left((\test{c}\seq\expr{e})^{(\test{b})}\right)_\alpha(\reject) &= \frac{\partial(\test{c}\seq\expr{e})_\alpha(\reject)}{\prob{1}-\partial(\test{c}\seq\expr{e})_\alpha(\accept)}\\
    &=\frac{\partial(\test{c})_\alpha(\reject) + \partial(\test{c})_\alpha(\accept)\partial(\expr{e})_\alpha(\reject)}{\prob{1}-\partial(\test{c})_\alpha(\accept)\partial(\expr{e})_\alpha(\accept)}\\
    &=\prob{1}\\
    &=\partial\left((\expr{e} +_\test{c} \test{\one})^{(\test{b})}\right)_\alpha(\reject)
\end{align*}
Similarly to the case before, both distributions assign all the probability mass to \(\reject\), and hence all conditions of \cref{lem:preservation_of_equivalences} are immediately satisfied. Now, consider the case when \(\alpha \bleq \test{b}\) and \(\alpha \bleq \test{c}\). First, consider the subcase when also \(\partial(\expr{e})_\alpha(\accept)=\prob{1}\). Observe, that in such a case both \(\partial(\expr{e} +_\test{c} \test{\one})_\alpha(\accept) = \prob{1}\) and \(\partial(\test{c}\seq\expr{e})_\alpha(\accept)=\prob{1}\) and therefore we have that
\[
\partial\left((\expr{e} +_\test{c} \test{\one})^{(\test{b})} \right)_\alpha(\reject) = \prob{1} = \partial \left( (\test{c} \seq \expr{e})^{(\test{b})} \right)_\alpha(\reject)
\]
Both distributions assign all probability mass to the same element, which immediately satisfies the requirements of \cref{lem:preservation_of_equivalences}.
For the remainder, we can now safely assume that \(\partial(\expr{e})_\alpha(\accept)\neq\prob{1}\). This time, we have to verify all conditions of \cref{lem:preservation_of_equivalences}.
\begin{enumerate}
    \item For \(o \in \{\reject\} + \V\), we have that
    \begin{align*}
        \partial\left((\expr{e} +_\test{c} \test{\one})^{(\test{b})}\right)_\alpha(o) &= \frac{\partial(\expr{e} +_\test{c} \test{\one})_\alpha(o)}{\prob{1}-\partial(\expr{e} +_\test{c} \test{\one})_\alpha(\accept)} \\
        &= \frac{\partial(\expr{e})_\alpha(o)}{\prob{1}-\partial(\expr{e})_\alpha(\accept)} \\
        &= \frac{\partial(\test{c}\seq\expr{e})_\alpha(o)}{\prob{1}-\partial(\test{c}\seq\expr{e})_\alpha(\accept)} \tag{\cref{lem:asserting_derivative}} \\
        &= \partial\left((\test{c} \seq \expr{e})^{(\test{b})}\right)_\alpha(\reject)
    \end{align*}
    In the remaining case of immediate acceptance we have
    \[
    \partial\left((\expr{e} +_\test{c} \test{\one})^{(\test{b})}\right)_\alpha(\accept) = \prob{0} = \partial\left((\test{c}\seq\expr{e})^{\test{b}}\right)_\alpha(\accept)
    \]
    \item Let \(\action{p} \in \Act\) and \(Q \in \Exp / \equiv\). For all \(\alpha \in \At\) we have that
    \begin{align*}
        \partial\left((\expr{e} +_\test{c} \test{\one})^{(\test{b})}\right)_\alpha[\{\action{p}\}\times Q] &= \frac{\partial(\expr{e} +_\test{c} \test{\one})_\alpha[\{\action{p}\}\times Q/{(\expr{e} +_\test{c} \test{\one})^{(\test{b})}}]}{\prob{1}-\partial(\expr{e} +_\test{c} \test{\one})_\alpha(\accept)}\\
        &= \frac{\partial(\expr{e})_\alpha[\{\action{p}\}\times Q/{(\expr{e} +_\test{c} \test{\one})^{(\test{b})}}]}{\prob{1}-\partial(\expr{e})_\alpha(\accept)}\\
        &= \frac{\partial(\test{c}\seq\expr{e})_\alpha[\{\action{p}\}\times Q/{(\expr{e} +_\test{c} \test{\one})^{(\test{b})}}]}{\prob{1}-\partial(\test{c}\seq\expr{e})_\alpha(\accept)} \tag{\cref{lem:asserting_derivative}}\\
        &= \frac{\partial(\test{c}\seq\expr{e})_\alpha[\{\action{p}\}\times Q/{(\test{c}\seq\expr{e})^{(\test{b})}}]}{\prob{1}-\partial(\test{c}\seq\expr{e})_\alpha(\accept)} \tag{\cref{lem:swapping_congruent_ends}}\\
        &=\partial\left((\test{c}\seq\expr{e})^{(\test{b})}\right)_\alpha[\{\action{p}\}\times Q]
    \end{align*}
    \end{enumerate}
    \item[(L4)] Let \(\alpha \in \At\) be an arbitrary atom. First we will consider the subcase when \(\partial(\expr{e})_\alpha(\accept) = \prob{1}\). For all \(\alpha \in \At\) we have that
    \[
    \partial\left(\expr{e}^{(\test{\one})}\right)_\alpha(\reject) = \prob{1} = \partial\left(\expr{e}^{[\prob{1}]}\right)_\alpha(\reject)
    \]
    which is enough to verify the conditions of \cref{lem:preservation_of_equivalences}.For the remainder of this case, we can safely assume that \(\partial(\expr{e})_\alpha(\accept)\neq\prob{1}\). In such as case, we need to verify both conditions of \cref{lem:preservation_of_equivalences} for arbitrary \(\alpha \in \At\).
    \begin{enumerate}
        \item Let \(o \in \{\reject\} + \V\). We have that
        \[
            \partial\left(\expr{e}^{(\test{\one})} \right)_\alpha(o)
            = \frac{\partial(\expr{e})_\alpha(o)}{\prob{1} - \partial(\expr{e})_\alpha(\accept)}
            = \frac{\prob{1}\partial(\expr{e})_\alpha(o)}{\prob{1} - \prob{1}\partial(\expr{e})_\alpha(\accept)}
            = \partial\left(\expr{e}^{[\prob{1}]}\right)_\alpha(o)
        \]
    As for the case of immediate acceptance, we have that
    \[
    \partial\left(\expr{e}^{(\test{\one})}\right) = \prob{0} = \frac{\prob{0}}{\prob{1}-\prob{1}\partial(\expr{e})_\alpha(\accept)} = \partial \left(\expr{e}^{[\prob{1}]}\right)_\alpha(\accept)
    \]
    \item Let \(\action{p} \in \Act\) and \(Q \in \Exp / \equiv\). We have that
    \begin{align*}
        \partial\left(\expr{e}^{(\test{\one})}\right)_\alpha[\{\action{p}\} \times Q] &= \frac{\partial(\expr{e})_\alpha[\{\action{p}\}\times Q/ {\expr{e}^{(\test{\one})}}]}{\prob{1} - \partial(\expr{e})_\alpha(\accept)} \\
        &= \frac{\partial(\expr{e})_\alpha[\{\action{p}\}\times Q/ {\expr{e}^{[\prob{1}]}}]}{\prob{1} - \partial(\expr{e})_\alpha(\accept)} \tag{\cref{lem:swapping_congruent_ends}}\\
        &= \partial\left(\expr{e}^{[\prob{1}]}\right)_\alpha[\{\action{p}\}\times Q]
    \end{align*}
    \end{enumerate}
    \item[(\acro{F1})] Assume the premises of the rule hold. We have that \((\expr{g}, \expr{e}\seq\expr{g} +_\test{b} \expr{f}) \in \Phi_\partial(\equiv)\) and for all \(\alpha \in \At\) we have that \(\trmt{\expr{g}}_\alpha = \prob{0}\). By \cref{lem:no_termination_implies_success} we have that \(\partial(\expr{g})_\alpha(\accept) = \prob{0}\) for all \(\alpha \in \At\).

     First, consider the case when \(\alpha \bleq \bar{\test{b}}\). We verify the conditions of \cref{lem:preservation_of_equivalences}.
    \begin{enumerate}
        \item Let \(o \in \{\reject\} + \V \). Consider the following
        \begin{align*}
            \partial(\expr{g})_\alpha(o)&=
            \partial(\expr{e}\seq\expr{g} +_\test{b} \expr{f})_\alpha(o) \tag{Induction hypothesis}\\
            &=\partial(\expr{f})_\alpha(o)\\
            &= \partial\left(\expr{e}^{(\test{b})}\right)_\alpha(o) + \partial\left(\expr{e}^{(\test{b})}\right)_\alpha(\accept)\partial(\expr{f})_\alpha(o) \\
            &=\partial\left(\expr{e}^{(\test{b})}\seq\expr{f}\right)_\alpha(o)
        \end{align*}
        For the case of immediate termination consider
        \begin{align*}
            \partial(\expr{g})_\alpha(\accept) &= \partial(\expr{e} \seq \expr{g} +_\test{b} \expr{f})_\alpha(\accept) \tag{Induction hypothesis}\\
            &=\partial(\expr{f})_\alpha(\accept) \\
            &=\partial \left( \expr{e}^{(\test{b})}\right)_\alpha(\accept)\partial(\expr{f})_\alpha(\accept)\\
            &=\partial\left( \expr{e}^{(\test{b})} \seq \expr{f}\right)_\alpha(\accept)
        \end{align*}
        \item Let \(\action{p} \in \Act\) and \(Q \in \Exp / \equiv \). We have that
        \begin{align*}
            \partial(\expr{g})_\alpha[\{\action{p}\} \times Q] &= \partial(\expr{e}\seq\expr{g} +_\test{b} \expr{f})_\alpha[\{\action{p}\}\times Q]\tag{Induction hypothesis}\\
            &=\partial(\expr{f})_\alpha[\{\action{p}\}\times Q]\\
            &=\partial\left(\expr{e}^{(\test{b})}\right)[\{\action{p}\} \times Q / \expr{f}] + \partial\left(\expr{e}^{(\test{b})}\right)(\accept)\partial(\expr{f})_\alpha[\{\action{p}\}\times Q] \\
            &=\partial\left( \expr{e}^{(\test{b})}\seq \expr{f}\right)_\alpha[\{\action{p}\}\times Q] \tag{\cref{lem:sequencing_cutting_postfixes}}
        \end{align*}
    \end{enumerate}
    Now, we consider the case when \(\alpha \bleq \bar{\test{b}}\).
    \begin{enumerate}
        \item For \(o \in \{\reject\} + \V\), we have that
        \begin{align*}
            \partial(\expr{g})_\alpha(o) &= \partial(\expr{e}\seq\expr{g} +_\test{b} \expr{f})_\alpha(o) \tag{Induction hypothesis}\\
            &= \partial(\expr{e}\seq\expr{g})_\alpha(o)\\
            &=\partial(\expr{e})_\alpha(o) + \partial(\expr{e})_\alpha(\accept)\partial(\expr{g})_\alpha(o)\\
            &=\partial(\expr{e})_\alpha(o) \\
            &=\frac{\partial(\expr{e})_\alpha(o)}{\prob{1}-\partial(\expr{e})_\alpha(\accept)} \\
            &=\partial \left( \expr{e}^{(\test{b})}\right)_\alpha(o)\\
            &=\partial \left( \expr{e}^{(\test{b})}\right)_\alpha(o) + \partial\left(\expr{e}^{(\test{b})}\right)_\alpha(\accept)\partial(\expr{f})_\alpha(o)\\
            &=\partial\left(\expr{e}^{(\test{b})}\seq\expr{f} \right)_\alpha(o)
        \end{align*}
        For the remaining case of outputting \(\accept\), consider the following
        \begin{align*}
            \partial(\expr{g})_\alpha(\accept) &= \partial(\expr{e}\seq\expr{g} +_\test{b} \expr{f})_\alpha(\accept) \tag{Induction hypothesis} \\
            &=\partial(\expr{e} \seq \expr{g})_\alpha(\accept) \\
            &=\partial(\expr{e})_\alpha(\accept)\partial(\expr{g})_\alpha(\accept) \\
            &=\prob{0}\\
            &=\partial \left( \expr{e}^{(\test{b})}\right)_\alpha(\accept)\\
            &=\partial \left( \expr{e}^{(\test{b})}\right)_\alpha(\accept)\partial(\expr{f})_\alpha(\accept) \\
            &=\partial\left(\expr{e}^{(\test{b})}\seq\expr{f}\right)_\alpha(\accept)
        \end{align*}
        \item For \(\action{p} \in \Act \) and \(Q \in \Exp / \equiv \) we have that
        \begin{align*}
            \partial(\expr{g})_\alpha[\{\action{p}\}\times Q] &= \partial(\expr{e}\seq\expr{g} +_\test{b} \expr{f})_\alpha[\{\action{p}\}\times Q] \tag{Induction hypothesis}\\
            &=\partial(\expr{e}\seq\expr{g})_\alpha[\{\action{p}\}\times Q]\\
            &=\partial(\expr{e})_\alpha[\{\action{p}\}\times Q / \expr{g}] + \partial(\expr{e})_\alpha(\accept)\partial(\expr{g})_\alpha[\{\action{p}\}\times Q] \\
            &=\partial(\expr{e})_\alpha[\{\action{p}\}\times Q / \expr{g}] \\
            &= \partial(\expr{e})_\alpha[\{\action{p}\} \times Q / \expr{e}^{(\test{b})}\seq\expr{f}] \tag{\cref{lem:swapping_congruent_ends}}\\
            &= \partial(\expr{e})_\alpha[\{\action{p}\} \times (Q /\expr{f})/ \expr{e}^{(\test{b})}] \tag{\cref{lem:associativity_of_cutting}}\\
            &= \frac{\partial(\expr{e})_\alpha[\{\action{p}\} \times (Q /\expr{f})/ \expr{e}^{(\test{b})}]}{\prob{1}-\partial(\expr{e})_\alpha(\accept)} \\
            &=\partial \left( \expr{e}^{(\test{b})}\right)_\alpha[\{\action{p}\} \times Q / \expr{f}] \\
            &=\partial \left( \expr{e}^{(\test{b})}\right)_\alpha[\{\action{p}\} \times Q / \expr{f}] + \partial\left(\expr{e}^{(\test{b})}\right)_\alpha(\accept)\partial(\expr{f})_\alpha[\{\action{p}\}\times Q] \\
            &=\partial\left(\expr{e}^{(\test{b})}\seq\expr{f} \right)_\alpha[\{\action{p}\} \times Q]
        \end{align*}
    \end{enumerate}
    \item[(\acro{F2})] Assume the premises of the rule hold. We have that \((\expr{g}, \expr{e}\seq\expr{g} \oplus_\prob{r} \expr{f}) \in \Phi_\partial(\equiv)\) and for all \(\alpha \in \At\) we have that \(\trmt{\expr{g}}_\alpha = \prob{0}\). By \cref{lem:no_termination_implies_success} we have that \(\partial(\expr{g})_\alpha(\accept) = \prob{0}\) for all \(\alpha \in \At\). We verify the conditions of \cref{lem:preservation_of_equivalences}.
    \begin{enumerate}
        \item For \(o \in \{\reject\} + \V\) it holds that
        \begin{align*}
            \partial(\expr{g})_\alpha(o) &= \partial(\expr{e}\seq\expr{g} \oplus_{\prob{r}} \expr{g})_\alpha(o) \tag{Induction hypothesis}\\
            &=\prob{r}\partial(\expr{e}\seq\expr{g})_\alpha(o) + \prob{(1-r)} \partial(\expr{f})_\alpha(o) \\
            &=\prob{r}\partial(\expr{e})_\alpha(o) + \prob{r}\partial(\expr{e})_\alpha(\accept)\partial(\expr{g})_\alpha(o) + \prob{(1-r)}\partial(\expr{f})_\alpha(o)\\
            &=\frac{\prob{r}\partial(\expr{e})_\alpha(o)}{\prob{1}-\prob{r}\partial(\expr{e})_\alpha(\accept)} + \frac{\prob{1-r}}{\prob{1}-\prob{r}\partial(\expr{e})_\alpha(\accept)}\partial(\expr{f})_\alpha(o)\\
            &=\partial\left(\expr{e}^{[\prob{r}]} \right)_\alpha(o) + \partial\left(\expr{e}^{[\prob{r}]} \right)_\alpha(\accept)\partial(\expr{f})_\alpha(o)\\
            &=\partial \left(\expr{e}^{[\prob{r}]}\seq\expr{f} \right)_\alpha(o)
        \end{align*}
        Now, consider the case of outputting \(\accept\). We have that
        \begin{align*}
            \partial(\expr{g})_\alpha(\accept) &= \partial(\expr{e}\seq\expr{g} \oplus_{\prob{r}} \expr{f})_\alpha(\accept) \tag{Induction hypothesis}\\
            &=\prob{r}\partial(\expr{e}\seq\expr{g})_\alpha(\accept) + \prob{(1-r)}\partial(\expr{f})_\alpha(\accept) \\
            &=\prob{r}\partial(\expr{e})_\alpha(\accept)\partial(\expr{g})_\alpha(\accept) + \prob{(1-r)}\partial(\expr{f})_\alpha(\accept)\\
            &=\prob{(1-r)}\partial(\expr{f})_\alpha(\accept)\\
            &=\frac{\prob{1-r}}{\prob{1}-\prob{r}\partial(\expr{e})_\alpha(\accept)}\partial(\expr{f})_\alpha(\accept)\\
            &=\partial\left(\expr{e}^{[\prob{r}]}\right)_\alpha(\accept) \partial(\expr{f})_\alpha(\accept)\\
            &=\partial\left(\expr{e}^{[\prob{r}]}\seq\expr{f} \right)_\alpha(\accept)
        \end{align*}
        \item Let \(\action{p} \in \Act\) and let \(Q \in \Exp / \equiv\). Consider the following
        \begin{align*}
            \partial(\expr{g})_\alpha[\{\action{p}\}] &= \partial(\expr{e}\seq\expr{g} \oplus_\prob{r} \expr{f})_\alpha[\{\action{p}\} \times Q] \tag{Induction hypothesis}\\
            &=\prob{r}\partial(\expr{e}\seq\expr{g})_\alpha[\{\action{p}\} \times Q] + \prob{(1-r)} \partial(\expr{f})_\alpha[\{\action{p}\} \times Q] \\
            &=\prob{r}\partial(\expr{e})_\alpha[\{\action{p}\} \times Q / \expr{g}] + \prob{r} \partial(\expr{e})_\alpha(\accept)\partial(\expr{g})_\alpha[\{\action{p}\} \times Q] \\
            &\phantom{=} {} + \prob{(1-r)}\partial(\expr{f})_\alpha[\{\action{p}\}\times Q] \\
            &=\prob{r}\partial(\expr{e})_\alpha[\{\action{p}\} \times Q / \expr{g}] + \prob{(1-r)}\partial(\expr{f})_\alpha[\{\action{p}\}\times Q]\\
            &=\prob{r}\partial(\expr{e})_\alpha[\{\action{p}\} \times Q/ \expr{e}^{[\prob{r}]}\seq\expr{f}] + \prob{(1-r)}\partial(\expr{f})_\alpha[\{\action{p}\}\times Q]\tag{\cref{lem:swapping_congruent_ends}} \\
            &=\prob{r}\partial(\expr{e})_\alpha[\{\action{p}\} \times (Q/\expr{f}) / \expr{e}^{[\prob{r}]}] + \prob{(1-r)}\partial(\expr{f})_\alpha[\{\action{p}\}\times Q]\tag{\cref{lem:associativity_of_cutting}} \\
            &=\frac{\prob{r}\partial(\expr{e})_\alpha[\{\action{p}\} \times (Q/\expr{f}) / \expr{e}^{[\prob{r}]}]}{\prob{1}-\prob{r}\partial(\expr{e})_\alpha(\accept)} + \frac{\prob{(1-r)}}{\prob{1}-\prob{r}\partial(\expr{e})_\alpha(\accept)}\partial(\expr{f})_\alpha[\{\action{p}\}\times Q] \\
            &=\partial\left(\expr{e}^{[\prob{r}]} \right)_\alpha[\{\action{p}\}\times Q / \expr{f}] + \partial \left(\expr{e}^{[\prob{r}]}\right)_\alpha(\accept)\partial(\expr{f})_\alpha[\{\action{p}\}\times Q]\\
            &=\partial\left(\expr{e}^{[\prob{r}]}\seq \expr{f} \right)_\alpha[\{\action{p}\}\times Q]
        \end{align*}
    \end{enumerate}
        \item[(\acro{L5})]
        Assume that premises hold and are satisfied by bisimilarity. In particular, we have that \((\expr{e}, (\expr{f} \oplus_\prob{r} \test{\one}) +_\test{c} \expr{g}) \in \Phi_\partial(\equiv)\). By assumption \(\prob{r} > \prob{0}\). First, consider the case when \(\alpha \bleq \bar{\test{c}}\). We have
        \begin{align*}
            \partial\left(\test{c}\seq\expr{e}^{(\test{b})}\right)_\alpha(\reject) &= \partial(\test{c})_\alpha(\reject) + \partial(\test{c})_\alpha(\accept)\partial\left(\expr{e}^{(\test{b})}\right)_\alpha(\reject) \\
            &=\prob{1} \\
            &=\partial\left(\test{c}\seq(\expr{f} \seq \expr{e}^{(\test{b})} +_{\test{b}} \test{\one}) \right)_\alpha(\reject)
        \end{align*}
        which is enough to verify all the conditions of \cref{lem:preservation_of_equivalences} since all probability mass is assigned by both distributions to the same element. Now, consider the case when \(\alpha \bleq \test{c}\) and \(\alpha \bleq \bar{\test{b}}\). We have that
        \begin{align*}
            \partial\left(\test{c}\seq\expr{e}^{(\test{b})}\right)_\alpha(\accept)&=\partial\left(\expr{e}^{(\test{b})}\right)_\alpha(\accept) \tag{\cref{lem:asserting_derivative}} \\
            &=\prob{1} \\
            &=\partial\left(\expr{f}\seq\expr{e}^{(\test{b})} +_\test{b} \test{\one} \right)_\alpha(\accept)\\
            &=\partial\left(\test{c}\seq(\expr{f}\seq\expr{e}^{(\test{b})} +_\test{b} \test{\one})\right)_\alpha(\accept) \tag{\cref{lem:asserting_derivative}}
        \end{align*}
        which is again enough to verify the conditions of \cref{lem:preservation_of_equivalences}.
        For the rest of cases assume that \(\alpha \bleq \test{c}\) and \(\alpha \bleq \test{b}\).
        First, we consider the subcase when \(\partial(\expr{f})_\alpha(\accept)=\prob{1}\). Because of the induction hypothesis, it means that \(\partial(\expr{e})_\alpha(\accept)=\prob{1}\) and hence we have that
        \begin{align*}
            \partial\left(\test{c}\seq\expr{e}^{(\test{b})}\right)_\alpha(\reject) &= \partial \left(\expr{e}^{(\test{b})}\right)_\alpha(\reject)\\
            &=\prob{1} \\
            &=\partial(\expr{f})_\alpha(\reject) + \partial(\expr{f})_\alpha(\accept) \partial\left(\expr{e}^{(\test{b})}\right)_\alpha(\reject) \\
            &= \partial\left(\expr{f}\seq\expr{e}^{(\test{b})} \right)_\alpha(\reject) \\
            &=\partial\left(\expr{f}\seq\expr{e}^{(\test{b})} +_\test{b} \test{\one} \right)_\alpha(\reject) \\
            &=\partial\left(\test{c}\seq(\expr{f}\seq\expr{e}^{(\test{b})} +_\test{b} \test{\one})\right)_\alpha(\reject)
        \end{align*}
        This is enough to satisfy the conditions of \cref{lem:preservation_of_equivalences}. Finally, for the remainder of this case we will assume that \(\partial(\expr{f})_\alpha(\accept)\neq\prob{1}\). In this case we need to verify all conditions of \cref{lem:preservation_of_equivalences}.
        \begin{enumerate}
            \item Let \(o \in \{\reject\} + \V\). Observe that
            \begin{align*}
                \partial\left(\test{c}\seq\expr{e}^{(\test{b})}\right)_\alpha(o) &= \partial \left(\expr{e}^{(\test{b})} \right)_\alpha(o) \tag{\cref{lem:asserting_derivative}} \\
                &=\frac{\partial(\expr{e})_\alpha(o)}{\prob{1}-\partial(\expr{e})_\alpha(\accept)} \\
                &=\frac{\partial((\expr{f} \oplus_\prob{r} \test{\one}) +_\test{c} \expr{g})_\alpha(o)}{\prob{1}-\partial((\expr{f} \oplus_\prob{r} \test{\one}) +_\test{c} \expr{g})_\alpha(\accept)} \tag{Induction hypothesis}\\
                &=\frac{\partial(\expr{f} \oplus_\prob{r} \test{\one} )_\alpha(o)}{\prob{1}-\partial(\expr{f} \oplus_\prob{r} \test{\one})_\alpha(\accept)} \\
                &=\frac{\prob{r}\partial(\expr{f})_\alpha(o)}{\prob{1}-\prob{r}\partial(\expr{f})_\alpha(\accept) - \prob{(1-r)}}\\
                &=\frac{\partial(\expr{f})_\alpha(o)}{\prob{1}-\partial(\expr{f})_\alpha(\accept)}\\
                &=\partial(\expr{f})_\alpha(o)\left(\frac{\prob{1}-\partial(\expr{f})_\alpha(\accept) + \partial(\expr{f})_\alpha(\accept)}{\prob{1}-\partial(\expr{f})_\alpha(\accept)}\right)\\
                &=\partial(\expr{f})_\alpha(o)\left(\prob{1} + \frac{\partial(\expr{f})_\alpha(\accept)}{\prob{1}-\partial(\expr{f})_\alpha(\accept)}\right)\\
                &=\partial(\expr{f})_\alpha(o) + \partial(\expr{f})_\alpha(\accept)\frac{\partial(\expr{f})_\alpha(o)}{\prob{1}-\partial(\expr{f})_\alpha(o)} \\
                &=\partial(\expr{f})_\alpha(o) + \partial(\expr{f})_\alpha(\accept)\frac{\prob{r}\partial(\expr{f})_\alpha(o)}{\prob{1}-\prob{r}\partial(\expr{f})_\alpha(\accept)-\prob{(1-r)}}\\
                &=\partial(\expr{f})_\alpha(o) + \partial(\expr{f})_\alpha(\accept)\frac{\partial(\expr{f} \oplus_\prob{r} \test{\one})_\alpha(o)}{\prob{1}-\partial(\expr{f} \oplus_\prob{r} \test{\one})_\alpha(\accept)}\\
                &=\partial(\expr{f})_\alpha(o) + \partial(\expr{f})_\alpha(\accept)\frac{\partial((\expr{f} \oplus_\prob{r} \test{\one}) +_\test{c} \expr{g})_\alpha(o)}{\prob{1}-\partial((\expr{f} \oplus_\prob{r} \test{\one}) +_\test{c} \expr{g})_\alpha(\accept)}\\
                &=\partial(\expr{f})_\alpha(o) + \partial(\expr{f})_\alpha(\accept)\frac{\partial(\expr{e})_\alpha(o)}{\prob{1}-\partial(\expr{e})_\alpha(\accept)} \tag{Induction hypothesis}\\
                &=\partial(\expr{f})_\alpha(o) + \partial(\expr{f})_\alpha(\accept)\partial\left( \expr{e}^{(\test{b})}\right)_\alpha(o)\\
                &=\partial\left(\expr{f}\seq\expr{e}^{(\test{b})}\right)_\alpha(o)\\
                &=  \partial\left( \expr{f}\seq\expr{e}^{(\test{b})} +_\test{b} \test{\one} \right)_\alpha(o)\\
                &= \partial\left( \test{c}\seq(\expr{f}\seq\expr{e}^{(\test{b})} +_\test{b} \test{\one}) \right)_\alpha(o)\tag{\cref{lem:asserting_derivative}}
            \end{align*}
            Now, we consider the case of outputting \(\accept\). We have that
            \begin{align*}
                \partial\left(\test{c}\seq\expr{e}^{(\test{b})}\right)_\alpha(o) &= \partial\left(\expr{e}^{(\test{b})}\right)_\alpha(\accept) \tag{\cref{lem:asserting_derivative}}\\
                &=\prob{0} \\
                &=\partial(\expr{f})_\alpha(\accept)\partial\left(\expr{e}^{(\test{b})} \right)_\alpha(\accept) \\
                &=\partial\left( \expr{f}\seq\expr{e}^{(\test{b})}\right)_\alpha(\accept)\\
                &=\partial\left( \expr{f}\seq\expr{e}^{(\test{b})}\right)_\alpha(\accept)\\
                &=\partial\left(\expr{f}\seq\expr{e}^{(\test{b})} +_\test{b} \test{\one}\right)_\alpha(\accept)\\
                &=\partial\left(\test{c}\seq(\expr{f}\seq\expr{e}^{(\test{b})} +_\test{b} \test{\one})\right)_\alpha(\accept) \tag{\cref{lem:asserting_derivative}}
            \end{align*}
            \item Let \(\action{p} \in \Act\) and \(Q \in \Exp / \equiv\). We have that
            {
            \small
            \begin{align*}
                \partial\left(\test{c}\seq\expr{e}^{(\test{b})}\right)_\alpha[\{\action{p}\}\times Q] &= \partial\left(\expr{e}^{(\test{b})}\right)_\alpha[\{\action{p}\}\times Q] \tag{\cref{lem:asserting_derivative}} \\
                &=\frac{\partial(\expr{e})_\alpha[\{\action{p}\}\times Q / \expr{e}^{(\test{b})}]}{\prob{1}-\partial(\expr{e})_\alpha(\accept)}\\
                &=\frac{\prob{r}\partial(\expr{f})_\alpha[\{\action{p}\}\times Q / \expr{e}^{(\test{b})}]}{\prob{1}-\prob{r}\partial(\expr{f})_\alpha(\accept)-\prob{(1-r)}} \tag{Induction hypothesis}\\
                &=\frac{\partial(\expr{f})_\alpha[\{\action{p}\}\times Q / \expr{e}^{(\test{b})}]}{\prob{1}-\partial(\expr{f})_\alpha(\accept)}\\
                &=\partial(\expr{f})_\alpha[\{\action{p}\}\times Q] \left( \frac{\prob{1}-\partial(\expr{f})_\alpha(\accept) + \partial(\expr{f})_\alpha(\accept)}{\prob{1}-\partial(\expr{f})_\alpha(\accept)} \right)\\
                &=\partial(\expr{f})_\alpha[\{\action{p}\}\times Q] \left( \prob{1} + \frac{\partial(\expr{f})_\alpha(\accept)}{\prob{1}-\partial(\expr{f})_\alpha(\accept)} \right)\\
                &=\partial(\expr{f})_\alpha[\{\action{p}\}\times Q/\expr{e}^{(\test{b})}] + \partial(\expr{f})_\alpha(\accept)\frac{\partial(\expr{f})_\alpha[\{\action{p}\}\times Q/\expr{e}^{(\test{b})}]}{\prob{1}-\partial(\expr{f})_\alpha(\accept)}\\
                &=\partial(\expr{f})_\alpha[\{\action{p}\}\times Q/\expr{e}^{(\test{b})}] + \partial(\expr{f})_\alpha(\accept)\frac{\prob{r}\partial(\expr{f})_\alpha[\{\action{p}\}\times Q/\expr{e}^{(\test{b})}]}{\prob{1} - \prob{r}\partial(\expr{f})_\alpha(\accept) - \prob{(1-r)}}\\
                &=\partial(\expr{f})_\alpha[\{\action{p}\}\times Q/\expr{e}^{(\test{b})}] + \partial(\expr{f})_\alpha(\accept)\frac{\partial((\expr{f} \oplus_\prob{r} \test{\one}) +_\test{c} \expr{g})_\alpha[\{\action{p}\}\times Q/\expr{e}^{(\test{b})}]}{\prob{1} - \partial((\expr{f} \oplus_\prob{r} \test{\one}) +_\test{c} \expr{g})_\alpha(\accept)}\\
                &=\partial(\expr{f})_\alpha[\{\action{p}\}\times Q/\expr{e}^{(\test{b})}] + \partial(\expr{f})_\alpha(\accept)\frac{\partial(\expr{e})_\alpha[\{\action{p}\}\times Q/\expr{e}^{(\test{b})}]}{\prob{1} - \partial(\expr{e})_\alpha(\accept)}\tag{Induction hypothesis}\\
                &=\partial(\expr{f})_\alpha[\{\action{p}\}\times Q/\expr{e}^{(\test{b})}] + \partial(\expr{f})_\alpha(\accept)\partial\left(\expr{e}^{(\test{b})}\right)_\alpha[\{\action{p}\}\times Q]\\
                &=\partial\left(\expr{f}\seq\expr{e}^{(\test{b})} \right)_\alpha[\{\action{p}\}\times Q] \tag{\cref{lem:sequencing_cutting_postfixes}}\\
                &=\partial\left(\expr{f}\seq\expr{e}^{(\test{b})} +_\test{b} \test{\one} \right)_\alpha[\{\action{p}\}\times Q]\\
                &=\partial\left(\test{c}\seq(\expr{f}\seq\expr{e}^{(\test{b})} +_\test{b} \test{\one}) \right)_\alpha[\{\action{p}\}\times Q] \tag{\cref{lem:asserting_derivative}}
            \end{align*}
            }
        \end{enumerate}
        \item[(\acro{L6})] Assume that premises hold and are satisfied by bisimilarity. In particular, we have that \((\expr{e}, (\expr{f} \oplus_\prob{s} \test{\one}) +_\test{c} \expr{g}) \in \Phi_\partial(\equiv)\).
        By assumption, we have that \(\prob{r(1-s)}\neq\prob{1}\). We will also write \(\prob{t}\) as a shorthand for \(\prob{\frac{rs}{1-r(1-s)}}\).

        First, consider the case \(\alpha \bleq \bar{\test{c}}\).
        We have that
        \begin{align*}
            \partial\left(\test{c}\seq\expr{e}^{[\prob{r}]}\right)_\alpha(\reject) &= \partial(\test{c})_\alpha(\reject) + \partial(\test{c})_\alpha(\accept)\partial\left(\expr{e}^{[\prob{r}]}\right)_\alpha(\reject) \\
            &=\prob{1}\\
            &=\partial(\test{c})_\alpha(\reject) + \partial(\test{c})_\alpha(\accept)\partial\left(\expr{f}\seq\expr{e}^{[\prob{r}]} \oplus_{\prob{t}} \test{\one}\right)_\alpha(\reject) \\
            &=\partial\left(\test{c}\seq(\expr{f}\seq\expr{e}^{[\prob{r}]} \oplus_{\prob{t}} \test{\one})\right)_\alpha(\reject)
        \end{align*}
        Since both distributions assign all probability mass to \(\reject\), the conditions of \cref{lem:preservation_of_equivalences} hold.
        For the rest of cases assume that \(\alpha \bleq \test{c}\). We verify the conditions of \cref{lem:preservation_of_equivalences}.
        \begin{enumerate}
            \item Let \(o \in \{\reject\} + \V\). We have that
            \begin{align*}
                &\partial\left(\test{c}\seq\expr{e}^{[\prob{r}]}\right)_\alpha(o) = \partial \left(\expr{e}^{[\prob{r}]}\right)_\alpha(o) \tag{\cref{lem:asserting_derivative}}\\
                &=\frac{\prob{r}\partial(\expr{e})_\alpha(o)}{\prob{1}-\prob{r}\partial(\expr{e})_\alpha(\accept)}\\
                &=\frac{\prob{r}\partial((\expr{f} \oplus_\prob{s} \test{\one}) +_\test{c} \expr{g})_\alpha(o)}{\prob{1}-\prob{r}\partial((\expr{f} \oplus_\prob{s} \test{\one}) +_\test{c} \expr{g})_\alpha(\accept)} \tag{Induction hypothesis}\\
                &=\frac{\prob{rs}\partial(\expr{f})_\alpha(o)}{\prob{1}-\prob{rs}\partial(\expr{f})_\alpha(\accept) - \prob{r(1-s)}}\\
                &=\prob{\frac{rs}{1-r(1-s)}}\left(\frac{\partial(\expr{f})_\alpha(o)\prob{\left(1-r(1-s)\right)}}{\prob{1}-\prob{rs}\partial(\expr{f})_\alpha(\accept)-\prob{r(1-s)}}\right)\\
                &=\prob{t}\left(\frac{\partial(\expr{f})_\alpha(o)\left(\prob{1}-\prob{r(1-s)}\right) + \prob{rs}\partial(\expr{f})_\alpha(\accept)\partial(\expr{f})_\alpha(o)-\prob{rs}\partial(\expr{f})_\alpha(\accept)\partial(\expr{f})_\alpha(o)}{\prob{1}-\prob{rs}\partial(\expr{f})_\alpha(\accept)-\prob{r(1-s)}}\right)\\
                &=\prob{t}\left(\frac{\partial(\expr{f})_\alpha(o)\left(\prob{1}-\prob{rs}\partial(\expr{f})_\alpha(\accept)-\prob{r(1-s)}\right) + \prob{rs}\partial(\expr{f})_\alpha(\accept)\partial(\expr{f})_\alpha(o)}{\prob{1}-\prob{rs}\partial(\expr{f})_\alpha(\accept)-\prob{r(1-s)}}\right)\\
                &=\prob{t}\left(\partial(\expr{f})_\alpha(o)+\partial(\expr{f})_\alpha(\accept)\frac{\prob{rs}\partial(\expr{f})_\alpha(o)}{\prob{1}-\prob{rs}\partial(\expr{f})_\alpha(\accept)-\prob{r(1-s)}}\right)\\
                &=\prob{t} \left(\partial(\expr{f})_\alpha(o) + \partial(\expr{f})_\alpha(\accept)\frac{\prob{r}\partial((\expr{f} \oplus_{\prob{s}} \test{\one}) +_\test{c} \expr{g})_\alpha(o)}{\prob{1}-\prob{r}\partial((\expr{f} \oplus_{\prob{s}} \test{\one}) +_\test{c} \expr{g})_\alpha(\accept)}\right)\\
                &=\prob{t}\left(\partial(\expr{f})_\alpha(o) + \partial(\expr{f})_\alpha(\accept)\frac{\prob{r}\partial(\expr{e})_\alpha(o)}{\prob{1}-\prob{r}\partial(\expr{e})_\alpha(\accept)}\right)\tag{Induction hypothesis}\\
                &=\prob{t}\left(\partial(\expr{f})_\alpha(o) + \partial(\expr{f})_\alpha(\accept)\partial\left(\expr{e}^{[\prob{r}]}\right)_\alpha(o)\right)\\
                &=\partial\left(\expr{f}\seq{\expr{e}}^{[\prob{r}]} \oplus_{\prob{t}} \test{\one}\right)_\alpha(o)\\
                &=\partial\left(\test{c}\seq(\expr{f}\seq{\expr{e}}^{[\prob{r}]} \oplus_{\prob{t}} \test{\one})\right)_\alpha(o) \tag{\cref{lem:asserting_derivative}}
            \end{align*}
            As for the probability of outputting \(\accept\), we have that
            \begin{align*} &\partial\left(\test{c}\seq\expr{e}^{[\prob{r}]}\right)_\alpha(\accept) =\partial\left(\expr{e}^{[\prob{r}]}\right)_\alpha(\accept) \tag{\cref{lem:asserting_derivative}}\\
            &=\frac{\prob{1-r}}{\prob{1}-\prob{r}\partial(\expr{e})_\alpha(\accept)} \\
            &=\frac{\prob{1-r}}{\prob{1}-\prob{r}\partial((\expr{f} \oplus_{\prob{s}}  \test{\one}) +_\test{c} \expr{g} )_\alpha(\accept)} \tag{Induction hypothesis}\\
            &=\frac{\prob{1-r}}{\prob{1}-\prob{rs}\partial(\expr{f})_\alpha(\accept)-\prob{r(1-s)}}\\
            &=\frac{\prob{(1-r)}\left(\prob{t}\partial(\expr{f})_\alpha(\accept) - \prob{t}\partial(\expr{f})_\alpha(\accept) + \prob{1}\right)}{\prob{1}-\prob{rs}\partial(\expr{f})_\alpha(\accept)-\prob{r(1-s)}} \\
            &=\frac{\prob{(1-r)}\left(\prob{t}\partial(\expr{f})_\alpha(\accept) - \prob{\frac{rs}{1-r(1-s)}}\partial(\expr{f})_\alpha(\accept) + \prob{1}\right)}{\prob{1}-\prob{rs}\partial(\expr{f})_\alpha(\accept)-\prob{r(1-s)}} \\
            &=\frac{\prob{(1-r)}\prob{t}\partial(\expr{f})_\alpha(\accept) - \prob{\frac{1-r}{1-r(1-s)}}\prob{rs}\partial(\expr{f})_\alpha(\accept) + \prob{(1-r(1-s))\frac{1-r}{1-r(1-s)}}}{\prob{1}-\prob{rs}\partial(\expr{f})_\alpha(\accept)-\prob{r(1-s)}} \\
            &=\frac{\prob{t}\partial(\expr{f})_\alpha(\accept)\prob{(1-r)} + \prob{\frac{1-r}{1-r(1-s)}}\left(\prob{1}-\prob{rs}\partial(\expr{f})_\alpha(\accept) - \prob{r(1-s)} \right)}{\prob{1}-\prob{rs}\partial(\expr{f})_\alpha(\accept)-\prob{r(1-s)}} \\
            &=\prob{t}\partial(\expr{f})_\alpha(\accept)\frac{\prob{1-r}}{\prob{1}-\prob{rs}\partial(\expr{f})_\alpha(\accept)-\prob{r(1-s)}} + \prob{\frac{1-r}{1-r(1-s)}}\\
            &=\prob{t}\partial(\expr{f})_\alpha(\accept)\frac{\prob{1-r}}{\prob{1}-\prob{r}\partial((\expr{f} \oplus_\prob{s} \test{\one}) +_\test{c} \expr{g})_\alpha(\accept)} + \prob{(1-t)}\\
            &=\prob{t}\partial(\expr{f})_\alpha(\accept)\frac{\prob{1-r}}{\prob{1}-\prob{r}\partial(\expr{e})_\alpha(\accept)} + \prob{(1-t)}\tag{Induction hypothesis}\\ &=\prob{t}\partial(\expr{f})_\alpha(\accept)\partial\left(\expr{e}^{[\prob{r}]}\right)_\alpha(\accept) + \prob{(1-t)}\\
            &=\partial\left(\expr{f}\seq\expr{e}^{[\prob{r}]} \oplus_\prob{t} \test{\one}\right)_\alpha(\accept) \\
            &=\partial\left(\test{c}\seq(\expr{f}\seq\expr{e}^{[\prob{r}]} \oplus_\prob{t} \test{\one})\right)_\alpha(\accept) \tag{\cref{lem:asserting_derivative}}
            \end{align*}
            \item Let \(\action{p} \in \Act\) and \(Q \in \Exp / \equiv\). Observe that
            {
            \small
            \begin{align*}
                &~~\partial\left(\test{c}\seq\expr{e}^{[\prob{r}]}\right)_\alpha[\{\action{p}\}\times Q]\\
                &= \partial \left(\expr{e}^{[\prob{r}]}\right)_\alpha[\{\action{p}\}\times Q] \tag{\cref{lem:asserting_derivative}}\\
                &=\frac{\prob{r}\partial(\expr{e})_\alpha[\{\action{p}\}\times Q/\expr{e}^{[\prob{r}]}]}{\prob{1}-r\partial(\expr{e})_\alpha(\accept)}\\
                &=\frac{\prob{r}\partial((\expr{f} \oplus_\prob{s} \test{\one}) +_\test{c} \expr{g})_\alpha[\{\action{p}\}\times Q/\expr{e}^{[\prob{r}]}]}{\prob{1}-r\partial((\expr{f} \oplus_\prob{s} \test{\one}) +_\test{c} \expr{g})_\alpha(\accept)} \tag{Induction hypothesis}\\
                &=\frac{\prob{rs}\partial(\expr{f})_\alpha[\{\action{p}\}\times Q/\expr{e}^{[\prob{r}]}]}{\prob{1}-\prob{rs}\partial(\expr{f})_\alpha(\accept) - \prob{r(1-s)}}\\
                &=\prob{\frac{rs}{1-r(1-s)}}\left(\frac{\partial(\expr{f})_\alpha[\{\action{p}\}\times Q/\expr{e}^{[\prob{r}]}]\prob{\left(1-r(1-s)\right)}}{\prob{1}-\prob{rs}\partial(\expr{f})_\alpha(\accept)-\prob{r(1-s)}}\right)\\
                &=\prob{t}\left(\frac{\partial(\expr{f})_\alpha[\{\action{p}\}\times Q/\expr{e}^{[\prob{r}]}]\prob{\left(1-r(1-s)\right)} + \prob{rs}\partial(\expr{f})_\alpha(\accept)\partial(\expr{f})_\alpha[\{\action{p}\}\times Q/\expr{e}^{[\prob{r}]}]}{\prob{1}-\prob{rs}\partial(\expr{f})_\alpha(\accept)-\prob{r(1-s)}}\right)\\
                &\hspace{5em}-t\frac{\prob{rs}\partial(\expr{f})_\alpha(\accept)\partial(\expr{f})_\alpha[\{\action{p}\}\times Q/\expr{e}^{[\prob{r}]}]}{\prob{1}-\prob{rs}\partial(\expr{f})_\alpha(\accept)-\prob{r(1-s)}}\\
                &=\prob{t}\left(\frac{\partial(\expr{f})_\alpha[\{\action{p}\}\times Q/\expr{e}^{[\prob{r}]}]\left(\prob{1}-\prob{rs}\partial(\expr{f})_\alpha(\accept)-\prob{r(1-s)}\right) + \prob{rs}\partial(\expr{f})_\alpha(\accept)\partial(\expr{f})_\alpha[\{\action{p}\}\times Q/\expr{e}^{[\prob{r}]}]}{\prob{1}-\prob{rs}\partial(\expr{f})_\alpha(\accept)-\prob{r(1-s)}}\right)\\
                &=\prob{t}\left(\partial(\expr{f})_\alpha[\{\action{p}\}\times Q/\expr{e}^{[\prob{r}]}]+\partial(\expr{f})_\alpha(\accept)\frac{\prob{rs}\partial(\expr{f})_\alpha[\{\action{p}\}\times Q/\expr{e}^{[\prob{r}]}]}{\prob{1}-\prob{rs}\partial(\expr{f})_\alpha(\accept)-\prob{r(1-s)}}\right)\\
                &=\prob{t} \left(\partial(\expr{f})_\alpha[\{\action{p}\}\times Q/\expr{e}^{[\prob{r}]}] + \partial(\expr{f})_\alpha(\accept)\frac{\prob{r}\partial((\expr{f} \oplus_{\prob{s}} \test{\one}) +_\test{c} \expr{g})_\alpha[\{\action{p}\}\times Q/\expr{e}^{[\prob{r}]}]}{\prob{1}-\prob{r}\partial((\expr{f} \oplus_{\prob{s}} \test{\one}) +_\test{c} \expr{g})_\alpha(\accept)}\right)\\
                &=\prob{t}\left(\partial(\expr{f})_\alpha[\{\action{p}\}\times Q/\expr{e}^{[\prob{r}]}] + \partial(\expr{f})_\alpha(\accept)\frac{\prob{r}\partial(\expr{e})_\alpha[\{\action{p}\}\times Q/\expr{e}^{[\prob{r}]}]}{\prob{1}-\prob{r}\partial(\expr{e})_\alpha(\accept)}\right)\tag{Induction hypothesis}\\
                &=\prob{t}\left(\partial(\expr{f})_\alpha[\{\action{p}\}\times Q/\expr{e}^{[\prob{r}]}] + \partial(\expr{f})_\alpha(\accept)\partial\left(\expr{e}^{[\prob{r}]}\right)_\alpha[\{\action{p}\}\times Q]\right)\\
                &=\partial\left(\expr{f}\seq{\expr{e}}^{[\prob{r}]} \oplus_{\prob{t}} \test{\one}\right)_\alpha[\{\action{p}\}\times Q]\\
                &=\partial\left(\test{c}\seq(\expr{f}\seq{\expr{e}}^{[\prob{r}]} \oplus_{\prob{t}} \test{\one})\right)_\alpha[\{\action{p}\}\times Q] \tag*{(\cref{lem:asserting_derivative}) \qedhere}
            \end{align*}
            }
        \end{enumerate}
\end{description}
\endgroup
\end{proof}
\begin{remark}\label{rem:axioms}
    We can substitute $\expr{e}$ in the consequence of \acro{L5}, to obtain
\[
    \test{c} \seq ((\expr{f} \oplus_s \test{1} +_{\test{c}} \expr{g}))^{(\test{b})} \equiv \test{c} \seq (\expr{f} \seq ((\expr{f} \oplus_s \test{1}) +_{\test{c}} \expr{g})^{(\test{b})} +_{\test{b}} \test{1})
\]
which is equivalent to \acro{L5} by congruence.
A similar translation turns \acro{L6} into an equational axiom.
Alternatively, one could replace \acro{L5} and \acro{L6} with the following quasi-equational axiom, which can be proved to imply both:
\[
	\inferrule{
        \expr{e} \equiv ((\expr{f} \oplus_s \expr{e}) \oplus_{\prob{r}} \test{1}) +_{\test{c}} \expr{g}
        \\
        \prob{r}(1-\prob{s})\neq 1
    }{
        \expr{e} \equiv (\expr{f} \oplus_{\frac{\prob{r}\prob{s}}{1-\prob{r}(1-\prob{s})}} \test{1}) +_{\test{c}} \expr{g}
    }
\]
This rule cannot be replaced by an axiom in the same way that \acro{L5} and \acro{L6} can, on account of the recurrence of $\expr{e}$ in the premise.
\end{remark}
\subsection{Derivable facts}
\begin{lemma}\label{lem:derivable_facts}
    The following equivalences are derivable from \(\equiv\) for all \(\expr{e}, \expr{f}, \expr{g}, \expr{h} \in \Exp\), \(\test{b}, \test{c} \in \Bexp\) and \(\prob{r} \in [0,1]\).
    \begin{description}
    \item[\normalfont\bf (\acro{DF1})] \(\expr{e} +_\test{b} (\expr{f} +_\test{c} \expr{g}) \equiv (\expr{e} +_\test{b} \expr{f}) +_{\test{b} + \test{c}} \expr{g} \)
    \item[\normalfont\bf (\acro{DF2})] \(\expr{e} +_\test{b} \test{\zero} \equiv \test{b}\seq\expr{e}\)
    \item[\normalfont\bf (\acro{DF3})] \(\test{b} \seq (\expr{e} +_\test{b} \expr{f}) \equiv \test{b} \seq \expr{e}\)
    \item[\normalfont\bf (\acro{DF4})] \((\expr{e} +_\test{b} \expr{f}) +_\test{c} \expr{g} \equiv (\expr{e} +_\test{bc} \expr{f}) +_\test{c} \expr{g} \)
    \item[\normalfont\bf (\acro{DF5})] \((\expr{e} +_\test{b} \expr{f}) +_\test{c} (\expr{g} +_\test{b} \expr{h}) \equiv (\expr{e} +_\test{c} \expr{g}) +_\test{b} (\expr{f} +_\test{c} \expr{h}) \)
    \item[\normalfont\bf (\acro{DF6})] \(\test{b}\seq(\expr{e} +_\test{c} \expr{f}) \equiv \test{b}\seq \expr{e} +_\test{c} \test{b}\seq\expr{f}\)
    \item[\normalfont\bf (\acro{DF7})] \(\test{b}\seq (\expr{e} +_\test{c} \expr{f}) \equiv \test{b}\seq (\test{b}\seq\expr{e} +_\test{c} \expr{f})\)
    \item[\normalfont\bf (\acro{DF8})] \((\expr{e} +_\test{b} \expr{f}) \oplus_{\prob{r}} \expr{g} \equiv (\expr{e} \oplus_{\prob{r}} \expr{g}) +_\test{b} (\expr{f} \oplus_{\prob{r}} \expr{g} ) \)
    \item[\normalfont\bf (\acro{DF9})] \((\expr{e} +_\test{b} \expr{f}) \oplus_\prob{r} (\expr{g} +_\test{b} \expr{h}) \equiv (\expr{e} \oplus_\prob{r} \expr{g}) +_\test{b} (\expr{f} \oplus_{\prob{r}} \expr{h})\)
    \item[\normalfont\bf (\acro{DF10})] \(\test{b}\seq(\expr{e} \oplus_\prob{r} \expr{f}) \equiv \test{b}\seq\expr{e} \oplus_\prob{r} \test{b}\seq\expr{f}\)
    \item[\normalfont\bf (\acro{DF11})] \(\expr{e} \oplus_\prob{r} (\expr{f} \oplus_{\prob{s}} \expr{g}) \equiv (\expr{e} \oplus_\prob{k} \expr{f}) \oplus_\prob{z} \expr{g}\) where \(\prob{k}=\prob{\frac{r}{1-(1-r)(1-s)}}\) and \(\prob{l}=\prob{1-(1-r)(1-s)}\).
    \item[\normalfont\bf (\acro{DF12})] \(\expr{e} +_\test{1} \expr{f} \equiv \expr{e}\)
    \end{description}
\end{lemma}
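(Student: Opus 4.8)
The plan is to establish each of \acro{DF1}--\acro{DF12} by a direct equational derivation inside $\equiv$; these must be purely syntactic, since \acro{DF11} and \acro{DF12} are already invoked in the worked example of \cref{sec: axiomatisation} and completeness (which would otherwise shortcut them) is proved only later, and for the stronger relation $\uaequiv$. None of the twelve statements mentions a loop, so the only axioms in play are the guarded-choice axioms \acro{G1}--\acro{G4}, the probabilistic-choice axioms \acro{P1}--\acro{P4}, the distributivity axiom \acro{D}, the sequencing axioms \acro{S1}--\acro{S8}, and basic Boolean identities (these hold in $\equiv$ on the test sort, combining its congruence structure with the Boolean-algebra laws on $\Bexp$: $\bar{\test{\one}}\beq\test{\zero}$, $\test{b}\test{b}\beq\test{b}$, $\test{b}\bar{\test{b}}\beq\test{\zero}$, etc.). Before attacking the numbered items I would first record a few ``normal-form'' consequences that get reused throughout: the dual of \acro{G2}, namely $\expr{e} +_\test{b} \expr{f} \equiv \expr{e} +_\test{b} \bar{\test{b}}\seq\expr{f}$ (obtained by \acro{G3}, \acro{G2}, \acro{G3}); the decomposition $\expr{g} \equiv \test{c}\seq\expr{g} +_\test{c} \bar{\test{c}}\seq\expr{g}$ (from \acro{G1} and the dual of \acro{G2}); and the reorientations $\expr{e}\oplus_\prob{r}\expr{f}\equiv\expr{f}\oplus_\prob{1-r}\expr{e}$ of \acro{P3}.

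I would then carry out the items in a dependency order in which each relies only on earlier ones. Concretely: \acro{DF2} follows from \acro{G2}, the dual of \acro{G2}, \acro{S4} and \acro{S1}; \acro{DF12} is then $\expr{e}+_\test{\one}\expr{f}\equiv\expr{e}+_\test{\one}\bar{\test{\one}}\seq\expr{f}\equiv\expr{e}+_\test{\one}\test{\zero}\equiv\test{\one}\seq\expr{e}\equiv\expr{e}$; \acro{DF3} is obtained by wrapping $\expr{e}+_\test{b}\expr{f}$ with \acro{DF2}, splitting with \acro{G4}, and killing the dead else-branch $\bar{\test{b}}\seq\test{b}\seq\expr{f}$ via \acro{S3}, \acro{S8}, \acro{S4}; \acro{DF1} and \acro{DF4} are two instances of \acro{G4} modulo $\test{b}\test{c}\beq\test{b}(\test{b}+\test{c})$ resp.\ $\test{bc}\test{c}\beq\test{bc}$; \acro{DF6} and \acro{DF7} come from the $+_\test{c}$-decomposition above, commuting tests with \acro{S8}, and collapsing with \acro{DF3}; \acro{DF8} is \acro{D} read through \acro{P3}, \acro{DF9} is \acro{DF8} applied twice followed by guarded-choice collapsing (again via \acro{DF3} and \acro{G4}), and \acro{DF10} is the left-$\seq$ analogue obtained the same way; finally \acro{DF11} is a reassociation: flip both $\oplus$'s in $\expr{e}\oplus_\prob{r}(\expr{f}\oplus_\prob{s}\expr{g})$ with \acro{P3}, apply \acro{P4}, and flip back, after which the biases simplify to $\prob{k}=\prob{\frac{r}{1-(1-r)(1-s)}}$ and $\prob{l}=\prob{1-(1-r)(1-s)}$ by elementary algebra on $(1-r)(1-s)$.

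The two places that carry the real work are \acro{DF5} and \acro{DF11}. For \acro{DF5}, the ``middle exchange'' law for nested guarded choices, I would reproduce the corresponding \acro{GKAT} derivation from~\cite{Smolka:2020:Guarded,Schmid:2021:Guarded}, taking care that guards here range over a Boolean algebra rather than a single atom, so that each rewriting step is matched by a legal Boolean side computation; this is a somewhat delicate sequence of \acro{G1}--\acro{G4} applications. For \acro{DF11} the only subtlety is the bias arithmetic together with the side condition, implicit in the statement and governed by the convention of \cref{fig:axiomatisation}, that $\prob{l}=\prob{1-(1-r)(1-s)}\neq 0$, so that the division defining $\prob{k}$ and the internal division produced by \acro{P4} are both legal. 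Everything not singled out above is routine rewriting.
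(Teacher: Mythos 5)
Your overall strategy coincides with the paper's: every item is obtained by a direct equational derivation from the axioms of \cref{fig:axiomatisation}. The only structural difference is that the paper imports \acro{DF1}, \acro{DF2}, \acro{DF3} and \acro{DF6} wholesale from the \acro{GKAT} literature~\cite{Smolka:2020:Guarded} and derives only the remaining items, whereas you re-derive everything; where you overlap, your derivations match the paper's (\acro{DF8} via \acro{P3} and \acro{D}; \acro{DF9} via \acro{D}/\acro{DF8} plus collapsing with \acro{DF3}; \acro{DF10} via \acro{DF2}, \acro{P1}, \acro{DF9}; \acro{DF11} via \acro{P3}, \acro{P4}, \acro{P3}; \acro{DF7} via \acro{DF6} and \acro{S8}), your \acro{DF12} route through \acro{DF2} is a harmless variant of the paper's \acro{S1}+\acro{DF3} argument, and your \acro{DF4} (two applications of \acro{G4} plus $(\test{b}\test{c})\test{c}\beq\test{b}\test{c}$) is in fact shorter than the paper's derivation. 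Deferring \acro{DF5} to the known \acro{GKAT}-style derivation is also fine, since the paper proves it from \acro{G3}, \acro{G4}, \acro{DF1} and \acro{DF4}, all of which you have.

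The step that does not go through as described is \acro{DF1}. It is not a single instance of \acro{G4} up to Boolean reasoning: applying \acro{G4} to the right-hand side gives $(\expr{e}+_\test{b}\expr{f})+_{\test{b}+\test{c}}\expr{g}\equiv\expr{e}+_{\test{b}(\test{b}+\test{c})}(\expr{f}+_{\test{b}+\test{c}}\expr{g})\equiv\expr{e}+_\test{b}(\expr{f}+_{\test{b}+\test{c}}\expr{g})$, so you are still left with the inner guard $\test{b}+\test{c}$ where the goal has $\test{c}$; moreover the Boolean identity you invoke, $\test{b}\test{c}\beq\test{b}(\test{b}+\test{c})$, is false, since $\test{b}(\test{b}+\test{c})\beq\test{b}$. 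The mismatch is repairable with tools available at that point in your ordering: prefix the right branch with $\bar{\test{b}}$ using your dual of \acro{G2}, and show $\bar{\test{b}}\seq(\expr{f}+_{\test{b}+\test{c}}\expr{g})\equiv\bar{\test{b}}\seq(\expr{f}+_\test{c}\expr{g})$ by rewriting both sides to $\expr{f}+_{\bar{\test{b}}\test{c}}\bar{\test{b}}\seq\expr{g}$ via \acro{DF2}, \acro{G4} and $\bar{\test{b}}(\test{b}+\test{c})\beq\bar{\test{b}}\test{c}$ --- but this extra argument is genuinely needed and absent from your sketch (and you cannot appeal to \acro{DF6} here, which you schedule later). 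Relatedly, your axiom list for \acro{DF2} is not sufficient as stated: besides \acro{G2}, its dual, \acro{S4} and \acro{S1} you need \acro{G1} to introduce the guarded sum and \acro{S3}/\acro{S8} to kill $\bar{\test{b}}\seq\test{b}$; a correct chain is $\test{b}\seq\expr{e}\equiv\test{b}\seq\expr{e}+_\test{b}\test{b}\seq\expr{e}\equiv\test{b}\seq\expr{e}+_\test{b}\bar{\test{b}}\seq\test{b}\seq\expr{e}\equiv\test{b}\seq\expr{e}+_\test{b}\test{\zero}\seq\expr{e}\equiv\test{b}\seq\expr{e}+_\test{b}\test{\zero}\equiv\expr{e}+_\test{b}\test{\zero}$ using \acro{G1}, the dual of \acro{G2}, \acro{S3}+\acro{S8}+Boolean algebra, \acro{S4} and \acro{G2}. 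With these two points patched, the rest of your plan is sound.
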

\begin{proof}
We refer to~\cite{Smolka:2020:Guarded} for (\acro{DF1}), (\acro{DF2}), (\acro{DF3}) and (\acro{DF6}).
The other equivalences are proved as follows.
    \begin{description}
    \item[(\acro{DF4})]
    We derive as follows.
    \begin{align*}
        (\expr{e} +_\test{b} \expr{f}) +_\test{c} \expr{g} &\equiv \test{c}\seq(\expr{e} +_\test{b} \expr{f}) +_\test{c} \expr{g} \tag{\acro{G2}}\\
        &\equiv ((\expr{e} +_\test{b} \expr{f}) +_\test{c} \test{\zero}) +_\test{c} \expr{g} \tag{\acro{DF2}}\\
        &\equiv (\expr{e} +_{\test{bc}} (\expr{f} +_\test{c} \test{\zero})) +_\test{c} \expr{g} \tag{\acro{G4}}\\
        &\equiv ((\expr{e} +_\test{bc} \expr{f}) +_{\test{bc} + \test{c}} \test{\zero}) +_\test{c} \expr{g} \tag{\acro{DF1}}\\
        &\equiv ((\expr{e} +_\test{bc} \expr{f}) +_{\test{c}} \test{\zero}) +_\test{c} \expr{g} \tag{Boolean algebra}\\
        &\equiv \test{c}\seq((\expr{e} +_\test{bc} \expr{f}) +_{\test{c}} \test{\zero}) +_\test{c} \expr{g} \tag{\acro{G2}}\\
        &\equiv \test{c}\seq(\expr{e} +_\test{bc} \expr{f}) +_\test{c} \expr{g} \tag{\acro{DF3}}\\
        &\equiv (\expr{e} +_\test{bc} \expr{f}) +_\test{c} \expr{g} \tag{\acro{G2}}
    \end{align*}
    \item[(\acro{DF5})]
    We derive as follows.
    \begin{align*}
        (\expr{e} +_\test{b} \expr{f}) +_\test{c} (\expr{g} +_\test{b} \expr{h}) &\equiv \expr{e} +_\test{bc} (\expr{f} +_\test{c}  (\expr{g} +_\test{b} \expr{h})) \tag{\acro{G4}}\\
        &\equiv \expr{e} +_\test{bc} (  (\expr{g} +_\test{b} \expr{h}) +_{\bar{\test{c}}} \expr{f} ) \tag{\acro{G3}}\\
        &\equiv \expr{e} +_\test{bc} (  \expr{g} +_{\test{b}\bar{\test{c}}} (\expr{h} +_{\bar{\test{c}}} \expr{f}) ) \tag{\acro{G4}}\\
        &\equiv \expr{e} +_\test{bc} (  \expr{g} +_{\test{b}\bar{\test{c}}} (\expr{f} +_{{\test{c}}} \expr{h}) ) \tag{\acro{G3}}\\
        &\equiv (\expr{e} +_\test{bc}   \expr{g}) +_{\test{b}\bar{\test{c}} + \test{b}\test{c}} (\expr{f} +_{{\test{c}}} \expr{h})  \tag{\acro{DF1}}\\
        &\equiv (\expr{e} +_\test{bc}   \expr{g}) +_{\test{b}} (\expr{f} +_{{\test{c}}} \expr{h})  \tag{Boolean algebra}\\
        &\equiv (\expr{e} +_\test{c}   \expr{g}) +_{\test{b}} (\expr{f} +_{{\test{c}}} \expr{h})  \tag{\acro{DF4}}
    \end{align*}
    \item[(\acro{DF7})]
    We derive as follows.
    \begin{align*}
        \test{b}\seq(\expr{e} +_\test{c} \expr{f} )&\equiv \test{b}\seq\expr{e} +_\test{c} \test{b} \seq \expr{f} \tag{\acro{DF6}}\\
        &\equiv \test{bb}\seq\expr{e} +_\test{c} \test{b} \seq \expr{f} \tag{Boolean algebra}\\
        &\equiv \test{b}\seq\test{b}\seq\expr{e} +_\test{c} \test{b} \seq \expr{f} \tag{\acro{S8}}\\
        &\equiv \test{b}\seq (\test{b}\seq\expr{e} +_\test{c} \expr{f}) \tag{\acro{DF6}}
    \end{align*}
    \item[(\acro{DF8})]
    We derive as follows.
    \begin{align*}
        (\expr{e} +_\test{b} \expr{f}) \oplus_{\prob{r}} \expr{g} &\equiv \expr{g} \oplus_{1-\prob{r}} (\expr{e} +_\test{b} \expr{f}) \tag{\acro{P3}}\\
        &\equiv (\expr{g} \oplus_{\prob{1-r}} \expr{e}) +_\test{b} (\expr{g} \oplus_{\prob{1-r}} \expr{f}) \tag{\acro{D}} \\
        &\equiv (\expr{e} \oplus_{\prob{r}} \expr{g}) +_\test{b} (\expr{f} \oplus_{\prob{r}} \expr{g})  \tag{\acro{P3}}
    \end{align*}
        \item[(\acro{DF9})]
        We derive as follows.
        \begin{align*}
            (\expr{e} +_\test{b} \expr{f}) \oplus_\prob{r} (\expr{g} +_\test{b} \expr{h}) &\equiv ( (\expr{e} +_\test{b} \expr{f}) \oplus_\prob{r} \expr{g}) +_\test{b} ((\expr{e} +_\test{b} \expr{f}) \oplus_\prob{r} \expr{h})  \tag{\acro{D}}\\
            &\equiv ( (\expr{e} \oplus_\prob{r} \expr{g})  +_\test{b} (\expr{f} \oplus_\prob{r} \expr{g})) +_\test{b} ((\expr{e} +_\test{b} \expr{f}) \oplus_\prob{r} \expr{h}) \tag{\acro{DF8}}\\
            &\equiv \test{b}\seq( (\expr{e} \oplus_\prob{r} \expr{g})  +_\test{b} (\expr{f} \oplus_\prob{r} \expr{g})) +_\test{b} ((\expr{e} +_\test{b} \expr{f}) \oplus_\prob{r} \expr{h}) \tag{\acro{G2}}\\
            &\equiv \test{b}\seq(\expr{e} \oplus_\prob{r} \expr{g})  +_\test{b} ((\expr{e} +_\test{b} \expr{f}) \oplus_\prob{r} \expr{h}) \tag{\acro{DF3}}\\
            &\equiv (\expr{e} \oplus_\prob{r} \expr{g})  +_\test{b} ((\expr{e} +_\test{b} \expr{f}) \oplus_\prob{r} \expr{h}) \tag{\acro{G2}}\\
            &\equiv ((\expr{e} +_\test{b} \expr{f}) \oplus_\prob{r} \expr{h}) +_{\bar{\test{b}}} (\expr{e} \oplus_\prob{r} \expr{g}) \tag{\acro{G3}}\\
            &\equiv ((\expr{f} +_{\bar{\test{b}}} \expr{e}) \oplus_\prob{r} \expr{h}) +_{\bar{\test{b}}} (\expr{e} \oplus_\prob{r} \expr{g}) \tag{\acro{G3}}\\
            &\equiv ((\expr{f} \oplus_\prob{r} \expr{h}) +_{\bar{\test{b}}} (\expr{e} \oplus_\prob{r} \expr{h}) ) +_{\bar{\test{b}}} (\expr{e} \oplus_\prob{r} \expr{g}) \tag{\acro{DF8}}\\
            &\equiv \bar{\test{b}}\seq((\expr{f} \oplus_\prob{r} \expr{h}) +_{\bar{\test{b}}} (\expr{e} \oplus_\prob{r} \expr{h}) ) +_{\bar{\test{b}}} (\expr{e} \oplus_\prob{r} \expr{g}) \tag{\acro{G2}}\\
            &\equiv \bar{\test{b}}\seq(\expr{f} \oplus_\prob{r} \expr{h}) +_{\bar{\test{b}}} (\expr{e} \oplus_\prob{r} \expr{g}) \tag{\acro{DF3}}\\
            &\equiv (\expr{f} \oplus_\prob{r} \expr{h}) +_{\bar{\test{b}}} (\expr{e} \oplus_\prob{r} \expr{g}) \tag{\acro{G2}}\\
            &\equiv (\expr{e} \oplus_\prob{r} \expr{g}) +_{{\test{b}}}  (\expr{f} \oplus_\prob{r} \expr{h}) \tag{\acro{G3}}
        \end{align*}
        \item[(\acro{DF10})]
        We derive as follows.
        \begin{align*}
            \test{b}\seq(\expr{e} \oplus_\prob{r} \expr{f}) &\equiv (\expr{e} \oplus_\prob{r} \expr{f}) +_\test{b} \test{\zero} \tag{\acro{DF2}}\\
            &\equiv (\expr{e} \oplus_\prob{r} \expr{f}) +_\test{b} (\test{\zero} \oplus_\prob{r} \test{\zero}) \tag{\acro{P1}} \\
            &\equiv (\expr{e} +_\test{b} \test{\zero}) \oplus_{\prob{r}} (\expr{f} +_\test{b} \test{\zero}) \tag{\acro{DF9}}\\
            &\equiv \test{b}\seq\expr{e} \oplus_\prob{r} \test{b}\seq\expr{f}
        \end{align*}
        \item[(\acro{DF11})]
        We derive as follows.
        \begin{align*}
            \expr{e} \oplus_\prob{r} (\expr{f} \oplus_\prob{s} \expr{g}) &\equiv (\expr{f} \oplus_\prob{s} \expr{g}) \oplus_\prob{1-r} \expr{e} \tag{\acro{P3}}\\
            &\equiv (\expr{g} \oplus_\prob{1-s} \expr{f}) \oplus_\prob{1-r} \expr{e} \tag{\acro{P3}}\\
            &\equiv \expr{g} \oplus_{\prob{(1-r)(1-s)}} (\expr{f} \oplus_\prob{\frac{s(1-r)}{1-(1-r)(1-s)}} \expr{e} ) \tag{\acro{P4}}\\
             &\equiv \expr{g} \oplus_{\prob{1-l}} (\expr{f} \oplus_\prob{1-k} \expr{e} ) \tag{\(\prob{k} = \prob{\frac{r}{1-(1-r)(1-s)}}\) and \(\prob{l}=\prob{1-(1-r)(1-s)}\)}\\
             &\equiv (\expr{f} \oplus_\prob{1-k} \expr{e}) \oplus_\prob{l} \expr{g} \tag{\acro{P3}}\\
             &\equiv (\expr{e} \oplus_\prob{k} \expr{f}) \oplus_\prob{l} \expr{g} \tag*{(\acro{P3}) \qedhere}
        \end{align*}
        \item[(\acro{D12})]
        We derive as follows.
        \begin{align*}
            \expr{e} +_{\test{\one}} \expr{f} &\equiv \test{\one}\seq( \expr{e} +_{\test{\one}} \expr{f}) \tag{\acro{S1}}\\
            &\equiv  \expr{e} \tag{\acro{DF3}}
        \end{align*}
    \end{description}
\end{proof}

\subsection{Generalised guarded and convex sums}\label{apx:generalised}

We now rigorously define our generalised sum operators, and verify that they expect as behaved.
First off is the guarded sum.

\begin{definition}
    Consider a subset \(\Phi \subseteq \At\) and a collection \(\{\expr{e}_\alpha\}_{\alpha \in \Phi}\) such that for all \(\alpha \in \Phi\), \(\expr{e}_\alpha \in \Exp\). A \emph{generalised guarded sum} is an expression inductively given by the following
    \[
    \bigsum{\alpha \in \Phi } \expr{e}_\alpha = \expr{e}_\gamma +_{\test{\gamma}} \left( \bigsum{\alpha \in \Phi \setminus \{ \gamma \}} \expr{e}_\alpha\right) \quad \text{if } \gamma \in \Phi
    \quad\quad\quad\quad
    \bigsum{\alpha \in \Phi} \expr{e}_\alpha = \test{\zero} \quad \text{if } \Phi = \emptyset
    \]
\end{definition}
This definition is ambiguous in the choice of expressions from the collection when unrolling the inductive definition.
However, all possible ways of doing so are equivalent.

\begin{lemma}[{\cite[Lemma~B.1]{Smolka:2020:Guarded}}]
    Generalised guarded sums are well defined up to \(\equiv\)
\end{lemma}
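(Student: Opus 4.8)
The plan is to prove the claim by strong induction on $n = |\Phi|$, showing that any two expressions obtained by unrolling the inductive definition of $\bigsum{\alpha \in \Phi} \expr{e}_\alpha$ are related by $\equiv$. The base cases $n \leq 1$ are trivial: when $\Phi = \emptyset$ the unrolling is forced to be $\test{\zero}$, and when $\Phi = \{\gamma\}$ it is forced to be $\expr{e}_\gamma +_{\test{\gamma}} \test{\zero}$. For the inductive step, let $\expr{a}$ and $\expr{a}'$ be two unrollings; each one starts by choosing a first atom, say $\gamma$ and $\gamma'$, so that $\expr{a} = \expr{e}_\gamma +_{\test{\gamma}} \expr{b}$ with $\expr{b}$ an unrolling of $\bigsum{\alpha \in \Phi \setminus \{\gamma\}} \expr{e}_\alpha$, and symmetrically $\expr{a}' = \expr{e}_{\gamma'} +_{\test{\gamma'}} \expr{b}'$. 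If $\gamma = \gamma'$, then $\expr{b} \equiv \expr{b}'$ by the induction hypothesis, so $\expr{a} \equiv \expr{a}'$ because $\equiv$ is a congruence.

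The only real content is the case $\gamma \neq \gamma'$. Here I would first apply the induction hypothesis twice: inside $\expr{a}$, re-expand $\expr{b}$ so that $\gamma'$ is pulled out next (legitimate since $\gamma' \in \Phi \setminus \{\gamma\}$), obtaining $\expr{a} \equiv \expr{e}_\gamma +_{\test{\gamma}} (\expr{e}_{\gamma'} +_{\test{\gamma'}} \expr{c})$, and symmetrically $\expr{a}' \equiv \expr{e}_{\gamma'} +_{\test{\gamma'}} (\expr{e}_\gamma +_{\test{\gamma}} \expr{c}')$, where $\expr{c}, \expr{c}'$ are unrollings of the same $(n-2)$-atom sum $\bigsum{\alpha \in \Phi \setminus \{\gamma, \gamma'\}} \expr{e}_\alpha$ and hence $\expr{c} \equiv \expr{c}'$ by the induction hypothesis. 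It then suffices to prove a \emph{commutation lemma}: whenever $\test{b}, \test{c} \in \Bexp$ satisfy $\test{b}\test{c} \beq \test{\zero}$, one has
\[
    \expr{e} +_\test{b} (\expr{f} +_\test{c} \expr{g}) \;\equiv\; \expr{f} +_\test{c} (\expr{e} +_\test{b} \expr{g}).
\]
Applying this with $\test{b} = \test{\gamma}$ and $\test{c} = \test{\gamma'}$ closes the inductive step, since distinct atoms meet to $\test{\zero}$ in the free Boolean algebra, so $\test{\gamma}\test{\gamma'} \beq \test{\zero}$.

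To establish the commutation lemma I would argue equationally, using only the axioms together with the facts in \cref{lem:derivable_facts}. Starting from $\expr{f} +_\test{c} (\expr{e} +_\test{b} \expr{g})$, apply \acro{G3} to flip the outer choice to $(\expr{e} +_\test{b} \expr{g}) +_{\bar{\test{c}}} \expr{f}$. Since $\test{b}\test{c} \beq \test{\zero}$ gives $\test{b} \bleq \bar{\test{c}}$, i.e.\ $\test{b} + \bar{\test{c}} \beq \bar{\test{c}}$, this is $\equiv (\expr{e} +_\test{b} \expr{g}) +_{\test{b}+\bar{\test{c}}} \expr{f}$, at which point \acro{DF1} (read right to left, with $\bar{\test{c}}$ in the role of the inner guard) rewrites it to $\expr{e} +_\test{b} (\expr{g} +_{\bar{\test{c}}} \expr{f})$. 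A final application of \acro{G3} to the inner choice, using $\overline{\bar{\test{c}}} \beq \test{c}$, yields $\expr{e} +_\test{b} (\expr{f} +_\test{c} \expr{g})$, as desired. Replacing guards by $\beq$-equivalent ones is used a few times, exactly as is done throughout the proof of \cref{lem:derivable_facts}.

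The main obstacle is isolating and proving the commutation lemma; everything else is routine induction and appeals to congruence. Within that lemma the delicate point lies purely at the level of tests: one must track the Boolean side conditions carefully (working in the free Boolean algebra on $\T$) and choose the intermediate guard $\test{b} + \bar{\test{c}}$ precisely so that \acro{DF1} becomes applicable, which is where the hypothesis $\test{b}\test{c}\beq\test{\zero}$ is genuinely needed.
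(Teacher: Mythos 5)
Your argument is correct, and it is essentially the argument behind the result the paper imports from \cite{Smolka:2020:Guarded}: induction on \(|\Phi|\), reducing the choice of the first atom to an exchange law for guarded choices with disjoint guards, which you derive from \acro{G3}, \acro{DF1} and Boolean reasoning on the guards. Your one implicit assumption --- that guards may be replaced by \(\beq\)-equivalent tests inside \(\equiv\) --- is exactly the convention the paper itself uses in the ``Boolean algebra'' steps of the proof of \cref{lem:derivable_facts}, so the proof goes through as written.
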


We first recall properties of generalised guarded sums from~\cite{Smolka:2020:Guarded}.
\begin{lemma}[{\cite[Lemma~B.2]{Smolka:2020:Guarded}}]\label{lem:generalised_branch_selection}
	Let \(\test{b}, \test{c} \in \Bexp\) and let \(\{\expr{e}_\alpha\}_{\alpha \in \At}\) be an indexed collection such that \(\expr{e}_\alpha \in \Exp\) for all \(\alpha \in \At\). Then,
	\begin{equation*}
		\test{c} \seq \bigsum{\alpha \bleq \test{b}} \expr{e}_\alpha \equiv \bigsum{\alpha \bleq \test{bc}} \expr{e}_\alpha
	\end{equation*}
\end{lemma}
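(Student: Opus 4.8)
The plan is to prove the identity by strong induction on $n = |\{\alpha \in \At : \alpha \bleq \test{b}\}|$, the number of atoms below $\test{b}$, which is exactly the number of summands appearing on the left. I will use freely that $\equiv$ contains Boolean equivalence $\beq$ on tests (as in the steps labelled ``Boolean algebra'' in \cref{lem:derivable_facts}), that generalised guarded sums depend only on the underlying \emph{set} of atoms (so a sum indexed by $\beq$-equivalent tests is the same expression up to the choice-ambiguity already resolved up to $\equiv$), and the elementary facts about atoms: for an atom $\gamma$ one has $\alpha \bleq \bar{\test{\gamma}} \iff \alpha \neq \gamma$, and $\alpha \bleq \test{bc} \iff \alpha \bleq \test{b}$ and $\alpha \bleq \test{c}$. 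The algebraic laws needed are \acro{DF6} (to push $\test{c}$ through a guarded choice), together with \acro{G2}, \acro{G3}, \acro{S4}, \acro{S8}, and \acro{DF2}.

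For the base case $n = 0$ we have $\test{b} \beq \test{\zero}$, so $\bigsum{\alpha \bleq \test{b}}\expr{e}_\alpha$ is the empty sum $\test{\zero}$; then the left side is $\test{c} \seq \test{\zero} \equiv \test{c}\test{\zero} \equiv \test{\zero}$ by \acro{S8} and Boolean algebra, while the right side is also the empty sum since $\test{bc} \beq \test{\zero}$. For the inductive step, fix an atom $\gamma \bleq \test{b}$. Since $\{\alpha \bleq \test{b}\} \setminus \{\gamma\} = \{\alpha \bleq \test{b}\bar{\test{\gamma}}\}$, unrolling the generalised sum and applying \acro{DF6} gives
\[
    \test{c} \seq \bigsum{\alpha \bleq \test{b}} \expr{e}_\alpha
      \equiv \test{c}\seq\expr{e}_\gamma +_{\test{\gamma}} \left(\test{c} \seq \bigsum{\alpha \bleq \test{b}\bar{\test{\gamma}}} \expr{e}_\alpha\right)
      \equiv \test{c}\seq\expr{e}_\gamma +_{\test{\gamma}} \bigsum{\alpha \bleq \test{bc}\bar{\test{\gamma}}} \expr{e}_\alpha ,
\]
where the last step is the induction hypothesis applied to the strictly smaller index set $\{\alpha \bleq \test{b}\bar{\test{\gamma}}\}$. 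It then remains only to rewrite the head $\test{c}\seq\expr{e}_\gamma$, for which I split on whether $\gamma \bleq \test{c}$.

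If $\gamma \bleq \test{c}$, then $\test{\gamma}\test{c} \beq \test{\gamma}$, so for any $\expr{h}$ one has $\test{c}\seq\expr{e}_\gamma +_{\test{\gamma}} \expr{h} \equiv \test{\gamma}\seq\test{c}\seq\expr{e}_\gamma +_{\test{\gamma}} \expr{h} \equiv \test{\gamma}\seq\expr{e}_\gamma +_{\test{\gamma}} \expr{h} \equiv \expr{e}_\gamma +_{\test{\gamma}} \expr{h}$ using \acro{G2}, \acro{S8}, Boolean algebra, and \acro{G2} again; taking $\expr{h} = \bigsum{\alpha \bleq \test{bc}\bar{\test{\gamma}}}\expr{e}_\alpha$ and noting $\{\alpha \bleq \test{bc}\} = \{\gamma\} \cup \{\alpha \bleq \test{bc}\bar{\test{\gamma}}\}$ re-folds the result into $\bigsum{\alpha\bleq\test{bc}}\expr{e}_\alpha$, as required. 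If instead $\gamma \bleq \bar{\test{c}}$, then $\gamma \notin \{\alpha\bleq\test{bc}\}$, hence $\{\alpha\bleq\test{bc}\bar{\test{\gamma}}\} = \{\alpha\bleq\test{bc}\}$ and $\bar{\test{\gamma}}\test{bc}\beq\test{bc}$; moreover $\test{\gamma}\test{c}\beq\test{\zero}$, so by \acro{G2}, \acro{S8}, Boolean algebra, \acro{S4}, \acro{G3}, \acro{DF2} we get $\test{c}\seq\expr{e}_\gamma +_{\test{\gamma}} \expr{h} \equiv \test{\zero} +_{\test{\gamma}} \expr{h} \equiv \bar{\test{\gamma}}\seq\expr{h}$, and applying the induction hypothesis once more (the index set $\{\alpha\bleq\test{bc}\}$ of $\expr{h} = \bigsum{\alpha\bleq\test{bc}}\expr{e}_\alpha$ is strictly smaller than $\{\alpha\bleq\test{b}\}$) together with $\bar{\test{\gamma}}\test{bc}\beq\test{bc}$ yields $\bar{\test{\gamma}}\seq\bigsum{\alpha\bleq\test{bc}}\expr{e}_\alpha \equiv \bigsum{\alpha\bleq\test{bc}}\expr{e}_\alpha$, completing the step.

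The main obstacle I anticipate is purely bookkeeping: tracking which atoms lie in which index set once the complementary test $\bar{\test{\gamma}}$ is introduced, checking that the induction measure $n$ strictly decreases in both sub-cases (which relies on $\gamma$ being below $\test{b}$ but not below $\test{bc}$ in the second case), and being careful that generalised guarded sums indexed by $\beq$-equivalent tests really coincide so that no hidden appeal to the lemma being proved is circular. No genuinely new algebraic idea beyond \acro{DF6} and the basic \acro{GKAT} guarded-choice laws appears to be needed; an essentially equivalent alternative would be to first push $\test{c}$ all the way inside via repeated \acro{DF6} and then delete each ``dead'' branch with $\gamma \bleq \bar{\test{c}}$ using the same \acro{S4}/\acro{DF2} manipulation.
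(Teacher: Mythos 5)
Your proof is correct. Note that this paper does not prove the lemma at all: it is imported by citation as Lemma~B.2 of \cite{Smolka:2020:Guarded}, so there is no in-paper argument to compare against; your induction on the number of atoms below \(\test{b}\) is essentially the standard argument behind the cited result, and it transfers to the \acro{ProbGKAT} setting without circularity because every law you invoke (\acro{DF6}, \acro{G2}, \acro{G3}, \acro{S4}, \acro{S8}, \acro{DF2}, well-definedness of generalised sums up to \(\equiv\), and closure of \(\equiv\) under Boolean equivalence of tests and guards) is established independently of the lemma. The two genuinely delicate points are handled explicitly: the measure strictly decreases in the second case precisely because \(\gamma\) lies below \(\test{b}\) but not below \(\test{bc}\), and sums indexed by \(\beq\)-equivalent tests coincide since the index is the underlying set of atoms. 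The only cosmetic gap is the unmentioned use of \acro{S3} to regroup \(\test{\gamma}\seq(\test{c}\seq\expr{e}_\gamma)\) before applying \acro{S8}, which is routine.
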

Observe that in the above lemma, similarly to~\cite{Smolka:2020:Guarded} we abuse notation and write \(\alpha \bleq \test{c}\) for the set \(\{\alpha \mid \alpha \bleq \test{c}\}\). Since the set \(\T\) is finite, the Boolean algebra generated by it is atomic and hence every \(\test{c} \in \Bexp\) corresponds to the subset of \(\At\).
\begin{lemma}[{\cite[Lemma~B.4]{Smolka:2020:Guarded}}]\label{lem:generalised_guardedness}
	Let \(\Phi \subseteq \At \) and \(\{\expr{e}_\alpha\}_{\alpha \in \Phi}\) be an indexed collection such that \(\expr{e}_\alpha \in \Exp\) for each \(\alpha \in \Phi\). Then,
	\begin{equation*}
		\bigsum{\alpha \in \Phi} \expr{e}_\alpha \equiv \bigsum{\alpha \in \Phi} \test{\alpha} \seq \expr{e}_\alpha
	\end{equation*}
\end{lemma}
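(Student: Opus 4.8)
The plan is to prove the statement by induction on the cardinality of $\Phi$, relying on the well-definedness of generalised guarded sums up to $\equiv$ established immediately above, together with axiom \acro{G2} and the fact that $\equiv$ is a congruence.

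For the base case $\Phi = \emptyset$, both $\bigsum{\alpha \in \Phi} \expr{e}_\alpha$ and $\bigsum{\alpha \in \Phi} \test{\alpha}\seq\expr{e}_\alpha$ unfold by definition to $\test{\zero}$, so the two sides are literally equal. For the inductive step I would fix some $\gamma \in \Phi$ and set $\Phi' = \Phi \setminus \{\gamma\}$. Since generalised guarded sums are well-defined up to $\equiv$, I may unroll both sums by pulling out the $\gamma$-indexed summand first, obtaining
\[
    \bigsum{\alpha \in \Phi} \expr{e}_\alpha \equiv \expr{e}_\gamma +_\test{\gamma} \bigsum{\alpha \in \Phi'} \expr{e}_\alpha
    \quad\text{and}\quad
    \bigsum{\alpha \in \Phi} \test{\alpha}\seq\expr{e}_\alpha \equiv \test{\gamma}\seq\expr{e}_\gamma +_\test{\gamma} \bigsum{\alpha \in \Phi'} \test{\alpha}\seq\expr{e}_\alpha .
\]
Applying \acro{G2} to the first equivalence rewrites its head $\expr{e}_\gamma$ into $\test{\gamma}\seq\expr{e}_\gamma$, and the induction hypothesis applied to $\Phi'$ rewrites the tail $\bigsum{\alpha \in \Phi'} \expr{e}_\alpha$ into $\bigsum{\alpha \in \Phi'} \test{\alpha}\seq\expr{e}_\alpha$. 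Combining these via congruence of $\equiv$ and chaining all the equivalences by transitivity yields the desired equality.

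There is no serious obstacle here; the only point that requires care is that the unrolling of a generalised guarded sum is syntactically ambiguous, so one must invoke the preceding well-definedness lemma in order to be allowed to extract the same index $\gamma$ on both sides before applying \acro{G2}. Everything else is a routine induction, entirely analogous to the corresponding argument for \acro{GKAT}.
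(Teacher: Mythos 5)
Your induction is correct: the base case is immediate, and in the inductive step the appeal to well-definedness of generalised guarded sums (to unroll both sides at the same index \(\gamma\)), followed by \acro{G2} on the head, the induction hypothesis on the tail, and congruence of \(\equiv\), goes through without issue. Note that the paper does not prove this lemma itself but imports it from the \acro{GKAT} paper of Smolka et al.\ (Lemma B.4 there); your argument is the same routine induction one would carry out there, so there is nothing to flag.
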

Now, we state further properties of generalised guarded sums, which will be used in the proof of \cref{thm: fundamental_theorem}.
\begin{lemma}\label{lem:generalised_guarded_distributivity}
    Let \(\Phi \subseteq \At\) and \(\{\expr{e}_\alpha\}_{\alpha \in \Phi}\) be an indexed collection such that \(\expr{e}_\alpha \in \Exp\) for each \(\alpha \in \Phi\) and let \(\expr{f} \in \Exp \). Then,
	\begin{equation*}
		\left(\bigsum{\alpha \in \Phi} \expr{e}_\alpha\right)\seq \expr{f} \equiv \bigsum{\alpha \in \Phi} \expr{e}_\alpha \seq \expr{f}
	\end{equation*}
\end{lemma}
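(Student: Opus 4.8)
The plan is to argue by induction on $|\Phi|$, unrolling the recursive definition of the generalised guarded sum one atom at a time and pushing the trailing $\seq\expr{f}$ inside using the sequencing axioms \acro{S4} and \acro{S5}.

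For the base case $\Phi = \emptyset$, the definition gives $\bigsum{\alpha \in \Phi}\expr{e}_\alpha = \test{\zero}$, so the left-hand side is $\test{\zero}\seq\expr{f}$, which is provably equal to $\test{\zero}$ by \acro{S4}; the right-hand side is $\test{\zero}$ as well, so the two are equivalent.

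For the inductive step, fix a witness $\gamma \in \Phi$ and use the inductive clause of the definition of $\bigsum{\alpha \in \Phi}(-)$ with that witness, so the left-hand side equals $\bigl(\expr{e}_\gamma +_{\test{\gamma}} \bigsum{\alpha \in \Phi \setminus \{\gamma\}} \expr{e}_\alpha\bigr)\seq\expr{f}$. Axiom \acro{S5} rewrites this to $\expr{e}_\gamma\seq\expr{f} +_{\test{\gamma}} \bigl(\bigsum{\alpha \in \Phi \setminus \{\gamma\}} \expr{e}_\alpha\bigr)\seq\expr{f}$. Now the induction hypothesis applies to the strictly smaller index set $\Phi \setminus \{\gamma\}$ (with the restricted collection), yielding $\bigl(\bigsum{\alpha \in \Phi \setminus \{\gamma\}} \expr{e}_\alpha\bigr)\seq\expr{f} \equiv \bigsum{\alpha \in \Phi \setminus \{\gamma\}} \expr{e}_\alpha\seq\expr{f}$; since $\equiv$ is a congruence, we may substitute under $+_{\test{\gamma}}$ to obtain $\expr{e}_\gamma\seq\expr{f} +_{\test{\gamma}} \bigsum{\alpha \in \Phi \setminus \{\gamma\}} \expr{e}_\alpha\seq\expr{f}$. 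Finally, the inductive clause of the definition applied to the collection $\{\expr{e}_\alpha\seq\expr{f}\}_{\alpha \in \Phi}$, again with witness $\gamma$, identifies this last expression with $\bigsum{\alpha \in \Phi}\expr{e}_\alpha\seq\expr{f}$, closing the induction.

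The only point that needs attention is that a generalised guarded sum is only defined up to $\equiv$: its unfolding depends on which atom is selected at each step. We have already recorded that it is well-defined up to $\equiv$, so it is legitimate to choose the same witness $\gamma$ on both sides of the equation and to freely pass between representatives. Beyond that bookkeeping there is no real obstacle here; the argument is a short induction whose only nontrivial ingredient is the right-distributivity axiom \acro{S5} (with \acro{S4} for the empty case).
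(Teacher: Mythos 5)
Your proof is correct and follows essentially the same route as the paper's: induction on $|\Phi|$ with \acro{S4} disposing of the empty case and \acro{S5} plus the induction hypothesis (applied under the congruence $+_{\test{\gamma}}$) handling the inductive step. Your extra remark about choosing the same witness $\gamma$ on both sides, justified by well-definedness of generalised guarded sums up to $\equiv$, is exactly the bookkeeping the paper leaves implicit.
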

\begin{proof}
    By induction on the size of \(\Phi\). For the base case, when \(\Phi = \emptyset\) consider the following.
    \[\left(\bigsum{\alpha \in \Phi} \expr{e}_\alpha \right) \seq \expr{f }\equiv \test{\zero} \seq \expr{f} \equiv \test{\zero} \equiv\bigsum{\alpha \in \Phi} \expr{e}_\alpha \seq \expr{f} \]
    For the inductive step, we have that
    \begin{align*}
        \left(\bigsum{\alpha \in \Phi} \expr{e}_\alpha \right) \seq \expr{f }&\equiv \left(\expr{e}_{\gamma} +_\test{\gamma} \bigsum{\alpha \in \Phi \setminus \{\gamma\}}  \expr{e}_\alpha \right) \seq \expr{f} \tag{\(\Phi\) is nonempty}\\
        &\equiv \left( \expr{e}_\gamma \seq \expr{f} +_\test{\gamma} \left( \bigsum{\alpha \in \Phi \setminus \{\gamma\}}  \expr{e}_\alpha \right) \seq \expr{f}\right) \tag{\acro{S5}}\\
        &\equiv  \expr{e}_\gamma \seq \expr{f} +_\test{\gamma} \left( \bigsum{\alpha \in \Phi \setminus \{\gamma\}}  \expr{e}_\alpha\seq \expr{f} \right) \tag{Induction hypothesis}\\
        &\equiv \bigsum{\alpha \in \Phi} \expr{e}_\alpha \seq \expr{f} \qedhere
    \end{align*}
\end{proof}
\begin{lemma}\label{lem:generalised_exchange}
	Let \(\Phi \subseteq \At\) and let \(\{\expr{e}_\alpha\}_{\alpha \in \Phi}\) and \(\{\expr{f}_\alpha\}_{\alpha \in \Phi}\) be indexed collections such that \(\expr{e}_\alpha, \expr{f}_\alpha \in \Exp\) for each \(\alpha \in \Psi \) and let \(\prob{r} \in [0,1]\). Then,
	\begin{equation*}
		\left( \bigsum{\alpha \in \Phi} \expr{e}_\alpha \right) \oplus_{\prob{r}} \left(\bigsum{\alpha \in \Phi} \expr{f}_\alpha\right) \equiv \bigsum{\alpha \in \Phi}\left( \expr{e}_\alpha \oplus_{\prob{r}} \expr{f}_\alpha\right)
	\end{equation*}
\end{lemma}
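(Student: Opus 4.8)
The plan is to proceed by induction on $|\Phi|$, mirroring the structure of the proof of \cref{lem:generalised_guarded_distributivity}. For the base case $\Phi = \emptyset$, both generalised guarded sums unfold to $\test{\zero}$ by definition, so the left-hand side is $\test{\zero} \oplus_\prob{r} \test{\zero}$, which is $\equiv \test{\zero}$ by \acro{P1}; the right-hand side is the empty generalised guarded sum of the family $\{\expr{e}_\alpha \oplus_\prob{r} \expr{f}_\alpha\}$, which is also $\test{\zero}$ by definition. Hence the two sides coincide up to $\equiv$.

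For the inductive step, assume $\Phi \neq \emptyset$ and fix some $\gamma \in \Phi$. Since generalised guarded sums are well defined up to $\equiv$, I may unfold all three sums by pulling out the $\gamma$-th summand. Thus the left-hand side is
\[
    \left(\expr{e}_\gamma +_\test{\gamma} \bigsum{\alpha \in \Phi \setminus \{\gamma\}} \expr{e}_\alpha\right) \oplus_\prob{r} \left(\expr{f}_\gamma +_\test{\gamma} \bigsum{\alpha \in \Phi \setminus \{\gamma\}} \expr{f}_\alpha\right).
\]
Applying \acro{DF9} from \cref{lem:derivable_facts} (with the common test $\test{\gamma}$), this is provably equivalent to
\[
    \left(\expr{e}_\gamma \oplus_\prob{r} \expr{f}_\gamma\right) +_\test{\gamma} \left( \left(\bigsum{\alpha \in \Phi \setminus \{\gamma\}} \expr{e}_\alpha\right) \oplus_\prob{r} \left(\bigsum{\alpha \in \Phi \setminus \{\gamma\}} \expr{f}_\alpha\right)\right).
\]
Now the induction hypothesis applies to the set $\Phi \setminus \{\gamma\}$, turning the inner probabilistic choice into $\bigsum{\alpha \in \Phi \setminus \{\gamma\}} (\expr{e}_\alpha \oplus_\prob{r} \expr{f}_\alpha)$; the resulting expression is exactly the unfolding of $\bigsum{\alpha \in \Phi} (\expr{e}_\alpha \oplus_\prob{r} \expr{f}_\alpha)$ at $\gamma$, which finishes the step after an appeal to well-definedness of generalised guarded sums up to $\equiv$.

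The only point requiring care — and hence the main (mild) obstacle — is bookkeeping around the ambiguity of generalised guarded sums: one must make sure the same index $\gamma$ is pulled out of all three sums so that \acro{DF9} applies with a matching test, and then invoke the well-definedness lemma to conclude. Everything else is a routine congruence argument using \acro{P1}, \acro{DF9}, and the inductive hypothesis.
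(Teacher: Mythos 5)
Your proposal is correct and matches the paper's proof essentially verbatim: induction on the size of \(\Phi\), with the base case handled by \(\test{\zero} \oplus_{\prob{r}} \test{\zero} \equiv \test{\zero}\) (via \acro{P1}) and the inductive step unfolding all sums at a common \(\gamma\), applying \acro{DF9}, and then the induction hypothesis. Nothing is missing.
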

\begin{proof}
    By induction on the size of \(\Phi\). For the base case, consider the following.
    \begin{align*}
        \left( \bigsum{\alpha \in \Phi} \expr{e}_\alpha \right) \oplus_\prob{r} \left( \bigsum{\alpha \in \Phi} \expr{f}_\alpha \right) &\equiv \test{\zero} \oplus_\prob{r} \test{\zero}
        \equiv \test{\zero}
        \equiv \bigsum{\alpha \in \Phi}\left( \expr{e}_\alpha \oplus_{\prob{r}} \expr{f}_\alpha\right)
    \end{align*}
    For the inductive step, we have that
    \begin{align*}
          \left( \bigsum{\alpha \in \Phi} \expr{e}_\alpha \right) \oplus_\prob{r} \left( \bigsum{\alpha \in \Phi} \expr{f}_\alpha \right) &\equiv \left( \expr{e}_\gamma +_\test{\gamma} \bigsum{\alpha \in \Phi \setminus \{\gamma\}} \expr{e}_\alpha \right) \oplus_\prob{r} \left( \expr{f}_\gamma +_\test{\gamma} \bigsum{\alpha \in \Phi \setminus \{\gamma\}} \expr{f}_\alpha \right)\\
          &\equiv (\expr{e}_\gamma \oplus_\prob{r} \expr{f}_\gamma) +_\test{b} \left(\left( \bigsum{\alpha \in \Phi \setminus \{\gamma\}} \expr{e}_\alpha  \right) \oplus_\prob{r} \left( \bigsum{\alpha \in \Phi \setminus \{\gamma\}} \expr{f}_\alpha  \right) \right) \tag{\acro{DF9}} \\
          &\equiv (\expr{e}_\gamma \oplus_\prob{r} \expr{f}_\gamma) +_\test{b} \left(\bigsum{\alpha \in \Phi \setminus \{\gamma\}} (\expr{e}_\alpha \oplus_\prob{r} \expr{f}_\alpha) \right) \tag{Induction hypothesis}\\
          &\equiv \bigsum{\alpha \in \Phi} (\expr{e}_\alpha \oplus_\prob{r} \expr{f}_\alpha) \tag*{\qedhere}
    \end{align*}
\end{proof}
\begin{remark}
In terms of notation, given \(\Phi, \Psi \subseteq \At\) such that  \(\Psi \cap \Theta = \emptyset\) and indexed collections \(\{\expr{e}_\alpha\}_{\alpha \in \Psi}\), \(\{\expr{f}_\alpha\}_{\alpha \in \Phi}\) such that \(\expr{e}_\alpha \in \Exp\) for each \(\alpha \in \Phi\) and \(\expr{f}_\alpha\) for each \(\alpha \in \Psi\) we will write
\[
	\bigsum{\alpha \in \Psi} \expr{e}_\alpha \bsum \bigsum{\alpha \in \Theta} \expr{f}_\alpha
\]
to denote the following
\[
	\bigsum{\alpha \in \Phi \cup \Psi} \expr{g}_\alpha
	\quad\quad\quad\text{where}
	\quad\quad \expr{g}_\alpha =
	\begin{cases}
		\expr{e}_\alpha & \text{if } \alpha \in \Phi \\
		\expr{f}_\alpha & \text{if } \alpha \in \Psi \\
	\end{cases}
\]
\end{remark}

Next up is the generalised convex sum.

\begin{definition}
    Given a non-empty finite index set \(I\) and indexed collections \(\{\prob{r}_i\}_{i \in I}\) and \(\{\expr{e}_i\}_{i \in I}\) such that \(\expr{e}_i \in \Exp\) and \(\prob{r}_i \in [0,1]\) for all \(i \in I\) satisfying that \(\sum_{i \in I}\prob{r}_i = \prob{1}\) we define a \emph{generalised convex sum} to be an expression given by the following inductive definition
\[
	\bigplus{i \in I} \prob{r}_i \cdot \expr{e}_i = \expr{e}_j \oplus_{\prob{r}_j} \left( \bigplus{i \in I \setminus \{j\}} {\frac{\prob{r}_i}{\prob{1}-\prob{r}_j}}\cdot \expr{e}_i\right)\quad\text{for } j\in I  \text{ and } \prob{r}_j \neq \prob{1}
 \]
 \[
 	\bigplus{i \in \{j\}} \prob{r}_i \cdot \expr{e}_i = \expr{e}_j \quad \text{otherwise}
 \]
\end{definition}
Again, the above definition is ambiguous in the choice of expressions from the collection, however all possible unrollings of the definition are provably equivalent up to \(\equiv\).
\begin{restatable}{lemma}{convexsums}\label{lem: convex_sums_well_defined}
    Generalised convex sums are well defined up to \(\equiv\)
\end{restatable}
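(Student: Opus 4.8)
The plan is to argue by strong induction on $|I|$, mirroring the well-definedness proof for generalised guarded sums in~\cite{Smolka:2020:Guarded}. It is convenient to phrase the inductive claim as ``\emph{any two unrollings of $\bigplus{i\in I}\prob{r}_i\cdot\expr{e}_i$ are related by $\equiv$}'', so that the induction hypothesis can be invoked both for the outermost recursive call and for the tails appearing inside it. When $|I|=1$ only the second clause applies, so there is nothing to prove. For $|I|\geq 2$, note that at most one index can carry weight $\prob{1}$ (the weights sum to $1$), so every unrolling must begin by selecting some $j\in I$ with $\prob{r}_j\neq\prob{1}$, yielding $\expr{e}_j\oplus_{\prob{r}_j}S_j$ where $S_j$ is an unrolling of the $(|I|-1)$-ary convex sum $\bigplus{i\in I\setminus\{j\}}\frac{\prob{r}_i}{1-\prob{r}_j}\cdot\expr{e}_i$. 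By the induction hypothesis $S_j$ is determined up to $\equiv$ by $j$ alone, so it suffices to show that any two admissible first choices $j\neq k$ give provably equivalent results, i.e.\ $\expr{e}_j\oplus_{\prob{r}_j}S_j\equiv\expr{e}_k\oplus_{\prob{r}_k}S_k$.

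First I would treat the degenerate situations using the absorption laws: if $\prob{r}_j=\prob{r}_k=0$ (which can happen if some third index has weight $1$), then $\expr{e}_j\oplus_0 S_j\equiv S_j$ and $\expr{e}_k\oplus_0 S_k\equiv S_k$ by \acro{P3} and \acro{P2}, and one reduces to comparing $S_j$ and $S_k$, both of which collapse (by repeatedly absorbing weight-$0$ summands via \acro{P3}, \acro{P2}, and the induction hypothesis) to the single positive-weight expression. If $\prob{r}_j+\prob{r}_k=1$ (forcing all other weights to be $0$), the same absorption argument gives $S_j\equiv\expr{e}_k$ and $S_k\equiv\expr{e}_j$, and $\expr{e}_j\oplus_{\prob{r}_j}\expr{e}_k\equiv\expr{e}_k\oplus_{\prob{r}_k}\expr{e}_j$ is an instance of \acro{P3}. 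In the remaining case $\prob{r}_j+\prob{r}_k<1$, so $I\setminus\{j,k\}$ is nonempty and $k$ (resp.\ $j$) is itself an admissible first choice inside $S_j$ (resp.\ $S_k$). Applying the induction hypothesis, we may take $S_j\equiv\expr{e}_k\oplus_{\prob{r}_k/(1-\prob{r}_j)}T$ and $S_k\equiv\expr{e}_j\oplus_{\prob{r}_j/(1-\prob{r}_k)}T$ with common tail $T=\bigplus{i\in I\setminus\{j,k\}}\frac{\prob{r}_i}{1-\prob{r}_j-\prob{r}_k}\cdot\expr{e}_i$ (the rescaled weights $\frac{\prob{r}_i/(1-\prob{r}_j)}{1-\prob{r}_k/(1-\prob{r}_j)}$ simplify to $\frac{\prob{r}_i}{1-\prob{r}_j-\prob{r}_k}$, and uniqueness of $T$ up to $\equiv$ is again the induction hypothesis). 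It then remains to establish the symmetry of a ternary convex combination:
\[
    \expr{e}_j \oplus_{\prob{r}_j}\!\left(\expr{e}_k \oplus_{\prob{r}_k/(1-\prob{r}_j)} T\right)
    \;\equiv\;
    \expr{e}_k \oplus_{\prob{r}_k}\!\left(\expr{e}_j \oplus_{\prob{r}_j/(1-\prob{r}_k)} T\right)
\]
which follows by rewriting both sides with \acro{P4} into the common shape $\left(\expr{e}_j\oplus_{\prob{r}_j/(\prob{r}_j+\prob{r}_k)}\expr{e}_k\right)\oplus_{\prob{r}_j+\prob{r}_k}T$ and applying \acro{P3} to swap the two arguments of the inner choice.

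The main obstacle I anticipate is not conceptual but bookkeeping: one must carefully verify that the probabilities arising from nested rescalings coincide (so that the induction hypothesis genuinely applies to the \emph{same} convex sum), and cleanly dispatch the degenerate cases where some $\prob{r}_i$ equals $0$ or $1$, which is exactly where \acro{P2} and \acro{P3} are needed and where the statement ``every admissible first choice leads to the same expression'' is more delicate than its guarded-sum counterpart. Everything else is an application of \acro{P3} and \acro{P4} together with the induction hypothesis.
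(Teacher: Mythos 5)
Your proposal is correct and follows essentially the same route as the paper's proof: induction on the size of the index set, reducing well-definedness to showing that any two admissible head choices $j,k$ yield provably equivalent unrollings, with the tails identified via the induction hypothesis and the ternary rearrangement done by commutativity and associativity of probabilistic choice (\acro{P3}, \acro{P4}; the paper packages one direction as \acro{DF11}). Your explicit handling of the degenerate weight cases ($\prob{r}_j+\prob{r}_k=1$, zero weights) is a minor refinement the paper glosses over, not a different argument.
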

\begin{proof}
    Let \(I\) be a finite and non-empty index set, \(\{\expr{e}_i\}_{i \in I}\) and \(\{\prob{r}_i\}_{i \in I}\) indexed collections such that for all \(i \in I\) we have that \(\expr{e}_i \in \Exp\) and \(\prob{r}_i \in [0,1]\).

    We show that generalised sums are well-defined by induction on the size of the index set \(I\). The case when \(I=\{j\}\) trivial, as in such a case the generalised sum is defined to be just \(\expr{e}_j\).

    When \(I = \{j,k\}\) we have that
    \begin{align*}
        \expr{e}_j \oplus_{\prob{r}_j} \left(\bigplus{i \in \{k\}} \prob{r}_i \cdot \expr{e}_i \right) &\equiv \expr{e}_j \oplus_{\prob{r}_j} \expr{e}_k \equiv \expr{e}_k \oplus_{\prob{1}-\prob{r}_j} \expr{e}_j \tag{\acro{P3}}\\
        &\equiv \expr{e}_k \oplus_{\prob{r}_k} \expr{e}_j \tag{\(\prob{r}_j + \prob{r}_k = \prob{1}\)} \\
        &\equiv \expr{e}_k \oplus_{\prob{r}_k} \left(\bigplus{i \in \{j\}} \prob{r}_i \cdot \expr{e}_i \right)
    \end{align*}

    For the induction step, assume that \(|I|>2\) and that \(j, k \in I\). Consider the following
    \begin{align*}
        &\expr{e}_j \oplus_{\prob{r}_j} \left( \bigplus{i \in I \setminus \{j\}} \frac{\prob{r}_i}{\prob{1}-\prob{r}_j}\cdot \expr{e}_i\right) \equiv \expr{e}_j \oplus_{\prob{r}_j} \left( \expr{e}_k \oplus_{\frac{\prob{r}_k}{\prob{1}-\prob{r}_j}} \left( \bigplus{i \in I \setminus \{j, k\}} \frac{\prob{r}_i}{(\prob{1}-\prob{r}_j)(1-\prob{r}_k)}\cdot \expr{e}_i\right)\right) \\
        &\equiv \left(\expr{e}_j \oplus_{\frac{\prob{r}_j}{\prob{r}_j+\prob{r}_k}} \expr{e}_k \right) \oplus_{\prob{r}_j + \prob{r}_k} \left( \bigplus{i \in I \setminus \{j, k\}} \frac{\prob{r}_i}{(\prob{1}-\prob{r}_j)(1-\prob{r}_k)}\cdot \expr{e}_i\right) \tag{\acro{DF11}} \\
        &\equiv \left(\expr{e}_k \oplus_{\frac{\prob{r}_k}{\prob{r}_j+\prob{r}_k}} \expr{e}_j \right) \oplus_{\prob{r}_j + \prob{r}_k} \left( \bigplus{i \in I \setminus \{j, k\}} \frac{\prob{r}_i}{(\prob{1}-\prob{r}_j)(1-\prob{r}_k)}\cdot \expr{e}_i\right) \tag{\acro{P3}} \\
        &\equiv \expr{e}_k \oplus_{\prob{r}_k} \left( \expr{e}_j \oplus_{\frac{\prob{r}_j}{\prob{1}-\prob{r}_k}} \left( \bigplus{i \in I \setminus \{j, k\}} \frac{\prob{r}_i}{(\prob{1}-\prob{r}_j)(1-\prob{r}_k)}\cdot \expr{e}_i\right)\right) \tag{\acro{P4}}\\
        &\equiv\expr{e}_k \oplus_{\prob{r}_k} \left( \bigplus{i \in I \setminus \{k\}} \frac{\prob{r}_i}{\prob{1}-\prob{r}_k}\cdot \expr{e}_i\right) \qedhere
    \end{align*}
\end{proof}

\begin{remark}
    Notation wise, given non-empty finite index sets \(I\) and \(J\) and collections \(\{\prob{r}_i\}_{i \in I}\) and \(\{\prob{s}_j\}_{j \in J}\) of probabilities as well as collections \(\{\expr{e}_i\}_{i \in I}\) and  \(\{\expr{f}_j\}_{j \in J}\) of expressions satisfying \(\sum_{i \in I} \prob{r}_i + \sum_{j\in J} \prob{s}_j = \prob{1}\) and \(I \cap J = \emptyset \) we will write
\begin{equation*}
	\bigplus{i \in I}\prob{r}_i \cdot \expr{e}_i \boplus \bigplus{j \in I}\prob{s}_i \cdot \expr{f}_i
\end{equation*}
to denote the generalised convex sum
\begin{equation*}
	\bigplus{k \in I \cup J} \prob{t}_k \cdot \expr{g}_k \quad\quad \text{where}\quad \prob{t}_k = \begin{cases}
		\prob{r}_k & \text{if } k \in I\\\prob{s}_k & \text{if } k \in J
	\end{cases}\quad \text{and}\quad \expr{g}_k =  \begin{cases}
		\expr{e}_k & \text{if } k \in I\\\expr{f}_k & \text{if } k \in J \end{cases}
\end{equation*}
	In the case of singleton index sets \(I = \{j\}\) and collections  \(\{\prob{r}_i\}_{i \in I}, \{\expr{e}_i\}_{i \in I}\) we will simply write \(\prob{r}_j \cdot \expr{e}_j\) instead of \(\bigplus{i \in I} \prob{r}_i \cdot \expr{e}_i\)
\end{remark}
\begin{lemma}\label{lem:generalised_probabilistic_distributivity}
	Let \(I\) be a non-empty finite index set, \(\{\prob{r}_i\}_{i \in I}\) and \(\{\expr{e}_i\}_{i\in I}\) indexed collections such that \(\prob{r}_i \in [0,1]\) and \(\expr{e}_i \in \Exp\) for all \(i \in I\) and let \(\expr{f} \in \Exp \). Then,
	\begin{equation*}
		\left( \bigplus{i \in I}\prob{r}_i \cdot \expr{e}_i \right)\seq \expr{f} \equiv \bigplus{i \in I} \prob{r}_i \cdot \expr{e}_i \seq \expr{f}
	\end{equation*}
\end{lemma}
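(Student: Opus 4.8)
The plan is to mirror the proof of \Cref{lem:generalised_guarded_distributivity}, proceeding by induction on the cardinality of the index set $I$. The only axiom that does real work is \acro{S6}, the right-distributivity law $(\expr{e} \oplus_\prob{r} \expr{f})\seq \expr{g} \equiv \expr{e} \seq \expr{g} \oplus_\prob{r} \expr{f}\seq\expr{g}$, which is exactly the binary instance of the claim. The generalised statement then follows by unrolling the inductive definition of $\bigplus{i \in I}\prob{r}_i \cdot \expr{e}_i$ one step at a time and applying the induction hypothesis to the strictly smaller sum that remains.

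For the base case $|I| = 1$, say $I = \{j\}$, both sides reduce to $\expr{e}_j \seq \expr{f}$ by the definition of the generalised convex sum, so there is nothing to prove. For the inductive step $|I| > 1$, I would choose some $j \in I$ with $\prob{r}_j \neq \prob{1}$ (such a $j$ exists since $\sum_{i \in I}\prob{r}_i = \prob{1}$ and $|I| \geq 2$, so the recursive clause of the definition applies). Using the defining equation, the left-hand side rewrites as $\bigl(\expr{e}_j \oplus_{\prob{r}_j} (\bigplus{i \in I \setminus \{j\}} \tfrac{\prob{r}_i}{\prob{1}-\prob{r}_j}\cdot \expr{e}_i)\bigr)\seq\expr{f}$; one application of \acro{S6} pushes $\seq\expr{f}$ through the outer $\oplus_{\prob{r}_j}$, one application of the induction hypothesis pushes it through the inner sum over $I \setminus \{j\}$, and a final use of the defining equation (read right to left) recognises the result as $\bigplus{i \in I}\prob{r}_i \cdot \expr{e}_i \seq \expr{f}$.

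Since generalised convex sums are only defined up to $\equiv$ (\Cref{lem: convex_sums_well_defined}), I would note that it suffices to verify the equivalence for one particular unrolling of each side — picking the same pivot $j$ on both sides makes the correspondence transparent — and then invoke well-definedness to conclude for arbitrary unrollings. I do not anticipate a genuine obstacle here: the argument is entirely parallel to \Cref{lem:generalised_guarded_distributivity}, with \acro{S6} in place of \acro{S5}, and the only mild bookkeeping is checking that the rescaled probabilities $\tfrac{\prob{r}_i}{\prob{1}-\prob{r}_j}$ still sum to $\prob{1}$ over $I \setminus \{j\}$, so that the recursive call is legitimate — which is immediate from $\sum_{i \in I}\prob{r}_i = \prob{1}$.
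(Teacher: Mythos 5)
Your proposal matches the paper's proof essentially verbatim: induction on $|I|$, a trivial singleton base case, and an inductive step that unrolls the convex sum at a pivot $j$, applies \acro{S6}, invokes the induction hypothesis on $I \setminus \{j\}$, and folds the result back into a generalised sum. The extra remarks about choosing $j$ with $\prob{r}_j \neq \prob{1}$, well-definedness up to $\equiv$, and the rescaled probabilities summing to $\prob{1}$ are correct bookkeeping that the paper leaves implicit.
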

\begin{proof}
    By induction on the size of index set \(I\). For the base case, when \(I = \{j\}\) we have that
    \[
        \left( \bigplus{i \in I}\prob{r}_i \cdot \expr{e}_i \right)\seq \expr{f} \equiv \expr{e_j} \seq \expr{f}
        \equiv  \bigplus{i \in I} \prob{r}_i \cdot \expr{e}_i \seq \expr{f}
    \]
    For the induction step, consider the following
    \begin{align*}
        \left( \bigplus{i \in I}\prob{r}_i \cdot \expr{e}_i \right)\seq \expr{f} &\equiv \left(\expr{e}_j \oplus_{\prob{r}_j} \left( \bigplus{i \in I \setminus \{j\}} \frac{\prob{r}_i}{\prob{1}-\prob{r}_j} \cdot \expr{e}_i\right) \right) \seq \expr{f} \tag{\(j\in I\) }\\
        &\equiv \expr{e}_j \seq \expr{f} \oplus_{\prob{r}_j} \left( \bigplus{i \in I \setminus \{j\}} \frac{\prob{r}_i}{\prob{1}-\prob{r}_j} \cdot \expr{e}_i\right) \seq \expr{f} \tag{\acro{S6}}\\
        &\equiv \expr{e}_j \seq \expr{f} \oplus_{\prob{r}_j} \left( \bigplus{i \in I \setminus \{j\}} \frac{\prob{r}_i}{\prob{1}-\prob{r}_j} \cdot \expr{e}_i \seq \expr{f}\right) \tag{Induction hypothesis}\\
        &\equiv \bigplus{i \in I} \prob{r}_i \cdot \expr{e}_i \seq \expr{f} \tag*{\qedhere}
    \end{align*}
\end{proof}
%\begin{lemma}\label{lem: generalised_congruence}
       % Let \(I\) be a non-empty finite index set, \(\{\prob{r}_i\}_i \in I\) an indexed collection such that \(\prob{r}_i \in [0,1]\) for all \(i \in I\), \(\{\expr{e}_i\}_{i \in I}\) and \(\{\expr{f}_i\}_{i \in I}\) indexed collection such that \(\expr{e}_i, \expr{f}_i \in \Exp\) for all \(i \in I\) satisfying additionally that \(\expr{e}_i \equiv \expr{f}_i\) for all \(i \in I\). Then,
       % \[
      %  \bigplus{i \in I} \prob{r}_i \cdot \expr{e}_i \equiv \bigplus{i \in I} \prob{r}_i \cdot \expr{f}_i
     %   \]
%\end{lemma}
%\begin{proof}
   % By induction on the size of \(I\). The case when \(I = \{j\}\) is a singleton is trivial as
   % \[
   % \bigplus{i \in I} \prob{r}_i \expr{e}_i \equiv \expr{e}_j \equiv  \bigplus{i \in I} \prob{r}_i \expr{f}_i
   % \]
 %   For the inductive step, consider the following.
  %  \begin{align*}
  %       \bigplus{i \in I} \prob{r}_i \expr{e}_i &\equiv \expr{e}_j \oplus_{\prob{r}_j} \left( \bigplus{i \in I \setminus \{j\}} \prob{r}_i \cdot \expr{e}_i\right) \tag{\(j \in I\)}\\
    %     &\equiv \expr{f}_j \oplus_{\prob{r}_j} \left( \bigplus{i \in I \setminus \{j\}} \prob{r}_i \cdot \expr{e}_i\right) \tag{\(\expr{e}_j \equiv \expr{f}_j\)}\\
   %      &\equiv \expr{f}_j \oplus_{\prob{r}_j} \left( \bigplus{i \in I \setminus \{j\}} \prob{r}_i \cdot \expr{f}_i\right) \tag{Induction hypothesis}\\
    %     &\equiv  \bigplus{i \in I} \prob{r}_i \expr{f}_i
  %  \end{align*}
%\end{proof}
\begin{lemma}\label{lem:collapsing_sublemma}
    Let \(I\) be a non-empty index set, \(\{\prob{r}_i\}_{i \in I}\) and \(\{\expr{e}_i\}_{i_I}\) indexed collections such that for all \(i \in I\), \(\prob{i} \in [0,1]\) and \(\expr{e}_i \in \Exp\). Let \(\expr{e}\) be an expression, such that \(\expr{e} = \expr{e}_k\) for some \(k \in I\). It holds that
    \[
    \bigplus{i \in I} \prob{r}_i \cdot \expr{e}_i \equiv \bigplus{j \in J} \prob{s}_j \cdot \expr{f}_j
    \]
    where \(J = \{\expr{e}\} \cup \{i \in I \mid \expr{e}_i \neq \expr{e}\}\), \(\{\expr{f}_j\}_{j \in J} = \begin{cases}\expr{e} &\text{if }j=\expr{e} \\ \expr{e}_i &\text{if }j=i\end{cases}\) and \(\{s_j\}_{j \in J} = \begin{cases}\sum_{\expr{e}_i = \expr{e} } \prob{r}_i & \text{if } j=\expr{e} \\ \prob{r}_i &\text{if } j=i\end{cases}\)
\end{lemma}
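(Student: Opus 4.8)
The plan is to prove the statement by induction on the number of \emph{duplicate} occurrences of $\expr{e}$, that is, on $m = |\{i \in I \mid \expr{e}_i = \expr{e}\}| - 1 \geq 0$. If $m = 0$, then $\expr{e}$ occurs exactly once in the collection, so $J$ is just $I$ with the index $k$ relabelled by the symbol $\expr{e}$; under this relabelling the weights $\prob{s}_j$ and bodies $\expr{f}_j$ coincide with $\prob{r}_i$ and $\expr{e}_i$, and the two generalised convex sums are literally two unrollings of the same formal expression, so $\equiv$ follows at once from \cref{lem: convex_sums_well_defined}.

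For the inductive step, assume $m \geq 1$ and pick distinct $i_1, i_2 \in I$ with $\expr{e}_{i_1} = \expr{e}_{i_2} = \expr{e}$. After dispatching the degenerate situations in which one of the relevant weights equals $0$ or $1$ --- using \acro{P1}, \acro{P2}, \acro{P3}, since a weight-$1$ summand forces the whole sum and a weight-$0$ summand can be dropped --- we may assume all the weights that arise are strictly between $0$ and $1$. Invoking \cref{lem: convex_sums_well_defined} to pick a convenient unrolling, I would pull out $i_1$ and then $i_2$, obtaining
\[
    \bigplus{i \in I} \prob{r}_i \cdot \expr{e}_i \;\equiv\; \expr{e} \oplus_{\prob{r}_{i_1}} \left( \expr{e} \oplus_{\prob{t}} S \right), \qquad \prob{t} = \prob{\tfrac{r_{i_2}}{1 - r_{i_1}}}, \quad S = \bigplus{i \in I \setminus \{i_1, i_2\}} \prob{\tfrac{r_i}{1 - r_{i_1} - r_{i_2}}} \cdot \expr{e}_i .
\]
Applying \acro{DF11} to rebracket and then \acro{P1} to collapse the two copies of $\expr{e}$, a direct computation (with $(1-\prob{r}_{i_1})(1-\prob{t}) = 1 - \prob{r}_{i_1} - \prob{r}_{i_2}$) gives $\expr{e} \oplus_{\prob{r}_{i_1}} (\expr{e} \oplus_{\prob{t}} S) \equiv \expr{e} \oplus_{\prob{r}_{i_1}+\prob{r}_{i_2}} S$, which is exactly an unrolling of the generalised convex sum over $I' = I \setminus \{i_2\}$ with weights obtained from $\prob{r}$ by replacing $\prob{r}_{i_1}$ with $\prob{r}_{i_1}+\prob{r}_{i_2}$ and leaving the bodies untouched.

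The collection indexed by $I'$ has only $m-1$ duplicate occurrences of $\expr{e}$, so the induction hypothesis applies to it. Finally I would check that the collapsed data attached to $I'$ agree with $J$, $\{\expr{f}_j\}_{j \in J}$ and $\{\prob{s}_j\}_{j \in J}$: the non-$\expr{e}$ indices are unchanged, and the merged $\expr{e}$-weight for $I'$ is $\sum_{\expr{e}_i = \expr{e},\, i \in I'} \prob{r}'_i = (\prob{r}_{i_1}+\prob{r}_{i_2}) + \sum_{\expr{e}_i = \expr{e},\, i \notin \{i_1,i_2\}} \prob{r}_i = \sum_{\expr{e}_i = \expr{e}} \prob{r}_i$, as required. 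Chaining this with the previous equivalence yields $\bigplus{i \in I} \prob{r}_i \cdot \expr{e}_i \equiv \bigplus{j \in J} \prob{s}_j \cdot \expr{f}_j$. I expect the main obstacle to be purely bookkeeping: tracking the renormalised weights through the two successive unrollings and through \acro{DF11}, and confirming that the boundary cases where weights hit $0$ or $1$ --- for which the recursive clause in the definition of generalised convex sum does not literally apply --- are genuinely harmless.
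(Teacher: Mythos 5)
Your proposal is correct and takes essentially the same route as the paper's proof: an induction on the number of occurrences of \(\expr{e}\), using well-definedness of generalised convex sums to reorder/unroll, and \acro{DF11} followed by \acro{P1} to merge two copies of \(\expr{e}\) into one summand carrying the summed (renormalised) weight. The only difference is cosmetic --- the paper applies the induction hypothesis first and then performs the single \acro{DF11}/\acro{P1} merge, whereas you merge two duplicates first and then invoke the hypothesis --- and your explicit attention to the weight-\(0\)/\(1\) boundary cases is, if anything, slightly more careful than the paper's.
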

\begin{proof}
    Without loss of generality we can assume that \(I = \{1, \dots, n\}\) for some \(n \in \bN\) and because of \cref{lem: convex_sums_well_defined} we can safely assume that there exists \(k \in \bN\) such that for all \(i \in I\), such that \(i \leq k\) we have that \(\expr{e}_i = \expr{e}\) and for all \(i \in I\) such that \(i > k\) it holds that \(\expr{e}_i \neq \expr{e}\).

    We proceed by the induction on \(k\). If \(k = 1\), then the desired property holds immediately. For the induction step, consider the following.
    \begin{align*}
        \bigplus{i \in I} \prob{r}_i \cdot \expr{e}_i &\equiv \expr{e}_k \oplus_{\prob{r}_k} \left(\bigplus{i \in I \setminus \{k\}} \frac{\prob{r}_i}{\prob{1}-\prob{r}_k}\cdot \expr{e}_i\right) \tag{\(k \in I\)} \\
        &\equiv \expr{e} \oplus_{\prob{r}_k} \left(\bigplus{i \in I \setminus \{k\}} \frac{\prob{r}_i}{\prob{1}-\prob{r}_k}\cdot \expr{e}_i\right) \tag{\(\expr{e}=\expr{e}_k\)}
    \end{align*}
    In the inductive step, we will use the following shorthand \[\prob{n} = \sum_{\expr{e}_i=\expr{e}}^{i < k} {\prob{r}_i} \quad\quad\prob{m}= \frac{\prob{n}}{1-\prob{r}_k}\]
    Using the induction hypothesis, we can rewrite the above expression as
    \begin{align*}
        &\expr{e} \oplus{\prob{r}_k}  \left(\expr{e} \oplus_\prob{m} \left(\bigplus{i \in I \setminus \{\expr{e}, k\}} \frac{\prob{r}_i}{(\prob{1}-\prob{r}_k)\prob{(1-m)}} \cdot \expr{e}_i\right)\right) \\&\quad\quad\equiv \left(\expr{e} \oplus_{\frac{\prob{r}_k}{\prob{r}_k + \prob{m}}} \expr{e}\right) \oplus_{\prob{r}_k + \prob{n}} \left(\bigplus{i \in I \setminus \{\expr{e}, k\}} \frac{\prob{r}_i}{(\prob{1}-\prob{r}_k)\prob{(1-m)}} \cdot \expr{e}_i\right) \tag{\acro{DF11}}\\
        &\quad\quad\equiv \expr{e}\oplus_{\prob{r}_k + \prob{n}} \left(\bigplus{i \in I \setminus \{\expr{e}, k\}} \frac{\prob{r}_i}{(\prob{1}-\prob{r}_k)\prob{(1-m)}} \cdot \expr{e}_i\right) \tag{\acro{P1}}\\
        &\quad\quad\equiv \expr{e}\oplus_{\prob{r}_k + \prob{n}} \left(\bigplus{i \in I \setminus \{\expr{e}, k\}} \frac{\prob{r}_i}{\prob{1}-(\prob{r}_k + \prob{n})} \cdot \expr{e}_i\right)\\
        &\quad\quad\equiv \expr{e}\oplus_{\prob{r}_k + \prob{n}} \left(\bigplus{j \in J \setminus \{\expr{e}\}} \frac{\prob{s}_j}{\prob{1}-(\prob{r}_k + \prob{n})} \cdot \expr{f}_j\right)
        \equiv \bigplus{j \in J} \prob{s}_j \cdot \expr{f}_j \tag*{\qedhere}
    \end{align*}
\end{proof}
\begin{lemma}\label{lem:generalised_idemponency}
	Let \(I\) be a non-empty finite index set, \(\{\prob{r}_i\}_{i \in I}\) and \(\{\expr{e}_i\}_{i\in I}\) indexed collections such that \(\prob{r}_i \in [0,1]\) and \(\expr{e}_i \in \Exp\) for all \(i \in I\). Let \(E =  \bigcup_{i \in I} \{\expr{e}_i\} \). Then,
	\[
		\bigplus{i \in I} \prob{r}_i \cdot \expr{e}_i \equiv \bigplus{\expr{e} \in E} \left(\sum_{ \expr{e}_i = \expr{e}}\prob{r}_i \right) \cdot \expr{e}
	\]
\end{lemma}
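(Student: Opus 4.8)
The plan is to obtain the statement by iterating \cref{lem:collapsing_sublemma}, which collapses all copies of one fixed expression appearing in a generalised convex sum into a single summand. I would argue by induction on $|I|$. The base case $|I| = 1$ is immediate: writing $I = \{j\}$, we have $E = \{\expr{e}_j\}$, the only coefficient sum is $\prob{r}_j = \prob{1}$, and both sides are equal to $\expr{e}_j$ by definition of the generalised convex sum.

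For the inductive step with $|I| \geq 2$, I would distinguish two cases. If the expressions $\{\expr{e}_i\}_{i \in I}$ are pairwise distinct, then the map $i \mapsto \expr{e}_i$ is a bijection $I \to E$ under which the summand $(\prob{r}_i, \expr{e}_i)$ on the left corresponds to the summand $\bigl(\sum_{\expr{e}_{i'} = \expr{e}_i} \prob{r}_{i'}, \expr{e}_i\bigr) = (\prob{r}_i, \expr{e}_i)$ on the right; since a generalised convex sum depends only on the collection of its (probability, expression) pairs, the two sides are equivalent by \cref{lem: convex_sums_well_defined}. Otherwise there exist distinct $i, i' \in I$ with $\expr{e}_i = \expr{e}_{i'} =: \expr{e}$. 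Applying \cref{lem:collapsing_sublemma} to this $\expr{e}$ yields
\[
\bigplus{i \in I} \prob{r}_i \cdot \expr{e}_i \equiv \bigplus{j \in J} \prob{s}_j \cdot \expr{f}_j,
\]
with $J$, $\{\expr{f}_j\}$, $\{\prob{s}_j\}$ as in that lemma. Since $\lvert\{\, i \in I \mid \expr{e}_i = \expr{e}\,\}\rvert \geq 2$, we get $|J| < |I|$, while the set of distinct expressions among the $\expr{f}_j$ is again $E$. The induction hypothesis applied to the $J$-indexed sum gives
\[
\bigplus{j \in J} \prob{s}_j \cdot \expr{f}_j \equiv \bigplus{\expr{e}' \in E} \Bigl(\textstyle\sum_{\expr{f}_j = \expr{e}'} \prob{s}_j\Bigr) \cdot \expr{e}'.
\]

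To close, I would check that the coefficients of the last display match those required on the right-hand side of the lemma. For $\expr{e}' = \expr{e}$ the inner sum equals $\prob{s}_{\expr{e}} = \sum_{\expr{e}_i = \expr{e}} \prob{r}_i$ by the definition of $\prob{s}$ in \cref{lem:collapsing_sublemma}; for $\expr{e}' \neq \expr{e}$, every $i \in I$ with $\expr{e}_i = \expr{e}'$ also lies in $J$ with $\expr{f}_i = \expr{e}_i$ and $\prob{s}_i = \prob{r}_i$, so the inner sum equals $\sum_{\expr{e}_i = \expr{e}'} \prob{r}_i$. Hence the right-hand side is precisely $\bigplus{\expr{e} \in E}\bigl(\sum_{\expr{e}_i = \expr{e}} \prob{r}_i\bigr) \cdot \expr{e}$, completing the induction. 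The only real obstacle is the index-set bookkeeping in the inductive step --- ensuring \cref{lem:collapsing_sublemma} strictly decreases $|I|$ and that the coefficient sums line up --- but all the genuine algebraic manipulation (\acro{P1}, \acro{P3}, \acro{P4}, \acro{DF11}, and well-definedness of convex sums) is already packaged inside \cref{lem:collapsing_sublemma} and \cref{lem: convex_sums_well_defined}, so nothing new arises.
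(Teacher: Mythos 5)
Your proof is correct and takes essentially the same route as the paper: both arguments amount to iterating \cref{lem:collapsing_sublemma} inside an induction, the only difference being bookkeeping — the paper inducts on \(|E|\), unrolling the collapsed summand \(\expr{g}\) and applying the hypothesis to the tail over \(E\setminus\{\expr{g}\}\), whereas you induct on \(|I|\), apply the hypothesis directly to the collapsed \(J\)-indexed sum, and handle the all-distinct case via well-definedness of convex sums. Your strict-decrease argument (\(|J| < |I|\)) and the coefficient check both hold, so nothing is missing.
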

\begin{proof}
    By induction on the size of \(E\). In the base case, when \(E\) is a singleton, the result holds immediately by \cref{lem:collapsing_sublemma}. For the induction step, consider the following. Let \(\expr{g} \in E\) be an arbitrary expression and let \(J = \{\expr{g}\} \cup \{i \in I \mid \expr{e}_i \neq \expr{g}\}\), \[\prob{m} = \sum_{\expr{e}_i = \expr{g} } \prob{r}_i \quad \quad \text{and}\quad\quad\{\expr{f}_j\}_{j \in J} = \begin{cases}\expr{e} &\text{if }j=\expr{g} \\ \expr{e}_i &\text{if }j=i\end{cases} \quad\quad \text{and} \quad\quad \{s_j\}_{j \in J} = \begin{cases}\prob{m} & \text{if } j=\expr{g} \\ \prob{r}_i &\text{if } j=i\end{cases}\]
    We now have that
    \begin{align*}
        \bigplus{i \in I} \prob{r}_i \cdot \expr{e}_i &\equiv \bigplus{j \in J} \prob{s}_j \cdot \expr{f}_j \tag{\cref{lem:collapsing_sublemma}}\\
        &\equiv \expr{g} \oplus_{\prob{m}} \left(\bigplus{j \in J \setminus \{\expr{g}\}} \frac{\prob{s}_j}{\prob{1-m}}\cdot\expr{f}_j\right) \\
        &\equiv \expr{g} \oplus_{\prob{m}} \left(\bigplus{\expr{e} \in E\setminus \{\expr{g}\}} \left(\sum_{ \expr{e}_i = \expr{e}}\frac{\prob{r}_i}{\prob{1-m}} \right) \cdot \expr{e}\right) \tag{Induction hypothesis} \\
        &\equiv \bigplus{\expr{e} \in E} \left(\sum_{ \expr{e}_i = \expr{e}}\prob{r}_i \right) \cdot \expr{e} \tag*{\qedhere}
    \end{align*}
\end{proof}
\begin{lemma}\label{lem:generalised_idemponency_up_to_equiv}
    Let \(I\) be a non-empty finite index set, \(\{\prob{r}_i\}_{i \in I}\) and \(\{\expr{e}_i\}_{i\in I}\) indexed collections such that \(\prob{r}_i \in [0,1]\) and \(\expr{e}_i \in \Exp\) for all \(i \in I\). Let \(E =  \bigcup_{i \in I} \{\expr{e}_i\} \). Then,
	\[
		\bigplus{i \in I} \prob{r}_i \cdot \expr{e}_i \equiv \bigplus{[\expr{e}]_\equiv \in E / \equiv } \left(\sum_{ \expr{e}_i \equiv \expr{e}}\prob{r}_i \right) \cdot \expr{e}
	\]
\end{lemma}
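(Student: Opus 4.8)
The plan is to bootstrap from \Cref{lem:generalised_idemponency}, which already collapses \emph{syntactically} equal summands, and then merge the remaining $\equiv$-equivalent summands by rewriting each one to a fixed representative of its $\equiv$-class and collapsing a second time.

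First I would record an auxiliary observation: generalised convex sums respect $\equiv$ componentwise, i.e.\ if $\expr{e}_i \equiv \expr{e}'_i$ for all $i \in I$, then $\bigplus{i \in I} \prob{r}_i \cdot \expr{e}_i \equiv \bigplus{i \in I} \prob{r}_i \cdot \expr{e}'_i$. This follows by induction on $|I|$. If $I$ is a singleton the two sums are $\expr{e}_j$ and $\expr{e}'_j$, related by hypothesis. If $|I| \geq 2$, then since the $\prob{r}_i$ sum to $\prob{1}$ there is some $j \in I$ with $\prob{r}_j \neq \prob{1}$; by \Cref{lem: convex_sums_well_defined} we may unfold \emph{both} generalised sums at this same index $j$, and then conclude from the induction hypothesis applied to the tails together with the fact that $\oplus_{\prob{r}_j}$ is a congruence for $\equiv$.

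Now fix, for each class $C \in E/{\equiv}$, a representative $\hat{e}_C \in C$, and let $\rho : E \to E$ be the map $\rho(\expr{e}) = \hat{e}_{[\expr{e}]_\equiv}$, so that $\rho(E)$ is a transversal of $E/{\equiv}$ and $\rho(\expr{e}) \equiv \expr{e}$ for every $\expr{e} \in E$. By \Cref{lem:generalised_idemponency},
\[
    \bigplus{i \in I} \prob{r}_i \cdot \expr{e}_i
        \equiv \bigplus{\expr{e} \in E} \Big(\sum_{\expr{e}_i = \expr{e}} \prob{r}_i \Big) \cdot \expr{e}
        \equiv \bigplus{\expr{e} \in E} \Big(\sum_{\expr{e}_i = \expr{e}} \prob{r}_i \Big) \cdot \rho(\expr{e}),
\]
where the second step uses the auxiliary observation with index set $E$. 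The right-hand side is a generalised convex sum indexed by $E$, with summand expressions $\rho(\expr{e})$ and weights $\sum_{\expr{e}_i = \expr{e}} \prob{r}_i$; applying \Cref{lem:generalised_idemponency} to it once more yields
\[
    \bigplus{\expr{e} \in E} \Big(\sum_{\expr{e}_i = \expr{e}} \prob{r}_i \Big) \cdot \rho(\expr{e})
        \equiv \bigplus{\hat{e} \in \rho(E)} \Big(\sum_{\expr{e}_i \equiv \hat{e}} \prob{r}_i \Big) \cdot \hat{e},
\]
since grouping the weights gives, for each $\hat{e} \in \rho(E)$, total weight $\sum_{\expr{e} \in E,\ \rho(\expr{e}) = \hat{e}} \sum_{\expr{e}_i = \expr{e}} \prob{r}_i = \sum_{\expr{e}_i \equiv \hat{e}} \prob{r}_i$, using that $\rho(\expr{e}_i) = \hat{e}$ iff $\expr{e}_i \equiv \hat{e}$. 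The final expression is exactly what the notation $\bigplus{[\expr{e}]_\equiv \in E/{\equiv}} \big(\sum_{\expr{e}_i \equiv \expr{e}} \prob{r}_i\big) \cdot \expr{e}$ denotes for this choice of representatives, and \Cref{lem: convex_sums_well_defined} guarantees independence of the choice; chaining the displayed equivalences concludes the proof.

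The step I expect to be the main obstacle is the auxiliary componentwise-congruence fact. Morally it just says that a generalised convex sum is a nest of $\oplus$'s, hence a congruence operation, but making the induction precise requires the \emph{same} unfolding index to be used on both sides of the equivalence, which is precisely what the well-definedness result \Cref{lem: convex_sums_well_defined} licenses. Everything after that is routine manipulation of finite sums.
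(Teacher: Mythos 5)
Your proposal is correct, but it takes a different route from the paper. The paper's own proof is a sketch that re-runs the earlier inductions with $\equiv$ in place of syntactic equality: it notes that \cref{lem:collapsing_sublemma} can be re-proved with strict equality replaced by $\equiv$ (using \acro{P1} together with congruence to merge $\expr{e} \oplus_{\prob{r}} \expr{e}'$ when $\expr{e} \equiv \expr{e}'$), and then adapts the proof of \cref{lem:generalised_idemponency} accordingly. You instead keep the syntactic lemmas as black boxes and add one new ingredient: a componentwise-congruence fact for generalised convex sums, proved by induction on $|I|$ using \cref{lem: convex_sums_well_defined} to unfold both sides at a common index $j$ with $\prob{r}_j \neq \prob{1}$ (such a $j$ exists since $|I| \geq 2$ and the weights sum to $\prob{1}$). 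Rewriting every summand to a chosen representative of its $\equiv$-class and invoking \cref{lem:generalised_idemponency} a second time then collapses the classes, and the bookkeeping identity $\sum_{\expr{e} \in E,\, \rho(\expr{e}) = \hat{e}} \sum_{\expr{e}_i = \expr{e}} \prob{r}_i = \sum_{\expr{e}_i \equiv \hat{e}} \prob{r}_i$ is exactly right. This is arguably more modular than the paper's approach, at the cost of the auxiliary lemma; your auxiliary induction is sound as written. One small attribution slip: independence of the choice of representatives in the lemma's right-hand side is not given by \cref{lem: convex_sums_well_defined} (which only governs the order of unrolling) but follows precisely from your own componentwise-congruence observation; since the statement's notation is anyway only determined up to such a choice, this does not affect the validity of the argument.
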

\begin{proof}
   A similar result to \cref{lem:collapsing_sublemma} can be obtained when replacing the strict equality with \(\equiv\). Using it, one can show the desired result by adapting \cref{lem:generalised_idemponency}.
\end{proof}

\subsection{Fundamental Theorem}\label{apx:fundamental}
\fundamentaltheorem*
\begin{proof}
    We proceed by structural induction on the construction of \(\expr{e} \in \Exp\). The base cases are trivial, hence we only show one of them.
    \begingroup
    \allowdisplaybreaks%
    \begin{itemize}
        \item[]\fbox{\(\expr{e}=\action{a}\)}
        \begin{align*}
            \action{a} &\equiv \bigsum{\alpha \in \At} \action{a} \tag{\acro{G1}}\\
            &\equiv \bigsum{\alpha \in \At} \action{a}\seq\test{\one} \tag{\acro{S2}} \\
            &\equiv \bigsum{\alpha \in \At} \left( \bigplus{d \in \supp(\partial(\action{a})_\alpha)} \partial(\action{a})_\alpha(d) \cdot \ex(d)\right) \tag{Def. of \(\partial\)}
        \end{align*}
    \end{itemize}
    Now, we can move on to inductive steps.
    \begin{itemize}
        % \item[(\(\expr{e} = \trmt{\expr{f}}\))]
        % We assume that induction hypothesis holds for \(\expr{f} \in \Exp\) and hence we have
        % \begin{align*}
        %     \trmt{\expr{f}} &\equiv \trmt{\bigsum{\alpha \in \At}\left(\bigplus{d \in \supp(\partial(\expr{f})_\alpha)}\partial(\expr{f})_\alpha(d) \cdot \ex(d) \right)} \tag{Induction hypothesis}\\
        %     &\equiv \bigsum{\alpha \in \At} \left(\trmt{\bigplus{d \in \supp(\partial(\expr{f})_\alpha)}\partial(\expr{f})_\alpha(d) \cdot \ex(d)}\right) \tag{\acro{T4}} \\
        %     &\equiv \bigsum{\alpha \in \At} \left(\bigplus{d \in \supp(\partial(\expr{f})_\alpha)}\partial(\expr{f})_\alpha(d) \cdot \trmt{\ex(d)}\right) \tag{\acro{T5}} \\
        %     &\equiv \bigsum{\alpha \in \At} \left(\trmt{\test{\one}} \oplus_{\partial(\expr{f})_\alpha(\accept)}\bigplus{d \in \supp(\partial(\expr{f})_\alpha)\setminus \{\accept\}}\partial(\expr{f})_\alpha(d) \cdot \trmt{\ex(d)}\right) \tag{Def. of \(\oplus\)} \\
        %     &\equiv \bigsum{\alpha \in \At} \left(\test{\one} \oplus_{\partial(\expr{f})_\alpha(\accept)}\bigplus{d \in \supp(\partial(\expr{f})_\alpha)\setminus \{\accept\}}\partial(\expr{f})_\alpha(d) \cdot \test{\zero}\right) \tag{\acro{T1} and \acro{T2}} \\
        %     &\equiv \bigsum{\alpha \in \At} \left(\test{\one} \oplus_{\partial(\expr{f})_\alpha(\accept)} \test{\zero}\right) \tag{\acro{P1}} \\
        %     &\equiv \bigsum{\alpha \in \At} \left(\bigplus{d \in \supp(\partial(\trmt{\expr{f}})_\alpha)}\partial(\trmt{\expr{f}})_\alpha(d) \cdot \ex(d)\right) \tag{Def. of \(\partial\)} \\
        % \end{align*}
        \item[]\fbox{\(\expr{e} = \expr{f} +_\test{b} \expr{g}\)}
        \begingroup%
        \allowdisplaybreaks%
        {
        \small
        \begin{align*}
            \expr{e} &\equiv \expr{f} +_\test{b} \expr{g} \\
            &\equiv \left( \bigsum{\alpha \in \At} \left(\bigplus{d \in \supp(\partial(\expr{f})_\alpha)}\partial(\expr{f})_\alpha(d) \cdot \ex(d)\right)\right) \\
            &\hspace{5em} +_\test{b} \left( \bigsum{\alpha \in \At} \left(\bigplus{d \in \supp(\partial(\expr{g})_\alpha)}\partial(\expr{g})_\alpha(d) \cdot \ex(d)\right)\right) \tag{Ind. hypothesis}\\
            &\equiv \test{b}\seq\left( \bigsum{\alpha \in \At} \left(\bigplus{d \in \supp(\partial(\expr{f})_\alpha)}\partial(\expr{f})_\alpha(d) \cdot \ex(d)\right)\right)\\&\hspace{5em} +_\test{b} \bar{\test{b}}\seq\left( \bigsum{\alpha \in \At} \left(\bigplus{d \in \supp(\partial(\expr{g})_\alpha)}\partial(\expr{g})_\alpha(d) \cdot \ex(d)\right)\right) \tag{\acro{G3} and \acro{G2}}\\
            &\equiv \test{b}\seq\test{b}\seq\left( \bigsum{\alpha \in \At} \left(\bigplus{d \in \supp(\partial(\expr{f})_\alpha)}\partial(\expr{f})_\alpha(d) \cdot \ex(d)\right)\right)\\&\hspace{5em} +_\test{b} \bar{\test{b}}\seq\bar{\test{b}}\seq\left( \bigsum{\alpha \in \At} \left(\bigplus{d \in \supp(\partial(\expr{g})_\alpha)}\partial(\expr{g})_\alpha(d) \cdot \ex(d)\right)\right) \tag{Boolean algebra and \acro{S8}}\\
            &\equiv \test{b}\seq\left( \bigsum{\alpha \bleq \test{b}} \left(\bigplus{d \in \supp(\partial(\expr{f})_\alpha)}\partial(\expr{f})_\alpha(d) \cdot \ex(d)\right)\right)\\&\hspace{5em} +_\test{b} \bar{\test{b}}\seq\left( \bigsum{\alpha \bleq \bar{\test{b}}} \left(\bigplus{d \in \supp(\partial(\expr{g})_\alpha)}\partial(\expr{g})_\alpha(d) \cdot \ex(d)\right)\right) \tag{\cref{lem:generalised_branch_selection}}\\
            &\equiv \test{b}\seq\left( \bigsum{\alpha \bleq \test{b}} \left(\bigplus{d \in \supp(\partial(\expr{f} +_\test{b} \expr{g})_\alpha)}\partial(\expr{f} +_\test{b} \expr{g})_\alpha(d) \cdot \ex(d)\right)\right) \\&\hspace{5em}+_\test{b}\bar{\test{b}}\seq\left( \bigsum{\alpha \bleq \bar{\test{b}}} \left(\bigplus{d \in \supp(\partial(\expr{f} +_\test{b} \expr{g})_\alpha)}\partial(\expr{f} +_\test{b }\expr{g})_\alpha(d) \cdot \ex(d)\right)\right) \tag{Def. of \(\partial\)}\\
            &\equiv \left( \bigsum{\alpha \in \At} \left(\bigplus{d \in \supp(\partial(\expr{f} +_\test{b} \expr{g})_\alpha)}\partial(\expr{f} +_\test{b} \expr{g})_\alpha(d) \cdot \ex(d)\right)\right) \\&\hspace{5em}+_\test{b} \left( \bigsum{\alpha \in \At} \left(\bigplus{d \in \supp(\partial(\expr{f} +_\test{b} \expr{g})_\alpha)}\partial(\expr{f} +_\test{b}\expr{g})_\alpha(d) \cdot \ex(d)\right)\right) \tag{\cref{lem:generalised_branch_selection}}\\
            &\equiv  \bigsum{\alpha \in \At} \left(\bigplus{d \in \supp(\partial(\expr{f} +_\test{b} \expr{g})_\alpha)}\partial(\expr{f} +_\test{b} \expr{g})_\alpha(d) \cdot \ex(d)\right) \tag{\acro{G4}}
        \end{align*}
        }
        \endgroup%
    \item[]\fbox{\(\expr{e} = \expr{f} \oplus_\prob{r} \expr{g}\)}
    {
    \small
    \begin{align*}
        \expr{e} &\equiv \expr{f} \oplus_\prob{r} \expr{g} \\
         &\equiv \left( \bigsum{\alpha \in \At} \left(\bigplus{d \in \supp(\partial(\expr{f})_\alpha)}\partial(\expr{f})_\alpha(d) \cdot \ex(d)\right)\right)\\&\hspace{5em} \oplus_\prob{r} \left( \bigsum{\alpha \in \At} \left(\bigplus{d \in \supp(\partial(\expr{g})_\alpha)}\partial(\expr{g})_\alpha(d) \cdot \ex(d)\right)\right) \tag{Ind. hypothesis}\\
         &\equiv \left( \bigsum{\alpha \in \At} \left(\bigplus{d \in \supp(\partial(\expr{f})_\alpha)}\partial(\expr{f})_\alpha(d) \cdot \ex(d)\right) \oplus_\prob{r} \left(\bigplus{d \in \supp(\partial(\expr{g})_\alpha)}\partial(\expr{g})_\alpha(d) \cdot \ex(d)\right)\right)\tag{\cref{lem:generalised_exchange}}
    \end{align*}
    }%
    Observe that for each atom \(\alpha \in \At\) the expression inside the generalised guarded sum can be rearranged into a single generalised convex sum, by using the axiom \acro{P4}.
	For each \(\alpha \in \At\) we will define the index set \(I_\alpha = \supp(\partial(\expr{e})_\alpha) +\supp(\partial(\expr{f})_\alpha)\), where we use \(+\) to denote the coproduct. Let
	\[
		\prob{r}_i = \begin{cases}
			\prob{r}\partial(\expr{f})_\alpha(i) & \text{if } i \in \supp(\partial(\expr{f})_\alpha)\\
			\prob{(1-r)}\partial(\expr{g})_\alpha(i) & \text{if } i \in \supp(\partial(\expr{g})_\alpha)
		\end{cases}
	\]
	Which allows us to obtain the following
	\[
		\expr{f} \oplus_\prob{r} \expr{g} \equiv \bigsum{\alpha \in \At} \left( \bigplus{i \in I} \prob{r}_i \cdot \ex(i)\right)
	\]
	We now apply the Lemma~\ref{lem:generalised_idemponency} to remove the duplicate elements in the convex sums by adding up their probabilities. This yields the following
	\begin{equation*}
		\bigsum{\alpha \in \At} \left(\bigplus{d \in \supp(\partial(\expr{f})_\alpha) \cup \supp(\partial(\expr{g})_\alpha)} \left(\prob{r}\partial(\expr{f})_\alpha(d) + \prob{(1-r)}\partial(\expr{g})_\alpha(d)\right) \cdot \ex(d) \right)
	\end{equation*}
	which is precisely the same as
	\begin{equation*}
		\bigsum{\alpha \in \At} \left( \bigplus{d \in \supp(\partial(\expr{f} \oplus_\prob{r} \expr{g})_\alpha)}\partial(\expr{f} \oplus_\prob{r} \expr{g})_\alpha(d) \cdot \ex(d)\right)
	\end{equation*}
    \item[]\fbox{\(\expr{e}=\expr{f} \seq \expr{g}\)}

    {
    \small
    \begin{align*}
        \expr{e} &\equiv \expr{f}\seq \expr{g} \\
        &\equiv \left(\bigsum{\alpha \in \At} \left(\bigplus{d \in \supp(\partial(\expr{f})_\alpha)} \partial(\expr{f})_\alpha(d) \cdot \ex(d)\right)\right) \seq \expr{g} \tag{Induction hypothesis}\\
        &\equiv \left(\bigsum{\alpha \in \At} \left(\bigplus{d \in \supp(\partial(\expr{f})_\alpha)} \partial(\expr{f})_\alpha(d) \cdot \ex(d)\right)\seq \expr{g}\right) \tag{\cref{lem:generalised_guarded_distributivity}}\\
        &\equiv \left(\bigsum{\alpha \in \At} \left(\bigplus{d \in \supp(\partial(\expr{f})_\alpha)} \partial(\expr{f})_\alpha(d) \cdot \ex(d)\seq \expr{g}\right)\right) \tag{\cref{lem:generalised_probabilistic_distributivity}}
    \end{align*}
    }
    We can unroll the generalised convex sum, to obtain the following
	\begin{align*}
		&\bigsum{\alpha \in \At} \left( \partial(\expr{f})_\alpha(\accept) \cdot \test{\one} \seq \expr{g} \boplus \bigplus{d \in \V \cup \{\reject\}} \partial(\expr{f})_\alpha(d)\cdot \ex(d) \seq \expr{g} \right.\\&\hspace{5em}\boplus \left.\bigplus{d \in \supp(\partial(\expr{f})_\alpha) \cap \Act \times \Exp} \partial(\expr{f})_\alpha(d) \cdot \ex(d) \seq \expr{g}\right)
    \end{align*}
    Now, apply \acro{S2}, \acro{S4} and \acro{S7}, as well as \acro{DF7} to obtain
    \begin{align*}
		&\bigsum{\alpha \in \At} \left( \partial(\expr{f})_\alpha(\accept) \cdot \test{\alpha} \seq \expr{g} \boplus \bigplus{d \in \V \cup \{\reject\}} \partial(\expr{f})_\alpha(d)\cdot \ex(d)\right.\\&\hspace{5em}\left. \boplus \bigplus{d \in \supp(\partial(\expr{f})_\alpha) \cap \Act \times \Exp} \partial(\expr{f})_\alpha(d) \cdot \ex(d) \seq \expr{g}\right)
	\end{align*}
    By the induction hypothesis and \acro{DF3}, we can write \(\test{\alpha} \seq \expr{g}\) for each \(\alpha \in \At\) as
	\[
		\bigplus{d \in \supp(\partial(\expr{g})_\alpha)} \partial(\expr{g})_\alpha(d) \cdot \ex(d)
	\]
	Hence, by substituting it to the previous equation and using \acro{P4} and \acro{S3} we can rewrite the whole expression as a single generalised convex sum. For each \(\alpha \in \At\) let \(I_\alpha\) be an index set defined as
	\[
		I_\alpha = \supp(\partial(\expr{g})_\alpha) + \left(\supp(\partial(\expr{f})_\alpha) \cap \left(\{\reject\} \cup \V \right)\right) + \{(\action{a},\expr{f'}\seq \expr{g}) \mid (\action{a},\expr{f'}) \in \supp(\partial(\expr{f})_\alpha) \cap \Act \times \Exp\}
	\]
	For each \(\alpha\), let \(\{\prob{r}_i\}_{i \in I_\alpha}\) be an indexed collection such that \(\prob{r}_i \in [0,1]\) for each \(i \in I_\alpha\) given by the following
	\[
		\prob{r}_i = \begin{cases}
			\partial(\expr{f})_\alpha(\accept) \partial(\expr{g})_\alpha(i) & \text{if } i \in \supp(\partial(\expr{g})_\alpha)\\
			\partial(\expr{f})_\alpha(i) & \text{if } i \in \supp(\partial(\expr{f})_\alpha)\cap \left( \{\reject\} \cup \V\right)\\
			\partial(\expr{f})_\alpha (\action{a},\expr{f'}) & \text{if } 	i=(\action{a},\expr{f'} \seq \expr{g}) \text{ and } (\action{a},\expr{f'}) \in \supp(\partial(\expr{f})_\alpha) \cap \Act \times \Exp
		\end{cases}
	\]
	This allows us to write the following
	\[
		\expr{f} \seq \expr{g} \equiv \bigsum{\alpha \in \At} \left(\bigplus{i \in I_\alpha} \prob{r}_i \cdot \ex(i)\right)
	\]
	Applying the~\ref{lem:generalised_idemponency} and summing probabilities of the duplicated elements yields precisely the desired result
	\[
		\expr{f} \seq \expr{g} \equiv \bigsum{\alpha \in \At} \left( \bigsum{d \in \supp(\partial(\expr{f} \seq \expr{g})_\alpha)} \partial(\expr{f} \seq \expr{g})_\alpha(d) \cdot \ex(d) \right)
	\]
    \item[]\fbox{\(\expr{e} \equiv \expr{f}^{(\test{b})}\)}
        We want to show that
        \[
        \expr{f}^{(\test{b})} \equiv \bigsum{\alpha \in \At} \left(\bigplus{d \in \supp\left(\partial\left(\expr{f}^{{(\test{b})}}\right)_\alpha\right)} \partial\left(\expr{f}^{(\test{b})}\right)_\alpha(d) \cdot \ex(d)\right)
        \]
        Because of \cref{lem:generalised_guardedness} and axiom \acro{G1} we can write the above as
        \[
        \bigsum{\alpha \in \At} \test{\alpha} \seq \expr{f}^{(\test{b})} \equiv \bigsum{\alpha \in \At} \left(\test{\alpha} \seq \left(\bigplus{d \in \supp\left(\partial\left(\expr{f}^{{(\test{b})}}\right)_\alpha\right)} \partial\left(\expr{f}^{(\test{b})}\right)_\alpha(d) \cdot \ex(d)\right)\right)
        \]
        Hence, it suffices to show that for all \(\alpha \in \At\) we have that
        \[
        \test{\alpha}\seq\expr{f}^{(\test{b})} \equiv \test{\alpha} \seq \left(\bigplus{d \in \supp\left(\partial\left(\expr{f}^{{(\test{b})}}\right)_\alpha\right)} \partial\left(\expr{f}^{(\test{b})}\right)_\alpha(d) \cdot \ex(d)\right)
        \]
        First, consider \(\gamma \bleq \bar{\test{b}}\). We have that
        \begin{align*}
            \test{\gamma} \seq \expr{f}^{(\test{b})} &\equiv \test{\gamma} \seq \left( \expr{f}\seq\expr{f}^{(\test{b})} +_\test{b} \test{\one}\right) \tag{\acro{L1}} \\
            &\equiv  \left( \expr{f}\seq\expr{f}^{(\test{b})} +_\test{b} \test{\one}\right) +_\test{\gamma} \test{\one} \tag{\acro{DF2}} \\
            &\equiv \expr{f}\seq\expr{f}^{(\test{b})} +_{\test{\gamma}\test{b}} (\test{\one} +_\test{\gamma} \test{\zero}) \tag{\acro{G4}}\\
            &\equiv  \expr{f}\seq\expr{f}^{(\test{b})} +_{\test{\zero}} (\test{\one} +_\test{\gamma} \test{\zero}) \tag{\(\gamma \bleq \bar{\test{b}}\)}\\
            &\equiv (\test{\one} +_\test{\gamma} \test{\zero}) +_\test{\one} \expr{f}\seq\expr{f}^{(\test{b})} \tag{\acro{G3}}\\
            &\equiv (\test{\one} +_\test{\gamma} \test{\zero}) \tag{\acro{DF12}}\\
            &\equiv \test{\gamma}\seq\test{\one} \tag{\acro{DF2}} \\
            &\equiv \test{\gamma} \seq \left(\bigplus{d \in \supp\left(\partial\left(\expr{f}^{{(\test{b})}}\right)_\gamma\right)} \partial\left(\expr{f}^{(\test{b})}\right)_\gamma(d) \cdot \ex(d)\right) \tag{Def. of \(\partial\)}
        \end{align*}
        Now, consider \(\gamma \bleq \test{c}\) such that \(\partial(\expr{f})_\gamma(\accept)=\prob{1}\). Because of the induction hypothesis we have that
        \begin{align*}
            \expr{f} &\equiv \bigsum{\alpha \in \At} \left(\bigplus{d \in \supp (\partial\left(\expr{f}\right)_\alpha)} \partial(\expr{f})_\alpha(d) \cdot \ex(d) \right) \tag{Ind. hypothesis}\\
            &\equiv \test{\one } +_\test{\gamma} \bigsum{\alpha \in \At\setminus \{\gamma\}} \left(\bigplus{d \in \supp(\partial\left(\expr{f}\right)_\alpha)} \partial(\expr{f})_\alpha(d) \cdot \ex(d) \right)\\
            &\equiv  \bigsum{\alpha \in \At\setminus \{\gamma\}} \left(\bigplus{d \in \supp(\partial\left(\expr{f}\right)_\alpha)} \partial(\expr{f})_\alpha(d) \cdot \ex(d) \right) +_{\bar{\test{\gamma}}} \test{\one} \tag{\acro{G3}}
        \end{align*}
        For the sake of simplicity, we will write the above term as
        \(
        \expr{f} \equiv \expr{m} +_{\bar{\test{\gamma}}} \test{\one}
        \). Now, consider the following
        \begin{align*}
            \test{\gamma}\seq\expr{f}^{(\test{b})} &\equiv \test{\gamma}\seq(\expr{m} +_{\bar{\test{\gamma}}} \test{\one}) \\
            &\equiv \test{\gamma}\seq(\bar{\test{\gamma}}\seq\expr{m})^{(\test{b})} \tag{\acro{L3}} \\
            &\equiv \test{\gamma}\seq((\bar{\test{\gamma}}\seq\expr{m})\seq(\bar{\test{\gamma}}\seq\expr{m})^{(\test{b})} +_\test{b} \test{\one}) \tag{\acro{L1}} \\
            &\equiv \test{\gamma}\seq(\bar{\test{\gamma}}\seq(\expr{m}\seq(\bar{\test{\gamma}}\seq\expr{m})^{(\test{b})}) +_\test{b} \test{\one}) \tag{\acro{S3}} \\
            &\equiv \test{\gamma}\seq(\test{\gamma}\seq\bar{\test{\gamma}}\seq(\expr{m}\seq(\bar{\test{\gamma}}\seq\expr{m})^{(\test{b})}) +_\test{b} \test{\one}) \tag{\acro{DF7}} \\
            &\equiv \test{\gamma}\seq(\test{\zero}\seq(\expr{m}\seq(\bar{\test{\gamma}}\seq\expr{m})^{(\test{b})}) +_\test{b} \test{\one}) \tag{\acro{S8}} \\
            &\equiv \test{\gamma}\seq(\test{\zero} +_\test{b} \test{\one}) \tag{\acro{S4}} \\
            &\equiv (\test{\zero} +_\test{b} \test{\one}) +_\test{\gamma} \test{\zero} \tag{\acro{DF2}} \\
            &\equiv \test{\zero} +_{\test{\gamma}\test{b}} (\test{\one} +_\test{\gamma} \test{\zero}) \tag{\acro{G4}} \\
            &\equiv \test{\zero} +_{\test{\gamma}} (\test{\one} +_\test{\gamma} \test{\zero}) \tag{Boolean algebra} \\
            &\equiv (\test{\one} +_\test{\gamma} \test{\zero}) +_{\bar{\test{\gamma}}} \test{\zero} \tag{\acro{G3}} \\
            &\equiv \bar{\test{\gamma}}\seq (\test{\one} +_\test{\gamma} \test{\zero}) \tag{\acro{DF2}} \\
            &\equiv \bar{\test{\gamma}}\seq (\test{\zero} +_{\bar{\test{\gamma}}} \test{\one}) \tag{\acro{G3}} \\
            &\equiv \bar{\test{\gamma}} \seq \test{\zero} \tag{\acro{DF3}} \\
            &\equiv \test{\zero} +_{\bar{\test{\gamma}}} \test{\zero} \tag{\acro{DF2}} \\
            &\equiv  \test{\zero} +_{{\test{\gamma}}} \test{\zero} \tag{\acro{G3}}\\
            &\equiv \test{\gamma}\seq\test{\zero} \tag{\acro{DF2}} \\
            &\equiv \test{\gamma} \seq \left(\bigplus{d \in \supp\left(\partial\left(\expr{f}^{{(\test{b})}}\right)_\gamma\right)} \partial\left(\expr{f}^{(\test{b})}\right)_\gamma(d) \cdot \ex(d)\right) \tag{Def. of \(\partial\)}
        \end{align*}

        Finally consider \(\gamma \bleq \test{b}\) such that \(\partial(\expr{f})_\alpha(\accept)<\prob{1}\). Because of the induction hypothesis we have that
        \begin{align*}
            \expr{f} &\equiv \bigsum{\alpha \in \At} \left(\bigplus{d \in \supp(\partial \left(\expr{f}\right)_\alpha)} \partial(\expr{f})_\alpha(d) \cdot \ex(d) \right) \tag{Ind. hypothesis}\\
            &\equiv \left(\bigplus{d \in \supp(\partial\left(\expr{f}\right)_\gamma)} \partial(\expr{f})_\alpha(d) \cdot \ex(d) \right) +_\test{\gamma} \bigsum{\alpha \in \At} \left(\bigplus{d \in \supp (\partial\left(\expr{f}\right)_\alpha)} \partial(\expr{f})_\alpha(d) \cdot \ex(d) \right)\\
            &\equiv \left(\test{\one} \oplus_{\partial(\expr{f})_\gamma(\accept)} \bigplus{d \in \supp(\partial\left(\expr{f}\right)_\gamma)} \frac{\partial(\expr{f})_\alpha(d)}{1-\partial(\expr{f})_\gamma(\accept)} \cdot \ex(d) \right) \\
            &\hspace{5em} +_\test{\gamma} \bigsum{\alpha \in \At} \left(\bigplus{d \in \supp(\partial\left(\expr{f}\right)_\alpha)} \partial(\expr{f})_\alpha(d) \cdot \ex(d) \right) \\
            &\equiv \left(\bigplus{d \in \supp(\partial\left(\expr{f}\right)_\gamma)} \frac{\partial(\expr{f})_\alpha(d)}{1-\partial(\expr{f})_\gamma(\accept)} \cdot \ex(d) \oplus_{\prob{1}-\partial(\expr{f})_\gamma(\accept)}\test{\one}  \right) \\&\hspace{5em}+_\test{\gamma} \bigsum{\alpha \in \At} \left(\bigplus{d \in \supp(\partial\left(\expr{f}\right)_\alpha)} \partial(\expr{f})_\alpha(d) \cdot \ex(d) \right) \tag{\acro{P3}}
        \end{align*}
        We can use the axiom \acro{L5} and obtain the following
        \begin{align*}
        \test{\gamma}\seq\expr{f}^{(\test{b})} &\equiv \test{\gamma} \seq \left(\left(\bigplus{d \in \supp(\partial\left(\expr{f}\right)_\gamma)} \frac{\partial(\expr{f})_\alpha(d)}{1-\partial(\expr{f})_\gamma(\accept)} \cdot \ex(d)\right)\seq \expr{f}^{(\test{b})} +_\test{b} \test{\one} \right) \\
        &\equiv \test{\gamma} \seq \left(\left(\bigplus{d \in \supp(\partial\left(\expr{f}\right)_\gamma)} \frac{\partial(\expr{f})_\alpha(d)}{1-\partial(\expr{f})_\gamma(\accept)} \cdot \ex(d)\seq \expr{f}^{(\test{b})}\right) +_\test{b} \test{\one} \right) \tag{\cref{lem:generalised_probabilistic_distributivity}}\\
        &\equiv \left(\left(\bigplus{d \in \supp(\partial \left(\expr{f}\right)_\gamma)} \frac{\partial(\expr{f})_\alpha(d)}{1-\partial(\expr{f})_\gamma(\accept)} \cdot \ex(d)\seq \expr{f}^{(\test{b})}\right) +_\test{b} \test{\one} \right) +_\test{\gamma} \test{\zero} \tag{\acro{DF2}} \\
        &\equiv \left(\bigplus{d \in \supp(\partial \left(\expr{f}\right)_\gamma)} \frac{\partial(\expr{f})_\alpha(d)}{1-\partial(\expr{f})_\gamma(\accept)} \cdot \ex(d)\seq \expr{f}^{(\test{b})}\right) +_{\test{\gamma}\test{b}} (\test{\one} +_\test{\gamma} \test{\zero}) \tag{\acro{G4}} \\
        &\equiv \left(\bigplus{d \in \supp(\partial\left(\expr{f}\right)_\gamma)} \frac{\partial(\expr{f})_\alpha(d)}{1-\partial(\expr{f})_\gamma(\accept)} \cdot \ex(d)\seq \expr{f}^{(\test{b})}\right) +_{\test{\gamma}} (\test{\one} +_\test{\gamma} \test{\zero}) \tag{Boolean algebra} \\
        &\equiv (\test{\zero} +_{\bar{\test{\gamma}}} \test{\one}) +_{\bar{\test{\gamma}}} \left(\bigplus{d \in \supp(\partial\left(\expr{f}\right)_\gamma)} \frac{\partial(\expr{f})_\alpha(d)}{1-\partial(\expr{f})_\gamma(\accept)} \cdot \ex(d)\seq \expr{f}^{(\test{b})}\right) \tag{\acro{G3}} \\
        &\equiv \bar{\test{\gamma}}\seq(\test{\zero} +_{\bar{\test{\gamma}}} \test{\one}) +_{\bar{\test{\gamma}}} \left(\bigplus{d \in \supp(\partial\left(\expr{f}\right)_\gamma)} \frac{\partial(\expr{f})_\alpha(d)}{1-\partial(\expr{f})_\gamma(\accept)} \cdot \ex(d)\seq \expr{f}^{(\test{b})}\right) \tag{\acro{G2}} \\
        &\equiv \bar{\test{\gamma}}\seq\test{\zero} +_{\bar{\test{\gamma}}} \left(\bigplus{d \in \supp(\partial\left(\expr{f}\right)_\gamma)} \frac{\partial(\expr{f})_\alpha(d)}{1-\partial(\expr{f})_\gamma(\accept)} \cdot \ex(d)\seq \expr{f}^{(\test{b})}\right) \tag{\acro{DF3}} \\
        &\equiv (\test{\zero} +_{\bar{\test{\gamma}}} \test{\zero}) +_{\bar{\test{\gamma}}} \left(\bigplus{d \in \supp(\partial\left(\expr{f}\right)_\gamma)} \frac{\partial(\expr{f})_\alpha(d)}{1-\partial(\expr{f})_\gamma(\accept)} \cdot \ex(d)\seq \expr{f}^{(\test{b})}\right) \tag{\acro{DF2}} \\
        &\equiv \test{\zero}  +_{\bar{\test{\gamma}}} \left(\bigplus{d \in \supp(\partial\left(\expr{f}\right)_\gamma)} \frac{\partial(\expr{f})_\alpha(d)}{1-\partial(\expr{f})_\gamma(\accept)} \cdot \ex(d)\seq \expr{f}^{(\test{b})}\right) \tag{\acro{G1}} \\
        &\equiv \left(\bigplus{d \in \supp(\partial\left(\expr{f}\right)_\gamma)} \frac{\partial(\expr{f})_\alpha(d)}{1-\partial(\expr{f})_\gamma(\accept)} \cdot \ex(d)\seq \expr{f}^{(\test{b})}\right) +_{\test{\gamma}} \test{\zero} \tag{\acro{G3}} \\
        &\equiv \test{\gamma} \seq \left(\bigplus{d \in \supp(\partial\left(\expr{f}\right)_\gamma)} \frac{\partial(\expr{f})_\alpha(d)}{1-\partial(\expr{f})_\gamma(\accept)} \cdot \ex(d)\seq \expr{f}^{(\test{b})}\right) \tag{\acro{DF2}}
        \end{align*}
        We can unroll the generalised convex sum to obtain the following
        \begin{align*}
		&\test{\gamma} \seq \left(\bigplus{d \in \V \cup \{\reject\}} \frac{\partial(\expr{f})_\alpha(d)}{{1-\partial(\expr{f})_\gamma(\accept)}}\cdot \ex(d) \seq \expr{f}^{(\test{b})}\right.\\&\hspace{5em} \left. \boplus \bigplus{d \in \supp(\partial(\expr{f})_\alpha) \cap \Act \times \Exp} \frac{\partial(\expr{f})_\alpha(d)}{1-\partial(\expr{f})_\gamma(\accept)} \cdot \ex(d) \seq \expr{f}^{(\test{b})}\right)
	    \end{align*}
        Now, we can use axioms \acro{S4} and \acro{S7} to obtain
        \begin{align*}
		&\test{\gamma} \seq \left(\bigplus{d \in \V \cup \{\reject\}} \frac{\partial(\expr{f})_\alpha(d)}{{1-\partial(\expr{f})_\gamma(\accept)}}\cdot \ex(d)\right. \left. \boplus \bigplus{\substack{d \in \supp(\partial(\expr{f})_\alpha) \\ {} \cap \Act \times \Exp}} \frac{\partial(\expr{f})_\alpha(d)}{1-\partial(\expr{f})_\gamma(\accept)} \cdot \ex(d) \seq \expr{f}^{(\test{b})}\right)
	    \end{align*}
        which is precisely the same as
        \[
            \test{\gamma} \seq \left(\bigplus{d \in \supp\left(\partial\left(\expr{f}^{{(\test{b})}}\right)_\gamma\right)} \partial\left(\expr{f}^{(\test{b})}\right)_\gamma(d) \cdot \ex(d)\right)
        \]

        \item[]\fbox{\(\expr{e}\equiv \expr{f}^{[\prob{r}]}\)}
        First, note that we can safely assume that \(\prob{r}<\prob{1}\). If \(\prob{r}=\prob{1}\), then we can use axiom \acro{L4} to obtain \(\expr{f}^{[\prob{1}]}\equiv \expr{f}^{(\test{\one})}\) and treat it as guarded loop. Similarly to the case above, we show that for all \(\alpha \in \At\) we have
        \[
        \alpha \seq \expr{f}^{[\prob{r}]} \equiv \alpha \seq q\left(\bigplus{d \in \supp\left(\partial\left(\expr{f}^{[\prob{r}]}\right)_\alpha\right)}\partial\left(\expr{f}^{[\prob{r}]}\right)_\alpha(d) \cdot \ex(d)\right)
        \]

        Now, consider \(\gamma \in \At\), such that \(\partial(\expr{f})_\gamma(\accept)=\prob{1}\). In such a case, we have that
        \begin{align*}
            \expr{f} &\equiv \bigsum{\alpha \in \At} \left(\bigplus{d \in \supp(\partial\left(\expr{f}\right)_\alpha)} \partial(\expr{f})_\alpha(d) \cdot \ex(d) \right) \tag{Ind. hypothesis}\\
            &\equiv \test{\one } +_\test{\gamma} \bigsum{\alpha \in \At\setminus \{\gamma\}} \left(\bigplus{d \in \supp(\partial\left(\expr{f}\right)_\alpha)} \partial(\expr{f})_\alpha(d) \cdot \ex(d) \right)\\
            &\equiv (\test{\one } \oplus_\prob{1} \test{\zero}) +_\test{\gamma} \bigsum{\alpha \in \At\setminus \{\gamma\}} \left(\bigplus{d \in \supp(\partial\left(\expr{f}\right)_\alpha)} \partial(\expr{f})_\alpha(d) \cdot \ex(d) \right) \tag{\acro{P2}}\\
            &\equiv (\test{\zero } \oplus_\prob{o} \test{\one}) +_\test{\gamma} \bigsum{\alpha \in \At\setminus \{\gamma\}} \left(\bigplus{d \in \supp(\partial\left(\expr{f}\right)_\alpha)} \partial(\expr{f})_\alpha(d) \cdot \ex(d) \right) \tag{\acro{P3}}
        \end{align*}
        Now, we can apply the axiom \acro{L6} and obtain the following
        \begin{align*}
            \test{\gamma} \seq \expr{f}^{[\prob{r}]} &\equiv \test{\gamma}\seq(\test{\zero}\seq\expr{f}^{[\prob{r}]} \oplus_\prob{0} \test{\one}) \\
            &\equiv \test{\gamma}\seq(\test{\one} \oplus_\prob{1} \test{\zero}\seq\expr{f}^{[\prob{r}]}) \tag{\acro{P3}} \\
            &\equiv \test{\gamma}\seq\test{\one} \tag{\acro{P2}} \\
            &\equiv \test{\gamma}\seq\left(\bigplus{d \in \supp\left(\partial\left(\expr{f}^{[\prob{r}]}\right)_\gamma\right)}\partial\left(\expr{f}^{[\prob{r}]}\right)_\gamma(d) \cdot \ex(d)\right) \tag{Def. of \(\partial\)}
        \end{align*}

        Now, consider the case of \(\gamma \in \At\), such that \(\partial(\expr{f})_\gamma(\accept)<\prob{1}\).
        \begin{align*}
		\expr{f} &\equiv \bigsum{\alpha \in \At} \left( \bigplus{d \in \supp(\partial(\expr{f})_\alpha)} \partial(\expr{f})_\alpha(d) \cdot \ex(d)\right)  \tag{Induction hypothesis}\\
		&\equiv \left( \bigplus{d \in \supp(\partial(\expr{f})_\gamma)} \partial(\expr{f})_\gamma(d) \cdot \ex(d) \right) \\
        &\hspace{5em}+_{\test{\gamma}} \left( \bigsum{\alpha \in \At \setminus \{\gamma\}} \left( \bigplus{d \in \supp(\partial(\expr{f})_\alpha)} \partial(\expr{f})_\alpha(d) \cdot \ex(d) \right)\right)\\
        &\equiv \left(\test{\one}\oplus_{\prob{1}-\partial(\expr{f})_\gamma(\accept)}\left( \bigplus{d \in \supp(\partial(\expr{f})_\gamma) \setminus \{\accept\}} \frac{\partial(\expr{f})_\gamma(d)}{\prob{1}-\partial(\expr{f})_\gamma(\accept)} \cdot \ex(d) \right)  \right)  \\
        &\hspace{5em} +_\test{\gamma} \left( \bigsum{\alpha \in \At \setminus \{\gamma\}} \left( \bigplus{d \in \supp(\partial(\expr{f})_\alpha)} \partial(\expr{f})_\alpha(d) \cdot \ex(d) \right)\right)\\
		&\equiv \left(\left( \bigplus{d \in \supp(\partial(\expr{f})_\gamma) \setminus \{\accept\}} \frac{\partial(\expr{f})_\gamma(d)}{\prob{1}-\partial(\expr{f})_\gamma(\accept)} \cdot \ex(d) \right) \oplus_{\prob{1}-\partial(\expr{f})_\gamma(\accept)} \test{\one} \right) \\
        &\hspace{5em} +_\test{\gamma} \left( \bigsum{\alpha \in \At \setminus \{\gamma\}} \left( \bigplus{d \in \supp(\partial(\expr{f})_\alpha)} \partial(\expr{f})_\alpha(d) \cdot \ex(d) \right)\right) \tag{\acro{P3}}
	    \end{align*}
        We can now apply \acro{L5} and obtain the following
        \begin{align*}
            \test{\gamma} \seq \expr{f}^{[\prob{r}]} &\equiv \test{\gamma}\seq\left(\left( \bigplus{d \in \supp(\partial(\expr{f})_\gamma) \setminus \{\accept\}} \frac{\partial(\expr{f})_\gamma(d)}{\prob{1}-\partial(\expr{f})_\gamma(\accept)} \cdot \ex(d) \right) \seq \expr{f}^{[\prob{r}]} \oplus_{\frac{\prob{r(}\prob{1}-\partial(\expr{f})_\gamma(\accept))}{\prob{1}-\prob{r}\partial(\expr{f})_\gamma(\accept)}} \test{\one}\right)\\ % chktex 9
            &\equiv \test{\gamma}\seq\left(\left( \bigplus{d \in \supp(\partial(\expr{f})_\gamma) \setminus \{\accept\}} \frac{\partial(\expr{f})_\gamma(d)}{\prob{1}-\partial(\expr{f})_\gamma(\accept)} \cdot \ex(d) \seq \expr{f}^{[\prob{r}]} \right) \oplus_{\frac{\prob{r(}\prob{1}-\partial(\expr{f})_\gamma(\accept))}{\prob{1}-\prob{r}\partial(\expr{f})_\gamma(\accept)}} \test{\one}\right) \tag{\cref{lem:generalised_probabilistic_distributivity}} % chktex 9
        \end{align*}
	    We can unroll the convex sum and apply the \acro{S3}, \acro{S4} and \acro{S7} axioms to obtain the following
        {
	   \begin{align*}
		 &\test{\gamma}\seq\left(\left( \bigplus{d \in \supp(\partial(\expr{f})_\gamma) \cap \left(\V \cup \{\reject\}\right)} \frac{\partial(\expr{f})_\gamma(d)}{\prob{1}-\partial(\expr{f})_\gamma(\accept) } \cdot \ex(d) \right.\right.\\&\hspace{5em}\left.\left.\boplus \bigplus{d \in \supp(\partial(\expr{f})_\gamma) \cap \Act \times \Exp} \frac{\partial(\expr{f})_\gamma(d)}{\prob{1}-\partial(\expr{f})_\gamma(\accept)} \cdot \ex(d) \seq \expr{f}^{[\prob{r}]} \right) \oplus_{\frac{\prob{r}(\prob{1}-\partial(\expr{f})_\gamma(\accept))}{\prob{1}-\prob{r}\partial(\expr{f})_\gamma(\accept)}} \test{\one}\right)
       \end{align*}
        }
	   We can write massage that expression back into a generalised convex sum. Let \(I_\gamma\) be an index set defined in the following way.
	   \begin{equation*}
		  I_{\gamma} = \{\accept \} + \left(\supp(\partial(\expr{f})_\gamma) \cap (\V \cup \{\reject\})\right) + \{(\action{a},\expr{f'} \seq \expr{f}^{[\prob{r}]}) \mid (\action{a},\expr{f'}) \in \supp(\partial(\expr{f})_\gamma)\}
	   \end{equation*}
	   Now, let \(\{\prob{r}_i\}_{i \in I_\gamma}\) be an indexed collection such that \(\prob{r}_i \in [0,1] \) for each \(i \in I_\gamma\). We define such collections in the following way
	   \begin{equation*}
		\prob{r}_i = \begin{cases}
			\frac{\prob{1-r}}{\prob{1}-\partial(\expr{f})_\gamma(\accept)} & \text{if } i = \accept \\
			\frac{\partial(\expr{f})_\gamma(i)}{\prob{1}-\partial(\expr{f})_\gamma(\accept)} & \text{if } i \in \supp(\partial(\expr{f})_\gamma)\cap (\V \cap \{\reject\}) \\
			\frac{\partial(\expr{f})_\gamma(\action{a},\expr{f'})}{\prob{1}- \partial(\expr{f})_\gamma(\accept)} & \text{if } i = (\action{a},\expr{f'} \seq \expr{f}^{[\prob{r}]}) \text{ and } (\action{a},\expr{f'}) \in \supp(\partial(\expr{f})_\alpha)
		\end{cases}
	\end{equation*}
    Observe that \(I_\gamma\) is precisely \(\supp\left(\partial\left(\expr{f}^{[\prob{r}]}\right)_\gamma\right)\) and probabilities associated with those elements are precisely the same as those assigned by \(\partial \left(\expr{f}^{[\prob{r}]}\right)_\gamma\). Therefore, we have that
    \[
    \test{\gamma}\seq\expr{f}^{[\prob{r}]} \equiv \test{\gamma} \seq \left(\bigplus{d \in \supp \left(\partial\left(\expr{f}^{[\prob{r}]}\right)_\gamma \right)}\partial\left(\expr{f}^{[\prob{r}]}\right)_\gamma(d) \cdot \ex(d)\right)
    \]
\end{itemize}
\endgroup
\end{proof}

\section{Proofs from \texorpdfstring{\cref{sec: completeness}}{Section~\ref{sec: completeness}}}

To prove \Cref{lem: solutions_are_homomorphisms}, we use the following.

\begin{lemma}[{\cite[Proposition~5.8]{Rutten:2000:Universal}}]\label{lem: quotient_coalgebra}
    Let \((X, \beta)\) be a \acro{ProbGKAT} automaton, \(R\subseteq X \times X\) a bisimulation equivalence on \((X, \beta)\) and \([-]_R : X \to X / R\) the canonical quotient map. Then, there exists a unique transition map \(\bar{\beta} : {X} /{R} \times \At \to \D(\2 + \V + \Act \times {X} /{R})\) which makes
\([-]_R\) into a \acro{ProbGKAT} automaton homomorphism from \((X, \beta)\) to \((X / R , \bar{\beta})\).
\end{lemma}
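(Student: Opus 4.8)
The plan is to construct $\bar{\beta}$ explicitly by choosing representatives, and then invoke \cref{lem:larsen_skou} to see that the construction does not depend on the choice. Concretely, for an equivalence class $Q \in X/R$ pick any $x \in Q$ and set, for each $\alpha \in \At$,
\[
    \bar{\beta}(Q)_\alpha(o) := \beta(x)_\alpha(o) \quad (o \in \2 + \V),
    \qquad
    \bar{\beta}(Q)_\alpha(\action{p}, Q') := \beta(x)_\alpha[\{\action{p}\} \times Q'] \quad (\action{p} \in \Act,\ Q' \in X/R).
\]
First I would check that this is well defined: if $x, y \in Q$ then $(x,y) \in R$, and since $R$ is a bisimulation equivalence, \cref{lem:larsen_skou} gives $\beta(x)_\alpha(o) = \beta(y)_\alpha(o)$ for all $o \in \2 + \V$ and $\beta(x)_\alpha[\{\action{p}\} \times Q'] = \beta(y)_\alpha[\{\action{p}\} \times Q']$ for every class $Q'$ and every $\action{p}$, which is exactly what is needed.

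Next I would verify that $\bar{\beta}(Q)_\alpha$ is a finitely supported probability distribution on $\2 + \V + \Act \times (X/R)$. Finiteness of support follows because $\beta(x)_\alpha$ is finitely supported, so only finitely many classes $Q'$ (and, via $\supp(\beta(x)_\alpha)$, finitely many actions) can receive positive mass. For the total mass, since the classes in $X/R$ partition $X$ we get
\[
    \sum_{o \in \2 + \V} \bar{\beta}(Q)_\alpha(o) + \sum_{\action{p} \in \Act} \sum_{Q' \in X/R} \bar{\beta}(Q)_\alpha(\action{p}, Q')
        = \beta(x)_\alpha[\2 + \V] + \beta(x)_\alpha[\Act \times X] = 1 .
\]
By construction, conditions (1) and (2) of \cref{def:homomorphism_of_transition_systems} hold for $f = [-]_R$ --- for (2) one uses that $[-]_R^{-1}(Q') = Q'$ as a subset of $X$ --- so $[-]_R$ is a \acro{ProbGKAT} automaton homomorphism from $(X, \beta)$ to $(X/R, \bar{\beta})$.

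Finally, for uniqueness: if $\bar{\beta}'$ is another transition map making $[-]_R$ a homomorphism, then for each class $Q$, choosing $x \in Q$, conditions (1)--(2) of \cref{def:homomorphism_of_transition_systems} force $\bar{\beta}'(Q)_\alpha$ to agree with $\bar{\beta}(Q)_\alpha$ on every $o \in \2 + \V$ and every $(\action{p}, Q')$; since $[-]_R$ is surjective, this pins down $\bar{\beta}'$ entirely, so $\bar{\beta}' = \bar{\beta}$. (Abstractly this is the observation that $[-]_R$ is an epimorphism in $\Set$, that $\G([-]_R) \circ \beta$ coequalizes the projections $\pi_1, \pi_2 : R \to X$ of the bisimulation $R$ --- using $\G\pi_i \circ \rho = \beta \circ \pi_i$ and $[-]_R \circ \pi_1 = [-]_R \circ \pi_2$ --- and that $[-]_R$ is the coequalizer of $\pi_1, \pi_2$, so $\G([-]_R)\circ\beta$ factors uniquely as $\bar{\beta} \circ [-]_R$; the concrete argument above simply unwinds this.) The only genuinely substantive step is well-definedness, and it is precisely there that the hypothesis that $R$ is a \emph{bisimulation} equivalence, rather than an arbitrary equivalence, enters, via \cref{lem:larsen_skou}.
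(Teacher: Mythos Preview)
Your proof is correct. Note, however, that the paper does not actually prove this lemma: it is stated with a citation to~\cite[Proposition~5.8]{Rutten:2000:Universal} and used as a black box. What you have written is a correct specialisation of that general coalgebraic result to \acro{ProbGKAT} automata, and your use of \cref{lem:larsen_skou} for well-definedness is exactly the right bridge between the abstract statement and the concrete transition type $\G$. The abstract coequaliser remark you make at the end is essentially the content of Rutten's proof; your concrete unwinding is a faithful instantiation of it.
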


\begin{proof} \emph{(of \cref{lem: solutions_are_homomorphisms}).}
    First, recall the following.
    Let \(\bar{\partial} : \Exp/{\equiv} \to \G(\Exp/{\equiv})\) be the unique \(\G\)-coalgebra structure map from \cref{lem: quotient_coalgebra} making the quotient map \([-]_\equiv : \Exp \to \Exp/{\equiv}\) into a \(\G\)-coalgebra homomorphism. Given a function \(h : X \to \Exp\), the composite map \([-]_\equiv \circ h\) is a \(\G\)-coalgebra homomorphism from \((X, \beta)\) to \((\Exp/{\equiv}, \bar{\partial})\) if and only if the following diagram commutes
    \begin{equation}
        \label{eq:appE solution}
        \begin{tikzcd}
            X && {\Exp/{\equiv}} \\
            \G X && {\G(\Exp/{\equiv})}
            \arrow["\beta"', from=1-1, to=2-1]
            \arrow["{\bar{\partial}}", from=1-3, to=2-3]
            \arrow["{[-]_\equiv \circ h}", from=1-1, to=1-3]
            \arrow["{\G([-]_\equiv \circ h)}"', from=2-1, to=2-3]
        \end{tikzcd}
        \qquad
        \bar{\partial} \circ [-]_\equiv \circ h = \G([-]_\equiv \circ h) \circ \beta
    \end{equation}
    Now, since \([-]_\equiv\) is a coalgebra homomorphism, \(\bar\partial \circ [-]_\equiv = G[-]_\equiv \circ \partial\).
    This tells us that~\eqref{eq:appE solution} is equivalent to
    \[
        \G[-]_\equiv \circ \partial \circ h
        = \G([-]_\equiv \circ h) \circ \beta
    \]
    We can spell out the above equation in the more concrete terms: for all \(x \in X\) and \(\alpha \in \At\),
    \begin{enumerate}
        \item \(\beta(x)_\alpha(o)=\partial(h(x))_\alpha(o)\) for any \(o \in \2 + \V\), and
        \item for any \((\action{a}, [\expr{e'}]) \in \Act \times \Exp/{\equiv}\), \[
            \sum_{h(x') \equiv \expr{e'}}\beta(x)_\alpha(\action{a}, x') = \sum_{\expr{f}\equiv\expr{e'}} \partial(h(x))_\alpha(\action{a}, \expr{f})
        \]
    \end{enumerate}

    Going from right to left, assume that \([-]_\equiv \circ h\) is a homomorphism.
    Let \((X, \tau)\) be the Salomaa system associated with \((X, \beta)\), and let \(h^\#\) be the unique extension of \(h\) to expressions in \(\Exp(X)\).
    We want to show that \(h(x) \equiv (h^{\#} \circ \tau)(x)\) for all \(x \in X\).
    By \cref{thm: fundamental_theorem} we have that for all \(x \in X\),
    \[
        h(x) \equiv \bigsum{\alpha \in \At} \left(\bigplus{d \in \supp(\partial(h(x))_\alpha)} \partial(h)_\alpha(d) \cdot \ex(d)\right)
    \]
    We can unroll the convex sum and use \cref{lem:generalised_idemponency_up_to_equiv} to group together elements of the support which are equivalent up to \(\equiv\). We obtain the following for all \(x \in X\)
    \begin{equation}
        \label{eqn: solutions_lhs}
        h(x) \equiv \bigsum{\alpha \in \At} \left(\bigplus{d \in \2+\V} \partial(h(x))_\alpha(d) \cdot d \boplus \bigplus{(a,[\expr{e'}]_\equiv) \in E_\alpha} \left(\sum_{\expr{f}\equiv\expr{e'}}\partial(h(x))_\alpha(\action{a},\expr{f})\right)\cdot\action{a}\seq\expr{e'} \right)
    \end{equation}
    where for each \(\alpha \in \At\) we have that
    \[
        E_\alpha = \{(\action{a}, Q) \in \Act \times \Exp/{\equiv} \mid \expr{e'}\in Q, \partial(h(x))_\alpha(\action{a},\expr{e'})>\prob{0}\}
    \]
    Similarly, the right hand side of the equation can written as the following for all \(x \in X\)
    \[
        (h^{\#} \circ \tau)(x) \equiv \bigsum{\alpha \in \At} \left(\bigplus{d \in \2+\V} \beta(x)_\alpha(d) \cdot d \boplus \bigplus{(\action{a},\expr{f})\in \supp(\beta(x)_\alpha)} \beta(x)_\alpha(\action{a},{f}) \cdot \action{a}\seq h({f})\right)
    \]
    We use \cref{lem:generalised_idemponency_up_to_equiv} to group together elements of the support which are equivalent up to \(\equiv\) and sum their associated probabilities
    \begin{equation}
        \label{eqn: solutions_rhs}
        (h^{\#} \circ \tau)(x)
        \equiv
        \bigsum{\alpha \in \At} \left(
            \bigplus
            {d \in \2+\V}
                \beta(x)_\alpha(d) \cdot d
            \boplus
            \bigplus
            {(\action{a}, [\expr{e'}]_\equiv) \in F_\alpha}
            \left(\sum_{h(x')\equiv\expr{e'}}\beta(x)_\alpha(\action{a},x')\right)\cdot\action{a}\seq\expr{e'}
        \right)
    \end{equation}
    where for each \(\alpha \in \At\) we define
    \[
        F_\alpha = \{(\action{a}, Q) \in \Act \times \Exp/{\equiv} \mid h(x') \in Q, \beta(x)_\alpha(\action{a},x')>\prob{0}\}
    \]
    We show that for each \(x \in X\), the expressions from~\eqref{eqn: solutions_lhs} and~\eqref{eqn: solutions_rhs} are precisely the same.
    We do this by arguing that for each atom \(\alpha \in \At\), the index sets of generalised convex sums are the same and so are the probabilities associated with equivalence classes of elements of both index sets.

    Fix an arbitrary \(\alpha \in \At\).
    Given \(d \in \2 + \V\), if \(d \in \supp(\partial(h(x)))_\alpha\), then because of condition (1) of being a homomorphism, \(d \in \supp(\beta(x)_\alpha)\).
    The converse holds via a symmetric argument.
    Moreover, for all such \(d\), we have that \(\partial(h(x))_\alpha(d) = \beta(x)_\alpha(d)\).

    Now consider \((\action{a}, Q) \in \Act \times \Exp/{\equiv}\).
    If \((a,Q) \in E_\alpha\), then there exists some \(\expr{e'} \in \Exp\) such that \(\partial(h(x))_\alpha(\action{a},\expr{e'})>\prob{0}\).
    Because of condition (2) of being a homomorphism, there exists some \(x' \in X\) satisfying \(h(x')\equiv\expr{e'}\) and \(\beta(x)_\alpha(\action{a}, x') > \prob{0}\), hence \((\action{a}, Q) \in F_\alpha\).
    The converse follows via a symmetric argument and therefore \(E_\alpha = F_\alpha\).
    Observe that because of condition (2) the probabilities associated with each \((\action{a}, Q) \in \Act \times \Exp/{\equiv}\) are the same.

    Since all generalised convex sums in both expressions assign the same probabilities to the same elements, we obtain \(h(x) \equiv (h^{\#} \circ \tau)(x)\) for all \(x \in X\) as desired.

    Left to right, assume that \(h\) is a solution up to \(\equiv\).
    We first show that \((\partial \circ h^{\#} \circ \tau)(x) = (\G(h) \circ \beta)(x)\) for all \(x \in X\).
    Recall that for all \(x \in X\),
    \[
        (h^{\#} \circ \tau)(x)
        \equiv \bigsum{\alpha \in \At} \left(\bigplus{d \in \2+\V} \beta(x)_\alpha(d) \cdot d \boplus \bigplus{(\action{a},{f})\in \supp(\beta(x)_\alpha)} \beta(x)_\alpha(\action{a},{f}) \cdot \action{a}\seq h({f})\right)
    \]
    Applying the transition map to that expression can be split into two cases for all \(\alpha \in \At\).
    \begin{enumerate}
        \item For \(o \in \2 + \V\), \((\partial \circ h^{\#} \circ \tau)(x)(o) = \beta(x)_\alpha(o) = (\G(h) \circ \beta)(x)_\alpha(o)\)
        \item For all \((\action{a}, \expr{e'}) \in \Act \times \Exp\),
        \[
            (\partial \circ h^{\#} \circ \tau)(x)(\action{a}, \expr{e'}) = \sum_{h(x')=\expr{e'}} \beta(x)_\alpha(\action{a},x') = (\G(h) \circ \beta)(x)_\alpha(\action{a}, \expr{e'})
        \]
    \end{enumerate}
    which proves that \((\partial \circ h^{\#} \circ \tau) = (\G(h) \circ \beta)\).
    Now postcompose both sides with \(G[-]_\equiv\) and consider the following
    \begin{align*}
    \G[-]_\equiv \circ \partial \circ h^{\#} \circ \tau
    &= \G[-]_\equiv \circ \G(h) \circ \beta\\
    \G[-]_\equiv \circ \partial \circ h^{\#} \circ \tau
    &= \G([-]_\equiv \circ h) \circ \beta \tag{\(\G\) is a functor}\\
    \bar{\partial} \circ [-]_\equiv \circ h^{\#} \circ \tau
    &= \G([-]_\equiv \circ h) \circ \beta \tag{\([-]_\equiv\) is a homomorphism}\\
    \bar{\partial} \circ [-]_\equiv \circ h
    &= \G([-]_\equiv \circ h) \circ \beta \tag{\(h \equiv h^{\#} \circ \tau\)}
    \end{align*}
    And therefore \([-]_\equiv \circ h\) is a \(\G\)-coalgebra homomorphism from \((X, \beta)\) to \((\Exp/{\equiv}, \bar{\partial})\) as desired.
\end{proof}
\metricwelldefined*
\begin{proof}
    \(d(x,x) = 0\) holds immediately since the greatest bisimulation on \((X, \beta)\) is an equivalence relation and therefore for all \(x \in X\) we have that \(x \bisim x\).

    To show symmetry, assume \(d(x,y)=k\) for some \(k \in \mathbb{R}^{+}\).
    If \(k = 0\) then \(x \bisim_{\beta} y\), but since the greatest bisimulation on \((X, \beta)\) is an equivalence relation (by preservation of weak pullbacks for \(G\)), we have that \(y \bisim_\beta x\), which implies that \(d(y,x) = 0\).
    For the remaining case, we can assume without loss of generality that \(k  = 2^{-n}\) for some \(n \in \mathbb N\).
    In such a case, \(x\bisim_\beta^{(n)} y\), but again by \cref{lem:preservation_of_equivalences}, each of the stratified bisimilarity relations are equivalence relations.
    Therefore, \(y\in \bisim_\beta^{(n)} x\) and \(n\) is the largest number such that \(y \in \bisim_\beta^{(n)} x\).
    It follows that \(d(y,x)=2^{-n}=k=d(x,y)\) as desired.

	For the strengthened triangle inequality, first consider the case where \(d(x,y)=2^{-k}\) and \(d(y,z)=2^{-l}\).
    These imply \(x\in \bisim_{\beta}^{(k)} y\) and \(y\in \bisim_{\beta}^{(l)} z\) respectively.
    We have that \((x,y), (y,z) \in {\bisim_{\beta}^{(\min(k,l))}}\).
    Since all the stratified bisimilarity relations are equivalence relations, it follows from transitivity that \(x \in {\bisim_{\beta}^{(\min(k,l))}} z\) and \(n = \min(k,l)\) is the largest natural number for which \(x \in \bisim_\beta^{(n)} z\).
    Therefore, \(d(x,z) = 2^{-\min(k,l)} = \max(2^{-k}, 2^{-l}) = \max(d(x,y), d(y,z))\).

    In case \(d(x,y) = d(y,z) = 0\), we have \(x \bisim_\beta y\) and \(y \bisim_\beta z\).
    Since the greatest bisimulation is an equivalence relation, we have that \(x \bisim_\beta z\), which implies that \(d(x,y) = 0 = \max(d(x,y), d(y,z))\).

    For the next case, let \(d(x,y)=0\) and \(d(y,z)=2^{-k}\).
    Then \(x \bisim_\beta y\) and \((y,z) \in \bisim_\beta^{(k)}\).
    In particular, we also have that \((x,y) \in \bisim_\beta^{(k)}\).
    By transitivity, \(x \in \bisim_\beta^{(k)} z\), and \(k\) is the largest number such that \(x \in \bisim_\beta^{(k)} z\).
    Therefore, \(d(x,z)=2^{-k}=\max(d(x,y), d(y,z))\)

	The only remaining case is symmetric.
\end{proof}

To prove \Cref{lem: contractiveness}, we first prove that substitution in Salomaa systems is well-behaved w.r.t.\ the operator used in the definition of the relation refinement chain on \((\Exp, \partial)\).

\begin{restatable}{lemma}{salomaasubstitution}\label{lem: salomaa_well_behaved}
    Let \((X, \tau: X \to \Exp(X))\) be a Salomaa system and let \(R \subseteq \Exp \times \Exp\) be a congruence.
    Consider two maps \(m, n : X \to \Exp\) satisfying \((m(x), n(x)) \in R\) for all \(x \in X\). Then,
    \(
        ((m^{\#} \circ \tau)(x), (n^{\#} \circ \tau)(x)) \in \Phi_\partial(R)
    \)
    for all \(x \in X\).
\end{restatable}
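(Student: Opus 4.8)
The plan is to proceed by structural induction on the two-sorted grammar defining $\Exp(X)$ and $\Pexp(X)$, simultaneously proving the $\Phi_\partial$-relatedness for both sorts. The base cases are the generators of $\Pexp(X)$: for a constant $\expr{f} \in \Exp$ we have $(m^\#\circ\tau)$ contributing $\expr{f}$ on both sides, so $(\expr{f},\expr{f})\in\Phi_\partial(R)$ by reflexivity (which holds because $R$ is a congruence, hence reflexive, and $\Phi_\partial$ preserves equivalences by \cref{lem:preservation_of_equivalences}, noting we can also appeal to \cref{lem: equality_enough_for_bisimilarity}). For a term $\expr{g}x$, we must compare $\expr{g}\seq m(x)$ with $\expr{g}\seq n(x)$; since $(m(x),n(x))\in R$ and $R$ is a congruence we get $(\expr{g}\seq m(x),\expr{g}\seq n(x))\in R$, and this is precisely the place where the Salomaa side condition $\trmt{\expr{g}}=0$ enters --- by \cref{lem:no_termination_implies_success}, $\partial(\expr{g})_\alpha(\accept)=0$ for all $\alpha$, so the derivative of $\expr{g}\seq m(x)$ re-routes all probability mass along the productive (action) branches into states of the form $\expr{g'}\seq m(x)$, and we can check the three conditions of \cref{def:bisimulation_functional} directly using \cref{lem:sequencing_cutting_postfixes}, \cref{lem:cutting_postfixes}, and the fact that $Q/(\expr{g}\seq m(x))$ and $Q/(\expr{g}\seq n(x))$ behave compatibly (this is the argument already carried out for the $(\expr{g},\expr{e}\seq\expr{g}+_\test{b}\expr{f})$ case inside the soundness proof of \acro{F1}).

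The inductive steps mirror the congruence-preservation argument of \cref{lem:preservation_of_congruences}: for $p_1 \oplus_\prob{r} p_2$, the inductive hypotheses give $((m^\#\circ\tau')(p_i),(n^\#\circ\tau')(p_i))\in\Phi_\partial(R)$ (abusing notation for the evident extension of $\#$ to subterms), and then $\Phi_\partial(R)$ being closed under $\oplus_\prob{r}$ --- which is established in the proof of \cref{lem:preservation_of_congruences} --- yields the claim. Likewise for $e_1 +_\test{b} e_2$ we invoke closure of $\Phi_\partial(R)$ under guarded choice. Strictly speaking, one either re-runs the short distribution-level computations (the derivative of $\oplus_\prob{r}$ is a convex combination, the derivative of $+_\test{b}$ is a case split on $\alpha\bleq\test{b}$), or one directly appeals to the portions of \cref{lem:preservation_of_congruences} that prove exactly these closure facts.

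The main obstacle is the $\expr{g}x$ case: unlike a plain congruence argument, here we need the derivative-level compatibility between $\expr{g}\seq m(x)$ and $\expr{g}\seq n(x)$, and this requires (i) the side condition $\trmt{\expr{g}}=0$ to kill the acceptance branch of $\partial(\expr{g})_\alpha$ so that no immediate transfer to $\partial(m(x))_\alpha$ versus $\partial(n(x))_\alpha$ occurs --- since those two need not be $\Phi_\partial(R)$-related, only $R$-related --- and (ii) the bookkeeping lemmas \cref{lem:cutting_postfixes}, \cref{lem:sequencing_cutting_postfixes}, \cref{lem:swapping_congruent_ends} to move the postfix $m(x)$ (resp. $n(x)$) in and out of quotients $Q/\expr{h}$. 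Concretely: for $o\in\{\reject\}\cup\V$, $\partial(\expr{g}\seq m(x))_\alpha(o)=\partial(\expr{g})_\alpha(o)+\partial(\expr{g})_\alpha(\accept)\partial(m(x))_\alpha(o)=\partial(\expr{g})_\alpha(o)$ by the side condition, which equals $\partial(\expr{g}\seq n(x))_\alpha(o)$ for the same reason; $\partial(\expr{g}\seq m(x))_\alpha(\accept)=\partial(\expr{g})_\alpha(\accept)\partial(m(x))_\alpha(\accept)=0=\partial(\expr{g}\seq n(x))_\alpha(\accept)$; and for $A\subseteq\Exp$, $\partial(\expr{g}\seq m(x))_\alpha[\{\action{p}\}\times A]=\partial(\expr{g})_\alpha[\{\action{p}\}\times A/m(x)]\leq\partial(\expr{g})_\alpha[\{\action{p}\}\times R(A)/n(x)]=\partial(\expr{g}\seq n(x))_\alpha[\{\action{p}\}\times R(A)]$, where the inequality uses \cref{lem:cutting_postfixes} together with $(m(x),n(x))\in R$ and congruence of $R$. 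Once this case is dispatched, the remaining cases are routine appeals to the closure properties of $\Phi_\partial$ already proved in the appendix, and the induction closes.
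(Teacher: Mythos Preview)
Your proof is correct and takes a genuinely different route from the paper's. The paper does not do structural induction on the grammar of $\Exp(X)$; instead it assumes without loss of generality that each $\tau(x)$ is already in the flattened normal form
\[
    \tau(x) = \bigsum{\alpha \in \At} \left( \bigplus{i \in I_{\alpha}^{x}} \prob{p}_i \cdot \expr{f}_i \boplus \bigplus{j \in J_\alpha^{x}} \prob{r}_j \cdot \expr{g}_j x_j \right),
\]
substitutes $m$ and $n$ into this form, and then verifies the two conditions of \cref{lem:preservation_of_equivalences} by a single direct computation on the resulting generalised sums (using $\trmt{\expr{g}_j}=0$ to kill the acceptance cross-terms, and \cref{lem:swapping_congruent_ends} to swap $Q/m(x_j)$ for $Q/n(x_j)$). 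Your approach is more modular: you isolate the only nontrivial case ($\expr{g}x$) and then discharge the compound constructors $\oplus_{\prob{r}}$ and $+_\test{b}$ by appealing to the closure properties of $\Phi_\partial(R)$ already proved inside \cref{lem:preservation_of_congruences}. This buys you a cleaner decomposition and avoids the (slightly informal) normal-form reduction, at the cost of depending on the internals of another lemma's proof. One small tightening: in your inequality $\partial(\expr{g})_\alpha[\{\action{p}\}\times A/m(x)] \leq \partial(\expr{g})_\alpha[\{\action{p}\}\times R(A)/n(x)]$ you cite \cref{lem:cutting_postfixes}, but that lemma gives $R(A/m(x)) \subseteq R(A)/n(x)$, so you also need reflexivity of $R$ (it is a congruence) to get $A/m(x) \subseteq R(A/m(x))$ first and then monotonicity of the measure.
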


\begin{proof}
    Without loss of generality we can assume that for all \(x \in X\),
    \[\tau(x) = \bigsum{\alpha \in \At} \left( \bigplus{i \in I_{\alpha}^{x}} \prob{p}_i \cdot \expr{f}_i \boplus \bigplus{j \in J_\alpha^{x}} \prob{r}_j \cdot \expr{g}_j x_j \right)\]
    For any \(x \in X\) and \(\alpha \in \At\), \(I_\alpha^{x}\) and \(J_\alpha^{x}\) are disjoint index sets such that
    \begin{itemize}
        \item \(\{\prob{p}_i\}_{i \in I_\alpha^{x}}\) and \(\{\prob{r}_j\}_{j \in J_\alpha^{x}}\) are indexed collections of probabilities satisfying
            \[
            \sum_{i \in I_\alpha^{x}} \prob{p}_i + \sum_{j \in J_\alpha^{x}} \prob{r}_j = \prob{1}
            \]
        \item \(\{\expr{f}_i\}_{i \in I_\alpha^{x}}\) and \(\{\expr{g}_j\}_{j \in J_\alpha^{x}}\) are indexed collections of elements of \(\Exp\). Moreover, since \((X, \tau)\) is Salomaa, we have that \(\trmt{\expr{g}_{j}}=0\) for all \(j \in J_\alpha^{x}\). Because of \cref{lem:no_termination_implies_success} we have that \(\partial(\expr{g}_j)_\alpha(\accept)=\prob{0}\) for all for all \(j \in J_\alpha^{x}\).
        \item \(\{x_j\}_{j \in J_\alpha^x}\) is an indexed collection of indeterminates (elements of \(X\)).
    \end{itemize}
    Observe that for all \(x \in X\),
    \begin{align*}
        (m^{\#} \circ \tau)(x) &= \bigsum{\alpha \in \At} \left( \bigplus{i \in I_{\alpha}^{x}} \prob{p}_i \cdot \expr{f}_i \boplus \bigplus{j \in J_\alpha^{x}} \prob{r}_j \cdot \expr{g}_j\seq m(x_j) \right)
        \\
        (n^{\#} \circ \tau)(x) &= \bigsum{\alpha \in \At} \left( \bigplus{i \in I_{\alpha}^{x}} \prob{p}_i \cdot \expr{f}_i \boplus \bigplus{j \in J_\alpha^{x}} \prob{r}_j \cdot \expr{g}_j\seq n(x_j) \right)
    \end{align*}
    For the rest of the argument, we use \cref{lem:preservation_of_equivalences}.
    Pick an arbitrary \(x \in X\) and \(\alpha \in \At\).
    \begin{enumerate}
        \item For \(o \in \{\reject\} + \V\), we have that
        \begin{align*}
            \partial((f^{\#} \circ m)(x))_\alpha(o) &= \sum_{i \in I_\alpha^{x}} \prob{p}_i \partial(\expr{f}_i)_\alpha(o) + \sum_{j \in J_\alpha^{x}} \prob{r}_j \partial(\expr{g_j}\seq m(x_j))_\alpha(o)\\
            &=\sum_{i \in I_\alpha^{x}} \prob{p}_i \partial(\expr{f}_i)_\alpha(o) + \sum_{j \in J_\alpha^{x}} \prob{r}_j (\partial(\expr{g_j})_\alpha(o) + \partial(\expr{g}_j)_\alpha(\accept)\partial(m(x_j))_\alpha(o))\\
            &=\sum_{i \in I_\alpha^{x}} \prob{p}_i \partial(\expr{f}_i)_\alpha(o) + \sum_{j \in J_\alpha^{x}} \prob{r}_j \partial(\expr{g_j})_\alpha(o) \tag{\(\trmt{\expr{g}_j}=0\)}\\
            &=\sum_{i \in I_\alpha^{x}} \prob{p}_i \partial(\expr{f}_i)_\alpha(o) + \sum_{j \in J_\alpha^{x}} \prob{r}_j(\partial(\expr{g_j})_\alpha(o) + \partial(\expr{g}_j)_\alpha(\accept)\partial(n(x_j))_\alpha(o)) \\
            &= \partial((f^{\#} \circ n)(x))_\alpha(o)
        \end{align*}
        Now, consider the case when \(o = \accept\). We have the following
        \begin{align*}
            \partial((f^{\#} \circ m)(x))_\alpha(\accept) &= \sum_{i \in I_\alpha^{x}} \prob{p}_i \partial(\expr{f}_i)_\alpha(\accept) + \sum_{j \in J_\alpha^{x}} \prob{r}_j \partial(\expr{g_j}\seq m(x_j))_\alpha(\accept) \\
            &=\sum_{i \in I_\alpha^{x}} \prob{p}_i \partial(\expr{f}_i)_\alpha(o) + \sum_{j \in J_\alpha^{x}} \prob{r}_j \partial(\expr{g}_j)_\alpha(\accept)\partial(m(x_j))_\alpha(\accept) \\
            &=\sum_{i \in I_\alpha^{x}} \prob{p}_i \partial(\expr{f}_i)_\alpha(o) \tag{\(\trmt{\expr{g}_j}=0\)}\\
            &=\sum_{i \in I_\alpha^{x}} \prob{p}_i \partial(\expr{f}_i)_\alpha(o) + \sum_{j \in J_\alpha^{x}} \prob{r}_j \partial(\expr{g}_j)_\alpha(\accept)\partial(n(x_j))_\alpha(\accept) \\
            &=\partial((f^{\#} \circ n)(x))_\alpha(\accept)
        \end{align*}
        \item Let \((\action{a}, Q) \in \Act \times {\Exp}/{R}\). We have that
        \begingroup%
        \allowdisplaybreaks%
        \begin{align*}
            &\quad\partial((f^{\#} \circ m)(x))_\alpha[\{\action{a}\}\times Q] \\&= \sum_{i \in I_\alpha^{x}} \prob{p}_i \partial(\expr{f}_i)_\alpha[\{\action{a}\}\times Q] + \sum_{j \in J_\alpha^{x}} \prob{r}_j \partial(\expr{g_j}\seq m(x_j))_\alpha[\{\action{a}\}\times Q]\\
            &= \sum_{i \in I_\alpha^{x}} \prob{p}_i \partial(\expr{f}_i)_\alpha[\{\action{a}\}\times Q]
                \\&\hspace{5em}+ \sum_{j \in J_\alpha^{x}} \prob{r}_j \left(\partial(\expr{g_j})_\alpha[\{\action{a}\}\times Q/{m(x_j)}] + \partial(\expr{g}_j)_\alpha(\accept)\partial(m(x_j))_\alpha[\{\action{a}\}\times Q]\right)\\
            &= \sum_{i \in I_\alpha^{x}} \prob{p}_i \partial(\expr{f}_i)_\alpha[\{\action{a}\}\times Q] + \sum_{j \in J_\alpha^{x}} \prob{r}_j\partial(\expr{g_j})_\alpha[\{\action{a}\}\times Q/{m(x_j)}] \tag{\(\trmt{\expr{g}_j}=0\)}\\
            &= \sum_{i \in I_\alpha^{x}} \prob{p}_i \partial(\expr{f}_i)_\alpha[\{\action{a}\}\times Q] + \sum_{j \in J_\alpha^{x}} \prob{r}_j\partial(\expr{g_j})_\alpha[\{\action{a}\}\times Q/{n(x_j)}] \tag{\cref{lem:swapping_congruent_ends}}\\
            &= \sum_{i \in I_\alpha^{x}} \prob{p}_i \partial(\expr{f}_i)_\alpha[\{\action{a}\}\times Q]
                \\&\hspace{5em}+ \sum_{j \in J_\alpha^{x}} \prob{r}_j \left(\partial(\expr{g_j})_\alpha[\{\action{a}\}\times Q/{m(x_j)}]
                + \partial(\expr{g}_j)_\alpha(\accept)\partial(n(x_j))_\alpha[\{\action{a}\}\times Q]\right)\\
            &=\partial((f^{\#} \circ n)(x))_\alpha[\{\action{a}\}\times Q]\qedhere
        \end{align*}
        \endgroup%
    \end{enumerate}

\end{proof}

\contractivess*

\begin{proof}
    Observe that all elements of the relation refinement chain are congruences.
    This allows us to use \cref{lem: salomaa_well_behaved}.
    Recall that given \(e,f \in \Exp^n\) such that \(e = (\expr{e}_1, \dots, \expr{e}_n)\) and \(f = (\expr{f}_1, \dots, \expr{f}_n)\), we have
    \[
        d((\expr{e}_1, \dots, \expr{e}_n),(\expr{f}_1, \dots, \expr{f}_n))=\max\{d(\expr{e}_1, \expr{f}_1), \dots, d(\expr{e}_n, \expr{f}_n)\}
    \]
	As before, we slightly abuse notation and also write \(e,f : X \to \Exp\) for functions \(e(x_k)=\expr{e}_k\) and \(f(x_k)=\expr{f}_k\) for \(1\leq k \leq n\).

	Assume that \(d((\expr{e}_1, \dots, \expr{e}_n), (\expr{f}_1, \dots, \expr{f}_n)) \le 2^{-m}\) for some \(m \in \mathbb N\).
	By definition of the product metric, \(2^{-m} \ge \max\{d(\expr{e}_1, \expr{f}_1), \dots, d(\expr{e}_n, \expr{f}_n)\}\), so \(\expr{e}_i \bisim_\partial^{(m)} \expr{f}_i\) for \(1 \leq i \leq n \).
    By \cref{lem: salomaa_well_behaved}, \(((e^{\#} \circ \tau)(x_i), (f^{\#} \circ \tau)(x_i))\in \Phi_\partial\left( \bisim_\partial^{(m)}\right) = {\bisim_\partial^{(m+1)}}\) for all \(1 \leq i \leq n\).
    Therefore, \(d((e \circ \tau)(x_i), (f \circ \tau)(x_i)) \le 2^{-(m+1)}\) for all \(1 \leq i \leq n\).
    It follows that \[
        d(\bar{\tau}(\expr{e}_1, \dots, \expr{e}_n), \bar{\tau}(\expr{f}_1, \dots, \expr{f}_n))
        \le 2^{-(m+1)}
        = \frac{1}{2} \left(\frac1{2^m}\right)
       % d((\expr{e}_1, \dots, \expr{e}_n), (\expr{f}_1, \dots, \expr{f}_n))
    \]
    It follows that \(\bar \tau\) is \(\frac12\)-Lipschitz.
	% For the remaining case, let \(d((\expr{e}_1, \dots, \expr{e}_n), (\expr{f}_1, \dots, \expr{f}_n)) = 0\).
	% Therefore \(d(\expr{e}_i, \expr{f}_i)=0\) for all \(1 \leq i \leq n\), which implies that \((\expr{e}_i,\expr{f}_i) \in \bisim_\partial^{(m)}\) for all \(m \in \omega\).
	% Again, by \cref{lem: salomaa_well_behaved} it holds that \[((e^{\#} \circ \tau)(x_i), (f^{\#} \circ \tau)(x_i))\in \Phi_\partial\left( \bisim_\partial^{(m)}\right) = \bisim_\partial^{(m+1)}\] for all \(1 \leq i \leq n\) and all \(m \in \omega\). Since \(\bisim_\partial^{(0)}\) is a full relation, we have that \(((e^{\#} \circ \tau)(x_i), (f^{\#} \circ \tau)(f_i)) \in \bisim_\partial^{(0)}\) for all \(1 \leq i \leq n\).
	% Therefore, \(((e^{\#} \circ \tau)(x_i), (f^{\#} \circ \tau)(f_i)) \in \bisim_\partial^{(m)}\) for all \(1 \leq i \leq n\) and all \(m \in \omega\). Hence \(d(\bar{\tau}(\expr{e}_1, \dots, \expr{e}_n), \bar{\tau}(\expr{f}_1, \dots, \expr{f}_n))=  0 = \frac{1}{2}d((\expr{e}_1, \dots, \expr{e}_n), (\expr{f}_1, \dots, \expr{f}_n))\)
\end{proof}
% \section{Proofs from \cref{sec:decidability}}
% \decideequiv*
% \begin{proof}
% The results from~\cite{Wissmann:2020:Efficient} ensure that our encoding of \acro{ProbGKAT} automata can be equipped with an appropriate interface that allows their algorithm to decide equivalence.

% As for the complexity, we instantiate their abstract complexity result by computing the parameters.
% The number of nodes and edges can be bound from above fairly easily, as follows:
% \begin{align*}
%     |X|
%         &= |Q| + |I| + |\Act \times Q|
%         \leq |Q| + 2 \cdot |\Act| \cdot |Q|
%     \\
%     n
%         &= \sum_{x \in X} |\ell(x)|
%         \leq |Q| \cdot |\At| + |Q|^2 \cdot |\Act| + |Q| \cdot |\Act|
% \end{align*}
% Since $\At$ is fixed, the claimed complexity then follows.
% \end{proof}
% \decidable*
% \begin{proof}
% By \cref{lem:locally_finite}, $\cogen{\expr{e}, \expr{f}}{\partial}$ is of size at most $n$, and the number of distinct actions $\expr{e}$ or $\expr{f}$ is fixed from above by $n$ as well.
% The claim then follows by \cref{thm:decide-equiv}.
% \end{proof}

\end{document}